\newcommand*{\findpos}[2][p]{\mathsf{Find}(#1 ; #2)}
\newcommand*{\rootof}[1]{\mathsf{root}(#1)}
\newcommand*{\adrof}[1]{\mathsf{adr}(#1)}
\newcommand*{\maxadr}[1]{\mathsf{max}(#1)}
\newcommand*{\connects}[2][S]{\mathsf{link}_{#1}(#2)}
\newcommand*{\indexedarrow}[1]{\rightarrow_{#1}}
\newcommand*{\reduce}[2][]{\smash{\xrightarrow[#1]{#2}}}
\def\redmult{\indexedarrow{\mathsf{mult}}}
\def\redglue{\indexedarrow{\mathsf{glue}}}
\def\redtensdai{\indexedarrow{\mathsf{(\otimes/\maltese)}}}
\def\redparrdai{\indexedarrow{\mathsf{(\parr/\maltese)}}}
\def\rednotmult{\indexedarrow{\neg\mathsf{mult}}}
\def\rednotglue{\indexedarrow{\neg\mathsf{glue}}}
\def\rednottensdai{\indexedarrow{\mathsf{\neg(\otimes/\maltese)}}}
\def\rednotparrdai{\indexedarrow{\mathsf{\neg(\parr/\maltese)}}}
\def\tuple#1{\langle #1 \rangle}
\newcommand{\ruleone}[3]{
\vlderivation{
\vlin{}{{\scriptstyle #1}}{#2}
  {\vlhy{#3}}
}
}
\newcommand{\ruletwo}[4]{
  \vlderivation{\vliin{}{{\scriptstyle #1}}{#2}
  {\vlhy{#3}}
  {\vlhy{#4}}}
}
\newcommand{\finish}{\vlhy{}}
\newcommand{\aboveone}[2]{
  \vlin{}{\scriptstyle #2}{#1}
}
\newcommand{\abovetwo}[2]{
  \vliin{}{\scriptstyle #2}{#1}
}
\definecolor{dkgreen}{rgb}{0,0.6,0}
\definecolor{gray}{rgb}{0.5,0.5,0.5}
\definecolor{mauve}{rgb}{0.58,0,0.82}
\def\mlldai{\mathsf{MLL}^\maltese}
\def\mll{\mathsf{MLL}}
\def\at{\cdot}
  \DeclareSymbolFont{stix@largesymbols}{LS2}{stixex}{m}{n}
  \DeclareMathDelimiter{\lBrace}{\mathopen} {stix@largesymbols}{"E8}%
                                            {stix@largesymbols}{"0E}
  \DeclareMathDelimiter{\rBrace}{\mathclose}{stix@largesymbols}{"E9}%
                                            {stix@largesymbols}{"0F}
\def\proofs#1{\lBrace #1 \rBrace}
\def\parsing{\rightarrow_{\mathsf P}}
\def\subord#1{\mathcal P _\leq ( #1 )}
\def\target{\mathrm{t}}
\def\source{\mathrm{s}}
\def\labl{\ell}
\def\hgraph{\mathcal H}
\def\reps{\equiv_\mathcal{R}}
\def\switchto{\rightarrow_{\parr}}
\def\compo{\Yright} 
\def\bibot{^{\bot\bot}}
\def\ibase{\mathcal B}
\def\hseq{\mathcal H}
\def\pright{\mathscr r}
\def\pleft{\mathscr l}
\def\deseq#1{\llbracket #1 \rrbracket}
\newcommand{\link}[3][]{\mathopen{\langle} #2 \vartriangleright_{#1} #3 \mathclose{\rangle}}
\newcommand{\parrlink}[2]{\link[\parr]{#1}{#2}}
\newcommand{\cutlink}[1]{\link[\mathsf{cut}]{#1}{}}
\newcommand{\dailink}[1]{\link[\maltese]{}{#1}}
\newcommand{\tenslink}[2]{\link[\otimes]{#1}{#2}}
\newcommand{\linke}[1]{\link[\labl(#1)]{\source(#1)}{\target(#1)}}
\newcommand{\reali}[2][]{\llbracket #2 \rrbracket_{#1}}
\newcommand{\tests}[1]{\mathsf{tests}({#1})}
\def\conclu#1{\mathsf{out}(#1)}
\def\prem#1{\mathsf{in}(#1)}
\def\restrict#1{\upharpoonright_{#1}}
\newcommand*{\collapse}{{\mkern-3mu \downarrow}}
\newcommand{\corestrict}[1]{\upharpoonleft^{#1}}
\newcommand*{\arrange}[1]{\mathbf a ( #1 ) } 
\newcommand*{\sumtr}{ \mathop{\sqcup \mkern -10.5mu {\raisebox{2.pt}{$\scriptscriptstyle +$}}} \mkern2.5mu} 
\NewDocumentCommand{\transiarrow}{%
s
O{}
m
m
O{0.6cm}
}{%
\begin{tikzpicture}[baseline=-0.5ex] {
\node[inner sep=0](@1) at (0,0) {#3};
\node[inner sep=0](@2) at (#5,0) {#4};
\IfBooleanTF #1
  {\draw [arrows={-Triangle[open]},shorten >= 2pt,shorten <= 2pt](@1)--(@2) node[pos=.5,above,inner sep=1pt] {#2};}
  {\draw [arrows={-Triangle},shorten >= 2pt,shorten <= 2pt](@1)--(@2) node[pos=.5,above,inner sep=1pt] {#2};}
}
\end{tikzpicture}\xspace
}
\theoremstyle{plain}  
\newtheorem{theorem}{Theorem}
\newtheorem{proposition}[theorem]{Proposition}
\newtheorem{lemma}[theorem]{Lemma}
\newtheorem{corollary}[theorem]{Corollary}
\theoremstyle{definition}
\newtheorem{definition}[theorem]{Definition}
\newtheorem{remark}[theorem]{Remark}
\newtheorem{example}[theorem]{Example}
\theoremstyle{remark}
\newtheorem{notation}[theorem]{Notation}
\tikzstyle{cell}=[fill=black, draw=black, shape=circle]
\tikzstyle{squary}=[anchor=north, rounded corners=2pt, inner sep=1.2pt, fill=black, tikzit fill={rgb,255: red,11; green,96; blue,255}, shape=rectangle]
\tikzstyle{linkDOWN}=[anchor=north, trapezium, trapezium angle=110, rounded corners=2pt, inner sep=1.2pt, fill=black, tikzit fill={rgb,255: red,11; green,96; blue,255}]
\tikzstyle{linkUP}=[anchor=north, trapezium, rotate=180, trapezium angle=110, rounded corners=2pt, inner sep=1.2pt, fill=black, tikzit fill={rgb,255: red,255; green,73; blue,76}]
\tikzstyle{agent}=[inner sep=3.5pt, tikzit fill={rgb,255: red,255; green,224; blue,46}]
\tikzstyle{agentRecto}=[draw, rectangle,inner sep=3.5pt, tikzit fill={rgb,255: red,100; green,224; blue,46}]
\tikzstyle{agentG1}=[inner sep=3.5pt, tikzit fill={rgb,255: red,255; green,200; blue,80},font ={\color{red}}]
\tikzstyle{agentG2}=[inner sep=3.5pt, tikzit fill={rgb,255: red,255; green,80; blue,200},font ={\color{blue}}]
\tikzstyle{agentG3}=[inner sep=3.5pt, tikzit fill={rgb,255: red,255; green,80; blue,200},font ={\color{green}}]
\tikzstyle{tee}=[line width=0.3pt]
\tikzstyle{tensorLink}=[anchor=north, trapezium, trapezium angle=110, rounded corners=2pt, inner sep=1.2pt, fill=black, font = {$\scriptstyle\color{white}\otimes$}, tikzit fill={rgb,255: red,12; green,255; blue,0}]
\tikzstyle{parrLink}=[anchor=north, trapezium, trapezium angle=110, rounded corners=2pt, inner sep=1.2pt, fill=black, font = {$\scriptstyle\color{white}\parr$}, tikzit fill={rgb,255: red,0; green,255; blue,213}]
\tikzstyle{cutLink}=[anchor=north, trapezium, trapezium angle=110, rounded corners=2pt, inner sep=1.2pt, fill=black, font ={$\scriptstyle\color{white}\mathsf{cut}$}, tikzit fill={rgb,255: red,11; green,96; blue,255}, tikzit shape=rectangle]
\tikzstyle{daimonLink}=[anchor=north, trapezium, rotate=180, trapezium angle=110, rounded corners=2pt, inner sep=1.2pt, fill=black, font ={{$\scriptstyle\color{white}\maltese$}}, tikzit fill={rgb,255: red,255; green,73; blue,76}, tikzit shape=rectangle]
\tikzstyle{axLink}=[anchor=north, trapezium, rotate=180, trapezium angle=110, rounded corners=2pt, inner sep=1.2pt, fill=black, font ={\rotatebox{180}{$\scriptstyle\color{white}\mathsf{ax}$}}, tikzit fill={rgb,255: red,255; green,73; blue,76}, tikzit shape=rectangle]
\tikzstyle{tensorwLink}=[draw = black , anchor=north, trapezium, trapezium angle=110, rounded corners=2pt, inner sep=1.2pt, fill=white, font = {$\scriptstyle\otimes$}, tikzit fill={rgb,255: red,12; green,255; blue,0}]
\tikzstyle{parrwLink}=[draw=black , anchor=north, trapezium, trapezium angle=110, rounded corners=2pt, inner sep=1.2pt, fill=white, font = {$\scriptstyle\parr$}, tikzit fill={rgb,255: red,0; green,255; blue,213}]
\tikzstyle{cutwLink}=[draw=black , anchor=north, trapezium, trapezium angle=110, rounded corners=2pt, inner sep=1.2pt, fill=white, font ={$\scriptstyle\mathsf{cut}$}, tikzit fill={rgb,255: red,11; green,96; blue,255}, tikzit shape=rectangle]
\tikzstyle{daimonwLink}=[draw=black , anchor=north, trapezium, rotate=180, trapezium angle=110, rounded corners=2pt, inner sep=1.2pt, fill=white, font ={\scalebox{0.8}{$\scriptstyle\maltese$}}, tikzit fill={rgb,255: red,255; green,73; blue,76}, tikzit shape=rectangle]
\tikzstyle{axwLink}=[draw=black , anchor=north, trapezium, rotate=180, trapezium angle=110, rounded corners=2pt, inner sep=1.2pt, fill=white, font ={\rotatebox{180}{$\scriptstyle\mathsf{ax}$}}, tikzit fill={rgb,255: red,255; green,73; blue,76}, tikzit shape=rectangle]
\tikzstyle{linkInput}=[->, line width=0.2pt, draw opacity=0, latex reversed-, postaction={draw, opacity=1, -, line width=1pt, shorten <=1pt}, tikzit draw={rgb,255: red,245; green,75; blue,200}]
\tikzstyle{linkOutput}=[->, draw opacity=0, -latex, line width=0.2pt, postaction={draw, opacity=1, -, line width=1pt, shorten >=3pt}]
\tikzstyle{filler}=[-, fill={rgb,255: red,139; green,150; blue,221}, draw=none, fill opacity=0.5, tikzit draw=black]
\tikzstyle{fillerRed}=[-, fill={rgb,255: red,230; green,76; blue,86}, draw=none, fill opacity=0.5, tikzit draw=black]
\tikzstyle{dottedEdge}=[-, dotted, draw=black]
\tikzstyle{dottedArrow}=[draw=black, ->, dotted]
\tikzstyle{simpleUndirected}=[-, draw=black]
\tikzstyle{simple}=[draw=black, ->]
\tikzstyle{simplered}=[draw=red, ->]
\tikzstyle{simpleOrange}=[->, color=orange, tikzit draw={rgb,255: red,255; green,93; blue,0}]
\tikzstyle{implicarrow}=[->, draw=black, double]
\tikzstyle{thickedge}=[line width=2pt, draw=black, -]
\tikzstyle{inputBend}=[bend right = 45 , ->, line width=0.2pt, draw opacity=0, latex reversed-, postaction={draw, opacity=1, -, line width=1pt, shorten <=1pt}, tikzit draw={rgb,255: red,245; green,75; blue,200}]
\tikzstyle{outLEFT}=[out = 180 , in = 90 , looseness = 0.85, ->, draw opacity=0, -latex, line width=0.2pt, postaction={draw, opacity=1, -, line width=1pt, shorten >=3pt},tikzit draw={rgb,255: red,100; green,75; blue,200}]
\tikzstyle{outRIGHT}=[out = 00 , in = 90 ,looseness = 0.85, ->, draw opacity=0, -latex, line width=0.2pt, postaction={draw, opacity=1, -, line width=1pt, shorten >=3pt}, tikzit draw={rgb,255: red,245; green,200; blue,100}]
\tikzstyle{inLEFT}=[out = -90 , in = 180 ,looseness = 0.85, ->, draw opacity=0, latex reversed-, line width=0.2pt, postaction={draw, opacity=1, -, line width=1pt, shorten <=1pt},tikzit draw={rgb,255: red,100; green,75; blue,255}]
\tikzstyle{inRIGHT}=[out = -90 , in = 00 ,looseness = 0.85, ->, draw opacity=0, latex reversed-, line width=0.2pt, postaction={draw, opacity=1, -, line width=1pt, shorten <=1pt}, tikzit draw={rgb,255: red,245; green,255; blue,100}]
\setlist{topsep=0pt, noitemsep}
\author{Adrien Ragot\footnote{\infoA} \, \footnote{The author is upported by a VINCI PhD fellowship from the Franco-Italian Université.} \, , 
Thomas Seiller\footnote{\infoB}\, \footnote{The author is partially supported by the ANR-22-CE48-0003-01 project DySCo.} \, , Lorenzo Tortora de Falco\footnote{\infoC}}
\def\infoA{Universit\'e Sorbonne Paris Nord (LIPN, UMR 7030); Universit\`a Degli Studi Roma Tre, Dipartimento di Matematica e Fisica.}
\def\infoB{Universit\'e Sorbonne Paris Nord ; CNRS (LIPN, UMR 7030).}
\def\infoC{Universit\`a Degli Studi Roma Tre, Dipartimento di Matematica e Fisica; GNSAGA, Istituto
Nazionale di Alta Matematica.}
\begin{document}

\title{Linear Realisability over nets: multiplicatives}


  \maketitle

  \begin{abstract} 
  
    We provide a new realisability model
    based on orthogonality
    for the multiplicative fragment of linear logic,
    both
    in presence of generalised axioms $(\mlldai)$ and in the standard case $(\mll)$.
    The novelty is the definition of cut elimination
    for generalised axioms.
    We prove that our model is adequate and complete both for $\mlldai$ and $\mll$.
  \end{abstract}



\def\ha{\textsf{HA}\;}
\def\bhk{\textsf{BHK}\;}
\def\paragrafo#1{\vspace*{5pt}\noindent\textbf{#1}\textbf{.}}

\section*{Introduction}

Since the inception of Linear Logic ($\mathsf{LL}$), 
proofs are represented
as graphs that naturally live in a wider space of agents 
called proof structures (\emph{nets} in this paper) that can freely interact.
These nets 
were introduced by J.Y. Girard in \cite{girard_1987}, 
together with the \emph{desequentialisation}:
a simple process transforming proof trees from the sequent calculus 
of $\mathsf{LL}$ into nets. 
However, not
every net is the desequentialisation of a proof:
it is \emph{impossible} to extract a proof tree 
from 
a net that ``contains'' cycles or disconnections \cite{danos89}.
Nets can therefore present forms of (what we call) \emph{geometrical incorrectness},
and geometrically correct nets are (representants of) proof trees of $\mathsf{LL}$.
More recently, 
J.Y. Girard proposed \emph{Ludics}, 
an interpretation of $\mathsf{LL}$ 
given in terms of ``desseins'': 
proof trees of the $\mathsf{LL}$ sequent calculus with the addition of the daimon $(\maltese)$ rule, 
a generalised axiom allowing to prove any sequent. 
Ludics introduces a new kind of incorrectness that we call \emph{provability incorrectness}:
dessein are geometrically correct (they are proof trees)
but can be provably incorrect.
In the standard theory of proof nets 
geometrical and provability correctness coincide;
it is the presence of daimons that 
allows to distinguish
between provability correctness and geometrical correctness.


Understanding the relationship between correctness 
and computational behavior
is (one of) the goal(s) of \emph{realisability}, 
which, 
restricted to \textsf{LL}, will be our focus in this paper.
We briefly sum up 
the existing works on 
linear realisability\footnote{We use the expression linear realisability
in the sense of \cite{seiller:hdr}
i.e. realisability 
models for \textsf{LL}.}
by positioning them with respect to \autoref{table:intro}.
We also recall if these models enjoy completeness or not. 
Two lines of research on realisability of $\mathsf {LL}$
can be identified.

One
was initiated by V. De Paiva \emph{Dialectica Interpretation} \cite{depaiva:dialecticaLL}
and led to
P. Oliva's adequate and complete realisability model
of first order $\mathsf {LL}$ \cite{oliva:modrealiLL}
where 
realisers are 
proof trees (with standard axioms) 
from a decorated 
sequent calculus of $\mathsf {LL}$.
As a consequence 
realisers are typed and are 
``by construction'' \emph{geometrically and provably correct}
(placing this model in the top left 
corner of \autoref{table:intro}).  

The other
originates in the work of J.Y. Girard:
\emph{Ludics} \cite{girard:ludics},
whose ``desseins'' are geometrically correct 
but can be provably incorrect
(top right corner of \autoref{table:intro}),
which enjoys \emph{completeness}.
E. Beffara proposed adequate models
in a concurrent $\pi$-calculus \cite{beffara:concurrentLinearRealisability} 
and conjunctive structure \cite{beffara:conjonctiveRealisability}.
T. Seiller's \emph{interaction graphs} 
(inspired by Girard's Geometry of Interaction \cite{girard:multiplicatives})
model various \textsf{LL} fragments adequately 
\cite{seiller:igm,seiller:iga,seiller:igg,seiller:ige,seiller:igf}.
Beffara's and Seiller's
approaches exhibit both geometrical and provability incorrectness 
(bottom-right corner of \autoref{table:intro}),
but contain no completeness result.

We give the first complete realisability model 
of the multiplicative fragment of
linear logic in terms of nets,
essentially the well--known untyped proof--structures 
of $\mathsf {LL}$ \cite{girard_1996} 
with \emph{daimons},
as in the work of P.L. Curien \cite{curien:criterions}:
this places us in the bottom--right corner of \autoref{table:intro}.
The main tool we use in our approach 
to realisability is $\mathsf {LL}$ cut elimination: 
we interpret formulas as types, sets of nets closed under bi--orthogonality,
where the notion of orthogonality 
is defined via the rewriting rules of nets induced by cut elimination.
We prove completeness for $\mlldai$, multiplicative $\mathsf {LL}$
with generalised axioms, 
meaning our model can capture 
\emph{geometrical correctness}.
As a byproduct we obtain completeness 
for the standard multiplicative fragment of linear logic ($\mll$), thus capturing 
\emph{provability correctness}.

Although not expressed in the terms of realisability, 
a completeness result for $\mlldai$
(in the atomic case)
using a notion of orthogonality
is already apparent in the work of P.L. Curien \cite{curien:criterions},
where the partitions involved in the Danos Regnier criterion \cite{danos89} 
are encoded using daimons.
More precisely, 
one can test the geometrical correctness of a net 
by confronting it against carefully chosen \emph{opponents} (which as in the work of B\'echet \cite{bechet:correctness}
are geometrically correct nets).
However the method in \cite{curien:criterions} 
does not allow to derive a completeness result for $\mll$.
By contrast,
we use \emph{geometrically incorrect} opponents to prove completeness for $\mll$ (\autoref{rem:provabilityCorrectNeedOrthoIncor}).

The novelty is the \emph{cut elimination} of non-homogenous cuts 
(a generalised axiom against a connective -- say a tensor):
unlike in Ludics
\footnote{In Ludics, the daimon means the end of the game, 
or the end of the proof search.}\cite{girard:ludics}
our daimon is the ``perfect'' opponent/evaluation context;
it never stops responding during computation
and 
never prevents proof search to go on 
(\autoref{fig:nhomcutelim} and \autoref{rem:nondet-proofsearch}).
These new cut elimination steps
are key to 
interactively identify provability correctness 
and so to
obtain our completeness result for $\mll$ (\autoref{rem:nhomcutelimWHY}).
The computational behavior of the daimon 
also differs from Krivine's continuations 
involved in \emph{classical realisability} \cite{krivine:realiclassi}: 
they restore a previously stored context while
our daimon rather behaves like an adaptive evaluation context.\\
\begin{table}    
    \centering
    \vspace*{-12pt}
\scalebox{0.8}{\begin{tabular}{|c|c|c|}
    \hline
    & $\mll$ & $\mlldai$ \\
    \hline
    Proof Nets & no incorrectness & provability incorrectness   \\
    \hline
    Nets & geometrical incorrectness  & geometrical and provability incorrectness \\
    \hline
\end{tabular}}
    \caption{Presence of incorrectness,
    restricted to multiplicative linear logic,
    for 
    realisability models.}\label{table:intro}
\end{table}

The general aim is to understand 
the computational content of proofs and of (incorrect) nets,
following a ``purely interactive approach to logic'' (to quote \cite{girard:ludics}).
We follow the approach initiated with Ludics,
we present a framework in which proofs and refutations 
are objects of the same nature that can freely interact:
a proof--object proves a formula $A$ whenever 
it ``defeats'' all the refutations of $A$.
The correctness of an object 
is evaluated using a dynamic criterion (we make an object interact with each of its refutations)
rather than a static one (such as a typing discipline).

\paragrafo{Outline}
In \autoref{sect:hypergraphs},
we give a detailed introduction of nets 
that we define as ordered hypergraphs.
In \autoref{sect:MLL},
we recall the elementary notions 
of multiplicative linear logic, we introduce the $\maltese$-links
and we formulate the criterion of Danos Regnier \cite{danos89}
in our setting.
In \autoref{sect:RealModel}, we define orthogonality between two nets as 
``successful interaction'' through cut  elimination (Definition~\ref{def:orthogonalityNets}); this leads to the notion of 
type: a set of nets closed under bi--orthogonality. We then show how to perform the usual multiplicative constructions
in the framework of types.
In \autoref{sect:Adequacy}, we define our realisability model 
interpreting formulas as types and we prove its adequacy: 
a net representing a proof of $A$ 
is a realiser of $A$ (Theorem~\ref{thm:adequacy}).
In \autoref{sect:Tests},
we relate correctness criteria with orthogonality.
The Danos-Regnier criterion applied to a cut-free net with conclusion $A$
yields a set of nets called tests (Definition~\ref{def:test}). 
We prove that the tests of $A$ are proofs of $A^{\bot}$ (Theorem~\ref{thm:cortest})
and that the interaction between a net $\pi$ with conclusion $A$ 
and its tests allows to determine whether or not $\pi$ is indeed a proof:
we thus extend to our framework a result of B\'echet \cite{bechet:correctness}.
In \autoref{sect:Completeness}, we prove the completeness of our realisability model: if
a net $S$ realises $A$ (in every basis), then $S$ is a proof of $A$ in $\mlldai$ (Theorem~\ref{thm:mlldaiNewComplete}). 
Finally
we show that completeness of $\mlldai$
implies that of $\mll$ (Theorem~\ref{thm:completeness}).

\section{Untyped nets}\label{sect:hypergraphs}

\def\corestrict#1{\upharpoonright ^{#1}}

\newcommand{\before}[2]{ #2_{< #1}}
\newcommand{\after}[2]{ #2_{> #1}}

\def\size#1{\left| {#1} \right|}
\def\posiset{\mathsf{Pos}}
\def\body#1{\left| {#1} \right|}

\def\pvarset{\mathsf{Var}}
\def\formulaset{\mathsf{Form}}
\def\hseqset{\mathsf{Hseq}}
\def\parsing{\rightarrow}
\def\graph{\mathsf G}
\newcommand*{\undergraph}[1]{\mathsf G(#1)}
\newcommand*{\border}[1][]{\mathsf{brd}_{#1}}
\def\daipart#1{\mathsf P_\maltese(#1)}
\newcommand*{\nat}[2][]{\mathbf{Nat}_{#1}(#2)}

We introduce the 
framework of \emph{nets} in which 
our construction takes place.
Nets are a special kind of \emph{directed hypergraphs}
together with an order of \emph{some} of their vertices
which will come in play later on to define the notion of orthogonality.
These hypergraphs enjoy a natural notion of sum (\autoref{def:sumhgraph}).
In subsection \ref{subsec:multinets},
we define our ``realisers'' that we call nets 
and
their computational rules,
the cut elimination procedure 
as known for multiplicative proof structures \cite{girard_1996}
but with a novelty: the generalised axiom or daimon--link $(\maltese)$
which behave like an adaptative evaluation context.

\subsection{Directed hypergraphs}

Given a set $X$ we will let $\subord X$
denote the set of
totally ordered finite subsets of $X$.
An element of $\subord X$ is equivalently 
a finite sequence of elements of $X$ but,
\emph{without repetitions}.


\begin{definition}\label{def:hypergraph}
  Suppose given a set $L$ of \emph{labels}.
  A \emph{directed ($L$-labelled) hypergraph} is
  a tuple $(V,E,\source,\target,\labl)$
  where $V$ is a finite set of \emph{positions}
  and $E$ is a finite set
  of \emph{links},
  $\source:E\rightarrow \subord V$ is the
  \emph{source map}, 
  $\target : E\rightarrow \subord V$
  is the \emph{target map}
  and
  $\labl: E\rightarrow L$ is the \emph{labelling} map.
\end{definition}

  
  Given a link $e\in E$, 
  since the finite sets $\target(e)$ and $\source(e)$ 
  are totally ordered, 
  to support readability
  we will 
  represent them as sequences: they are respectively
  called the \emph{target} and the \emph{source}
  sets of $e$. 
  A \emph{source} (resp. target) of $e$
  is an element
  of its source (resp. target) set $\source(e)$ (resp.$\target (e)$).
  The set of 
  targets and sources of $e$  
  is the \emph{domain} of the link $e$.
  We will use superscripts 
  to denote sequences of positions 
  (${\overline p},{\overline q},{\overline u},\ldots$).
  A link is a \emph{loop}
  when its target set and source set 
  are not disjoint.

\noindent
  \textbf{Convention.}
  Along this work
  we assume all the hypergraphs to be loop--free
  i.e. containing only links which are not loops.

  Given an hypergraph $\hgraph$ 
  with $E$ as its set of links, 
  we denote 
  $\source(\hgraph)$ (resp. $\target(\hgraph)$)
  the set of all positions which
  are 
  source (resp. target) of at least one link:
  \[\source(\hgraph)=\bigcup_{e\in E} \source(e), \qquad
  \target(\hgraph)=\bigcup_{e\in E} \target(e).\]
  A \emph{conclusion/output} (resp. a \emph{premise/input})
  of a directed hypergraph $\hgraph$
  is a position which is the source (resp. target)
  of no link in $\hgraph$, i.e. 
  an element of $V\setminus \source(\hgraph)$
  (resp. of $V\setminus \target(\hgraph)$).
  The set of conclusions (resp. premises)
  of an hypergraph $\hgraph$
  is denoted $\conclu \hgraph$
  (resp. $\prem\hgraph$).
  A position $p$ is \emph{isolated}
  in an hypergraph $\hgraph$
  if $p$ is both an output and
  an input of $\hgraph$,
  i.e. 
  $p\notin \source(\hgraph)\cup\target(\hgraph)$.
  The \emph{size} of a directed hypergraph
  is the number of its links.
  There is a unique empty hypergraph 
  $\hgraph = (V,E,\source,\target,\labl)$
  with 
  $V=E=\source=\target=\emptyset$.

  An \emph{isomorphism}
  of hypergraphs 
  $f:(V_1,E_1,\source_1,\target_1,\labl_1)\rightarrow (V_2,E_2,\source_2,\target_2,\labl_2)$ 
  is a pair of functions $(f_V,f_E)$
  such that $f_V:V_1\rightarrow V_2$ 
  and $f_E : E_1\rightarrow E_2$ are bijections,
  $f_E$ preserve labels i.e. 
  $\labl(f_E(e))=\labl(e)$,
  and $f_E$ preserves the target and source of a link,
  i.e.
  $\source_2 (f_E (e)) = f_V^* (\source_1(e))$
  and 
  $\target_2 (f_E (e)) = f_V^* (\target_1(e))$,
  where $f_V^*$ is the natural extension of $f_V$
  to sequences of positions.
  Along this work we work 
  with hypergraphs up to isomorphism.

\begin{notation}\label{nota:link}
  We denote $\link[l]{\overline u}{\overline v}$
  the hypergraph $(V,E,\source,\target,\labl)$
  such that $E=\{e \}$,
  $V=\source (e)\cup\target(e)$,
  $\source(e)=\overline u$,
  $\target(e)=\overline v$
  and $\labl(e)=l$ 
  (an example of such a single--link hypergraph 
  is found in \autoref{fig:linkHgraph}).
  In the sequel 
  $\link[l]{\overline u}{\overline v}$
  will denote both
  the described hypergraph 
  and its unique link.
\end{notation}

\begin{notation}
  We write $u\cdot v$ the concatenation of sequences.
  Given $u=(u_1,\dots,u_{n})$ a sequence
  of elements of a set $X$ and an integer $i\in\{1,\dots,n\}$,
  we denote by $u_{<i}$
  (resp. $u_{>i}$)
  the sequence $(u_1,\dots,u_{i-1})$
  (resp. $(u_{i+1},\dots,u_n)$).
  Moreover, given two -- potentially empty -- sequences $u$ and $v$ 
  we denote by 
  $u[i \leftarrow v]$
  the sequence $\before i u \cdot v \cdot \after i u$.

\end{notation}

A link is \emph{initial} (resp. \emph{final})
when it has no input (resp. no output).
A position is \emph{initial} (resp. \emph{final}) 
when it is an output (resp. input)
of an initial (resp. final) link.
In an hypergraph $\hgraph$, 
a link $e$ is \emph{terminal}
when every target of $e$ is a conclusion of $\hgraph$
-- thus a final link is a terminal link.

\begin{example}
  For instance a link $\link[\ell]{}{a,b,c}$ is an initial link
  and the positions $a,b$ and $c$ are initial,
  on the other hand a link
  $\link[\ell]{a,b}{c}$
  is not initial and neither are the positions $a,b$ or $c$.
\end{example}


  Hypergraphs 
  enjoy a natural notion of sum
  based on the disjoint union of the set of links.

  \begin{notation}
    Given two sets $X_0$ and $X_1$
    we denote $X_0\uplus X_1$ the set $X_0\cup X_1$
    whenever $X_0$ and $X_1$ are disjoint.
    Given two functions $f:X_0 \rightarrow E$
    and $g:X_1 \rightarrow E$ with disjoint domains
    we denote 
    $f\uplus g$
    the function which takes an element $x$ 
    of 
    $X_0\uplus X_1$,
    and returns 
    $f(x)$ if $x\in X_0$
    and 
    $g(x)$ if $x\in X_1$.
  \end{notation}

  \begin{definition}\label{def:sumhgraph}
    Given two hypergraphs
    $\hgraph_1 = (V_1,E_1,\target_1,\source_1,\labl_1)$
    and
    $\hgraph_2 = (V_2,E_2,\target_2,\source_2,\labl_2)$
    such that $E_1\cap E_2=\emptyset$.
    The sum of $\hgraph_1$
    and $\hgraph_2$ is defined as:
    $$
    \hgraph_1 +\hgraph_2 = 
    (
    V_1 \cup V_2, E_1\uplus E_2,
    \target_1\uplus \target_2,
    \source_1\uplus \source_2,
    \labl_1\uplus\labl_2
    ).
    $$
  \end{definition}

  \begin{figure*}[t]
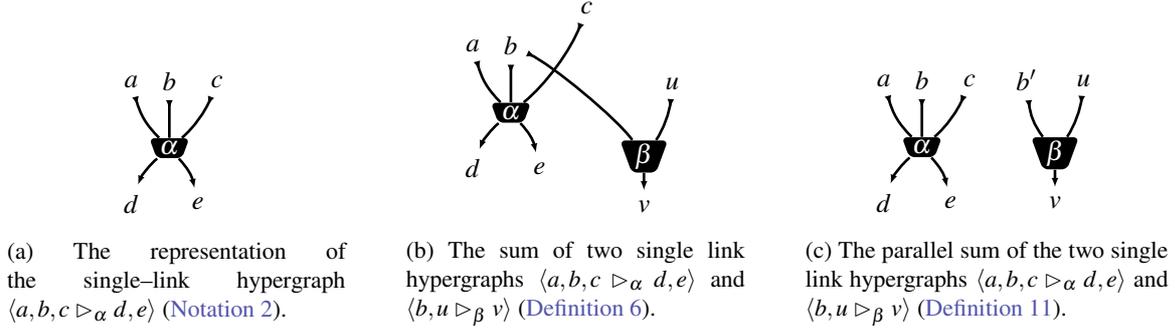

    \begin{subfigure}{0.28\textwidth}
      \centering
      \tikzfig{hypergraph_link}
      \caption{The representation 
      of the single--link hypergraph 
      $\link[\alpha]{a,b,c}{d,e}$  
      (\autoref{nota:link}).}\label{fig:linkHgraph}
    \end{subfigure}
    \qquad
    \begin{subfigure}{0.28\textwidth}
      \centering
      \tikzfig{hypergraph_sum}
      \caption{The sum of two single link 
      hypergraphs $\link[\alpha]{a,b,c}{d,e}$ 
      and $\link[\beta]{b,u}{v}$
      (\autoref{def:sumhgraph}).}\label{fig:sumhgraph}
    \end{subfigure}
    \qquad
    \begin{subfigure}{0.3\textwidth}
      \centering
      \tikzfig{hypergraph_sumparallel}
      \caption{The parallel sum of the two single link hypergraphs 
      $\link[\alpha]{a,b,c}{d,e}$ 
      and $\link[\beta]{b,u}{v}$ (\autoref{def:parallelhgraph}).
      }\label{fig:sumparallel}
    \end{subfigure}
    \caption{Hypergraphs can naturally be represented 
    in a graphical way, 
    we illustrate the notation of a hypergraph containing a single link, 
    the sum of hypergraphs and the parallel sum of hypergraphs.
    In \autoref{fig:sumparallel} 
    The position $b$ is present in both hypergraphs
      therefore we rename it in one of the two hypergraphs:
      thus $\link[\alpha]{a,b,c}{d,e} \parallel \link[\beta]{b,u}{v}$ 
      equals $\link[\alpha]{a,b,c}{d,e} \parallel \link[\beta]{b',u}{v}$ (that is, upto isomorphism).}\label{fig:firstDefs}
  \end{figure*}

  \begin{remark}\label{rem:sum}
    Whenever
    $\hgraph_1=(V_1 , E_1,\target_1,\source_1,\labl_1)$
    and $\hgraph_2=(V_2 , E_2,\target_2,\source_2,\labl_2)$
    are such that $E_1\cap E_2 \neq\emptyset$, 
    we will
    abusively write their sum as $\hgraph_1 +\hgraph_2 =
    (V_1 \cup V_2, E_1\uplus E_2,\target_1\uplus \target_2,
    \source_1\uplus \source_2,\labl_1\uplus\labl_2)$, 
    since up to renaming
    the sets of links of two hypergraphs 
    can always be considered disjoint.
  \end{remark}

\begin{remark}
  Vertices may overlap in a sum (as we take the union
  of vertex sets rather than the disjoint union). 
  As a consequence,
  a position may be input (or output) of several distinct
  links (\autoref{fig:sumhgraph}).
  We can describe hypergraphs 
  as sums of simple hypergraphs;
  namely those that contain only one link.
  Indeed using \autoref{nota:link}, an hypergraph consisting of two links 
  $\link[\ell]{\overline a}{\overline b}$
  and 
  $\link[\ell']{\overline c}{\overline d}$ 
  is in fact equal to the sum of the
  single-link hypergraphs 
  $\link[\ell]{\overline a}{\overline b}$ and
  $\link[\ell']{\overline c}{\overline d}$. 
  By induction on the number of links,
  this shows that any hypergraph $\hgraph$
  without isolated positions can be written as $\hgraph=\sum_{e\in E} \linke{e}$.
\end{remark}

\begin{example}  
  In the hypergraph 
  $\link[\ell1]{}{a,b,c} + \link[\ell2]{a}{d} +\link[\ell3]{}{e} + \link[\ell4]{e}{}$
  the set of initial positions is  $\{a,b,c,e\}$,
  while $e$ is the only final position of the hypergraph,
  and it belongs to the domain of the unique final link $\link[\ell4]{e}{}$.
\end{example}
  
\begin{remark}\label{rem:sumHgraphAbelian}
  The sum of hypergraphs 
  enjoys the properties of an abelian monoid;
  associativity, 
  commutativity,
  and a neutral element 
  which is the empty hypergraph.
\end{remark}

We will also 
use extensively 
the notion 
of \emph{parallel composition}
or \emph{parallel sum}
of hypergraphs, 
an analogue of the \emph{union--graph}
of two simple graphs.

\begin{definition}\label{def:parallelhgraph}
  Given 
  $\hseq_1 = (V_1, E_1 , \target_1,\source_1,\labl_1)$
  and 
  $\hseq_2 = (V_2,E_2,\target_2,\source_2,\labl_2)$ two hypergraphs
  such that $V_1\cap V_2 = E_1\cap E_2 = \emptyset$,
  we define their 
  \emph{parallel sum} as:
  $\hseq_1 \parallel \hseq_2 
  =
  (V_1 \uplus V_2, E_1\uplus E_2,
  \target_1\uplus \target_2 , \source_1\uplus \source_2,\labl_1\uplus\labl_2)$.
\end{definition}

\begin{remark}
  The parallel sum of two hypergraphs 
  $\hseq_1$ and $\hseq_2$
  corresponds to a regular sum 
  whenever the sets of vertices 
  are disjoint.
  Just like the sum,
  parallel composition 
  can always be performed between two hypergraphs 
  (up to a renaming, see \autoref{fig:sumparallel}).
\end{remark}



A hypergraph $\hgraph=(V,E,\target,\source,\labl)$ is:
(1) \emph{target--surjective}
    whenever $\target(\hgraph)=V$,
(2) \emph{source--disjoint}
    if the sets $\source(e)$ for $e\in E$
    are pairwise disjoint,
(3) \emph{target--disjoint}
    if the sets $\target(e)$ for $e\in E$
    are pairwise disjoint (\autoref{fig:moduleDef}). 
A \emph{module} is an hypergraph 
which is target--disjoint and source--disjoint,
which means that for each position $p$ 
there exists \emph{at most} one link $e$
such that $\source (e)$ (resp. $\target(e)$) contains $p$.
Any single--link hypergraph is a module.
Uncarefully summing two modules does not necessarily 
result in a module;
for instance 
the single link hypergraphs 
$e = \link[\ell]{}{a}$ and $e' = \link[\ell']{}{a}$
are both modules but their sum isn't 
as $a$ is the target of the two links $e$ and $e'$.

\begin{figure}
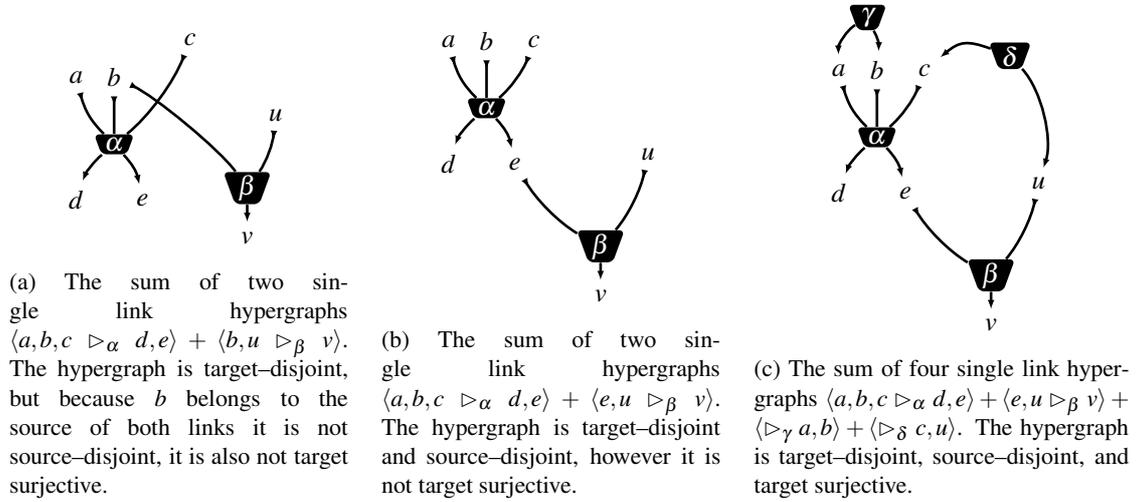

  \centering
  \begin{subfigure}{0.28\textwidth}
    \centering 
    \tikzfig{hypergraph_sum}
    \caption{The sum of two single link hypergraphs 
    $\link[\alpha]{a,b,c}{d,e} + \link[\beta]{b,u}{v}$.
    The hypergraph is target--disjoint, but
    because $b$ belongs to the source of both links 
    it is not source--disjoint,
    it is also not target surjective.
   }
  \end{subfigure}
  \quad 
  \begin{subfigure}{0.28\textwidth}
    \centering 
    \tikzfig{hypergraph_sumInj}
    \caption{The sum of two single link hypergraphs 
    $\link[\alpha]{a,b,c}{d,e} + \link[\beta]{e,u}{v}$.
    The hypergraph is target--disjoint and source--disjoint,
    however it is not target surjective.
   }
  \end{subfigure}
  \quad 
  \begin{subfigure}{0.31\textwidth}
    \centering 
    \tikzfig{hypergraph_sumInjTar}
    \caption{The sum of four single link hypergraphs 
    $\link[\alpha]{a,b,c}{d,e} + \link[\beta]{e,u}{v} + \link[\gamma]{}{a,b} + \link[\delta]{}{c,u}$.
    The hypergraph is target--disjoint, source--disjoint,
    and target surjective.
   }
  \end{subfigure}
  \caption{Properties of hypergraphs: source--disjoint, target--disjoint and target--surjective hypergraphs.}
  \label{fig:moduleDef}
\end{figure}

  An \emph{arrangement}
  of a directed hypergraph $\hgraph$
  is a total order $<_{\mathbf a}$ on its set of conclusions;
  equivalently
  the order may be identified
  as a bijection 
  $\mathbf a: 
  \{1,\dots,card (\conclu \hgraph) \} 
  \rightarrow \conclu \hgraph$.
  An \emph{ordered hypergraph} is a pair $(\hgraph,\mathbf a)$ 
  of an hypergraph $\hgraph$ together with 
  an arrangement $\mathbf a$ of $\hgraph$.
  Given an ordered hypergraph $(\hgraph,\mathbf a)$ 
  with $n$ conclusions
  for an integer $1\leq i \leq n$,
  we denote $\arrange i$
  by $\hgraph (i)$ whenever there is no ambiguity.
  The arrangement $\mathbf a$
  is denoted $\arrange \hgraph$,
  and we might refer 
  to $\hgraph$ 
  as the \emph{unordered hypergraph underlying}
  $(\hgraph , \mathbf a).$

  For $n,m\in\mathbb{N}$
  we denote by $\mathbf{[}n;m\mathbf {]}$
  the set 
  of integers 
  $i$ such that $n\leq i \leq m$.
  Given two functions
  $f:[1;n]\rightarrow E$
  and $g:[1;m]\rightarrow E$
  we denote 
  $f\sumtr g : [1;m+n]\rightarrow E$
  the function such that 
  $f\sumtr g (i) = f(i)$ when $1\leq i \leq n$
  and $f\sumtr g (i) = g(i-n)$ when $n+1\leq i \leq n+m$.
  This operation is not commutative.
  The parallel sum of two ordered 
  hypergraph $(\hgraph_1,\mathbf a_1)$ and $(\hgraph_2,\mathbf a_2)$
  naturally 
  yields an ordered hypergraph 
  as $(\hgraph_1\parallel\hgraph_2 , \mathbf a_1 \sumtr \mathbf a_2)$
  (note that however this is not a commutative operation).

\subsection{Multiplicative nets}\label{subsec:multinets}

Up to this point we have allowed
any kind of link to occur in a hypergraph. We now
consider
untyped multiplicative nets
in which only some specific kinds of links occur.
We fix the set of labels 
as the set made 
of the \emph{daimon} ($\maltese$)
the tensor ($\otimes$)
the par ($\parr$)
and the cut ($\mathsf{cut}$) symbol.
Furthermore we fix 
a family of links,
namely 
$\maltese$-labelled links that have no inputs (they are initial links),
$\mathsf{cut}$-labelled links that have exactly two inputs and no outputs (they are final links),
$\otimes$- and $\parr$-labelled links that have exactly two inputs and one output.
As a consequence, the hypergraphs considered will closely resemble
to multiplicative linear logic proof structures, with two important points of divergence:
the absence of typing and the presence of generalised axioms,
a standard $\mll$ axiom link can be seen as daimon link 
with two conclusions
\footnote{To be precise one should say that an \emph{atomic} standard $\mll$ axiom link 
\emph{is} a daimon link 
with two conclusions (\autoref{rem:mllCutElim}).}.

Formally we fix a countable set $\posiset$ of positions
and a family of links $\mathcal L$ defined as:

\vspace{-10pt}
\begin{center}
  \scalebox{0.7}{
    $\mathcal L \triangleq 
  \{ 
\tenslink{p_1,p_2}{p},~
\parrlink{p_1,p_2}{p},~
\cutlink{p_1,p_2} \mid p_1,p_2,p \in \posiset
\}
\quad\cup\quad 
\{ \dailink{p_1,\dots,p_n} \mid n\in \mathbb N , p_1,\dots,p_n \in \posiset \}.$
  }
\end{center}


\begin{definition}
  A multiplicative module is 
  an ordered hypergraph $M = (\body {M} , \arrange M)$ 
  where $\body M$ is a sum of links of $\mathcal L$ which is a module.

  A multiplicative net 
  is a multiplicative module $S = (\body {S} , \arrange S)$
  where $\body S$ is target--surjective.
\end{definition}

From now on we will omit the word \emph{multiplicative}
but a module (resp. net) will always be 
a multiplicative module (resp. net).
For a module $M$ (resp. a net $S$)
we refer to $\body M$ (resp. $\body S$) 
as the unordered hypergraph underlying $M$ (resp. $S$).
An \emph{unordered} module (resp. net)
is the unordered hypergraph 
underlying a module (resp. net).

\begin{remark}
  For two nets 
  $S_1 = (V_1,E_1,\source_1,\target_1,\labl_1)$
  and 
  $S_2 = (V_2,E_2,\source_2,\target_2,\labl_2)$,
  if $S_1+ S_2$ remains a net 
  then 
  $S_1 + S_2 = S_1\parallel S_2$.
  Indeed, 
  by \autoref{def:sumhgraph},
  $E_1\cap E_2 = \emptyset$.
  Then,
  by target--disjointness 
  $\target(S_1)\cap\target(S_2)=\emptyset$;
  and finally because $S_1$ and $S_2$ are target surjective 
  we have $V_1\cap V_2 = \target(S_1)\cap\target(S_2)=\emptyset$,
  so that \autoref{def:parallelhgraph}
  applies.
\end{remark}



\def\daimon#1{\maltese_{#1}}

\begin{notation}
  Given an integer $n$ 
  we denote by $\daimon n$
  any multiplicative net
  consisting of a single daimon link 
  with $n$ outputs,
  i.e. isomorphic
   to $\dailink{p_1,\dots,p_n}$.
\end{notation}

\begin{definition}\label{def:cuttypes}
  Given a multiplicative net
  $S$ 
  the \emph{type} of a cut link $c = \cutlink{p,q}$
  occurring in $S$
  is the multiset of the two labels
  of the links of output $p$ and $q$;
  for readability 
  we write these multisets as ordered pairs.
  Thus there are six
  \emph{types} of cuts (up to symmetry).
  More precisely, we distinguish:
    \emph{multiplicative} cuts, of {type} $(\otimes /\parr)$;
    \emph{clash} cuts, of type $(\otimes / \otimes)$ or $(\parr / \parr)$;
    \emph{glueing} cuts, of type $(\maltese / \maltese)$;
    \emph{non--homogeneous} cuts, of type $(\otimes /\maltese)$
    or $(\parr / \maltese)$,
    which are respectively called 
    \emph{reversible} and \emph{irreversible} cuts.
In a net $S$, 
a cut $\cutlink{p,q}$
is \emph{cyclic}
whenever $p$ and $q$ are targets of the same link.
\end{definition}

\begin{remark}
Each cut link occurring in a net $S$
has a type since 
a net is target--surjective.
However in a module this isn't true:
for instance in the module $\cutlink{p,q}$ 
consisting of a single cut link,
the type of the cut link is not defined.
\end{remark}

\begin{remark}
  The inputs of a cut link $\cutlink{p,q}$ are ordered, 
  making the two links 
  $\cutlink{p,q}$ 
  and $\cutlink{q,p}$ distinct.
  However (up to isomorphism) this plays no role during cut elimination.
\end{remark}


\begin{figure*}[!t]
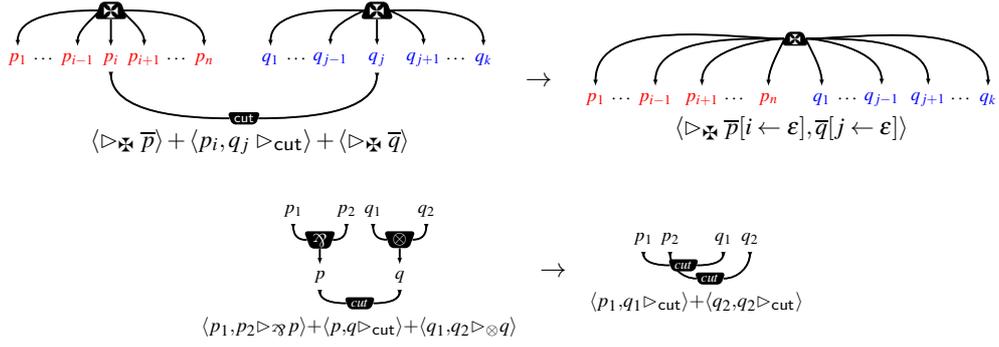

  \centering
  \begin{tabular}{c}    
    \scalebox{0.7}{\tikzfig{cut_glue}}\\
    \small$\dailink{\overline p} +
    \cutlink{p_i,q_j} +
    \dailink{\overline q} 
    $
  \end{tabular}
  $\rightarrow$
  \begin{tabular}{c}
    \scalebox{0.7}{\tikzfig{reduct_glue}}\\
    \small$\dailink{\overline p [i\leftarrow \epsilon],\overline q [j\leftarrow \epsilon]}$
  \end{tabular}
  \\
  \vspace{15pt}
  \begin{tabular}{c}
    \scalebox{0.7}{\tikzfig{cut_mult}}\\
    $
        \scriptstyle \parrlink{p_1,p_2}{p} +
        \scriptstyle \cutlink{p,q} +
        \scriptstyle \tenslink{q_1,q_2}{q}
    $
  \end{tabular}
  $\rightarrow$ 
  \begin{tabular}{c}
    \scalebox{0.7}{\tikzfig{redex_mult}} \\
    $
      \scriptstyle \cutlink{p_1,q_1} +  
      \scriptstyle \cutlink{q_2,q_2}    $
      
  \end{tabular}
  \caption{Rewriting defining the homogeneous cut elimination.
  We provide a representation 
  of each hypergraph involved    
  above its expression.
  In the step of the glueing cut 
  we assume the two daimons to be distinct
  i.e. the cut is acyclic.
  In this figure $\overline p =p_1,\dots,p_n$
  while $\overline q = q_1,\dots,q_k$.}\label{fig:homcutelim}
\end{figure*}

\begin{figure*}[ht]
  \begin{tabular}{c}
    \scalebox{0.7}{\tikzfig{cut_daiparr}}\\
    \small $\parrlink{q_1,q_2}{q} + \cutlink{q,p} + \dailink{\overline a, p , \overline b}$
  \end{tabular}
  $\rightarrow$ 
  \begin{tabular}{c}
      \scalebox{0.7}{\tikzfig{redex_daiparr}} \\
    \small $\cutlink{q_1,p^1}+ \cutlink{q_2,p^2} 
    + \dailink{\overline {\sigma(a)} , p^1,\overline{\tau(b)}}
    + \dailink{\overline {\sigma'(a)} , p^2,\overline{\tau'(b)}}$
  \end{tabular}
  \\
  \vspace*{15pt}
  
  \begin{tabular}{c}
      \scalebox{0.7}{\tikzfig{cut_daitens}}\\
      \small $\tenslink{p_1,p_2}{p} + \cutlink{p,q} + \dailink{\before i q, q , \after i q}$
  \end{tabular}
  $\rightarrow$ 
  \begin{tabular}{c}
    \scalebox{0.7}{\tikzfig{reduct_daitens}}\\
    \small $\cutlink{p_1,q^1} + \cutlink{p_2,q^2} + \dailink{\before i q, q^1 , q^2 , \after i q}$
  \end{tabular}
  \caption{Rules
defining the non--homogeneous cut elimination.
In the elimination of the $(\parr/\maltese)$ cut - first row -
$\overline  a = (a_1,\dots,a_n)$
and $\overline b=(b_1,\dots,b_m)$
while 
$\sigma(\overline a)=(a_{\sigma(1)},\dots,a_{\sigma(k)})$
,
$\sigma'(\overline a)=(a_{\sigma'(1)},\dots,a_{\sigma'(k')})$
,
$\tau(\overline b)=(b_{\tau(1)},\dots,b_{\tau(h)})$
,
$\tau'(\overline b)=(b_{\tau'(1)},\dots,b_{\tau'(h')})$
(with $n=k+k'$ and $m=h+h'$)
are sequences that define a partition of
$\{ a_1,\dots,a_n,b_1,\dots,b_n \}$
more precisely 
$\{a_1,\dots,a_n\} = \{ a_{\sigma(1)},\dots,a_{\sigma(k)} , a_{\sigma'(1)},\dots,a_{\sigma'(k')}\}$
and 
$\{b_1,\dots,b_n\} = \{ b_{\tau(1)},\dots,b_{\tau(h)} , b_{\tau'(1)},\dots,b_{\tau'(h')}\}$,
and $\sigma,\sigma',\tau,\tau'$ are permutations.
Furthermore 
$p^1,p^2,q^1,q^2$ are fresh positions.
The figure is slightly misleading: 
$q_1$ and $q_2$ 
may be elements of $\overline a$
or $\overline b$ (in the first row)
while $p_1$ and $p_2$ 
may be elements of $q_1,\dots,q_{i-1},q_{i+1},\dots,q_n$ (in the second row),
these cases are illustrated in \autoref{fig:cutelimcomplement}.
This has an important consequence: a cut can belong to a cycle 
and still be reducible (\autoref{rem:cyclicReduction}).
}\label{fig:nhomcutelim}
\end{figure*}

Multicative nets 
comes with their notion of computation 
called \emph{cut elimination}:
it is a rewriting on nets 
and more precisely it rewrites 
a redex (that is a sub--net made of a single cut link
and two non--cut links)
into redexes or daimons (in the very specific case of glueing cuts).
Up to isomorphism, 
how a redex is rewritten
depends solely on
its type (\autoref{def:cuttypes}).



\begin{definition}\label{def:homcut}
The relation of \emph{homogeneous cut elimination}
on unordered nets
is denoted by $\rightarrow_h$ 
and it is the rewriting relation
defined as the contextual closure (with respect to the sum) of
the relation defined in \autoref{fig:homcutelim}.
\end{definition}

\begin{remark}\label{rem:netlift}
  The (homogeneous) cut elimination procedure
  on unordered nets leave the conclusions unchanged.
  As a consequence the homogeneous cut elimination 
  can be lifted from unordered nets to nets:
  whenever two unordered nets are such that 
  $S\rightarrow S'$,
  for any arrangement $\mathbf a$ of $S$
  we have $(S,\mathbf a)\rightarrow (S',\mathbf a)$.
\end{remark}

The following result is easily established, in particular since the number of links strictly decreases
during homogeneous cut elimination.

\begin{proposition}\label{thm:snhomogeneous}\label{thm:confluencehomogeneous}
  Homogeneous cut elimination is confluent and strongly normalizing.
\end{proposition}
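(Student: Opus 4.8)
The plan is to establish the two assertions separately: strong normalization first (since it is used implicitly to make sense of confluence via Newman's lemma), then local confluence, and finally invoke Newman's lemma. Since homogeneous cut elimination is defined as the contextual closure (with respect to $+$) of the three rules in \autoref{fig:homcutelim} — the glueing rule, the multiplicative rule, and noting that clash cuts and (in the homogeneous fragment) non-homogeneous cuts are simply not reducible — the whole argument reduces to a finite case analysis on pairs of redexes.

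\textbf{Strong normalization.} First I would observe that each of the two rewriting rules in \autoref{fig:homcutelim} strictly decreases the number of links: the glueing step replaces three links ($\dailink{\overline p}$, $\cutlink{p_i,q_j}$, $\dailink{\overline q}$) by one; the multiplicative step replaces three links ($\parrlink{p_1,p_2}{p}$, $\cutlink{p,q}$, $\tenslink{q_1,q_2}{q}$) by two. Since the contextual closure with respect to the sum does not affect the links outside the redex, any single step of $\rightarrow_h$ strictly decreases the size (number of links) of the net. As the size is a natural number, there can be no infinite reduction sequence, so $\rightarrow_h$ is strongly normalizing. This is exactly the remark flagged just before the statement.

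\textbf{Confluence.} By \emph{Newman's lemma}, it suffices to prove local confluence: whenever $S \rightarrow_h S_1$ and $S \rightarrow_h S_2$, there is a common reduct $S_3$ with $S_1 \reach{h} S_3$ and $S_2 \reach{h} S_3$. The two steps correspond to firing two redexes $R_1$ and $R_2$ inside $S$; write $R_i = c_i + \ell_i + \ell_i'$ where $c_i$ is the cut link and $\ell_i,\ell_i'$ the two non-cut links attached to it. The key case distinction is whether these redexes overlap, i.e.\ whether they share a link. If $R_1$ and $R_2$ are disjoint (no shared link), then because the rewriting is the contextual closure with respect to $+$, firing one leaves the other untouched, and they commute: fire the residual of $R_2$ after $R_1$ and the residual of $R_1$ after $R_2$ to reach the same $S_3$ in one step each. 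The only possible overlap is when the two cuts $c_1$ and $c_2$ share a non-cut link $\ell$ — a cut link has exactly two inputs and no output, so two distinct cut links can only share an input position, hence share the link whose output that position is. The shared link $\ell$ is then a $\maltese$-link (a $\maltese$-link can have many outputs, so it can feed two glueing cuts at once), or it is a $\otimes$- or $\parr$-link (exactly one output, so it can feed at most one multiplicative cut through that output — but a $\otimes$-link with its single output could in principle be input to two different cuts only if... no: one output, one consumer, so this subcase does not arise unless the output is fed into two cut links, which is excluded by source-disjointness of modules).

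\textbf{The main obstacle} I anticipate is the one genuine critical pair: a $\maltese$-link $\dailink{\dots}$ that is simultaneously the input of two distinct glueing cuts, say $\cutlink{p_i,q_j}$ and $\cutlink{p_k,r_l}$ with $i \neq k$, against two further daimons $\dailink{\overline q}$ and $\dailink{\overline r}$. Firing the first glueing cut merges $\dailink{\overline p}$ and $\dailink{\overline q}$ into a single daimon $\dailink{\overline p[i\leftarrow\epsilon],\overline q[j\leftarrow\epsilon]}$; the second cut $\cutlink{p_k,r_l}$ survives as a glueing cut on this merged daimon (its input $p_k$ is still present, since $k \neq i$), and firing it produces a daimon on the positions $\overline p$, $\overline q$, $\overline r$ with the three cut positions removed. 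Doing the two steps in the other order yields an isomorphic net — the same daimon on the same positions — so the critical pair joins (in fact after exactly one more step on each side). One must check carefully that the bookkeeping of indices ($i\leftarrow\epsilon$, $j\leftarrow\epsilon$, etc.) in \autoref{fig:homcutelim} behaves coherently here, and that the ``distinct daimons'' (acyclicity) side condition is preserved; but modulo that routine check the diagram closes. A symmetric subtlety is when a $\maltese$-link feeds a glueing cut \emph{and} is also the partner of some other cut through another of its outputs — again the two redexes are link-disjoint except for $\ell$, and the residual analysis is the same. Since every critical pair joins and disjoint redexes commute, $\rightarrow_h$ is locally confluent; with strong normalization, Newman's lemma gives confluence. \qed
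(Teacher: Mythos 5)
Your overall strategy---strong normalisation from the strict decrease of the number of links, then local confluence plus Newman's lemma---is exactly what the paper intends (its stated justification is only the link-count observation), and your identification of the unique genuine overlap is correct: a $\otimes$- or $\parr$-link has a single output which, by source-disjointness, is the input of at most one cut, so such a link lies in at most one redex, and the only link that can be shared by two homogeneous redexes is a daimon feeding two glueing cuts. The strong-normalisation half is complete as written.

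Two steps in the local-confluence half do not go through as stated. First, you assume that after firing one of the two glueing cuts the other ``survives as a glueing cut on the merged daimon'' and can be fired; this fails precisely when the two cuts join the \emph{same} pair of daimons, e.g.\ $\dailink{p_1,p_2,p_3}+\dailink{q_1,q_2}+\cutlink{p_1,q_1}+\cutlink{p_3,q_2}$: firing either cut merges the two daimons and turns the remaining cut into a \emph{cyclic} one, which the side condition of the glueing rule makes irreducible. So the acyclicity condition you propose to ``check is preserved'' is in fact \emph{not} preserved, and this peak must be closed differently---namely by observing that the two one-step reducts are already isomorphic normal forms, hence joinable with zero further steps. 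Second, when the shared daimon is glued to two \emph{distinct} partners $\dailink{\overline q}$ and $\dailink{\overline r}$, the two firing orders produce merged daimons whose target sequences are different interleavings (the $\overline q$-block before the $\overline r$-block, or conversely). Since the target set of a link is \emph{ordered} and isomorphisms are required to preserve that order, ``the same daimon on the same positions'' is not automatic: the two reducts have identical links and incidences but need not be isomorphic in the paper's strict sense. Closing the diagram therefore requires either showing that the relevant block permutation extends to an isomorphism of the whole net (it does not in general) or working modulo reordering of daimon outputs, which is surely what the authors intend but is nowhere licensed by the definitions. This is the one non-routine point of the whole proposition, and it is exactly the bookkeeping you defer; it needs to be resolved, not merely flagged.
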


%


\begin{definition}\label{def:nohomcut}
  The non homogeneous reduction
  is denoted $\rightarrow_{nh}$
  and it
  is defined on unordered nets 
  as the contextual closure 
  of the relation given in \autoref{fig:nhomcutelim}.
\end{definition}

\begin{remark}
  The non--homogeneous reduction 
  preserves the conclusion 
  of the nets,
  hence it can be lifted to ordered nets -- 
  as in remark \ref{rem:netlift}.
\end{remark}



\begin{remark}\label{rem:nondet-proofsearch}
  In the framework of 
  Multiplicative Linear Logic (\autoref{sect:MLL}, \autoref{fig:mllrules}),
  non homogeneous cut elimination simulates 
  proof search in the sequent calculus:
  \begin{center}
  \scalebox{0.7}
  {
    $\vlderivation{
      \abovetwo{\Gamma,A\parr B}{\mathsf{cut}}
      {\aboveone{\Gamma,A\parr B}{\maltese}\finish}
      {\aboveone{A^\bot\otimes B^\bot , A\parr B}{\parr}
        {\abovetwo{A^\bot\otimes B^\bot ,A ,B}{\otimes}
          {\aboveone{A^\bot,A}{\maltese}\finish}
          {\aboveone{B^\bot,B}{\maltese}\finish}
          }
        }
    }$
  }
  $\rightarrow^*$
  \scalebox{0.7}{
    $\vlderivation{
      \aboveone{\Gamma,A\parr B}{\parr}
      {\aboveone{\Gamma,A,B}{\maltese}\finish}
    }$
  }
  \qquad 
  \scalebox{0.7}
  {
    $\vlderivation{
      \abovetwo{\Gamma,A\otimes B}{\mathsf{cut}}
      {\aboveone{\Gamma,A\otimes B}{\maltese}\finish}
      {\aboveone{A^\bot\parr B^\bot , A\otimes B}{\parr}
        {\abovetwo{A^\bot, B^\bot ,A \otimes B}{\otimes}
          {\aboveone{A^\bot,A}{\maltese}\finish}
          {\aboveone{B^\bot,B}{\maltese}\finish}
          }
        }
    }$
  }
  $\rightarrow^*$
  \scalebox{0.7}{
    $\vlderivation{
      \abovetwo{\Gamma,A\otimes B}{\otimes}
      {\aboveone{\Gamma_1,A}{\maltese}\finish}
      {\aboveone{\Gamma_2,B}{\maltese}\finish}
    }$
  }
\end{center}
  This also illustrates the non determinism 
  of the $(\maltese/\parr)$ reduction step
  which corresponds to proof search
  on a formula of the form $A\otimes B$:
  going from bottom to top 
  the 
  $\otimes$--introduction rule splits the context $\Gamma$,
  which is a non deterministic process.
  A consequence of non determinism
  is the loss of confluence for cut elimination (but not of strong normalisation, \autoref{prop:SN});
  since splitting the context is irreversible,
  a net can have different normal forms,
  like the second net 
  of figure \ref{fig:notconflu1} (from left to right)  
  which coincides with the second net of figure \ref{fig:notconflu2}:
  this same net reduces, following the two figures,
  to two different normal forms.
\end{remark}



\begin{remark}
  A cyclic cut is a glueing cut.
  Indeed, 
  given a  cyclic cut link $\cutlink{p,q}$ in a net,
  because $p$ and $q$ belong to the target of a same link $e$
  and the only links which may have several targets are daimon links 
  it follows that $e$ is a daimon link.
\end{remark}

\begin{remark}\label{rem:acyclicCuts}
  The side condition 
  of \autoref{fig:homcutelim}
  entails that a cyclic cut is not reducible:
  for example 
  the net $\dailink{p,q} + \cutlink{p,q}$
  is a net in normal form.
\end{remark}

\begin{remark}\label{rem:invalidCuts}
  A cut link 
  which is not reducible 
  is either a clashing cut 
  or a cyclic glueing cut.
  Notice, however,
  that while clashing cuts
  never disappear during cut elimination,
  cyclic cuts may disappear 
  (see \autoref{fig:cycleParrBreak}).
\end{remark}

\begin{remark}\label{rem:cyclicReduction}
In the standard framework of $\mll$ proof structures
the cut elimination
of an axiom against a cut 
is defined as the identification of the two extreme positions,
therefore eliminating such a cut may create \emph{loops} (\autoref{sect:hypergraphs}).
To avoid loops from occurring during cut elimination
an ad hoc condition is usually added (see for example \cite{LLhandbook}).
In our framework,
this condition is 
the rather natural side condition of \autoref{fig:homcutelim}.
\end{remark}

{\begin{figure*}[!thb]
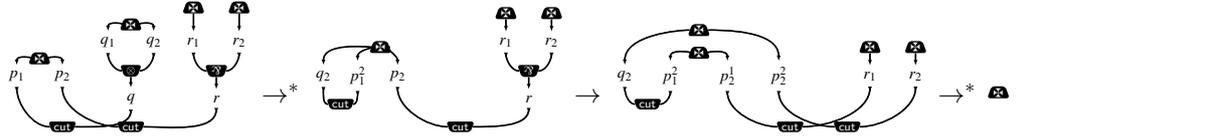
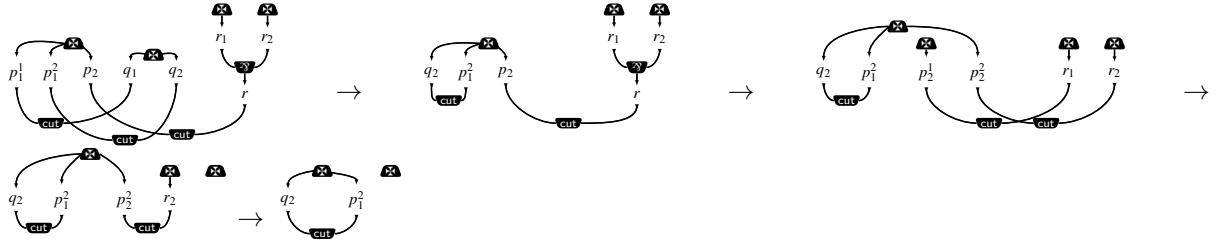

  \centering
  \begin{subfigure}{\textwidth}
    \scalebox{0.6}{\tikzfig{interactiona2}}
    $\rightarrow$
    \scalebox{0.6}{\tikzfig{interactiona3}}
    $\rightarrow$
    \scalebox{0.6}{\tikzfig{interactiona4}}
    $\rightarrow$
    \scalebox{0.6}{\tikzfig{interactiona5}}    
    $\rightarrow$
    \scalebox{0.6}{\tikzfig{interaction6}}    
    $\rightarrow$
    \scalebox{0.6}{\tikzfig{interaction7}}
    \caption{Eliminating first 
    the \emph{irreversible cut $(\maltese/\parr)$}
    produces a net\footnote{The $(\maltese/\parr)$ reduction step is not deterministic but in this very special case
    any choice yields the same net.}
     which cannot normalize in $\maltese_0$.}\label{fig:firstirr}
  \end{subfigure}
  \begin{subfigure}{\textwidth}    
    \scalebox{0.6}{\tikzfig{interactiona2}}
    $\rightarrow^*$
    \scalebox{0.6}{\tikzfig{interaction3}}
    $\rightarrow$
    \scalebox{0.6}{\tikzfig{interaction41}}
    $\rightarrow^*$
    \scalebox{0.6}{\tikzfig{interaction51}}
    \caption{Eliminating 
    the reversible cut $(\maltese/\otimes)$
    produces a cycle which can be eliminated 
    by the elimination of the $(\maltese/\parr)$
    cut remaining, hence 
    that net can normalize in $\maltese_0$.}\label{fig:notconflu1}
    \end{subfigure}
\begin{subfigure}{\textwidth}    
      \scalebox{0.6}{\tikzfig{interaction2}}
      $\rightarrow$
      \scalebox{0.6}{\tikzfig{interaction3}}
      $\rightarrow$
      \scalebox{0.6}{\tikzfig{interaction42}}
      $\rightarrow$
      \scalebox{0.6}{\tikzfig{interaction52}}          
      $\rightarrow$
      \scalebox{0.6}{\tikzfig{interaction52End}}    
      \caption{Non determinism 
      also comes from the choice 
      of how we reduce $(\parr/\maltese)$ cuts,
      different choices leading to different normal forms:
      the ``wrong'' choice results in a net which cannot normalize to $\maltese_0$.}\label{fig:notconflu2}
      \end{subfigure}
      \caption{Non homogeneous cut eliminations contains two sources of non--determinism.}
\end{figure*}



\begin{remark}\label{rem:mllCutElim}
  Notice
  that 
  whenever daimons are binary and typed by dual atomic formulas 
  the cut elimination procedure
  for $\mlldai$ defined in \autoref{def:homcut} 
  is exactly the standard cut elimination procedure 
  for $\mll$ \cite{girard_1987},\cite{LLhandbook}.
\end{remark}


The rewriting rule, denoted $\rightarrow$, 
associated with cut elimination 
is the union 
of the homogeneous and non--homogeneous cut elimination 
i.e. $\rightarrow_{h}\cup\rightarrow_{nh}$.
We write $S\reduce c S'$,
when $S'$ is obtained from $S$ by eliminating the cut $c$.
We write by $S \redmult S'$ (resp. $S \rednotmult S'$)
whenever $S\reduce c S'$
and $c$ is multiplicative (resp. not multiplicative).
Given two binary relations
  $R_1$ and $R_2$ on a set $X$
  we denote by $R_1\at R_2$
  their composition,
  i.e. for two $x,y\in X$
  $x R_1\at  R_2 y$
  if and only if there exists $z$
  such that $x R_1 z$
  and $z R_2 y$.

\begin{proposition}\label{prop:SN}
  Cut elimination is strongly normalising, furthermore:
  \begin{enumerate}
    \item \label{item:factorisation}
    $\rightarrow ^*$ can be factorised as $\redmult^*\at\rednotmult^*$.
    \item \label{item:delayParrs}
    If $c$ is a $(\parr/\maltese)$ cut in $S$;
    if $S\reduce{c}\at \rightarrow ^* S'$
    then $S\rightarrow ^*\at\reduce{c}  S'$.
    \item \label{item:earlyNonPar}
    If $c$ is not a $(\parr/\maltese)$ cut in $S$;
    if $S\rightarrow ^* \at\reduce{c} S'$
    then $S\reduce{c}\at \rightarrow ^* S'$.
  \end{enumerate}
\end{proposition}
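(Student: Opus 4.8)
The plan is to establish the four claims (strong normalisation, factorisation, and the two permutation lemmas) in an order that lets the later items build on the earlier ones. First I would prove strong normalisation: since homogeneous cut elimination strictly decreases the number of links (\autoref{thm:snhomogeneous}), the only source of difficulty is the non-homogeneous steps, which can \emph{increase} link count (the $(\parr/\maltese)$ and $(\otimes/\maltese)$ rules each replace one cut and one connective by two cuts and up to two daimons). I would introduce a well-founded measure — a lexicographic or multiset ordering combining, say, the total size of the net with the number of non-homogeneous redexes weighted by the size of the daimon they attack — and check that every rewrite step decreases it. The $(\maltese/\maltese)$ glueing step decreases link count outright; the multiplicative step decreases link count; each non-homogeneous step decreases the ``daimon complexity'' component because the cut is pushed strictly closer to the leaves (the connective is consumed and the daimon is split, so the relevant formula/subnet shrinks). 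The main bookkeeping is making sure the contextual closure doesn't spoil the measure, which is routine since the measure is additive over the sum.

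**Factorisation (item 1).** Here I would argue by a local-commutation / standard-postponement argument: it suffices to show that a single step $\rednotmult \at \redmult$ can be rearranged as $\redmult \at \rednotmult^*$ (postponing non-multiplicative steps). Two cases arise depending on whether the two redexes overlap. If they are in disjoint parts of the net (disjoint links), the steps commute trivially by the contextual closure. If they share structure, I would check each possible overlap of a non-homogeneous redex with a multiplicative redex: the key point is that a multiplicative cut $(\otimes/\parr)$ never shares its connective links with a non-homogeneous cut (the connective outputs feed distinct cuts), so genuine interference is limited, and when a non-homogeneous step creates a new multiplicative redex, that redex was not ``blocked'' before — one can still fire an independent multiplicative redex first. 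Combining local postponement with strong normalisation (already proved) gives the global factorisation $\rightarrow^* \;=\; \redmult^* \at \rednotmult^*$ by the standard abstract-rewriting lemma.

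**The permutation lemmas (items 2 and 3) and the main obstacle.** For item 2, I want: a $(\parr/\maltese)$ cut $c$ can always be \emph{delayed} past any reduction sequence leading to $S'$. The idea is to track the ``residuals'' of $c$: after any other step, $c$ either survives as a single $(\parr/\maltese)$ cut on the same $\parr$-link (possibly with a modified daimon, if the other step was itself a non-homogeneous step feeding that daimon), or — and this is the delicate case — the daimon attacked by $c$ gets consumed by another cut first. One must verify that in every such interaction the net obtained by ``doing the other step first, then $c$'' is the same (up to isomorphism, and up to the non-deterministic choice of partition in the $(\parr/\maltese)$ rule) as the net obtained by doing $c$ first. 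Item 3 is the mirror statement: a non-$(\parr/\maltese)$ cut can be \emph{advanced}, i.e. pulled to the front — equivalently, everything else can be postponed after it. I expect the main obstacle to be precisely the bookkeeping around the non-deterministic $(\parr/\maltese)$ rule together with cyclic cuts: because a $(\parr/\maltese)$ step splits a daimon's outputs along an \emph{arbitrary} partition (\autoref{fig:nhomcutelim}, \autoref{rem:nondet-proofsearch}), ``delaying'' $c$ requires producing, after the other steps, a matching partition, and one must check that the partitions can always be chosen coherently — and separately, that cases like \autoref{fig:cutelimcomplement} (where $q_1,q_2$ themselves lie in the daimon's conclusion sequence, so a cut can sit on a cycle and still reduce, \autoref{rem:cyclicReduction}) don't break the residual analysis. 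I would handle this by carefully enumerating the overlap patterns between $c$ and a second redex $d$ (disjoint; $d$ attacks the same daimon as $c$; $d$'s connective is cut-connected to $c$'s $\parr$; $d$ is the glueing of $c$'s daimon with another), drawing the commuting square in each case, and invoking strong normalisation to lift the one-step permutations to the $\rightarrow^*$ statements.
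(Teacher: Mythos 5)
Your overall strategy coincides with the paper's: a decreasing measure for strong normalisation, local commutation plus postponement for the factorisation, and a case analysis of one-step commuting squares (lifted to sequences by induction) for the two permutation items, with the partition choices in the $(\parr/\maltese)$ rule matched coherently across the square. Two points, one minor and one substantive, separate your sketch from a working proof.

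The minor point is the normalisation measure. The total number of links does \emph{not} decrease under non-homogeneous steps (a $(\parr/\maltese)$ redex of three links rewrites to four), so a lexicographic measure with ``total size'' as dominant component fails. The observation you make in passing --- that the connective is consumed --- is the whole content: the measure $(\mathsf{connective}(S),\mathsf{cut}(S))$ ordered lexicographically works directly, since every non-glueing step strictly decreases the number of $\otimes$/$\parr$ links and glueing steps decrease the number of cuts while leaving connectives fixed.

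The substantive gap is in the factorisation. Your postponement argument rests on the sentence ``when a non-homogeneous step creates a new multiplicative redex, that redex was not blocked before --- one can still fire an independent multiplicative redex first.'' This is not a valid repair: if $S\rednotmult T$ really did create a multiplicative cut $c$ that is then fired in $T$, then $c$ does not exist in $S$ and cannot be pulled in front of the non-multiplicative step; firing some \emph{other} multiplicative redex first does not reproduce the reduct. What saves the day --- and what you must prove rather than hedge around --- is that this case is vacuous: glueing steps create no cuts, and $(\otimes/\maltese)$ and $(\parr/\maltese)$ steps only create cuts with a daimon output on one side, hence of type $(\cdot/\maltese)$ or $(\maltese/\maltese)$, never multiplicative (this is \autoref{rem:multCreatesMult}, and it is precisely the feature that distinguishes this calculus from standard $\mll$, where an $\mathsf{ax}$/cut step \emph{can} create a multiplicative cut and the factorisation fails). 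Combined with the fact you do state correctly --- that a multiplicative redex, containing no daimon and, by target-disjointness, sharing no connective with any other redex, is unrelated to every other cut and therefore commutes with everything --- the postponement goes through by induction on the length of the sequence. With that lemma made explicit, your argument matches the paper's.
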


\begin{figure*}[!thb]
  \begin{minipage}{0.3\textwidth}
  \begin{subfigure}[b]{\linewidth}
      \centering
      \scalebox{0.8}{\begin{tabular}{c @{\hskip 2pt} c@{\hskip 2pt} l}
          $A,B$ 
          & $\triangleq\,$ & 
              $X \in \pvarset $ \\
          & & $\;\mid\; A\parr B \;\mid\; A\otimes B$ \\
          $\hseq_1,\hseq_2 $  
          & $\triangleq \,$&
              $A\in \formulaset $ \\
          & & $\;\mid\; \hseq_1,\hseq_2 \;\mid\; 
          \hseq_1\parallel\hseq_2$ 
        \end{tabular}  }  
      \caption{Grammar defining $\formulaset$ (first two rows),
      and grammar defining $\hseqset$ (last two rows).
      }\label{fig:mllgrammar}
  \end{subfigure}        
\end{minipage}
  \hfill
  \begin{minipage}{0.65\textwidth}
  \begin{subfigure}{\linewidth}
      \centering
      \scalebox{0.85}{$(A\parr B)^\bot = A^\bot\otimes B^\bot \qquad 
  (A\otimes B)^\bot =  A^\bot \parr B^\bot$}
  \caption{De Morgan laws lifting the involution $(\cdot)^\bot$
  from $\pvarset$ to $\formulaset$.}\label{fig:demorgan}
  \end{subfigure}

  \vspace{8pt}

  \begin{subfigure}[b]{\linewidth}
      \centering
      \scalebox{0.75}{$\ruleone{\maltese}{\Gamma}{}
      \quad 
      \ruleone{\parr}{\Gamma,A\parr B}{\Gamma , A ,B}
      \quad 
      \ruletwo{\otimes}{\Gamma,\Delta,A\otimes B}{\Gamma,A}{\Delta,B}
      \quad 
      \ruletwo{\mathsf{cut}}{\Gamma,\Delta}{\Gamma,A}{\Delta, A^\bot}
      \quad 
      \ruleone{\mathsf{ex}}{\Gamma,B,A,\Delta}{\Gamma,A,B,\Delta}
      \quad 
      \ruleone{\mathsf{ax}}{A,A^\bot}{}$}
      \caption{Rules used for constructing the proof trees.
      The rules $(\maltese,\parr,\otimes,\mathsf{cut},\mathsf{ex})$
      define the $\mlldai$ fragment.
      Substituting the $(\maltese)$--daimon rule 
      with the $(\mathsf{ax})$--axiom rule 
      results in the fragment $\mll$,
      that is $(\mathsf{ax},\parr,\otimes,\mathsf{cut},\mathsf{ex})$.}\label{fig:mllrules}
  \end{subfigure} 
\end{minipage}
\caption{Grammar of formulas and (hyper)sequent, de Morgan laws and inference rules.}
\end{figure*}
\begin{figure*}[!htb]
  \centering 
  \small$\link[\ell]{\overline c}{\overline a , p_1 , \overline b} + \parrlink{p_1,p_2}{p}$
       \small$\switchto^l$ 
      \small$\link[\ell]{\overline c}{\overline a, p ,\overline b}$
  \qquad 
  \small$\link[\ell]{\overline c}{\overline a , p_2 , \overline b} + \parrlink{p_1,p_2}{p}$
  \small$\switchto^r$
  \small$\link[\ell]{\overline c}{\overline a, p ,\overline b}$
  \caption{The two cases (left and right) defining the switching rewriting.
  The left reduction $\switchto^l$ 
  destroys $p_1$ and makes $p_2$ a conclusion;
  while the right reduction $\switchto^r$
  destroys $p_2$ and makes $p_1$ a conclusion.}\label{fig:switch}
\end{figure*}  }

\begin{figure*}[!thb]
  \centering
  \begin{tabular}{c @{\hskip 18pt} c@{\hskip 18pt} c@{\hskip 18pt} c@{\hskip 18pt} c}
      $
    \ruleone{{\mathsf{\maltese}}}{\Gamma}{} 
    $
    &
    $\vlderivation{
  \vliin{}{{\scriptstyle\mathsf{cut}}}{ \Gamma,\Delta}
  {\vlpr{}{\pi_1}{ A,\Gamma}}
  {\vlpr{}{\pi_2}{ A^\bot,\Delta}}
  }$ 
  &
  $
  \vlderivation{
  \vliin{}{{\scriptstyle\otimes}}{ \Gamma,\Delta, A\otimes B}
  {\vlpr{}{\pi_1}{ A,\Gamma}}
  {\vlpr{}{\pi_2}{ B,\Delta}}
  }$
  &
  $
  \vlderivation{
  \vlin{}{{\scriptstyle\parr}}{ A\parr B,\Gamma}
  {\vlpr{}{\pi_0}{ A, B ,\Gamma}}
  }$
  &
  $
  \vlderivation{
  \vlin{}{{\scriptstyle \mathsf{ex}}}{ \Gamma,A,B,\Delta}
  {\vlpr{}{\pi_0}{\Gamma, B,A,\Delta}}
  }$ \\
  \scalebox{0.9}{$\dailink{p_1,\dots,p_n}$}
  & 
  \scalebox{0.9}{$S_1 + S_2 + $}
  &
  \scalebox{0.9}{$S_1 + S_2 + $}
  &
  \scalebox{0.9}{$S_0 +   $}
  &
  \scalebox{0.9}{$(S_0,a) $} \\
  &
  \scalebox{0.9}{$\cutlink{S_1(1) , S_2(1)}$}
  &
  \scalebox{0.9}{$\tenslink{S_1(1),S_2(1)}{p}$}
  &
  \scalebox{0.9}{$\parrlink{S_0(1),S_0(2)}{p}$}
  &
  \end{tabular}
  \caption{
      Induction defining the relation $\reps$.
      The proof in the first row 
      is represented by a net below it 
      in the second row.
      The position $p$ is always supposed fresh.
      In each case and for each $0\leq i \leq 2$, 
      $S_i$
      is a net which represent $\pi_i$ i.e. 
      $S_i \reps \pi_i$.
  In the case of the exchange rule 
  we explicitly mention the arrangement i.e. the order of the conclusion 
  and assume 
  $(S_0,a')\reps \pi_0$
  and $a(i) = a'(i)$
  whenever $i\leq \size \Gamma$
  or $\size\Gamma +2 < i$.
  On the other hand,
  $a'(\size \Gamma +1) = a(\size\Gamma+2)$
  and $a'(\size\Gamma +2) = a(\size\Gamma +1)$.
  }\label{fig:proofnet}
\end{figure*}

\section{Multiplicative Linear Logic and proof nets}\label{sect:MLL}

We define the 
well--known notion of proof net \cite{girard_1987}
in our setting:
in the presence of the generalised axiom $(\maltese)$,
proof nets are similar to the \emph{paraproof nets} of Curien \cite{curien:criterions}
(which come from Girard Ludics \cite{girard:ludics}).
We then formulate the Danos--Regnier criterion \cite{danos89}:
testing the acyclicity and connectedness 
of (several) graphs allows to determine 
whether a net is a (para)proof net or not \cite{curien:criterions}.


We fix a countable set 
$\pvarset$ 
of \emph{propositional variables}.
The set $\pvarset$ comes with an (explicit) involution $(\cdot)^\bot$;
for each atomic variable 
$X$ there exists its \emph{dual} atomic variable
$X^\bot$ in $\pvarset$.
    The set $\formulaset$
    of \emph{formulas} of multiplicative linear logic 
    is defined by 
    the grammar in \autoref{fig:mllgrammar}.
    The involution $(\cdot)^\bot$ is lifted 
    from $\pvarset$ to $\formulaset$
    as in \autoref{fig:demorgan}.
    The set $\hseqset$
    of \emph{hypersequents} is defined 
    by the grammar in \autoref{fig:mllgrammar},
    a \emph{sequent} is an hypersequent 
    without the \emph{parallel} `$\parallel$'
    constructor.
    The introduction of hypersequents is naturally suggested by the 
    constructions on types (\autoref{sect:RealModel}):
    indeed 
    as the interpretation of the $\parr$--connective
    is based on the interpretation 
    of the ``$,$''--connective,
    the interpretation of the $\otimes$--connective
    relies on that of the ``$\parallel$''--connective 
    (\autoref{def:constructionOnTypes} and \autoref{def:interpretmll}).
    Technically hypersequents are necessary in \emph{our} proof of the completeness theorem (\autoref{thm:completeness}).

    A \emph{proof} of $\mll$
    (resp. $\mlldai$)
    is a tree constructed 
    using the rules 
    $(\mathsf{ax},\parr,\otimes,\mathsf{cut},\mathsf{ex})$
    (resp. $(\maltese,\parr,\otimes,\mathsf{cut},\mathsf{ex})$)
    of \autoref{fig:mllrules}.

\begin{definition}\label{def:proofrep}
    A net $S$
    \emph{represents}\footnote{In the standard Linear Logic terminology $\pi$ 
    is a sequentialisation of the proof net $S$.}
    a proof $\pi$ of $\mlldai$,
    denoted $\pi \reps S$
    or $S\reps \pi$,
    whenever the relation 
    defined in \autoref{fig:proofnet} holds.
    A net represents a proof of $\mll$
    whenever it represents a proof of $\mlldai$
    where every sequent conclusion of a $(\maltese)$--rule 
    has shape $A,A^\bot$ for $A\in\formulaset$.
    A representation of a proof $\pi$
    is a net $S$ which represents $\pi$.
    A \emph{proof net} of $\mlldai$ (resp. $\mll$)
    is a net which represents 
    a proof of $\mlldai$ (resp. $\mll$):
    we say that $S$ is \emph{correct}.
    A net $S$ is \emph{correctly typeable}\footnote{Notice 
    that with the expressions ``correctly typeable'' we mean here 
    that the net is both correct (it represents a proof)
    and that we can label 
    its conclusions with the formulas of $\Gamma$.} 
    by a sequent $\Gamma$
    whenever it represents a proof of $\Gamma$
    in $\mlldai$.
\end{definition}

\begin{notation}
    Let $\mathfrak P$
    denote $\mll$ or $\mlldai$ and
    let $S$ be a net.
    We write $S\vdash_{\mathfrak P} \Gamma$
    whenever there exists a proof 
    $\pi$ in $\mathfrak P$
    such that 
    $S$ is the representation of $\pi$.  
    Furthermore 
    we denote $\proofs{\Gamma : \mathfrak P}$
    the set of all the nets $S$ 
    such that $S\vdash_{\mathfrak P} \Gamma$.
\end{notation}

    A \emph{substitution} is a map 
    $\theta : \pvarset \rightarrow \formulaset$
    such that $\theta (X^\bot) = \theta(X)^\bot$ for each $X\in \pvarset$.
    A substitution can be lifted 
    to formulas and hypersequents by induction: 
    $\theta (A\otimes B) = \theta (A) \otimes \theta (B)$ ; 
    $\theta (A\parr B) = \theta (A) \parr \theta (B)$ ; 
    $\theta (A\parallel B) = \theta (A) \parallel \theta (B)$ ;
    $\theta (A , B) = \theta (A)  , \theta (B)$.
    Given two hypersequents,
    we denote $\Delta \leq \Gamma$
    whenever there exists a substitution $\theta$
    such that $\theta\Delta = \Gamma$.    

\begin{restatable}{proposition}{propSubstiProof}\label{prop:smallproof}
    Let $\Gamma$ and $\Delta$ be two sequents 
    and suppose $\Delta\leq \Gamma$.
    For any net $S$:
    (1) if $S\vdash_{\mlldai}\Delta $  then $ S\vdash_{\mlldai}\Gamma$
    and 
    (2) 
    if $S\vdash_{\mll}\Delta $ then $ S\vdash_{\mll}\Gamma$.
  \end{restatable}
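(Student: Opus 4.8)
The plan is to prove both statements simultaneously by induction on the structure of a proof $\pi$ witnessing $S \vdash_{\mathfrak P} \Delta$, where $\mathfrak P$ is either $\mlldai$ or $\mll$. The key observation is that a substitution $\theta$ with $\theta\Delta = \Gamma$ acts only on the \emph{labels} (formulas) decorating the sequents, and does not touch the underlying tree structure of $\pi$ nor the net $S$ representing it: by Definition~\ref{def:proofrep} and Figure~\ref{fig:proofnet}, the relation $S \reps \pi$ depends only on the rule names occurring in $\pi$ and on the positions/arrangement of $S$, never on which formulas label the sequents. So the strategy is to transform $\pi$ into a proof $\theta\pi$ of $\Gamma$ that has exactly the same tree shape and the same rule at each node, whence the \emph{same} net $S$ represents it.

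First I would check that $\theta$ commutes with the logical connectives and with duality: the conditions $\theta(A \otimes B) = \theta A \otimes \theta B$, $\theta(A \parr B) = \theta A \parr \theta B$, and $\theta(X^\bot) = (\theta X)^\bot$ (lifted to $\theta(A^\bot) = (\theta A)^\bot$ via the de Morgan laws of Figure~\ref{fig:demorgan}) are exactly what is built into the definition of substitution. Then I would run the induction on $\pi$, one case per inference rule of Figure~\ref{fig:mllrules}. For the $(\maltese)$ rule with conclusion $\Delta'$, apply $(\maltese)$ with conclusion $\theta\Delta'$; this is where statement (2) needs a small extra argument, since an $\mll$ representation requires every $(\maltese)$-node to have a conclusion of shape $A, A^\bot$, and we must observe $\theta(A, A^\bot) = \theta A, (\theta A)^\bot$, which again has that shape, so the $\mll$ constraint is preserved. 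For the $(\mathsf{ax})$ rule (relevant to statement (2)) the conclusion $A, A^\bot$ becomes $\theta A, (\theta A)^\bot$, still a valid axiom. For $(\parr)$, $(\otimes)$, $(\mathsf{cut})$ and $(\mathsf{ex})$, the rule applies verbatim to the $\theta$-images of the premises because $\theta$ distributes over every connective and over the comma, and it clearly commutes with the permutation performed by $(\mathsf{ex})$; in the $(\mathsf{cut})$ case one uses $\theta(A^\bot) = (\theta A)^\bot$ so that the cut formula and its dual still match. In each case the induction hypothesis gives that $S_i \reps \pi_i$ already, and since the net-building clause in Figure~\ref{fig:proofnet} for that rule is insensitive to the labels, the same $S$ built from the same $S_i$ represents $\theta\pi$; the arrangement in the $(\mathsf{ex})$ case is handled exactly as in Figure~\ref{fig:proofnet}.

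The only genuinely delicate point — and the one I would spend care on — is bookkeeping the $\mll$-specific side condition across the induction for statement~(2): one must be sure that \emph{every} $(\maltese)$-node of $\theta\pi$ still has a conclusion of the form $A, A^\bot$, which reduces to the remark that substitution preserves the property ``is of shape $A, A^\bot$'' because $\theta$ commutes with $(\cdot)^\bot$. Everything else is a routine structural induction, so I would state the two claims together, remark that $\theta$ acts trivially on the net side of the representation relation, and dispatch the cases briskly. (Strictly, the proposition is stated for \emph{sequents} $\Delta \leq \Gamma$, so no $\parallel$-constructor appears and we need not treat the hypersequent clause $\hseq_1 \parallel \hseq_2$; if one wanted the hypersequent version the same argument goes through, using $\theta(A \parallel B) = \theta A \parallel \theta B$ and the parallel-sum clause for $(S_0,a)$.)
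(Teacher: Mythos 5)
Your proposal is correct and follows essentially the same route as the paper: an induction on the represented proof tree, with the key observation that a daimon (resp.\ axiom) rule introducing $\Delta'$ can be replaced by one introducing $\theta\Delta'$, the shape $A,A^\bot$ being preserved since $\theta$ commutes with $(\cdot)^\bot$. The paper states this very briefly; your version just spells out the routine cases in more detail.
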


\begin{definition}
    The \emph{switching} rewriting 
    is defined on unordered nets 
    as the contextual closure of 
    the rules in \autoref{fig:switch}.
    A \emph{switching} of a net $S$
    is a normal form of $S$ for the switching rewriting: 
    we often denote it $\sigma S$.
\end{definition}

\begin{remark}
    The switching rewriting strongly normalizes 
    since every step reduces the number of links of the net.
    The rewriting is also non-deterministic 
    and non-confluent,
    every normal form is a par--free net.
    The switching rewriting can be lifted to (ordered) nets;
    with the notations of \autoref{fig:switch}
    whenever 
    an unordered net $\body S$ with $n$ conclusions 
    is such that $\body S\switchto^l \body {S'}$
    we define $(\body {S},\mathbf a) \switchto^l (\body {S'},\mathbf a')$
    where 
    $\mathbf a' (i) = \mathbf a (i)$ for each $1\leq i \leq n$
    and $\mathbf a' (n+1) = p_2$
    i.e. the new conclusion is made last conclusion
    (similarly we can define it for the case $\switchto^r$).    
\end{remark}

\begin{definition}
    The \emph{undirected multigraph}\footnote{Recall that a multigraph is a graph where 
    two vertices may be connected by several edges (not to be confused with the notion of hypergraph 
    of \autoref{def:hypergraph}). 
    The function $\border$ maps each edge to its endpoints.} induced by 
    two partitions 
    $P$ and $Q$ of a set $X$
    is
    $(V,E,\border)$ denoted $\graph (P,Q)$ where:
    (1)
    $V = \{1\}\times P\cup \{2\}\times Q$
    the vertices are the classes of $P$ and $Q$
    (as a disjoint union);
    (2) 
    $E = X$;
    (3)
    For any edge $x$ in $X$;
    $\border (x) = \{ (1,P_x) , (2,Q_x) \}$
    where $P_x\in P$ is such that $x\in P_x$
    and $Q_x\in Q$ is such $x\in Q_x$.

    Two partitions $P$
    and $Q$ of a set $X$
    are \emph{orthogonal} 
    if the multigraph $\graph(P,Q)$ is 
    acyclic and connected.
\end{definition}

\begin{definition}
    In a net $S$ 
    denote $p\geq_S q$
    the relation which holds 
    whenever there exists a link $e$
    such that $p\in\source(e)$
    and $q\in\target(e)$.
    Denote $\geq^*_S$ its reflexive and transitive closure;
    a position $p$ is
    \emph{above} a position $q$
    whenever $p\geq^*_S q$.
    Given a position $q$ 
    we denote $q\uparrow^i S$ 
    the set of \emph{initial} positions which are above $q$ in $S$.
\end{definition}

\begin{remark}
    Given a \emph{cut--free} net $S$ with conclusions $p_1,\dots,p_n$
    the sets $p_1\uparrow^i S,\dots, p_n\uparrow^i S$
    form a partition of the initial positions of $S$.
    We denote this partition $\uparrow^i S$.
\end{remark}

\begin{notation}
    Let $S$ be a net
    and let
    $\{d_1,\dots,d_n\}$ be the set of daimon links of $S$.
    The partition 
    $\{\target (d_1) ,\dots, \target (d_n)\}$
    on the set of initial positions of $S$
    is denoted by $\daipart S$.
  \end{notation}

  Reformulated in the context 
  of hypergraphs we get the following 
  theorem from \cite{danos89}.

\begin{theorem}[\!\!\cite{curien:criterions},\cite{danos89}]\label{thm:standardcrit}
    Given a cut--free net $S$,
    the following assertions are equivalent:
    \begin{enumerate}
        \item 
        $S$ is a proof net of $\mlldai$;
        \item 
        For every switching $\sigma S$ of $S$,
        the partitions 
        $\daipart S$ and $\uparrow^ i \sigma S$
        of the set of initial positions of $S$
        are orthogonal;
        \item 
        Every switching $\sigma S$ of $S$
        is acyclic and connected\footnote{
            We refer to the graph naturally induced 
            by the net $\sigma S$.
        }.
    \end{enumerate}
\end{theorem}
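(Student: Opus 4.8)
The plan is to prove $(1)\Rightarrow(3)$, $(3)\Leftrightarrow(2)$ and $(3)\Rightarrow(1)$, which together give all three equivalences. A preliminary remark simplifies things: since $S$ is cut-free and $\reps$ is defined rule by rule (\autoref{fig:proofnet}) with no appeal to cut elimination, any proof $\pi$ with $\pi\reps S$ is itself cut-free; and the $\mathsf{ex}$ rule only permutes the arrangement while leaving the underlying graph of every switching unchanged, so it plays no role. For $(1)\Rightarrow(3)$ I would induct on a cut-free proof $\pi$ with $\pi\reps S$. If $\pi$ ends with $\maltese$ then $S=\daimon n$ has no $\parr$-link, its only switching is $S$ itself, a single daimon link, whose graph is a tree. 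If $\pi$ ends with $\parr$, then $S=S_0+\parrlink{S_0(1),S_0(2)}{p}$ with $S_0\reps\pi_0$; by the definition of $\switchto^{l}$ and $\switchto^{r}$ (\autoref{fig:switch}), resolving the new $\parr$-link to the left (resp.\ right) transforms $S$, as an undirected graph, into $S_0$ with the conclusion $S_0(1)$ (resp.\ $S_0(2)$) renamed to $p$, so every switching of $S$ is isomorphic to a switching of $S_0$ and we conclude by the induction hypothesis. If $\pi$ ends with $\otimes$, then $S=S_1+S_2+\tenslink{S_1(1),S_2(1)}{p}$; since the $\otimes$-link is never altered by switching, a switching of $S$ is the disjoint union of a switching of $S_1$, a switching of $S_2$, and a fresh vertex $p$ joined by a single edge to $S_1(1)$ and a single edge to $S_2(1)$ — and joining two trees through a new vertex, one edge into each, is again a tree.

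For $(3)\Leftrightarrow(2)$ the argument is purely combinatorial. Fix a switching $\sigma$; since $S$ is cut-free, $\sigma S$ is a par-free cut-free net, built only from $\maltese$- and $\otimes$-links. One passes between the graph of $\sigma S$ and $\graph(\daipart S,\uparrow^i\sigma S)$ by a sequence of contractions of pairwise disjoint connected subgraphs: first each daimon link together with its conclusions (a star, hence a tree, and two such are disjoint because the net is target-disjoint), then the maximal tree of $\otimes$-links lying below each conclusion of $\sigma S$ (pairwise disjoint because, under acyclicity, an initial position flows down to a unique conclusion). Under these contractions a daimon-star becomes a class of $\daipart S$, a tensor-tree becomes a class of $\uparrow^i\sigma S$, and each initial position $q$ becomes exactly the edge linking the daimon that produces $q$ to the conclusion of $\sigma S$ below $q$; so the contracted graph \emph{is} $\graph(\daipart S,\uparrow^i\sigma S)$. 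When $\sigma S$ is acyclic the contracted subgraphs really are trees, so the contractions preserve and reflect acyclicity and connectedness, whence $\sigma S$ is a tree iff $\daipart S$ and $\uparrow^i\sigma S$ are orthogonal; when $\sigma S$ has a cycle both conditions fail, so the equivalence holds in all cases.

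For $(3)\Rightarrow(1)$, sequentialisation, I would proceed by strong induction on the number of links of $S$, inspecting a terminal link. A one-link net must be a single daimon, since target-surjectivity forbids a lone $\otimes$-, $\parr$- or $\mathsf{cut}$-link; this case is handled by the $\maltese$ rule. If $S$ has a terminal $\parr$-link $\parrlink{p_1,p_2}{p}$, delete it (and the now-dangling position $p$) to obtain a net $S'$ in which $p_1,p_2$ are conclusions; exactly as in the $\parr$ case above, every switching of $S'$ is a switching of $S$ up to renaming, so $S'$ still satisfies $(3)$; the induction hypothesis gives $\pi'\reps S'$, and a $\parr$ rule (after an $\mathsf{ex}$ bringing $p_1,p_2$ into position) yields a proof represented by $S$. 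Otherwise $S$ has at least two links and no terminal $\parr$-link, hence also no terminal daimon — a terminal daimon would be a connected component of $S$ on its own, contradicting connectedness of any switching — so $S$ has a terminal $\otimes$-link, and the crux is the \emph{splitting tensor lemma}: among the terminal $\otimes$-links there is one, $\tenslink{p_1,p_2}{p}$, whose removal splits $S$ into two disjoint nets $S_1\ni p_1$ and $S_2\ni p_2$. As $S_1,S_2$ inherit $(3)$ (a switching of each extends to one of $S$), the induction hypothesis gives $\pi_i\reps S_i$, and an $\otimes$ rule combines them into a proof represented by $S$.

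The only genuinely delicate point is this splitting tensor lemma; I would establish it exactly as in Danos--Regnier \cite{danos89} and Curien \cite{curien:criterions} (via empires/kingdoms), the generalised axioms needing no modification since a daimon, having no premise, behaves as a bundle of axiom leaves — indeed one shows that if every terminal $\otimes$-link were non-splitting then some switching would contain a cycle, contradicting $(3)$. Everything else is a routine induction over the proof rules or bookkeeping about contractions. Alternatively, one could sidestep a direct argument entirely by transporting the statement along the obvious correspondence between our cut-free multiplicative nets and Curien's paraproof structures \cite{curien:criterions} — checking that it matches up ``represents a proof'', ``switching'' and ``acyclic and connected'' on both sides — and then invoking the known equivalence there.
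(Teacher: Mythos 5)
The paper does not prove this theorem: it is stated with citations to Danos--Regnier and Curien and used as a known result, so there is no in-paper proof to compare yours against. Your reconstruction follows the standard route and is essentially sound. The direction $(1)\Rightarrow(3)$ by induction on the represented proof, and the combinatorial equivalence $(2)\Leftrightarrow(3)$ by contracting daimon-stars and tensor-trees, are both correct (the latter is, in substance, the same contraction argument the paper uses in the appendix for \autoref{prop:orthoPartiGlue} and \autoref{prop:orthoDaiDescription}); the only point you gloss is the cyclic case of $(2)\Leftrightarrow(3)$, where one should say explicitly that a cycle of $\sigma S$ cannot live inside a single contracted class (each class being a tree) and hence survives contraction as a cycle of the multigraph. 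For $(3)\Rightarrow(1)$ you correctly identify the splitting-tensor lemma as the only genuinely hard step and defer it to \cite{danos89} and \cite{curien:criterions}; since the theorem is itself imported from those sources, and since a daimon link really does behave as a bundle of axiom conclusions for the purposes of empires, this deferral is legitimate and matches what the paper itself does. Your alternative suggestion --- transporting the statement along the evident correspondence with Curien's paraproof structures --- is in effect exactly the paper's stance.
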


\section{Interaction of nets, orthogonality, and types}\label{sect:RealModel}

\def\exchange#1{\mathsf{ex}_{#1}}

We define how nets can \emph{interact} 
and if the interaction of two nets leads to the
$\maltese$-link with no outputs ($\maltese_0$) we say they are \emph{orthogonal}. 
This recalls classical realisability proposed 
by J.-L. Krivine \cite{krivine:realiclassi}, 
where (the closure by antireduction of) the set 
$\{\maltese_{0}\}$ will play the role of the \emph{pole}.
Notice, however, that our setting is fully symmetrical:
both the elements of
truth values and
falsity values
are nets.


\bigskip

The notion of ordered hypergraph 
and \emph{arrangement} introduced in \autoref{sect:hypergraphs}
will now explicitly come into play
as it is necessary for defining the interactions of nets (see \autoref{fig:interactionExample}).  
We will denote by $\# S$
the number of outputs of a net $S$.
Given a partial function $f:\mathbb N \rightarrow E$
with a finite domain of cardinality $n$
and ordered as $i_1 < i_2 < \dots < i_n$,
the \emph{collapse} of $f$, denoted $f\collapse$,
is
the total function 
with domain $[1;n]$
such that $f\collapse(m) = f(i_m)$
for any integer $1\leq m \leq n$.

\begin{definition}\label{def:interaction}
  Let  $S = (\body S , \arrange S)$ and $T = (\body T,\arrange T)$ 
  be two nets 
  and $k= min(\# S, \# T) $,
  we define their \emph{interaction}
  $S::T = (\body {S::T} , \arrange {S::T})$ as:
  \begin{center}
    \scalebox{0.8}{$\body {S:: T}
    \;\triangleq\;
    \body S + \body T +
    \sum_{1 \leq i \leq min(\# S, \# T) }
    \cutlink{ S(i) , T(i) }
    \qquad 
    \arrange{S::T} 
    \;\triangleq \;
    \left\{ \begin{array}{lr}
      \emptyset & \text{when } \# S = \# T\\
      \arrange S \restrict {[k+1;\# S]} \collapse 
            & \text{when } \# S > \# T\\
      \arrange T \restrict {[k+1;\# T]} \collapse 
            & \text{when } \# S < \# T
      \end{array}
    \right. $  }
  \end{center}
\end{definition}

\begin{figure}
\begin{subfigure}{0.3\textwidth}
  \centering
  \scalebox{0.8}{\tikzfig{interactionDef}}
  \caption{Representation of the interaction $S::T$
  of two nets $S = \dailink{a,b,c} + \parrlink{a,b}{d}$
  and $T = \dailink{p,q} + \tenslink{p,q}{r}$.}\label{fig:interactionExample}
\end{subfigure}
\hfill
\begin{subfigure}{0.67\textwidth}
  \centering
  \scalebox{0.7}{\tikzfig{orthoExample1}}
  $\rightarrow$
  \scalebox{0.7}{\tikzfig{orthoExample2}}
  $\rightarrow$
  \scalebox{0.7}{\tikzfig{orthoExample3}}
  $\rightarrow$
  \scalebox{0.7}{\tikzfig{orthoExample4}}
  \caption{The cut elimination procedure applied to $S'::T'$ leads to $\maltese_0$,
  showing that $S'\perp T'$. In this figure $S' = \dailink{q}$ and $T' = \dailink{p_1} + \dailink{p_2} + \tenslink{p_1,p_2}{p}$.}
  \label{fig:orthoExample}
\end{subfigure}
\caption{The interaction of two nets (\autoref{def:interaction})
and two orthogonal nets (\autoref{def:orthogonalityNets}).}
\end{figure}


\begin{definition}\label{def:orthogonalityNets}
  Two nets $S_1$ and $S_2$
  are \emph{orthogonal}
  if
  $S_1:: S_2\rightarrow^* \maltese_0$\footnote{Note that we require the \emph{existence} 
  of such a reduction, not all reductions 
  need to behave this way.}: when this holds we write
  $S_1\perp S_2$. For a net $S$ and a set of nets $\Lambda$, if for every $\lambda\in\Lambda$ we have $S\perp\lambda$ we write $S\perp\Lambda$.
\end{definition}

\begin{remark}
  Since cut links are asymmetric,
  namely $\cutlink{p,q}$
  and $\cutlink{q,p}$ are distinct nets,
  the interactions 
  $S::T$ and $T::S$
  are not the same net.
  However, this has no consequence on cut elimination
  because the reduction steps 
  do not depend on the order
  of the inputs of a cut link.
  Thus 
  $S::T$ reduces to $\maltese_0$
  if and only if $T::S$ does,
  and as expected the relation of orthogonality is 
  symmetric.
\end{remark}

\begin{definition}\label{def:type}
  Given a set $A$ of multiplicative nets, we define 
  the \emph{orthogonal} of 
  $A$ as 
  $A^\bot=
  \{ P \mid \forall R\in A, P\perp R\}$. 
  A \emph{type} $\mathbf{A}$ is a set of multiplicative nets
  such that $\mathbf{A}^{\bot\bot}=\mathbf{A}$.\footnote{Equivalently,
  a type is a set $\mathbf A$ such that $\mathbf{A}=B^\bot$ for some set $B$,
  see, for instance, \cite{JoinetSeiller}.}
\end{definition}


\begin{remark}\label{rem:nbconclusionTypes}  
Since cut elimination
preserves the conclusions of a net
and $\maltese_0$ has no output,
two orthogonal nets have 
the same number of conclusions.
Thus, 
for every type $\mathbf A$, 
for every $R\in\mathbf A$ and 
for every $S\in\mathbf A^{\bot}$,
the nets $R$ and $S$ have the same number of conclusions: we 
denote by $\# \mathbf A$ the number of conclusions of the nets in $\mathbf A$.
Obviously 
$\# \mathbf A=\# \mathbf A^\bot$.
\end{remark}

\begin{remark}\label{rem:clashcut}
  Clash cuts are preserved during cut elimination,
  thus a net containing such a cut cannot reduce to $\maltese_0$.
  Hence, there cannot be 
  two nets $S$ and $S'$ respectively in 
  $\mathbf A$ and $\mathbf A^\bot$ 
  such that their $i$th conclusions $S(i)$
  and $S'(i)$
  are both outputs of a $\parr$--link (or $\otimes$--link):
  their interaction $S::S'$
  contains a clash cut 
  and thus the nets cannot be orthogonal.
\end{remark}

\begin{remark}\label{rem:daimonOrthogonals}
  A net $S$ which is orthogonal to 
  the daimon link with a single output (i.e. $\maltese_1$)
  has a single conclusion 
  which can be 
  the output of a daimon link,
  a tensor link or a par link. 
  For instance the three cut--free nets
  $\dailink{p}$,
  $\dailink{p_1}+\dailink{p_2} + \tenslink{p_1,p_2}{p}$ and
  $\dailink{p_1,p_2} + \parrlink{p_1,p_2}{p}$
  are all orthogonal to $\maltese_1$ 
  (one case is proved in \autoref{fig:orthoExample}).
\end{remark}



The following proposition 
is a key step for proving propositions
\ref{prop:dualPreC} and \ref{prop:Associativity}.

\begin{restatable}{proposition}{interAction}\label{prop:interAsAction}
  Given three net $S$ and $T$ and $R$
  such that $\# S \geq \# T + \# R$:
  the interaction $S::(T\parallel R)$
  is equal to $(S::T)::R$.
\end{restatable}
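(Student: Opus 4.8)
The plan is to unfold both sides of the claimed equality directly from Definition~\ref{def:interaction} and Definition~\ref{def:parallelhgraph}, checking that the underlying hypergraphs and the arrangements agree. First I would observe that since $\# S \geq \# T + \# R$, we have $\min(\# S, \# (T\parallel R)) = \# T + \# R$ (using that $\#(T\parallel R) = \# T + \# R$, which follows from $\arrange{T\parallel R} = \arrange T \sumtr \arrange R$), and similarly $\min(\# S, \# T) = \# T$ and, crucially, $\min(\# S, \# R)$ computed \emph{after} forming $S::T$ equals $\# R$, because forming $S::T$ consumes exactly the first $\# T$ conclusions of $S$ and leaves $\# S - \# T \geq \# R$ conclusions, whose arrangement is $\arrange S \restrict{[\# T+1; \# S]}\collapse$. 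So on both sides we are adding exactly $\# T + \# R$ cut links.

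Next I would compare the two families of cut links. On the left, $S::(T\parallel R)$ adds $\cutlink{S(i), (T\parallel R)(i)}$ for $1 \leq i \leq \# T + \# R$; by the definition of $\sumtr$, $(T\parallel R)(i) = T(i)$ for $i \leq \# T$ and $(T\parallel R)(i) = R(i - \# T)$ for $\# T < i \leq \# T + \# R$. On the right, $(S::T)::R$ first adds $\cutlink{S(i), T(i)}$ for $1\leq i \leq \# T$, and then adds $\cutlink{(S::T)(j), R(j)}$ for $1 \leq j \leq \# R$; since $\arrange{S::T} = \arrange S\restrict{[\# T+1;\# S]}\collapse$, we have $(S::T)(j) = S(\# T + j)$. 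Re-indexing via $i = \# T + j$ matches these with the left-hand cut links. The underlying sums of links $\body S + \body T + \body R$ plus the same set of cut links therefore coincide (up to the commutativity/associativity of hypergraph sum, Remark~\ref{rem:sumHgraphAbelian}), and one should note the parallel sum $T\parallel R$ coincides with $T + R$ here since we may assume vertex sets are disjoint.

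Finally I would check the arrangements agree. Both $S::(T\parallel R)$ and $(S::T)::R$ have $\# S - \# T - \# R$ conclusions (all from $S$, since every conclusion of $T\parallel R$, and of $R$, gets cut); in the first case the arrangement is $\arrange S \restrict{[\# T + \# R + 1; \# S]}\collapse$, and in the second it is $(\arrange{S::T})\restrict{[\# R + 1; \#(S::T)]}\collapse = (\arrange S\restrict{[\# T+1;\# S]}\collapse)\restrict{[\# R+1; \# S - \# T]}\collapse$, which a short index computation shows equals $\arrange S\restrict{[\# T + \# R + 1; \# S]}\collapse$. I expect the only mildly delicate point to be this bookkeeping with $\collapse$ and nested restrictions — in particular verifying that composing two collapses of restrictions behaves as a single collapse of the composite restriction, and handling the case $\# S = \# T + \# R$ where both arrangements are empty. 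Everything else is a routine unfolding of definitions.
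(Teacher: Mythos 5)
Your proposal is correct and follows essentially the same route as the paper's proof: unfold Definition~\ref{def:interaction}, split the sum of cut links at index $\# T$, and re-index using $(T\parallel R)(i)=R(i-\# T)$ and $(S::T)(j)=S(\# T+j)$. You are in fact slightly more thorough than the paper in explicitly verifying that the two arrangements of the leftover conclusions coincide via the nested restrict-and-collapse computation, which the paper leaves implicit.
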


In the following definition \ref{def:constructionOnTypes}
the side condition $\#S \geq \# \mathbf A$ 
ensures that
whenever a net $S$ in $A\compo B$
interacts with a net of $T\in\mathbf A^\bot$
the remaining conclusions of $S::T$
are conclusions of $S$,
this will allow to activate \autoref{prop:interAsAction}.

\begin{definition}\label{def:constructionOnTypes}
  Given two sets of nets $\mathbf A$
  and $\mathbf B$
  their \emph{functional composition}
  denoted $\mathbf A\compo \mathbf B$,
  and their \emph{parallel composition}
  denoted $\mathbf A \parallel \mathbf B$
  are defined as follows:



\scalebox{0.8}{  
  $\mathbf A \compo \mathbf B \;\triangleq\; 
\{ S \mid
      \mbox{for any }
      T \in \mathbf A^\bot, S::T \in \mathbf B
      \mbox{ and }
      \# S \geq \# \mathbf A
      \}$
      \qquad
      $\mathbf A \parallel \mathbf B \;\triangleq\;
      \{ S \parallel T \mid
        S\in \mathbf A ,
        T \in \mathbf B \}^{\bibot}$
}
\end{definition}



\begin{remark}[Density of the parallel composition]
  \label{rem:density}
  For any two types $\mathbf A$ and $\mathbf B$ we have 
  $(\mathbf A \parallel^- \mathbf B)^{\bot}=(\mathbf A \parallel\mathbf B)^{\bot}$, where 
  $\mathbf A \parallel^- \mathbf B
  = 
  \{ S \parallel T \mid S\in \mathbf A, T \in \mathbf B\}$.
\end{remark}

\begin{restatable}[Duality]{proposition}{propdualPreC}\label{prop:dualPreC}
  Given two types
  $\mathbf A$ and $\mathbf B$:
  $(\mathbf A\parallel \mathbf B)^\bot =
   \mathbf A^\bot\compo \mathbf B^\bot$
   and
   $(\mathbf A \compo \mathbf B)^\bot
   =\mathbf A^\bot \parallel \mathbf B^\bot$.
\end{restatable}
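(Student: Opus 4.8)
The plan is to prove the two equalities by a mix of elementary set-theoretic manipulations and the action lemma (\autoref{prop:interAsAction}). Since both $\mathbf A\parallel\mathbf B$ and $\mathbf A\compo\mathbf B$ are types by construction (the first by the $(-)^{\bot\bot}$ in its definition, the second because $\mathbf A\compo\mathbf B = (\{T\parallel U\mid T\in\mathbf A^\bot, U\in\mathbf B^\bot\})^\bot$ once the first identity is known — or directly as an orthogonal), it suffices to prove one of the two identities and derive the other by applying $(-)^\bot$ and using $\mathbf A^{\bot\bot}=\mathbf A$ for types. So the real content is a single inclusion chain; I would establish
\[
  (\mathbf A\parallel \mathbf B)^\bot \;=\; \mathbf A^\bot\compo \mathbf B^\bot .
\]

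First I would unfold both sides. By \autoref{rem:density}, $(\mathbf A\parallel\mathbf B)^\bot = (\mathbf A\parallel^-\mathbf B)^\bot$, so a net $S$ lies in the left-hand side iff $S\perp (R\parallel R')$ for all $R\in\mathbf A$, $R'\in\mathbf B$, i.e. iff $S::(R\parallel R')\reduce{}^*\maltese_0$ for all such $R,R'$. On the right-hand side, $S\in\mathbf A^\bot\compo\mathbf B^\bot$ iff $\#S\ge\#\mathbf A^\bot$ and, for every $T\in\mathbf A^{\bot\bot}=\mathbf A$, the net $S::T$ lies in $\mathbf B^\bot$, i.e. $(S::T)\perp R'$ for all $R'\in\mathbf B$, i.e. $(S::T)::R'\reduce{}^*\maltese_0$. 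The key bridge is \autoref{prop:interAsAction}: when $\#S\ge\#\mathbf A + \#\mathbf B = \#\mathbf A^\bot+\#\mathbf B^\bot$ we have $S::(R\parallel R') = (S::R)::R'$ as nets, so the two conditions literally coincide. The two things to check carefully are therefore (i) the conclusion-count bookkeeping, namely that $\#S\ge\#\mathbf A^\bot$ together with membership data forces $\#S = \#\mathbf A^\bot+\#\mathbf B^\bot$ exactly (since $S::T$ must be orthogonal to elements of $\mathbf B$, hence have $\#\mathbf B^\bot$ conclusions by \autoref{rem:nbconclusionTypes}), so that the hypothesis $\#S\ge\#T+\#R$ of \autoref{prop:interAsAction} is met; and (ii) that on the left-hand side the same count is forced — a net orthogonal to some $R\parallel R'$ has exactly $\#\mathbf A^\bot+\#\mathbf B^\bot$ conclusions, again by \autoref{rem:nbconclusionTypes}, which simultaneously recovers the side condition $\#S\ge\#\mathbf A^\bot$ demanded by the definition of $\compo$.

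Once the count is pinned down, the proof is a biconditional read off directly: $S\in(\mathbf A\parallel\mathbf B)^\bot$ iff ($S$ has the right number of conclusions and) for all $R\in\mathbf A,R'\in\mathbf B$, $(S::R)::R'\reduce{}^*\maltese_0$, iff for all $R\in\mathbf A$, $S::R\in\mathbf B^\bot$, iff $S\in\mathbf A^\bot\compo\mathbf B^\bot$ (here using $\mathbf A^{\bot\bot}=\mathbf A$ to quantify $R$ over $\mathbf A$ rather than $\mathbf A^{\bot\bot}$). Then for the second identity I would apply $(-)^\bot$ to the first, with $\mathbf A,\mathbf B$ replaced by $\mathbf A^\bot,\mathbf B^\bot$: $(\mathbf A^\bot\parallel\mathbf B^\bot)^\bot = \mathbf A^{\bot\bot}\compo\mathbf B^{\bot\bot} = \mathbf A\compo\mathbf B$, and taking orthogonals again yields $(\mathbf A\compo\mathbf B)^\bot = (\mathbf A^\bot\parallel\mathbf B^\bot)^{\bot\bot} = \mathbf A^\bot\parallel\mathbf B^\bot$, the last step because $\mathbf A^\bot\parallel\mathbf B^\bot$ is already a type. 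The main obstacle I anticipate is not conceptual but the conclusion-arity bookkeeping around the asymmetric definition of $\compo$ (the $\#S\ge\#\mathbf A$ clause) and making sure \autoref{prop:interAsAction} applies with the correct inequality; the rewriting content itself is entirely black-boxed since orthogonality only refers to the existence of a reduction to $\maltese_0$, which is preserved verbatim under the net identity $S::(R\parallel R')=(S::R)::R'$.
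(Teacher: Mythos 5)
Your proposal is correct and follows essentially the same route as the paper's proof: both directions of $(\mathbf A\parallel \mathbf B)^\bot = \mathbf A^\bot\compo \mathbf B^\bot$ are obtained by combining density (\autoref{rem:density}) with the identity $S::(a\parallel b)=(S::a)::b$ from \autoref{prop:interAsAction}, and the second equality is then derived by taking orthogonals and using $\mathbf A^{\bot\bot}=\mathbf A$. Your explicit attention to the conclusion-count side conditions is a point the paper's proof leaves implicit, but it checks out via \autoref{rem:nbconclusionTypes} exactly as you describe.
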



\begin{remark}
  The duality of the constructions  (\autoref{prop:dualPreC}) 
  ensures that the set of types 
  is closed under the $\parallel$
  and $\compo$ operations.
  Moreover, the intersection of two types 
  is still a type.
  This is not the case for the union
  which needs to be closed under bi--orthogonal. 
\end{remark}



\begin{remark}
  For two types $\mathbf A$
  and $\mathbf B$ 
  the unordered nets of $\mathbf A\parallel \mathbf B$
  and of $\mathbf B \parallel \mathbf A$
  are the same,
  so as the unordered nets of 
  $\mathbf A \compo \mathbf B$
  and $\mathbf B \compo \mathbf A$.
\end{remark}

\begin{restatable}{proposition}{associativity}\label{prop:Associativity}
  Given $\mathbf A,\mathbf B$ and $\mathbf C$
  three types;
  $(\mathbf A\compo \mathbf B)\compo \mathbf C
  = \mathbf A\compo (\mathbf B \compo \mathbf C)$
    and 
    $(\mathbf A\parallel \mathbf B)\parallel \mathbf C
    = \mathbf A\parallel (\mathbf B \parallel \mathbf C)$.
\end{restatable}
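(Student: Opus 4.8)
The plan is to prove associativity of $\compo$ by double inclusion, and to deduce associativity of $\parallel$ from it: applying the $\compo$-identity to the types $\mathbf A^\bot,\mathbf B^\bot,\mathbf C^\bot$ and taking orthogonals yields, via the duality \autoref{prop:dualPreC} (together with the fact that $\compo$ and $\parallel$ of types are again types), the equality $(\mathbf A\parallel\mathbf B)\parallel\mathbf C=\mathbf A\parallel(\mathbf B\parallel\mathbf C)$. One preliminary observation is used everywhere: if $S\in\mathbf A\compo\mathbf B$ and $T\in\mathbf A^\bot$ then $S::T\in\mathbf B$ has $\#S-\#\mathbf A$ conclusions, so $\#S=\#\mathbf A+\#\mathbf B$; iterating, every element of $(\mathbf A\compo\mathbf B)\compo\mathbf C$ and of $\mathbf A\compo(\mathbf B\compo\mathbf C)$ has exactly $\#\mathbf A+\#\mathbf B+\#\mathbf C$ conclusions. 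Consequently all side conditions ``$\#S\ge\#\mathbf A$'' are automatically met and, what matters most, \autoref{prop:interAsAction} applies in every instance below: $(S::T)::V=S::(T\parallel V)$, $((S::T)::V)::W=S::((T\parallel V)\parallel W)$, and so on. (The degenerate cases in which $\mathbf A^\bot$, $\mathbf B^\bot$ or $\mathbf C^\bot$ is empty are dealt with separately and trivially.)

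For the inclusion $(\mathbf A\compo\mathbf B)\compo\mathbf C\subseteq\mathbf A\compo(\mathbf B\compo\mathbf C)$: given $S$ in the left-hand side, I would fix $T\in\mathbf A^\bot$, $V\in\mathbf B^\bot$ and $W\in\mathbf C^\bot$, and rewrite $((S::T)::V)::W=S::((T\parallel V)\parallel W)$ by \autoref{prop:interAsAction}. Since $T\parallel V$ lies in the raw product $\mathbf A^\bot\parallel^-\mathbf B^\bot$, which is contained in $(\mathbf A^\bot\parallel^-\mathbf B^\bot)^{\bot\bot}=\mathbf A^\bot\parallel\mathbf B^\bot=(\mathbf A\compo\mathbf B)^\bot$, the hypothesis on $S$ gives $S::(T\parallel V)\in\mathbf C$, hence $(S::T)::V=S::(T\parallel V)\in\mathbf C$; letting $V$ then $T$ vary yields $S::T\in\mathbf B\compo\mathbf C$ and finally $S\in\mathbf A\compo(\mathbf B\compo\mathbf C)$. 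No bi-orthogonal closure is touched here.

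The converse inclusion is the hard one, since now one must exploit $S::U\in\mathbf C$ for \emph{every} $U$ in the closed set $(\mathbf A\compo\mathbf B)^\bot=(\mathbf A^\bot\parallel^-\mathbf B^\bot)^{\bot\bot}$, while the hypothesis only provides information about the raw products $T\parallel V$. The key step is a sublemma: for nets $P$ and $W$ with $\#W\le\#P$, the set $\{\,U\mid(U\parallel W)\perp P\,\}$ is a type. To prove it, I would let $\widehat P$ be $P$ with its last $\#W$ conclusions moved to the front, so that $\widehat P::W$ is $P$ with its last $\#W$ conclusions cut against $W$; then, up to commutativity of the hypergraph sum and the (irrelevant) order of inputs inside cut links, the nets $(U\parallel W)::P$ and $U::(\widehat P::W)$ coincide, so that by symmetry of $\perp$ (\autoref{def:orthogonalityNets}) one obtains $\{\,U\mid(U\parallel W)\perp P\,\}=\{\widehat P::W\}^\bot$, which is manifestly a type. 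Now fix $W\in\mathbf C^\bot$ and apply the sublemma with $P:=S$: the type $\mathcal N_W:=\{\,U\mid(U\parallel W)\perp S\,\}$ contains $\mathbf A^\bot\parallel^-\mathbf B^\bot$, because unfolding the hypothesis with \autoref{prop:interAsAction} shows $S\perp((T\parallel V)\parallel W)$ for all $T\in\mathbf A^\bot$, $V\in\mathbf B^\bot$; being a type, $\mathcal N_W$ then contains $(\mathbf A^\bot\parallel^-\mathbf B^\bot)^{\bot\bot}=(\mathbf A\compo\mathbf B)^\bot$. Hence for every $U\in(\mathbf A\compo\mathbf B)^\bot$ we get $(U\parallel W)\perp S$, i.e. $(S::U)::W\rightarrow^*\maltese_0$ by \autoref{prop:interAsAction}; as $W$ ranges over all of $\mathbf C^\bot$, this gives $S::U\in\mathbf C^{\bot\bot}=\mathbf C$, so $S\in(\mathbf A\compo\mathbf B)\compo\mathbf C$.

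The main obstacle is exactly this sublemma. Its ``naive'' proof — describing $\{\,U\mid(U\parallel W)\perp P\,\}$ as $\{\,U\mid S::U\in\{W\}^\bot\,\}$ — fails because that set is not visibly bi-orthogonally closed; introducing the auxiliary net $\widehat P::W$ is the device that re-presents it as a genuine orthogonal $\{\widehat P::W\}^\bot$. Once the sublemma is in place, the rest is bookkeeping with conclusion counts and repeated use of \autoref{prop:interAsAction}.
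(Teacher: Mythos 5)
Your proof is correct, but it takes a genuinely different route from the paper's. The paper proves associativity of $\parallel$ first: using density (\autoref{rem:density}) it shows $(\mathbf A\parallel(\mathbf B\parallel\mathbf C))^{\bot}$ and $((\mathbf A\parallel\mathbf B)\parallel\mathbf C)^{\bot}$ each coincide with the orthogonal of the corresponding raw product $\parallel^{-}$, identifies the two raw orthogonals through chains of equivalences driven by \autoref{prop:interAsAction}, and only then transfers the result to $\compo$ via \autoref{prop:dualPreC}; you instead prove associativity of $\compo$ element-wise by double inclusion and deduce $\parallel$ by duality. Both arguments hit the same obstruction — the hypothesis only controls interactions with raw products $T\parallel V$, whereas one must control interactions with all of $(\mathbf A\compo\mathbf B)^{\bot}=(\mathbf A^{\bot}\parallel^{-}\mathbf B^{\bot})^{\bot\bot}$ — and both resolve it with the same device, a reordering of conclusions: the paper's ``obvious fact'' $(S::T)::U=(ex_{U}(S)::U)::T$ is exactly your $\widehat P$. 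Your packaging of that device as the sublemma ``$\{U\mid(U\parallel W)\perp P\}=\{\widehat P::W\}^{\bot}$, hence a type'' is arguably the cleaner formulation: it isolates a reusable, pointwise companion to \autoref{prop:dualPreC} and replaces the paper's long equivalence chains by a single appeal to the fact that a type containing a set contains its bi-orthogonal closure. The points requiring care, which you correctly flag, are the conclusion-count bookkeeping needed to invoke \autoref{prop:interAsAction} (harmless, since every net in either triple composition has $\#\mathbf A+\#\mathbf B+\#\mathbf C$ conclusions once the relevant orthogonals are nonempty) and the degenerate empty-orthogonal cases, which the paper's own proof also leaves implicit.
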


\begin{definition}
  Given $\mathbf A $
  and $\mathbf B$
  two types with one conclusion,
  we define
  their \emph{tensor product} (denoted $\otimes$)
  and their \emph{compositional product} (denoted $\parr$):
  \begin{center}
  \scalebox{0.85}{$\mathbf A  \otimes \mathbf B
  \,\triangleq\,
  \{ S + \tenslink{S(1),S(2)}{p} \mid 
  S\in \mathbf A\parallel \mathbf B\}^{\bibot}
  \qquad
  \mathbf A  \parr \mathbf B 
  \,\triangleq\,
  \{ S + \parrlink{S(1),S(2)}{p} \mid 
  S\in \mathbf A\compo \mathbf B\}^{\bibot}$}   
\end{center}
    \noindent 
    where $p$ denotes a fresh position.
\end{definition}

\begin{restatable}[Duality]{proposition}{propdualC}\label{prop:dualC}
  Given $\mathbf A$ 
  and $\mathbf B$ two types 
  with one conclusion,
  $(\mathbf A \otimes \mathbf B)^\bot = \mathbf A^\bot \parr \mathbf B^\bot$
  and
  $(\mathbf A \parr \mathbf B)^\bot = \mathbf A^\bot \otimes \mathbf B^\bot$.
\end{restatable}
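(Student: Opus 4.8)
The plan is to derive both identities from one, exploiting self-duality. Since $\mathbf A\otimes\mathbf B$ and $\mathbf A\parr\mathbf B$ are types by construction (each is a bi-orthogonal closure), once we know $(\mathbf A\parr\mathbf B)^\bot=\mathbf A^\bot\otimes\mathbf B^\bot$ for all one-conclusion types, instantiating it at $\mathbf A^\bot,\mathbf B^\bot$ and using $\mathbf A^{\bot\bot}=\mathbf A$ yields $(\mathbf A^\bot\parr\mathbf B^\bot)^\bot=\mathbf A\otimes\mathbf B$, and taking orthogonals of both sides gives $\mathbf A^\bot\parr\mathbf B^\bot=(\mathbf A\otimes\mathbf B)^\bot$. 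So I would concentrate on $(\mathbf A\parr\mathbf B)^\bot=\mathbf A^\bot\otimes\mathbf B^\bot$, proving it by double inclusion after unfolding the definitions (and using $X^{\bot\bot\bot}=X^\bot$): $(\mathbf A\parr\mathbf B)^\bot=\{\,S+\parrlink{S(1),S(2)}{p}\mid S\in\mathbf A\compo\mathbf B\,\}^\bot$ and $\mathbf A^\bot\otimes\mathbf B^\bot=\{\,T+\tenslink{T(1),T(2)}{q}\mid T\in\mathbf A^\bot\parallel\mathbf B^\bot\,\}^{\bot\bot}$. I will repeatedly use that, by Proposition~\ref{prop:dualPreC}, $\mathbf A\compo\mathbf B=(\mathbf A^\bot\parallel\mathbf B^\bot)^\bot$.

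For $\mathbf A^\bot\otimes\mathbf B^\bot\subseteq(\mathbf A\parr\mathbf B)^\bot$, since a type absorbs bi-orthogonal closures it suffices to check that each generator $T+\tenslink{T(1),T(2)}{q}$ (with $T\in\mathbf A^\bot\parallel\mathbf B^\bot$) is orthogonal to each generator $S+\parrlink{S(1),S(2)}{p}$ (with $S\in\mathbf A\compo\mathbf B$). The interaction of these two one-conclusion nets contains only the cut $\cutlink{q,p}$, of type $(\otimes/\parr)$; eliminating it as in Figure~\ref{fig:homcutelim} produces exactly the net $S::T$, which has no conclusion since $\#S=\#T=2$. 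As $S\in\mathbf A\compo\mathbf B=(\mathbf A^\bot\parallel\mathbf B^\bot)^\bot$ and $T\in\mathbf A^\bot\parallel\mathbf B^\bot$ we have $S::T\rightarrow^*\maltese_0$, and composing the two reductions gives the orthogonality.

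For the converse $(\mathbf A\parr\mathbf B)^\bot\subseteq\mathbf A^\bot\otimes\mathbf B^\bot$, let $P\in(\mathbf A\parr\mathbf B)^\bot$. It has one conclusion (Remark~\ref{rem:nbconclusionTypes}), and a conclusion of a net is the output of a $\maltese$-, $\otimes$- or $\parr$-link (a net whose conclusion is a cut-input is orthogonal to nothing), so by Remark~\ref{rem:clashcut} the conclusion of $P$ is a $\maltese$- or a $\otimes$-output. When $P=P'+\tenslink{P'(1),P'(2)}{P(1)}$, I reduce, for each $S\in\mathbf A\compo\mathbf B$, the interaction $P::(S+\parrlink{S(1),S(2)}{p})$ by first eliminating the $(\otimes/\parr)$ cut it contains; this step is a priori done in the middle of some reduction to $\maltese_0$, but since that cut is not a $(\parr/\maltese)$ cut, clause~(\ref{item:earlyNonPar}) of Proposition~\ref{prop:SN} lets us move it to the front, leaving $P'::S\rightarrow^*\maltese_0$. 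Hence $P'\perp S$ for all $S\in\mathbf A\compo\mathbf B$, i.e.\ $P'\in(\mathbf A\compo\mathbf B)^\bot=\mathbf A^\bot\parallel\mathbf B^\bot$, so $P$ is a generator of $\mathbf A^\bot\otimes\mathbf B^\bot$.

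The remaining case, $P(1)$ an output of a $\maltese$-link $d=\dailink{\overline r,P(1)}$, is where the real work is. My proposal is to replace $d$ by $\dailink{\overline r,r_1,r_2}+\tenslink{r_1,r_2}{P(1)}$ ($r_1,r_2$ fresh), producing a net $P^+$ that falls under the previous case, and to connect $P$ and $P^+$ through a cut-elimination simulation lemma, established by a case analysis on the non-homogeneous steps of Figure~\ref{fig:nhomcutelim}: for a net $Q$ with one conclusion, (i) if that conclusion is a $\parr$-output then $Q\perp P\iff Q\perp P^+$ — the $(\otimes/\parr)$ cut of $Q::P^+$ followed by the daimon $\dailink{\overline r,r_1,r_2}$ absorbing, through $(\otimes/\maltese)$ steps, whatever multiplicative structure of $Q$ it meets, with the ensuing $(\maltese/\maltese)$ glueing cuts merging daimons, mimicks the behaviour of the single larger daimon $d$ of $P$, the non-determinism of the $(\parr/\maltese)$ steps of $Q::P$ being matched by the order of absorption; (ii) if that conclusion is a $\maltese$-output then $Q\perp P^+\Rightarrow Q\perp P$ (the converse can fail: with $P=\maltese_1$ one gets $P^+=\dailink{r_1,r_2}+\tenslink{r_1,r_2}{P(1)}$ and then $\maltese_1\perp P$ but $\maltese_1\not\perp P^+$). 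Granting this: by (i), $P^+\perp(S+\parrlink{S(1),S(2)}{p})$ for every $S\in\mathbf A\compo\mathbf B$ (since $P$ is), so $P^+\in(\mathbf A\parr\mathbf B)^\bot$ and the previous case gives $P^+\in\mathbf A^\bot\otimes\mathbf B^\bot$; and every $Q\in\{P^+\}^\bot$ has conclusion a $\parr$- or $\maltese$-output (a $\otimes$-output clashes with that of $P^+$ by Remark~\ref{rem:clashcut}, a cut-input makes $Q$ orthogonal to nothing), whence $Q\perp P^+\Rightarrow Q\perp P$ by (i) or (ii), so $\{P^+\}^\bot\subseteq\{P\}^\bot$ and $P\in\{P\}^{\bot\bot}\subseteq\{P^+\}^{\bot\bot}\subseteq\mathbf A^\bot\otimes\mathbf B^\bot$. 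The obstacle is precisely the simulation lemma (i)--(ii): the bookkeeping of how a daimon absorbs a multiplicative context and how the non-determinism of $(\parr/\maltese)$ is reflected on both sides is the only part that is not routine.
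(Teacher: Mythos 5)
Your reduction of one identity to the other, the mutual orthogonality of generators (one multiplicative step followed by \autoref{prop:dualPreC}), and the case where $P$ ends with a $\otimes$-link all track the paper's proof, which runs the symmetric inclusion $(\mathbf A\otimes\mathbf B)^\bot\subseteq\mathbf A^\bot\parr\mathbf B^\bot$ as the primary one. The gap is the terminal-daimon case, and it is not only that your simulation lemma is left unproved while being, as you say, the sole non-routine step: clause (i) is \emph{false} in the direction you use it. Take $P=\maltese_1=\dailink{p}$, so $P^+=\dailink{r_1,r_2}+\tenslink{r_1,r_2}{p}$, and $Q=\dailink{q_1,q_2}+\parrlink{q_1,q_2}{q}$, whose conclusion is a $\parr$-output. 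Then $Q\perp P$: this is an identity cut--net, whose unique normal form is $\maltese_0$ (\autoref{prop:cutnetNF}). But $Q\not\perp P^+$: after the multiplicative step one is left with $\dailink{q_1,q_2}+\dailink{r_1,r_2}+\cutlink{q_1,r_1}+\cutlink{q_2,r_2}$, and gluing either cut makes the other cyclic, hence irreducible (\autoref{rem:acyclicCuts}). Taking for instance $\mathbf A=\mathbf B=\{\maltese_1\}^{\bot\bot}$, we have $\maltese_2\in\mathbf A\compo\mathbf B$ (\autoref{prop:mergeInCompo}), so this $Q$ is a generator of $\mathbf A\parr\mathbf B$ and the inference ``$P\in(\mathbf A\parr\mathbf B)^\bot\Rightarrow P^+\in(\mathbf A\parr\mathbf B)^\bot$'' genuinely fails.

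The root cause is the side on which you placed the hard work. A daimon facing a $\parr$-opponent may \emph{split itself}, distributing its other outputs between the two premises; the net $\dailink{\overline r,r_1,r_2}+\tenslink{r_1,r_2}{P(1)}$ cannot, because $r_1$ and $r_2$ remain in one daimon and any glueing against a connected opponent creates a cycle. So the $\otimes$-expansion of a daimon is strictly weaker than the daimon, and no simulation in your direction (i) can exist. The paper avoids this by proving $(\mathbf A\otimes\mathbf B)^\bot\subseteq\mathbf A^\bot\parr\mathbf B^\bot$ first: there the terminal daimon $\dailink{\overline r,p}$ of $T$ is widened to $\dailink{\overline r,p_1,p_2}$ by anticipating the deterministic $(\otimes/\maltese)$ step (\autoref{prop:SN}, item \ref{item:earlyNonPar}), which places it in $\mathbf A^\bot\compo\mathbf B^\bot$; adding a $\parr$-link then lands in $\mathbf A^\bot\parr\mathbf B^\bot$ by definition, and \autoref{lem:parrDualityHardCase} collapses the $\parr$-over-daimon back to the original net. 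A $\parr$ below a daimon imposes no splitting constraint, so that equivalence does hold --- though proving it still requires the identity-cut-net machinery of the appendix. To repair your argument, prove the other identity by double inclusion (your self-duality reduction then delivers the one you chose), or redo your daimon case along the paper's route.
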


\begin{figure}
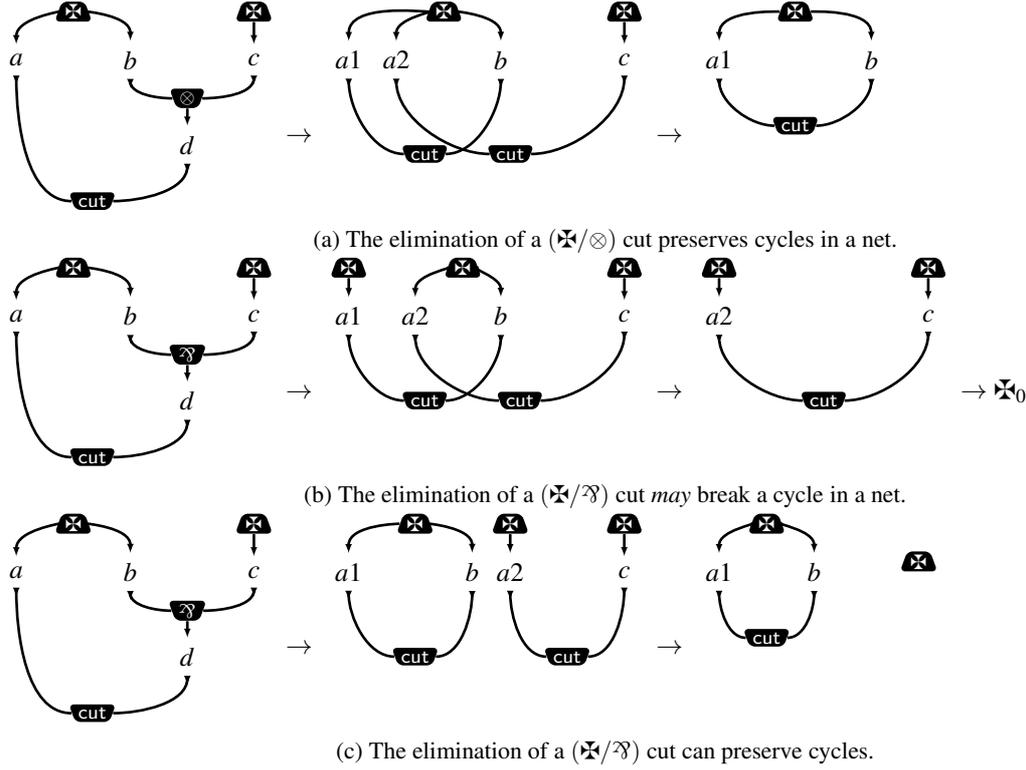

\begin{subfigure}{\textwidth}
  \tikzfig{loop1}
  $\rightarrow$
  \tikzfig{loop2}
  $\rightarrow$
  \tikzfig{loop3}
  \caption{
  The elimination of a $(\maltese/ \otimes)$ cut 
  preserves cycles in a net.}\label{cycleRemainTens}
\end{subfigure}

\begin{subfigure}{\textwidth}
  \tikzfig{loopGood1}
  $\rightarrow$
  \tikzfig{loopGoodyes2}
  $\rightarrow$
  \tikzfig{loopGoodyes3}
  $\rightarrow$
  $\maltese_0$
  \caption{
  The elimination of a $(\maltese/ \parr)$ cut 
  \emph{may} break a cycle in a net.}\label{fig:cycleParrBreak}
\end{subfigure}

\begin{subfigure}{\textwidth}
  \tikzfig{loopGood1}
  $\rightarrow$
  \tikzfig{loopGood1no1}
  $\rightarrow$
  \tikzfig{loopGood1no2}
  \caption{
  The elimination of a $(\maltese/ \parr)$ cut 
  can preserve cycles.}\label{fig:cycleRemainParr}
\end{subfigure}
\caption{The evolution of (switching) cycles and (switching) disconnections during non homogeneous cut elimination.}
\end{figure}



\section{Realisability Model: Adequacy}\label{sect:Adequacy}

We introduce our realisability 
model on untyped nets 
and prove it is adequate.
We identify 
a sufficient property 
of interpretation bases to prove adequacy (Theorem~\ref{thm:adequacy}):
for any basis 
$\ibase$ satisfying the property, a net $S$ representing an $\mlldai$ proof of a sequent $\Gamma$ 
is a realiser of $\Gamma$
i.e. it belongs to $\reali[\ibase]{\Gamma}$.
This adequacy result immediately applies to $\mll$,
since a net representing a proof of $\mll$
represents, in particular, a proof of $\mlldai$.


\bigskip

We start by giving 
an interpretation of formulas and hypersequents of multiplicative linear logic. 
We provide an interpretation of hypersequents instead of sequents 
as it turns out 
that handling hypersequents is more convenient
and proving a result on hypersequents proves it on sequents too.
However, do keep in mind 
that the proof trees we defined using \autoref{fig:mllrules}
are constructed with sequents.


\begin{definition}
  An \emph{interpretation basis} $\ibase$
  is a function
  that associates with each atomic proposition
  $X$ a type
  $\reali[\ibase]{X}$, the \emph{interpretation} of $X$,
  such that:
  \begin{itemize}[noitemsep,nolistsep]
    \item
    Each net in $\reali[\ibase]{X}$ has one conclusion.
    \item
    For any atomic proposition $X$,
    we have $\reali[\ibase]{X^\bot} \subseteq \reali[\ibase]{X}^\bot$.
  \end{itemize}
\end{definition}

\begin{definition}\label{def:interpretmll}
Given an interpretation basis $\ibase$, the \emph{interpretation} of $\mll$ formulas and of hypersequents of $\mll$ is defined 
by induction:
\begin{center}
  \scalebox{0.85}{\begin{tabular}{r @{\hskip 2pt} c@{\hskip 2pt} l}
     $\reali[\ibase]{A\otimes B}$
     & $\,\triangleq\,$&
     $\reali[\ibase]{A}\otimes \reali[\ibase]{B}$.\\
     $\reali[\ibase]{A\parr B}$
     & $\,\triangleq\,$&
     $\reali[\ibase]{A}\parr \reali[\ibase]{B}$.
  \end{tabular}}
  \qquad
  \scalebox{0.85}{\begin{tabular}{r @{\hskip 2pt} c@{\hskip 2pt} l}
     $\reali[\ibase]{\hseq_1,\hseq_2}$
     & $\,\triangleq\,$&
     $\reali[\ibase]{\hseq_1}\compo \reali[\ibase]{\hseq_2}$.\\
     $\reali[\ibase]{\hseq_1\parallel\hseq_2}$
     & $\,\triangleq\,$&
     $\reali[\ibase]{\hseq_1}\parallel \reali[\ibase]{\hseq_2}$.
    \end{tabular}}
  \end{center}
\end{definition}

\begin{remark}\label{rem:InterpretMLL}
Using duality of types (Proposition~\ref{prop:dualC}) and the properties of orthogonality one proves that for an interpretation basis $\ibase$ and an $\mll$ formula $A$ we have 
$\reali[\ibase]{A^\bot} \subseteq \reali[\ibase]{A}^\bot$.
\end{remark}

\begin{definition}
  A multiplicative net \emph{realises}
  -- with respect to an interpretation basis $\ibase$ --
  an hypersequent $\hseq$
  of $\mll$ formulas
  whenever it belongs to 
  $\reali[\ibase]{\hseq}$.
\end{definition}

\begin{notation}
For a hypersequent $\hseq$, we will often write $S\Vdash_\ibase \hseq$ instead of $S\in \reali[\ibase]{\hseq}$, and sometimes $S\Vdash \hseq$ or $S\in\reali{\hseq}$
when there is no ambiguity on the basis $\ibase$.
\end{notation}


From the point of view of cut elimination, 
a daimon link with $n$ outputs 
may be thought as the approximation
of a proof net with $n$ outputs.
More precisely, 
by iterating the process we have seen in \autoref{rem:nondet-proofsearch},
every cut--free proof $\pi$ of a formula $C$ 
can be obtained by applying the cut elimination 
procedure to the daimon link $\maltese_1$ 
(of conclusion $C$)
cut against the appropriate identities 
of $C,C^\bot$
(this generalises to a sequent $\Gamma$ and $\maltese_n$).
Furthermore
daimon links and proof nets (with the same number of conclusions)
are interchangeable 
with respect to geometrical correctness (\autoref{table:intro}):
in a correct (resp. incorrect) net $S$, 
substituting a daimon link with $n$ outputs 
by
a proof net with $n$ outputs 
produces a correct (resp. incorrect) net.
However, proof nets and daimons (with the same number of conclusions) 
differ 
on realisability:
for instance a proof net ending with a tensor link 
can never realise a formula of the form $A\parr B$
whereas a daimon link can (\autoref{thm:adequacy}).
We will thus say that 
a daimon link ``approximates'' a sequent: 
this suggests Definition~\ref{def:approximable}.




\begin{definition}\label{def:approximable}
  A type $\mathbf A$ is approximable
  if and only if 
  $\maltese_1 \in \mathbf A$.
  A basis $\ibase$ 
  is 
  \emph{approximable}
  if for each $X\in\pvarset$, 
  the type 
  $\reali[\ibase]{X}$ is approximable.
\end{definition}

\begin{remark}\label{rem:approxEquivalence}
  Because inclusion 
  is preserved by bi--orthogonal closure,
  a type $\mathbf A$
  is approximable 
  if and only if   
  $\{\maltese_1 \}^{\bibot}\subseteq \mathbf A$
  which is equivalent 
  to the inclusion
  $ \mathbf A^\bot\subseteq \{\maltese_1 \}^\bot$.
\end{remark}


\begin{restatable}[Adequacy]{theorem}{adequacy}\label{thm:adequacy}
  Let $\ibase$ be an approximable basis.
  For any net $S$ and sequent $\Gamma$
  $S\vdash_{\mlldai} \Gamma \Rightarrow S\Vdash_{\ibase} \Gamma$.
\end{restatable}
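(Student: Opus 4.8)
The plan is to proceed by induction on the structure of the $\mlldai$ proof $\pi$ that $S$ represents, following the inductive definition of $\reps$ given in \autoref{fig:proofnet}. For each rule of \autoref{fig:mllrules} we must show that if the immediate subproofs are realised by their representing nets (in the appropriate basis), then so is the net representing the whole proof. The interpretation clauses of \autoref{def:interpretmll} reduce this to checking, for each connective, a membership statement about the type constructions $\compo$, $\parallel$, $\otimes$, $\parr$; the duality propositions (\autoref{prop:dualPreC}, \autoref{prop:dualC}), associativity (\autoref{prop:Associativity}) and the action-like behaviour of interaction (\autoref{prop:interAsAction}) are the workhorses. Since a sequent is a hypersequent built only with the comma, and since a result on hypersequents specialises to sequents, I would actually prove the stronger statement phrased for hypersequents, carrying the $\parallel$ clause through the induction.

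First I would handle the $(\maltese)$ rule: here $S = \maltese_n$ for $\Gamma$ a sequent of length $n$, and we must show $\maltese_n \Vdash_\ibase \Gamma$. This is exactly where approximability of the basis is used. Unfolding $\reali[\ibase]{\Gamma}$ as an iterated $\compo$ of the $\reali[\ibase]{A_i}$, one reduces by \autoref{rem:approxEquivalence} and the density/duality lemmas to the base case $\maltese_1 \in \reali[\ibase]{A}$ for a single formula $A$; this follows by induction on $A$ from $\maltese_1 \in \reali[\ibase]{X}$ (approximability of atoms) together with the fact that $\maltese_1$ lies in both $\mathbf A \otimes \mathbf B$ and $\mathbf A \parr \mathbf B$ whenever $\maltese_1 \in \mathbf A$ and $\maltese_1 \in \mathbf B$ — which is seen by exhibiting the reductions $\maltese_1 :: T \reduce{} \maltese_0$ for $T$ ranging over the relevant orthogonal type, using the glueing and non-homogeneous cut-elimination steps of \autoref{fig:homcutelim} and \autoref{fig:nhomcutelim}. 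For the $(\parr)$ rule, $S = S_0 + \parrlink{S_0(1),S_0(2)}{p}$ with $S_0 \Vdash_\ibase A,B,\Gamma$, and we must land in $\reali[\ibase]{A\parr B}\compo\reali[\ibase]{\Gamma} = (\reali[\ibase]{A}\parr\reali[\ibase]{B})\compo\reali[\ibase]{\Gamma}$; unwinding the definition of $\parr$ on types and using associativity of $\compo$, this amounts to recognising $S_0$ as an element of $\reali[\ibase]{A}\compo\reali[\ibase]{B}\compo\reali[\ibase]{\Gamma}$, i.e. the induction hypothesis. For $(\otimes)$, $S = S_1 + S_2 + \tenslink{S_1(1),S_2(1)}{p}$ with $S_i$ realising $A,\Gamma$ and $B,\Delta$ respectively; here one uses that $S_1 \parallel S_2$ realises $(A\parallel\Gamma)\parallel(B\parallel\Delta)$, rearranges via \autoref{prop:Associativity} and the symmetry remarks to $(A\parallel B)\parallel(\Gamma\parallel\Delta)$, applies the definition of $\otimes$ on types, and then converts the trailing $\parallel$ back to $\compo$ using \autoref{prop:dualPreC} together with \autoref{prop:interAsAction} — the side condition $\#S \ge \#\mathbf A$ in the definition of $\compo$ is exactly what makes this conversion legitimate. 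The $(\mathsf{cut})$ case uses that $S_1 :: S_2 \in \reali[\ibase]{\Gamma}\compo\reali[\ibase]{\Delta}$ when $S_1 \in \reali[\ibase]{A}\compo\reali[\ibase]{\Gamma}$ and $S_2 \in \reali[\ibase]{A^\bot}\compo\reali[\ibase]{\Delta}$, invoking $\reali[\ibase]{A^\bot}\subseteq\reali[\ibase]{A}^\bot$ from \autoref{rem:InterpretMLL} and the defining property of $\compo$. The $(\mathsf{ex})$ case is bookkeeping on arrangements together with the symmetry-of-unordered-nets remarks.

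The main obstacle I anticipate is not any single rule in isolation but the careful management of \emph{arrangements} and the passage between $\parallel$ and $\compo$. The type operations are only associative and ``commutative'' up to the ordering of conclusions, and the interaction $S::T$ is sensitive to that order; getting the $(\otimes)$ case right requires threading \autoref{prop:interAsAction} precisely and verifying the cardinality side conditions at each step, and getting the $(\maltese)$ case right requires the approximability hypothesis to propagate through arbitrarily deep nestings of $\otimes$ and $\parr$, for which one wants the clean reformulation in \autoref{rem:approxEquivalence}. A secondary subtlety is that for the $(\maltese)$ rule of a general sequent $\Gamma = A_1,\dots,A_n$ one needs $\maltese_n$ — not merely $n$ parallel copies of $\maltese_1$ — to realise $\reali[\ibase]{A_1}\compo\cdots\compo\reali[\ibase]{A_n}$; this follows because $\maltese_n$ reduces, against any test built from the orthogonal types, to $\maltese_0$ via repeated glueing and non-homogeneous steps, but spelling out that the ``perfect opponent'' behaviour of the daimon (emphasised in the introduction and \autoref{rem:nondet-proofsearch}) indeed yields a successful reduction for \emph{every} relevant opponent is the technical heart of the argument. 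Everything else is a routine, if lengthy, unfolding of definitions.
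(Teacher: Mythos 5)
Your proposal follows essentially the same route as the paper's proof: induction on the represented proof, with the base case reduced to showing that $\reali[\ibase]{A}$ is approximable for every formula $A$ (by induction on $A$, using the duality propositions and the non-homogeneous reduction steps) and then that $\maltese_n$ is orthogonal to $u_1\parallel\dots\parallel u_n$ for $u_i\in\reali[\ibase]{A_i}^\bot$, while the $\parr$, $\otimes$ and cut cases are handled exactly as you describe via \autoref{prop:interAsAction}, the commutation of interaction with sums, density, and $\reali[\ibase]{A^\bot}\subseteq\reali[\ibase]{A}^\bot$. The difficulties you flag (arrangement bookkeeping, the $\parallel$/$\compo$ conversion with its cardinality side condition, and the fact that $\maltese_n$ rather than $\maltese_1^{\parallel n}$ must realise the iterated $\compo$) are precisely the points the paper's appendix devotes its auxiliary lemmas to.
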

\begin{proof}
  The technique is standard in the works on realisability (see \cite{krivine:realiclassi} or  \cite{oliva:modrealiLL}):
  one proceeds by induction on the size of a proof $\pi$ represented by $S$.
  For the base case 
  one must show that $\maltese_n$
  realises any sequent $\Gamma$ with $n$ formulas.
  To do so one first checks that, for any formula $A$, $\reali[\ibase]{A}$ is approximable ($\maltese_1\in\reali[\ibase]{A}$).
\end{proof}

\begin{remark}
An approximable basis yields adequacy, in particular, for $\mll$. Notice, however, that there exist bases
yielding an interpretation that is adequate for $\mll$ but not for $\mlldai$.
%
%
\end{remark}

\section{Testability and tests}\label{sect:Tests}

\def\witness{\mathbin{\mid \mkern-3.8mu \simeq 
\mkern-13.8mu {\raisebox{0.55em}{\scalebox{0.55}{$\mathbf{\mathit{at}}$}}} }}
\def\witnesseq{\mathbin{\mid \mkern-3.8mu \simeq}}
\def\supp{\mathsf{s}}
\def\acc{\textsf{ACC}\;}

The partitions involved 
in the Danos Regnier criterion (\autoref{thm:danos})
and their orthogonality with the daimons of a net
can be translated as \emph{tests};
so that for a formula $A$, a net $S$ testable by $A$ (definition \ref{def:typeable} below) 
and orthogonal to
$\tests A$
is 
a correct net (\autoref{thm:danos}).
We will show that these tests are proofs of $\mlldai$ (\autoref{thm:cortest}).
This means that for realisers in an approximable basis, 
testability (\autoref{def:typeable}) 
and correct typeability (Definition \ref{def:proofrep}) coincide:
this is \autoref{prop:collapse}.

\begin{definition}[(Atomic) testable cut--free nets]\label{def:typeable}
  A \emph{formula labelling} 
  of a cut--free net $S$
  is a function $\tau: V_S\rightarrow \formulaset$ such that:
  \begin{itemize}
    \item \textbf{(Par)}
    When $\parrlink{p_1,p_2}{p}$ occurs in $S$: 
    if $\tau(p_1)=A$ and $\tau(p_2)=B$
    then $\tau(p)=A\parr B$.
  
    \item \textbf{(Tens)}
    When $\tenslink{p_1,p_2}{p}$ occurs in $S$: 
    if $\tau(p_1)=A$ and $\tau(p_2)=B$
    then $\tau(p)=A\otimes B$.

  \end{itemize}
  A formula labelling of a cut--free net $S$ is \emph{atomic}
  when 
  for each daimon link $\dailink{p_1,\dots,p_n}$
  in $S$ the formula $\tau(p_i)$ is a propositional variable.

  A cut--free net $S$ with $n$ conclusions 
  is
  \emph{testable} (resp. \emph{atomic testable})
  by a sequent 
  $\Gamma = A_1,\dots,A_n$, which we denote $S\witnesseq \Gamma$
  (resp. $S\witness \Gamma$),
  if there exists a formula (resp. an atomic formula) labelling $\tau$
  of $S$ such that $\tau(S(i)) = A_i$ 
  for each $1\leq i\leq n$.
\end{definition}


\begin{remark}
  $S\witnesseq \Gamma$ iff 
  $S\witness \Delta$ and $\Gamma=\theta\Delta$
  for some substitution $\theta$ and sequent $\Delta$.
\end{remark}

\begin{remark}
  $S\witness \Gamma$
  iff
  $S$ without its $\maltese$--links 
  is the syntactic forest of 
  (the formulas of) $\Gamma$.
\end{remark}

\begin{remark}
  A cut--free proof net $S\vdash_\mlldai \Gamma$ 
  is in particular testable by that sequent 
  i.e. 
  $S\witnesseq \Gamma$.
  However, a net  $S\witnesseq \Gamma$ 
  which is testable by $\Gamma$ 
  may not be a proof net
  because it could contain cycles or disconnections:
  the testability condition only provides information 
  on the multiplicative links 
  constituting the net $S$.
  When is $S$ atomic testable by $A$,
  orthogonality with the tests of $A$
  coincides with correctness (\autoref{prop:testsOrtho}).
\end{remark}

\begin{remark}\label{rem:mllIncorrectStyle}
  Let $S\witness A_1,\dots,A_n$ be a cut--free net.
  For any nets $T_1,\dots, T_n$ cut--free and
  atomically testable respectively by
  $ {A_1}^\bot,\dots,{A_n}^\bot$
  denoting $S_0$
  the normal form of $S::T_1\parallel \dots\parallel T_n$,
  $S_0$ is obtained by homogeneous cut--elimination,
  and we have
  (1) $S_0$ equals $\maltese_0$
  (2) $S_0$ is equal to 
  the sum of $k\geq 2$ daimon without conclusions ($S_0 = \sum_{1\leq i \leq k}\maltese_0$)
  or (3) 
  $S_0$ contains a cyclic cut
  ($S_0 = R + \dailink{\vec q, a, \vec r, b,\vec p}+ \cutlink{a,b}$).  
\end{remark}

\begin{remark}
  Given a net $S = (\body S , \arrange S)$ 
  we denote $S^\maltese = (\body {S^\maltese} , \arrange{S^\maltese})$
  the net such that
  $\body{S^\maltese}$ is the hypergraph consisting 
  of the daimon links occurring in $S$.
  The arrangement $\arrange{S^\maltese}$
  is induced by $\arrange S$
  because above every conclusion of $S$
  there is binary tree:
  each initial position $p$
  can be associated 
  with a sequence $\xi =\adrof p$ of $\{\pleft,\pright\}^*$
  and an integer $i = \rootof p$
  so that going up from $S(i)$ following the left/right instruction 
  of $\xi$
  one reaches the initial position $p$.
  The initial positions of $S$ 
  are then ordered by the lexicographical order 
  of $(\rootof p, \adrof p)$
  fixing $\pleft \leq \pright$. 
\end{remark}

\begin{notation}
  Given a net $S$ with $n$ initial positions,
  and $P =\{ C_1,\dots,C_k\}$ a partition of the initial positions of $S$
  we denote by $\nat[S] P$
  the partition 
  $\{ \arrange{S^\maltese}^{-1} (C_1),\dots,\arrange{S^\maltese}^{-1}(C_k)\}$ 
  of $\{1,\dots, n\}$.
  We might abusively write $\nat P$
  for $\nat[S]P$.
\end{notation}

\begin{restatable}{proposition}{homogeneousOrtho}\label{prop:orthoperfect}
  Let $A$ be a formula, 
  given two 
  cut free nets 
  $S\witness A$
  and $T\witness A^\bot$
  the assertions are equivalent:
  \begin{enumerate}
    \item The nets $S$ and $T$ are orthogonal.
    \item The nets $S^\maltese$ and $T^\maltese$ are orthogonal.
    \item 
    The partition 
    $\nat[S]{\daipart S}$ and $\nat[T]{\daipart T}$ are orthogonal.
  \end{enumerate}
\end{restatable}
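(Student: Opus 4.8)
The plan is to establish the equivalences by proving $(2)\Leftrightarrow(1)$ and $(2)\Leftrightarrow(3)$ separately. The common preliminary observation is that, because $S\witness A$ and $T\witness A^{\bot}$, the hypergraph underlying $S$ (resp. $T$) with its daimon links removed is the syntactic tree of $A$ (resp. of $A^{\bot}$), and these two trees have the \emph{same shape}: De Morgan duality only swaps $\otimes$ with $\parr$ at each internal node, and a subformula of $A$ is atomic exactly when the corresponding subformula of $A^{\bot}$ is. Hence every cut ever produced while reducing $S::T$ joins two \emph{corresponding} positions of these trees, so it is either a multiplicative cut (a $\otimes$ against a $\parr$) at an internal node, or a glueing cut (a $\maltese$ against a $\maltese$) at a pair of matching leaves; no clash cut and no non-homogeneous cut can appear. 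Thus the reduction of $S::T$ is entirely homogeneous, so confluent and strongly normalising (\autoref{thm:confluencehomogeneous}), and $S\perp T$ holds if and only if the (unique) normal form of $S::T$ is $\maltese_0$; the same is true of $S^{\maltese}::T^{\maltese}$, all of whose cuts are glueing cuts.

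For $(1)\Leftrightarrow(2)$ I would prove that eliminating all the multiplicative cuts of $S::T$ produces exactly $S^{\maltese}::T^{\maltese}$. This goes by induction on the size of $A$ — it is convenient to prove the slightly more general statement with $A$ replaced by a sequent $\Gamma$, so the induction step for $A=B\parr C$ can recurse on $B$ and $C$. Each multiplicative step at the root replaces a $\parr/\otimes$ cut by the two cuts joining the roots of the left, resp. right, subtrees of $S$ and of $T$; iterating consumes all the $\otimes$ and $\parr$ links, leaves the daimon links of $S$ and $T$ untouched, and ends with cuts pairing the $i$-th leaf of $S$ with the $i$-th leaf of $T$ in left-to-right order. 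The delicate point — and the one I expect to be the main obstacle — is to check that this left-to-right pairing is precisely the one recorded by the arrangements $\arrange{S^{\maltese}}$ and $\arrange{T^{\maltese}}$: removing a $\parr$ link reshuffles the values $\rootof{p}$ and $\adrof{p}$ of the initial positions, but since $\pleft\le\pright$ the lexicographic order on the pairs $(\rootof{p},\adrof{p})$ is unchanged, so the residual net is literally $S^{\maltese}::T^{\maltese}$. Granting this, $S::T$ and $S^{\maltese}::T^{\maltese}$ reduce to the same normal form, which gives $(1)\Leftrightarrow(2)$.

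For $(2)\Leftrightarrow(3)$ I would read glueing-cut elimination on $S^{\maltese}::T^{\maltese}$ as edge contraction in the bipartite multigraph $\graph(\nat[S]{\daipart S},\nat[T]{\daipart T})$: its vertices are the daimon links of $S$ and of $T$, its $m$ edges are the leaf cuts $\cutlink{S^{\maltese}(i),T^{\maltese}(i)}$, and the $i$-th edge joins the class containing $i$ on either side. Eliminating an acyclic glueing cut merges its two daimons and re-attaches the remaining leaf cuts — that is, it contracts the corresponding edge — whereas a cyclic glueing cut is irreducible (\autoref{rem:acyclicCuts}), so a self-loop cannot be contracted, and by \autoref{rem:invalidCuts} the normal form contains no clash cut. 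Since contracting a non-loop edge preserves both the number of connected components and the quantity $|V|-|E|$ (where $V,E$ are the vertex and edge sets of the multigraph), the normal form is the single link $\maltese_0$ if and only if the multigraph is connected with $|E|=|V|-1$, i.e. if and only if it is acyclic and connected, i.e. if and only if $\nat[S]{\daipart S}$ and $\nat[T]{\daipart T}$ are orthogonal; in every other case the normal form is either a sum of at least two copies of $\maltese_0$ (disconnected multigraph) or contains a cyclic glueing cut (multigraph with a cycle), and neither equals $\maltese_0$. This yields $(2)\Leftrightarrow(3)$ and completes the proof.
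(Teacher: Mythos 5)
Your proof is correct and follows essentially the same route as the paper: first show that eliminating the multiplicative cuts of $S::T$ yields exactly $S^\maltese::T^\maltese$ (with the lexicographic bookkeeping on the pairs $(\rootof p,\adrof p)$ that you rightly flag as the delicate point), then read glueing-cut elimination on the pure-daimon net as edge contraction in $\graph(\nat[S]{\daipart S},\nat[T]{\daipart T})$, so that reaching $\maltese_0$ is equivalent to acyclicity and connectedness. The only cosmetic difference is that you justify $(1)\Leftrightarrow(2)$ via confluence of the homogeneous fragment (after observing that no clash or non-homogeneous cut can arise between dual syntactic trees), where the paper instead invokes its commutation lemmas to perform the multiplicative cuts first.
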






\begin{definition}\label{def:test}
  A cut-free net $T$
  is a \emph{test} of a formula $A$ if 
  $T\witness A^\bot$
  and there exists a net $S\witness A$
  and a switching $\sigma S$ 
  such that $\nat[T]{\daipart T} = \nat[S]{\uparrow^i \sigma S}$.
  We denote by
  $\tests A$
  the 
  set $\{ S \mid S \mbox{ is a test of } A \}$.
\end{definition}

 \begin{restatable}{proposition}{orthoTest}\label{prop:testsOrtho}
  For $S$ cut--free, $S\witness A$, we have:
  $S\vdash_{\mlldai} A \Leftrightarrow S\perp \tests A.$
 \end{restatable}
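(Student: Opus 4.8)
The plan is to read the statement off from the two characterisations already available. On one side, the reformulated Danos--Regnier criterion (\autoref{thm:standardcrit}) says that a cut-free net $S$ is a proof net of $\mlldai$ if and only if, for every switching $\sigma S$, the partitions $\daipart S$ and $\uparrow^i \sigma S$ of the initial positions of $S$ are orthogonal. On the other side, \autoref{prop:orthoperfect} says that for cut-free nets $S\witness A$ and $T\witness A^\bot$ one has $S\perp T$ exactly when $\nat[S]{\daipart S}$ and $\nat[T]{\daipart T}$ are orthogonal. The bridge between the two is purely the definition of $\tests A$: a cut-free net $T$ is a test of $A$ precisely when $T\witness A^\bot$ and $\nat[T]{\daipart T} = \nat[S'']{\uparrow^i \sigma S''}$ for some $S''\witness A$ and some switching.

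Before chaining these I would record two elementary facts. First, any two cut-free nets testable by $A$ share the same underlying multiplicative structure (the syntactic forest of $A$) and differ only in their daimon links; hence switchings of one correspond to switchings of the other, the partition $\uparrow^i(\cdot)$ depends only on that shared structure together with the switching choice, and the reindexing $\arrange{S^\maltese}$ restricts on initial positions to the address ordering attached to the syntactic forest of $A$. Consequently $\nat[S'']{\uparrow^i \sigma S''} = \nat[S]{\uparrow^i \widetilde\sigma S}$ for the switching $\widetilde\sigma$ of $S$ matching $\sigma$, so every test $T$ of $A$ satisfies $\nat[T]{\daipart T} = \nat[S]{\uparrow^i \sigma S}$ for some switching $\sigma S$ of the very net $S$ of the statement. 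Second, conversely, for each switching $\sigma S$ the net $T_\sigma$ obtained by taking the syntactic forest of $A^\bot$ and adjoining, for each class of $\nat[S]{\uparrow^i \sigma S}$, one daimon link on the corresponding leaves, is a cut-free net with $T_\sigma\witness A^\bot$ and $\nat[T_\sigma]{\daipart{T_\sigma}} = \nat[S]{\uparrow^i \sigma S}$, and it is a test of $A$ (witnessed by $S$ itself and the switching $\sigma S$); here one uses that the leaves of the syntactic forests of $A$ and $A^\bot$ are in canonical, address-preserving bijection by De Morgan duality. Thus $\{\,\nat[T]{\daipart T}\mid T\in\tests A\,\}$ is exactly $\{\,\nat[S]{\uparrow^i \sigma S}\mid \sigma S \text{ a switching of } S\,\}$.

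Now the equivalence is a chain. By \autoref{prop:orthoperfect} applied to $S$ and each test $T$ (legitimate since $S\witness A$ and $T\witness A^\bot$), $S\perp\tests A$ is equivalent to: $\nat[S]{\daipart S}$ is orthogonal to $\nat[T]{\daipart T}$ for every test $T$ of $A$; by the previous paragraph this is equivalent to: $\nat[S]{\daipart S}$ is orthogonal to $\nat[S]{\uparrow^i \sigma S}$ for every switching $\sigma S$; and since $\nat[S]$ is a bijective relabelling of the set of initial positions of $S$ and such relabellings carry the multigraph $\graph(\cdot,\cdot)$ to an isomorphic one, this is in turn equivalent to: $\daipart S$ is orthogonal to $\uparrow^i \sigma S$ for every switching $\sigma S$, i.e. by \autoref{thm:standardcrit} to $S$ being a proof net of $\mlldai$. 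It remains to match ``$S$ is a proof net'' with ``$S\vdash_{\mlldai} A$'': one direction is immediate because a net correctly typeable by $A$ is a proof net, and for the other, from $S\witness A$ we have an atomic formula labelling $\tau$ with $\tau(S(1))=A$, and sequentialising the proof net $S$ along $\tau$ — each daimon link becoming a $\maltese$-rule with the atomic conclusions prescribed by $\tau$ — yields a $\mlldai$ proof of $A$.

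I expect the only delicate point, rather than a deep obstacle, to be the bookkeeping of the second paragraph: that $\uparrow^i \sigma(\cdot)$ is insensitive to which daimon links a net testable by $A$ carries, and that under $\nat[\cdot]$ the initial positions of $S$ and of a test are consistently identified with the same index set $\{1,\dots,n\}$, via address orderings that agree across De Morgan duality. This is precisely what makes it meaningful to compare the two partitions living on opposite sides of the interaction cut, and once it is pinned down the statement follows from \autoref{thm:standardcrit} and \autoref{prop:orthoperfect} with no further computation.
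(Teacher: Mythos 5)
Your proposal is correct and follows essentially the same route as the paper: reduce net-orthogonality to orthogonality of the partitions $\nat[S]{\daipart S}$ and $\nat[T]{\daipart T}$ via Proposition~\ref{prop:orthoperfect}, identify the partitions of tests with the partitions $\nat[S]{\uparrow^i \sigma S}$ ranging over switchings of $S$ itself (your second paragraph is exactly the content of Proposition~\ref{prop:testDescription}, which the paper isolates as a separate lemma), and conclude by the reformulated Danos--Regnier criterion (Theorem~\ref{thm:standardcrit}). Your added care about the converse construction of a test $T_\sigma$ for each switching, and about passing from ``$S$ is a proof net'' to ``$S\vdash_{\mlldai} A$'' via the atomic labelling, fills in details the paper leaves implicit.
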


 A net $S$ with $n$
 conclusion can always be transformed 
 in a net with $1$ conclusion by 
 putting a bunch of par--links below its conclusions; 
 this allows to generalise the previous proposition.


\begin{restatable}[Danos--Regnier Tests]{theorem}{danosregnier}\label{thm:danos}
  Given a cut--free net $S \witness A_1,\dots,A_n$;
  $S \vdash_{\mlldai} A_1 ,\dots, A_n$
  if and only if 
    $S$ is orthogonal to 
    $\tests {A_1}\parallel \dots \parallel \tests {A_n}$.
\end{restatable}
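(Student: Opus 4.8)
The plan is to reduce the statement to the one--conclusion case \autoref{prop:testsOrtho} by collapsing the $n$ conclusions of $S$ into one. I would fix once and for all a left--associated ``comb'' of $n-1$ par--links turning any net with conclusions typed $A_1,\dots,A_n$ into a net with the single conclusion $C\triangleq A_1\parr\dots\parr A_n$, and let $S'$ be the cut--free net obtained by grafting this comb below $S$, so that $S'\witness C$. Dually, the left--associated comb of $n-1$ tensor--links turns a parallel sum $T_1\parallel\dots\parallel T_n$ with each $T_i\witness A_i^\bot$ into a one--conclusion net witnessing $C^\bot=A_1^\bot\otimes\dots\otimes A_n^\bot$; write $\langle T_1,\dots,T_n\rangle$ for this net. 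The target is the chain $S\vdash_{\mlldai}A_1,\dots,A_n$ $\Leftrightarrow$ $S'\vdash_{\mlldai}C$ $\Leftrightarrow$ $S'\perp\tests C$ $\Leftrightarrow$ $S\perp\tests{A_1}\parallel\dots\parallel\tests{A_n}$, whose middle equivalence is exactly \autoref{prop:testsOrtho} applied to the cut--free net $S'\witness C$.

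For the first equivalence I would invoke \autoref{thm:standardcrit} and argue that the switchings of $S'$ are the switchings of $S$ up to a harmless tail: a switching of $S'$ switches, besides the par--links of $S$, those of the comb, and the switched comb is just a chain of unary pass--through links connecting the new conclusion to exactly one former conclusion $S(j_0)$, the other former conclusions being left untouched. This merges no vertices, creates no edge between distinct connected components and adds no cycle, so a switching of $S'$ is acyclic and connected iff the underlying switching of $S$ is; hence $S'\vdash_{\mlldai}C\Leftrightarrow S\vdash_{\mlldai}A_1,\dots,A_n$ (and the same argument gives the analogous $\mll$ statement).

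For the last equivalence I would first establish $\tests C=\{\langle T_1,\dots,T_n\rangle\mid T_i\in\tests{A_i}\}$. That a net atomically testable by $C^\bot$ is of this shape (a comb of tensor--links above $n$ parallel components $T_i\witness A_i^\bot$) is forced by the syntactic--forest description of $\witness$; moreover each $T_i$ has as many initial positions as $A_i$ has atomic occurrences, which is also the number of initial positions lying above the $i$--th conclusion of any net testable by $C$ -- this is what makes the relabelling performed by $\mathbf{Nat}$ agree block by block along the combs. For $\supseteq$, from $T_i\in\tests{A_i}$ with witnesses $(R_i,\sigma R_i)$ one checks that $R\triangleq(R_1\parallel\dots\parallel R_n)+(\text{par comb})$ with the induced switching witnesses $\langle T_1,\dots,T_n\rangle\in\tests C$, using that the combs carry no daimon and the par--comb is transparent, so $\nat[R]{\uparrow^i\sigma R}$ is the concatenation of the $\nat[R_i]{\uparrow^i\sigma R_i}=\nat[T_i]{\daipart{T_i}}$, that is, $\nat[T]{\daipart T}$. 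For $\subseteq$, from a test $T$ of $C$ witnessed by $(R,\sigma R)$ one writes $R=R_0+(\text{par comb})$ with $R_0\witness A_1,\dots,A_n$, restricts $R_0$ and $\sigma R_0$ to the part above its $i$--th conclusion (keeping in each daimon only the targets lying above it) to get $R_i\witness A_i$ with an induced switching, and reads off $\nat[T_i]{\daipart{T_i}}=\nat[R_i]{\uparrow^i\sigma R_i}$ by restricting the equality $\nat[T]{\daipart T}=\nat[R_0]{\uparrow^i\sigma R_0}$ to the $i$--th index block. The second ingredient is that for $T=\langle T_1,\dots,T_n\rangle$, eliminating in $S'::T$ the multiplicative cut between the two comb roots and peeling the two combs against each other yields $S'::T\ \redmult^{*}\ S::(T_1\parallel\dots\parallel T_n)$; by \autoref{prop:SN} (these multiplicative steps may be moved to the front of any reduction to $\maltese_0$, and a non--multiplicative step never creates a multiplicative cut) one side reaches $\maltese_0$ iff the other does, so $S'\perp\langle T_1,\dots,T_n\rangle\Leftrightarrow S\perp(T_1\parallel\dots\parallel T_n)$. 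Combining the two ingredients, $S'\perp\tests C$ iff $S\perp\{T_1\parallel\dots\parallel T_n\mid T_i\in\tests{A_i}\}$, which equals $S\perp\tests{A_1}\parallel\dots\parallel\tests{A_n}$ since $X^{\bot\bot}$ and $X$ always have the same orthogonal and parallel composition is dense and associative (\autoref{rem:density}, \autoref{prop:Associativity}).

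The main obstacle is the decomposition of $\tests C$: one must carefully match the relabelling that $\mathbf{Nat}$ performs on $T$ with the one it performs on the witness $R$, which works precisely because of the atom--counting observation above, and one must check that restricting a switching of a net typed by $C$ to a single syntactic component is a switching of that component whose induced partition is the corresponding block of $\uparrow^i\sigma R$. Everything else is either a direct consequence of \autoref{prop:testsOrtho}, \autoref{thm:standardcrit} and \autoref{prop:SN}, or routine bookkeeping about combs.
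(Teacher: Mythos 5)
Your proposal is correct and follows essentially the same route as the paper: the appendix proves \autoref{thm:danos} by grafting a generalised $\parr$--link (a par--comb) under the conclusions of $S$ to reduce to the one--conclusion case of \autoref{prop:testsOrtho}, using an orthogonality--transfer lemma ($S\perp S_1\parallel\dots\parallel S_n$ iff the par--combed $S$ is orthogonal to the tensor--combed opponents) together with \autoref{prop:tensorTests}, which decomposes tests of a $\parr$--formula as tensors of tests of the subformulas. Your extra care about the correctness transfer $S\vdash_{\mlldai}A_1,\dots,A_n\Leftrightarrow S'\vdash_{\mlldai}C$ and about matching the $\mathbf{Nat}$ relabellings block by block only makes explicit steps the paper leaves implicit.
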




\begin{restatable}{theorem}{thmcortest}\label{thm:cortest}
  Any  test $T$ of a formula $A$
  is correctly typeable by $A^\bot$,
  $T\vdash_{\mlldai} A^\bot$.
\end{restatable}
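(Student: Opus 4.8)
The plan is to reduce the statement to the Danos--Regnier criterion and to a purely combinatorial fact about switching partitions of dual formulas. Since $T$ is a test of $A$, it is cut--free and $T\witness A^\bot$; hence by \autoref{prop:testsOrtho}, applied to the formula $A^\bot$ and the net $T$, it suffices to prove $T\perp\tests{A^\bot}$, i.e.\ that $T\perp R$ for every test $R$ of $A^\bot$. Fix such an $R$: it is cut--free and, by \autoref{def:test}, $R\witness(A^\bot)^\bot$, which equals $A$ since $(\cdot)^\bot$ is an involution. Applying \autoref{prop:orthoperfect} with $A^\bot$ as the ambient formula, $T$ in the role of ``$S$'' and $R$ in the role of ``$T$'', the relation $T\perp R$ is equivalent to the orthogonality of the partitions $\nat[T]{\daipart T}$ and $\nat[R]{\daipart R}$. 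By the definition of a test there are cut--free nets $S\witness A$ and $S'\witness A^\bot$ together with switchings $\sigma S$, $\sigma S'$ such that $\nat[T]{\daipart T}=\nat[S]{\uparrow^i\sigma S}$ and $\nat[R]{\daipart R}=\nat[S']{\uparrow^i\sigma S'}$. So the whole theorem follows from the claim, which I would isolate as a lemma: \emph{for every formula $A$, every pair of cut--free nets $S\witness A$ and $S'\witness A^\bot$, and all switchings $\sigma S$, $\sigma S'$, the partitions $\nat[S]{\uparrow^i\sigma S}$ and $\nat[S']{\uparrow^i\sigma S'}$ are orthogonal.}

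To prove this lemma I would first observe that, since $S\witness A$, the multiplicative links of $S$ are exactly the internal ($\otimes$/$\parr$) nodes of the syntactic tree of $A$ and the initial positions of $S$ are its leaves; hence $\nat[S]{\uparrow^i\sigma S}$ is a partition of $\{1,\dots,m\}$ ($m$ the number of atom occurrences in $A$) that depends only on $A$ and on the switching of that tree read off from $\sigma S$ — the daimon links of $S$ are irrelevant, and $\arrange{S^\maltese}$ is just the left--to--right enumeration of the leaves. Call such a partition a \emph{switching partition} of $A$. Since De Morgan duality preserves the shape of syntactic trees and the left--to--right order of their leaves, switching partitions of $A$ and of $A^\bot$ live on the same set $\{1,\dots,m\}$, and the lemma reduces to: \emph{every switching partition of $A$ is orthogonal to every switching partition of $A^\bot$}. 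I would prove this by induction on $A$. If $A=X$ is a propositional variable, both syntactic trees are single leaves, both switching partitions equal $\{\{1\}\}$, and $\graph(\{\{1\}\},\{\{1\}\})$ is a single edge, hence acyclic and connected. If $A=B\parr C$ (the case $A=B\otimes C$ is symmetric, swapping the roles of the two formulas), a switching partition $P$ of $A$ decomposes, according to the choice at the root $\parr$--node, as a disjoint union $P=P_B\sqcup P_C$ of a switching partition $P_B$ of $B$ and a switching partition $P_C$ of $C$ (the root node becomes a mere root on one side and is severed from the other, so no class of $P_B$ merges with a class of $P_C$); dually, since the root of the syntactic tree of $A^\bot=B^\bot\otimes C^\bot$ is a $\otimes$--node connecting both subtrees, a switching partition $P'$ of $A^\bot$ is obtained from a switching partition $P'_B$ of $B^\bot$ and a switching partition $P'_C$ of $C^\bot$ by merging the class of $P'_B$ containing the ``root leaves'' of $B^\bot$ with the class of $P'_C$ containing the ``root leaves'' of $C^\bot$. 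Thus $\graph(P,P')$ is the disjoint union of $\graph(P_B,P'_B)$ and $\graph(P_C,P'_C)$ with those two vertices of the $P'$--side identified; by the induction hypothesis both of these graphs are trees, and gluing two finite trees along a common vertex again yields a tree. Hence $\graph(P,P')$ is acyclic and connected, i.e.\ $P\perp P'$, which closes the induction.

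The main obstacle I expect is the bookkeeping in the inductive step: one has to state precisely how a switching partition decomposes over $\parr$ (disjoint union of the sub--partitions) versus over $\otimes$ (union with a single merge of the ``root classes''), and then verify the elementary but slightly delicate fact that identifying one vertex of a finite tree with one vertex of another finite tree produces a tree (compare vertex and edge counts, note connectedness, and rule out a cycle through the glued vertex). A secondary, more routine point is justifying that $\nat[S]{\uparrow^i\sigma S}$ does not see the daimon structure of $S$ and that $\arrange{S^\maltese}$ coincides with the canonical leaf order of the syntactic tree of $A$; this only requires unfolding the definitions of $\uparrow^i$, of $S^\maltese$ and of $\arrange{S^\maltese}$, together with the observation that De Morgan duality leaves tree shapes unchanged.
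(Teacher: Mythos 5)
Your proof is correct, but it follows a genuinely different route from the paper's. The paper's argument is a two-line reduction: by \autoref{thm:danos} every proof net of $A$ is orthogonal to $T$, and then Curien's counter--proof criterion (a net testable by $A^\bot$ and orthogonal to \emph{every} proof of $A$ is itself a proof of $A^\bot$) is invoked as an external black box to conclude. You instead stay inside the paper's own machinery: you apply \autoref{prop:testsOrtho} to $A^\bot$, reduce $T\perp\tests{A^\bot}$ via \autoref{prop:orthoperfect} to orthogonality of the partitions $\nat[S]{\uparrow^i\sigma S}$ and $\nat[S']{\uparrow^i\sigma S'}$ for $S\witness A$, $S'\witness A^\bot$, and then prove the purely combinatorial lemma that every switching partition of $A$ is orthogonal to every switching partition of $A^\bot$, by induction on $A$ (disjoint union of sub--partitions under the switched $\parr$, merge of the two root classes under the unswitched $\otimes$, and the wedge of two trees at a vertex is a tree). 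I checked the reductions: $\uparrow^i$ only traverses connective links so the daimon structure is indeed invisible to it, the lexicographic order on addresses used for $\arrange{S^\maltese}$ is the left--to--right leaf order, De Morgan duality preserves tree shape and leaf order, and there is no circularity since \autoref{prop:testsOrtho} and \autoref{prop:orthoperfect} are proved independently of \autoref{thm:cortest}. What each approach buys: the paper's proof is shorter but outsources the real content to a nontrivial completeness-type theorem of Curien; yours is self-contained and isolates the elementary combinatorial heart of why Danos--Regnier tests are sound (switching partitions of dual formulas are always orthogonal), at the cost of the bookkeeping in the inductive step that you yourself flag. Both are valid; yours would make the paper more self-contained.
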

 \begin{proof}
  Consider a test $T$ of $A$
  then by \autoref{thm:danos}
  any net $S\vdash_{\mlldai} A$
  is orthogonal to $T$.
  By the counter--proof criterion \cite{curien:criterions} 
  a net $N\witness A^\bot$ orthogonal to each proof of $A$
  is a proof; therefore it follows that 
  $T$ is a proof of $A^\bot$.
 \end{proof}

\begin{remark}
 \autoref{thm:danos} 
 is a refinement of the counter--proof criterion 
 of P.L. Curien \cite{curien:criterions}:
 if $S\witness A$ and $S\perp\tests{A}$ 
 then $S\vdash_{\mlldai} A$
 -- and every element 
 of 
 $\tests A$ are proofs of $A^\bot$ (\autoref{thm:cortest}),
 but the converse does not hold.   
\end{remark}
 From \autoref{thm:danos} and \autoref{thm:cortest}
one obtains an ``interactive'' criterion 
for the nets of multiplicative linear logic ($\mll$).
One takes a net of $S$ of $\mll$
(i.e. a net with binary daimons)
and confronts it with the tests of 
the according formulas (\autoref{def:test}).
A straightforward consequence of the \autoref{thm:danos}
is the reformulation of B\'echet's theorem in our framework.

\begin{corollary}\label{cor:mllIncorrectStyle}
  Let $S\witness A_1,\dots,A_n$ be a cut--free net.
  If $S$ is not correct 
  then there exists
  nets $T_1\in\tests{A_1},\dots, T_n\in\tests{A_n}$
  such that
  the normal form of $S::T_1\parallel \dots\parallel T_n$
  is not correct:
  we are in case (2)
  or (3) of \autoref{rem:mllIncorrectStyle}.
\end{corollary}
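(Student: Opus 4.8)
The plan is to read the statement off \autoref{thm:danos} (the Danos--Regnier test criterion), \autoref{rem:mllIncorrectStyle} (the trichotomy for the normal form of such an interaction) and the confluence and strong normalisation of homogeneous cut elimination (\autoref{thm:confluencehomogeneous}). First I would use \autoref{thm:danos} contrapositively: since $S$ is cut--free, $S\witness A_1,\dots,A_n$ and $S$ is not correct, the net $S$ is \emph{not} orthogonal to $\tests{A_1}\parallel\dots\parallel\tests{A_n}$. By \autoref{rem:density}, applied repeatedly, the orthogonal of $\tests{A_1}\parallel\dots\parallel\tests{A_n}$ coincides with that of the plain set $\{\,T_1\parallel\dots\parallel T_n\mid T_i\in\tests{A_i}\,\}$; unfolding the definition of orthogonality with a set of nets then produces a single choice of tests $T_1\in\tests{A_1},\dots,T_n\in\tests{A_n}$ with $S\not\perp T_1\parallel\dots\parallel T_n$. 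Each $T_i$ has exactly one conclusion (it is atomic testable by the single formula $A_i^\bot$), so $T_1\parallel\dots\parallel T_n$ has $n$ conclusions, $\#S=n$, and $S::T_1\parallel\dots\parallel T_n$ is a net with no conclusion.

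Since the $T_i$ are cut--free and atomic testable by $A_1^\bot,\dots,A_n^\bot$, \autoref{rem:mllIncorrectStyle} applies to $S$ and $T_1,\dots,T_n$: the interaction $S::T_1\parallel\dots\parallel T_n$ normalises by \emph{homogeneous} cut elimination to a net $S_0$ that is either (1)~$\maltese_0$, (2)~a sum $\sum_{1\le i\le k}\maltese_0$ of $k\ge 2$ daimons without conclusions, or (3)~of the form $R+\dailink{\vec q,a,\vec r,b,\vec p}+\cutlink{a,b}$, hence containing a cyclic cut. By confluence and strong normalisation of homogeneous cut elimination (\autoref{thm:confluencehomogeneous}), $S_0$ is the \emph{unique} normal form of $S::T_1\parallel\dots\parallel T_n$, so $S::T_1\parallel\dots\parallel T_n\rightarrow^*\maltese_0$ holds precisely when $S_0=\maltese_0$. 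As $S\not\perp T_1\parallel\dots\parallel T_n$, no such reduction exists; therefore $S_0\neq\maltese_0$ and we are in case (2) or (3) of \autoref{rem:mllIncorrectStyle}.

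It remains to observe that in cases (2) and (3) the net $S_0$ is not correct. In case (2) the net $S_0$ is cut--free and par--free, hence equal to its own switching, and it has $k\ge 2$ connected components, so it is not a proof net by \autoref{thm:standardcrit}. In case (3) the cut $\cutlink{a,b}$ of $S_0$ is cyclic — its sources $a$ and $b$ are outputs of a single daimon link $d$ — and therefore irreducible; if $S_0$ represented an $\mlldai$ proof, the cut link $\cutlink{a,b}$ would have been introduced by a $\mathsf{cut}$--rule applied to two sub--representations that are disjoint nets having respectively $a$ and $b$ among their conclusions (\autoref{fig:proofnet}), which would force $d$ — the unique link of $S_0$ producing $a$, and also the unique one producing $b$ — to lie in both of them, a contradiction. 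Hence $S_0$ is not correct, which is exactly the claim. The only point requiring some care is the bookkeeping of the first step, namely relating orthogonality to the \emph{type} $\tests{A_1}\parallel\dots\parallel\tests{A_n}$ of \autoref{thm:danos} to orthogonality to a single explicit tuple $T_1\parallel\dots\parallel T_n$; this is handled by the density of the parallel composition (\autoref{rem:density}), and everything else is a direct assembly of results already proved.
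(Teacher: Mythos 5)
Your proof is correct and follows exactly the route the paper intends: the paper gives no explicit proof of this corollary, presenting it as a direct consequence of \autoref{thm:danos}, and your writeup correctly supplies the missing details (the passage from non-orthogonality to the type $\tests{A_1}\parallel\dots\parallel\tests{A_n}$ down to a concrete tuple via density, the uniqueness of the homogeneous normal form via confluence, and the verification that cases (2) and (3) of \autoref{rem:mllIncorrectStyle} are indeed incorrect nets).
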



  \begin{remark}\label{rem:bechet}
  The \autoref{cor:mllIncorrectStyle}
  obviously applies to $\mll$ nets, 
  the main difference with B\'echet's original 
  result 
  is that his opponents are $\mll$ proof nets
  (in our framework they are $\mlldai$ proof nets).
  However it is not difficult to adapt our techniques to obtain B\'echet's
  result.  
\end{remark}


\begin{remark}\label{rem:testinbase}
  Consider an approximable basis $\ibase$
  and a sequent $\Gamma = A_1,\dots,A_n$
  we have
  $\reali[\ibase]{\Gamma} = 
  \smash{(\reali[\ibase]{A_1}^\bot \parallel \dots \parallel \reali[\ibase]{A_n}^\bot)^\bot}$.
  By \autoref{thm:adequacy},
  for any $A_i^\bot$ we have $\smash{\proofs{A_i^\bot : \mlldai} \subseteq \reali[\ibase]{A_i^\bot}}$ 
  while  $\tests {A_i} \subseteq \proofs {A_i^\bot : \mlldai }$ (\autoref{thm:cortest})
  thus $\tests{A_i}\subseteq \reali[\ibase]{A_i^\bot} \subseteq \reali[\ibase]{A_i}^\bot $ 
  (\autoref{rem:InterpretMLL}).
  Because the $\parallel$--construction preserves inclusions
  and orthogonality inverts inclusions
  we derive that
  $\reali[\ibase]{\Gamma} \subseteq 
  \smash{(\tests {A_1} \parallel \dots \parallel \tests{A_n})^\bot}$.
\end{remark}

Remark \ref{rem:testinbase} combined with the previous theorem (\autoref{thm:danos})
means that 
for realisers in an approximable basis, testability and (correct) typeability 
collapse.

\begin{restatable}{proposition}{propcollapse}\label{prop:collapse}
  Given $\ibase$ an approximable basis\footnote{The \autoref{prop:collapse} actually holds  for any ``adequate''
  basis $\ibase$.} 
  and a sequent $\Gamma$
  for any cut--free net $S\in\reali[\ibase]{\Gamma}$ the assertions are equivalent:
  \begin{enumerate}
    \item $S\witnesseq \Gamma$ i.e. $S\witness \Delta$ for some sequent $\Delta \leq \Gamma$.
    \item $S\vdash_{\mlldai} \Gamma$.
  \end{enumerate}
\end{restatable}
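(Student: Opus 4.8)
The plan is to prove the two implications separately. The implication $(2)\Rightarrow(1)$ does not even use the hypothesis $S\in\reali[\ibase]{\Gamma}$: if $S\vdash_{\mlldai}\Gamma$ with $S$ cut--free, then any proof $\pi$ with $S\reps\pi$ is itself cut--free (a $\mathsf{cut}$ rule in $\pi$ would introduce a cut link in $S$ by \autoref{fig:proofnet}), and a straightforward induction on $\pi$ produces a formula labelling of $S$ sending each position to the formula it carries in $\pi$ --- in the $\maltese$ base case the $n$ outputs of the daimon receive the formulas of $\Gamma$, and the $\parr$/$\otimes$ cases are exactly the Par/Tens conditions of \autoref{def:typeable}. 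Hence $S\witnesseq\Gamma$; this is the remark recalled just before \autoref{def:test}.

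For $(1)\Rightarrow(2)$, assume $S$ cut--free, $S\in\reali[\ibase]{\Gamma}$ and $S\witnesseq\Gamma$, and write $\Gamma=A_1,\dots,A_n$. By the remark relating $\witnesseq$ and $\witness$, there are a sequent $\Delta=B_1,\dots,B_n$ and a substitution $\theta$ with $S\witness\Delta$ and $\theta\Delta=\Gamma$; in particular $\theta B_i=A_i$, so the (single--formula) sequent $B_i^\bot$ satisfies $B_i^\bot\leq A_i^\bot$, witnessed by $\theta$ since substitutions commute with $(\cdot)^\bot$. The first step is to extend \autoref{rem:testinbase} to this $\Delta$: for any $T\in\tests{B_i}$, \autoref{thm:cortest} gives $T\vdash_{\mlldai}B_i^\bot$, \autoref{prop:smallproof} then gives $T\vdash_{\mlldai}A_i^\bot$, and adequacy (\autoref{thm:adequacy}) together with $\reali[\ibase]{A_i^\bot}\subseteq\reali[\ibase]{A_i}^\bot$ (\autoref{rem:InterpretMLL}) yields $T\in\reali[\ibase]{A_i}^\bot$; thus $\tests{B_i}\subseteq\reali[\ibase]{A_i}^\bot$. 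Since the $\parallel$--construction preserves inclusions, $\tests{B_1}\parallel\dots\parallel\tests{B_n}\subseteq\reali[\ibase]{A_1}^\bot\parallel\dots\parallel\reali[\ibase]{A_n}^\bot$, and as $S\in\reali[\ibase]{\Gamma}=(\reali[\ibase]{A_1}^\bot\parallel\dots\parallel\reali[\ibase]{A_n}^\bot)^\bot$ we obtain $S\perp\tests{B_1}\parallel\dots\parallel\tests{B_n}$. Now $S$ is cut--free and $S\witness\Delta$, so \autoref{thm:danos} applies and gives $S\vdash_{\mlldai}\Delta$; since $\Delta\leq\Gamma$, \autoref{prop:smallproof} finally yields $S\vdash_{\mlldai}\Gamma$.

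The only genuine obstacle is the mismatch between $\witnesseq$ and $\witness$: \autoref{thm:danos} is phrased for \emph{atomic} testability, whereas a realiser of $\Gamma$ is in general only testable by $\Gamma$, with composite formulas possibly sitting above daimon links. Passing to the ``atomised'' sequent $\Delta$ repairs this, and what makes the repair go through is that a test of $B_i$, being a proof of $B_i^\bot$, is \emph{a fortiori} a proof of $A_i^\bot$ (by \autoref{prop:smallproof}), hence still lands in $\reali[\ibase]{A_i}^\bot$; everything else is bookkeeping of inclusions already established in \autoref{sect:Adequacy} and \autoref{sect:Tests}. An alternative to the $\Delta$/substitution juggling would be to replay the whole argument in the shifted basis $\ibase'$ defined by $\reali[\ibase']{X}\triangleq\reali[\ibase]{\theta X}$, for which $\reali[\ibase']{\Delta}=\reali[\ibase]{\Gamma}$ and which is again approximable since $\reali[\ibase]{A}$ is approximable for every formula $A$; but the route through \autoref{prop:smallproof} is slightly more economical.
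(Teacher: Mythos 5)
Your proof is correct and follows essentially the same route as the paper: the paper's (very terse) argument is precisely the combination of \autoref{rem:testinbase}, \autoref{thm:cortest}, \autoref{prop:smallproof} and \autoref{thm:danos} that you spell out, including the passage through the atomised sequent $\Delta$ with $S\witness\Delta$ and $\theta\Delta=\Gamma$ to reconcile $\witnesseq$ with the atomic testability required by \autoref{thm:danos}. Your write-up just makes explicit the inclusion bookkeeping that the paper leaves implicit.
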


\section{Completeness}\label{sect:Completeness}

\def\baseone{\mathbf 1}

Using \autoref{prop:collapse}
we provide a completeness result;
we exhibit an approximable basis for which a net $S$ realising a sequent $\Gamma$
is testable, and so equivalently $S\vdash_{\mlldai}\Gamma$.
This basis, denoted $\baseone$,
maps each atomic formula
to $\{\maltese_1\}^{\bibot}$.

\begin{restatable}{proposition}{baseOne}\label{prop:baseOneTestable}
  For any sequent $\Gamma$
  and any cut--free net $S$;
  if $S\in\reali[\baseone]{\Gamma}$
  then $S\witnesseq \Gamma$.   
\end{restatable}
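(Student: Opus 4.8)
The plan is to show that membership in $\reali[\baseone]{\Gamma}$ forces enough structure on a cut-free net $S$ that it must be the syntactic forest of (a substitution instance of) $\Gamma$. First I would reduce to the case of a single formula: by \autoref{def:interpretmll} the interpretation of $\Gamma=A_1,\dots,A_n$ is built from the $\compo$-construction, and (using the dualities of \autoref{prop:dualPreC} and \autoref{prop:dualC}) one can push a bunch of $\parr$-links below the conclusions of $S$ to obtain a net $S'$ with a single conclusion realising the single formula $A_1\parr\cdots\parr A_n$; since $S'\witnesseq A_1\parr\cdots\parr A_n$ iff $S\witnesseq A_1,\dots,A_n$, it suffices to treat $S\in\reali[\baseone]{A}$ for a single formula $A$ with one conclusion. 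Then I would proceed by induction on the structure of $A$.

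For the atomic case $A=X$, we have $\reali[\baseone]{X}=\{\maltese_1\}^{\bibot}$, so $\reali[\baseone]{X}^\bot=\{\maltese_1\}^\bot$ by \autoref{rem:approxEquivalence}; hence a cut-free $S\in\reali[\baseone]{X}$ is orthogonal to $\maltese_1$, and by \autoref{rem:daimonOrthogonals} its unique conclusion is the output of a $\maltese$-link, a $\otimes$-link, or a $\parr$-link. The key extra input is that $\baseone$ also sends $X^\bot$ to $\{\maltese_1\}^{\bibot}$, so $S$ must be orthogonal to \emph{every} element of $\{\maltese_1\}^{\bibot}$, in particular to nets like $\dailink{p_1}+\dailink{p_2}+\tenslink{p_1,p_2}{p}$ and $\dailink{p_1,p_2}+\parrlink{p_1,p_2}{p}$ of \autoref{rem:daimonOrthogonals}; confronting $S$ against the tensor one would create a clash cut (\autoref{rem:clashcut}) if $S$'s conclusion were a $\parr$-output, and against the par one a clash if it were a $\otimes$-output. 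This pins down that the conclusion of $S$ is a $\maltese$-output, i.e.\ $S$ is (a net built over) a single daimon link, so labelling its conclusion $X$ gives a valid atomic formula labelling and $S\witness X$. For the inductive step $A=B\parr C$ (resp.\ $B\otimes C$), unfolding $\reali[\baseone]{B\parr C}=\reali[\baseone]{B}\parr\reali[\baseone]{C}$ and the definition of the $\parr$-construction, together with \autoref{rem:clashcut} to rule out that the top link is a $\otimes$ (resp.\ $\parr$), shows that the top connective of $S$ matches that of $A$; removing that link and invoking \autoref{prop:interAsAction} to split the interaction one lands on two sub-nets realising $B$ and $C$ (or realising them after a $\parallel$-decomposition), to which the induction hypothesis applies, and the labellings glue via the \textbf{(Par)}/\textbf{(Tens)} clauses of \autoref{def:typeable}.

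The main obstacle I expect is the bookkeeping in the inductive step: one must show not merely that the topmost link of $S$ has the right label, but that \emph{all} of $S$ decomposes coherently, which requires knowing that a net realising $\reali[\baseone]{B}\parr\reali[\baseone]{C}$ really is of the form $S_0+\parrlink{S_0(1),S_0(2)}{p}$ with $S_0\in\reali[\baseone]{B}\compo\reali[\baseone]{C}$ up to the bi-orthogonal closure — i.e.\ controlling the $(\cdot)^{\bibot}$ in the definition of the $\parr$-construction. Here the right move is to test $S$ against elements of the orthogonal type $(\reali[\baseone]{B}\parr\reali[\baseone]{C})^\bot=\reali[\baseone]{B}^\bot\otimes\reali[\baseone]{C}^\bot$, which by the approximability of $\baseone$ contains concrete daimon-based nets, and to read off the structure of $S$ from the requirement that every such interaction normalises to $\maltese_0$ (using \autoref{prop:orthoperfect} to reduce orthogonality of cut-free nets to orthogonality of their daimon-skeletons and then to orthogonality of partitions). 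The other delicate point is the reduction to a single conclusion at the start: one has to check the $\parr$-wrapping is faithful on the nose, for which \autoref{prop:collapse}'s companion facts and the density remark (\autoref{rem:density}) are the relevant tools.
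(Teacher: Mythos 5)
Your atomic case is essentially the paper's argument, up to one slip: $S\in\{\maltese_1\}^{\bibot}$ makes $S$ orthogonal to every element of $\{\maltese_1\}^{\bot}$ (not of $\{\maltese_1\}^{\bibot}$), and it is $\{\maltese_1\}^{\bot}$ that contains the terminal-$\otimes$ and terminal-$\parr$ nets of \autoref{rem:daimonOrthogonals}; the clash-cut argument then forces every conclusion of $S$ to be a daimon output. This is exactly the paper's notion of \emph{daimon type}, and $\baseone$ is a daimon basis.

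The inductive step, however, has a genuine gap. First, a cut-free net realising $B\parr C$ (resp.\ $B\otimes C$) need not have a terminal connective link at all: by adequacy $\maltese_1\in\reali[\baseone]{B\parr C}$, so ``the top connective of $S$ matches that of $A$'' is false; the correct invariant is only that a terminal connective, \emph{if present}, cannot be the dual one, while a terminal daimon is always allowed (which is why the conclusion is $\witnesseq$ rather than correctness). More seriously, after peeling a terminal $\parr$-link you land in $\reali[\baseone]{B}\compo\reali[\baseone]{C}$, and you propose to split the residual net into two sub-nets realising $B$ and $C$. No such splitting exists: the two conclusions may be outputs of a single shared daimon link ($\maltese_2\in\reali[\baseone]{X}\compo\reali[\baseone]{Y}$ is the basic example), and even $\mathbf A\parallel\mathbf B$ is a bi-orthogonal closure whose elements are not literally of the form $a\parallel b$. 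Your fallback via \autoref{prop:orthoperfect} is circular, since that proposition already assumes both nets are atomically testable. The paper avoids the splitting entirely: it generalises the statement to hypersequents (Lemma~\ref{lem:trunc}), does induction on the number of connectives of the hypersequent, and peels one terminal link at a time with the decomposition \autoref{prop:decomp} — turning $\hseq,A\parr B$ into $\hseq,A,B$ and $\hseq,A\otimes B$ into $\hseq,(A\parallel B)$ — while keeping the net in one piece; once the hypersequent is atomic, closure of daimon types under $\compo$ and $\parallel$ (\autoref{prop:openpreserve}) forces all conclusions to be daimon outputs, so the net is a sum of daimons and is trivially testable. This is precisely the point where hypersequents are needed, and your initial reduction to a single $\parr$-wrapped conclusion does not substitute for them.
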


\begin{remark}
  By the \autoref{prop:baseOneTestable}
  and the \autoref{thm:adequacy}
  we have that 
  $S\in\reali[\baseone]{\Gamma}$
  iff $S\vdash_{\mlldai}\Gamma$.
\end{remark}

Since the base $\baseone$ is approximable, 
\autoref{prop:collapse}
allows to prove:

\begin{restatable}[$\mlldai$ completeness]{theorem}{mlldaiCompletion}\label{thm:mlldaiNewComplete}
  Given a cut--free net $S$
  and a sequent $\Gamma$;
  \begin{itemize}
    \item
    If for all basis $\ibase$ we have $S\in\reali[\ibase]{\Gamma}$,
    then $S\vdash_{\mlldai}\Gamma$.
    \item 
    $S\in \reali[\ibase]{\Gamma}$ for any approximable basis $\ibase$
    iff $S\vdash_{\mlldai}\Gamma$.
  \end{itemize}
\end{restatable}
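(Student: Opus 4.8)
The plan is to derive completeness from the single, carefully chosen basis $\baseone$, combining \autoref{prop:baseOneTestable}, \autoref{prop:collapse} and the adequacy theorem (\autoref{thm:adequacy}). The point is that $\baseone$ is simultaneously approximable (so the machinery of Sections~\ref{sect:Adequacy} and~\ref{sect:Tests} applies to it) and ``minimal'' enough that realising a sequent in $\baseone$ already forces testability.

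First I would isolate the core implication underlying both items: \emph{if $S$ is cut-free and $S \in \reali[\baseone]{\Gamma}$, then $S \vdash_{\mlldai} \Gamma$}. Indeed, $\baseone$ maps every atomic formula to $\{\maltese_1\}^{\bot\bot}$, which contains $\maltese_1$; hence every $\reali[\baseone]{X}$ is approximable and so is the basis $\baseone$ (\autoref{def:approximable}). Now \autoref{prop:baseOneTestable} applied to the hypothesis $S \in \reali[\baseone]{\Gamma}$ yields $S \witnesseq \Gamma$, and since $\baseone$ is approximable and $S$ is a cut-free realiser of $\Gamma$ in $\baseone$, \autoref{prop:collapse} promotes $S \witnesseq \Gamma$ to $S \vdash_{\mlldai} \Gamma$, as claimed.

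With the core implication in hand the theorem is immediate. For the first item, the hypothesis ``$S \in \reali[\ibase]{\Gamma}$ for all bases $\ibase$'' applies in particular to $\ibase = \baseone$, so $S \in \reali[\baseone]{\Gamma}$ and the core implication gives $S \vdash_{\mlldai} \Gamma$. For the second item: if $S \vdash_{\mlldai} \Gamma$, then by adequacy (\autoref{thm:adequacy}) $S \in \reali[\ibase]{\Gamma}$ for every approximable basis $\ibase$; conversely, if $S \in \reali[\ibase]{\Gamma}$ for every approximable basis, then in particular $S \in \reali[\baseone]{\Gamma}$ (as $\baseone$ is approximable), and the core implication again delivers $S \vdash_{\mlldai} \Gamma$.

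I do not expect a genuine obstacle here beyond bookkeeping, since all the substance has been packaged into the previously established results. The two things to keep straight are (i) that $S$ is assumed cut-free throughout — this is exactly what licenses the use of both \autoref{prop:baseOneTestable} and \autoref{prop:collapse} — and (ii) that $\baseone$ is an \emph{approximable} basis, so that \autoref{prop:collapse} and \autoref{thm:adequacy} are applicable to it; the rest is just instantiating the universal quantifiers at $\baseone$.
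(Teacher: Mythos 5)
Your proposal is correct and follows exactly the paper's own argument: instantiate the universal quantifier at $\baseone$, use \autoref{prop:baseOneTestable} to obtain testability, invoke \autoref{prop:collapse} (licensed by the approximability of $\baseone$ and cut-freeness of $S$) to conclude $S\vdash_{\mlldai}\Gamma$, and use adequacy (\autoref{thm:adequacy}) for the converse direction of the second item. No differences worth noting.
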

\def\parbase{{\ibase}\langle\parr\rangle}
\begin{remark}\label{rem:nhomcutelimWHY}
  The non homogeneous cut elimination 
  allows to distinguish the types
  $\reali[\ibase]{X,X^\bot}$
  and $\reali[\ibase]{X,Y}$
  for a well chosen basis:
  for instance for the basis, that we will denote $\parbase$,
  which maps positive propositional variables 
  to $\{ \maltese_\parr\}^{\bot}$
  and negative propositional variables 
  to $\{ \maltese_\parr\}^{\bibot}$,
  where $\maltese_\parr$ denotes the geometrically incorrect net 
  $\dailink{a} +\dailink{b} +\parrlink{a,b}{c}$.  

  In that case,
  (1) because $\maltese_2$ is not orthogonal to 
  $\maltese_\parr\parallel \maltese_\parr$ (\autoref{fig:dai2fail})
  it follows that $\maltese_2\notin \reali[\parbase]{X,X}$
  and more generally $\maltese_2 \notin \reali[\parbase]{X,Y}$;
  (2) by the property expressed in \autoref{rem:completenessMLLatomic} (and illustrated in \autoref{fig:localDuality}), 
  $\maltese_2 \in \reali[\parbase]{X,X^\bot}$; 
  (3) point (1) above is not in contradiction with the theorem of adequacy 
  (\autoref{thm:adequacy}) because,
  even though 
  $\maltese_2\vdash_{\mlldai} X,Y$, the basis $\parbase$ is not approximable.
\end{remark}

\begin{remark}\label{rem:provabilityCorrectNeedOrthoIncor}
  The ability to distinguish 
  realisers of the 
  sequents $X,X^\bot$ and $X,Y$ (\autoref{rem:nhomcutelimWHY})
  allows us to 
  derive 
  the completeness result for $\mll$
  (\autoref{thm:completeness})
  from the completeness result for $\mlldai$ (\autoref{thm:mlldaiNewComplete}).
  In \autoref{rem:nhomcutelimWHY},
  to show that 
  $\maltese_2 \notin \reali[\parbase]{X,Y}$
  we have used incorrect nets (specifically $\maltese_\parr$),
  which explains that the completeness theorem for $\mll$ (\autoref{thm:completeness}) 
  refers to \emph{any} basis $\ibase$ (and not only to approximable basis).  
  In the terms of \autoref{table:intro},
  we retrieve provability correctness
  by using interactions with geometrically incorrect nets.
\end{remark}

\begin{restatable}[$\mll$ completeness]{theorem}{thmcomp}\label{thm:completeness}
  Let $S$ be a cut--free net
  such that each of its daimon link has exactly two outputs, 
  $\Gamma$ be a sequent such that $S\witness \Gamma$;
  if $S \in \reali[\ibase]{\Gamma}$ for any basis $\ibase$
  then, $S\vdash_{\mll}\Gamma$.
\end{restatable}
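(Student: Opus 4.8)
The plan is to deduce $\mll$ completeness from $\mlldai$ completeness (\autoref{thm:mlldaiNewComplete}) together with the ability — exhibited in \autoref{rem:nhomcutelimWHY} — to separate realisers of $X,X^\bot$ from realisers of $X,Y$ using a suitably chosen (non-approximable) basis. So suppose $S$ is a cut--free net all of whose daimon links are binary, $\Gamma$ is a sequent with $S\witness\Gamma$, and $S\in\reali[\ibase]{\Gamma}$ for \emph{every} basis $\ibase$. In particular $S\in\reali[\ibase]{\Gamma}$ for every \emph{approximable} basis, so by \autoref{thm:mlldaiNewComplete} we already get $S\vdash_{\mlldai}\Gamma$; equivalently, by \autoref{prop:collapse}, $S$ is a cut--free proof net of $\mlldai$ correctly typeable by $\Gamma$. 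It remains to upgrade this to an $\mll$ proof, i.e. to show that every daimon link of $S$ can be typed by a pair of \emph{dual} atomic formulas $X,X^\bot$ rather than by an arbitrary pair $X,Y$.

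The key step is to rule out the "bad" daimon links by instantiating the hypothesis at the basis $\parbase$ of \autoref{rem:nhomcutelimWHY} (and, more generally, at its variants obtained by permuting which propositional variables are sent to $\{\maltese_\parr\}^\bot$ versus $\{\maltese_\parr\}^{\bibot}$). Since $S\witness\Gamma$ and $S$ is a cut--free proof net, the net $S$ is, up to its $\maltese$--links, exactly the syntactic forest of $\Gamma$; each binary daimon link sits above a pair of conclusions carrying dual \emph{sub}formulas of $\Gamma$ precisely when $S$ is an $\mll$ proof net, and above a pair of non--dual atoms otherwise. The strategy is: if some daimon link of $S$ were typed by $X,Y$ with $X^\bot\neq Y$, then choosing $\ibase$ so that the relevant atomic slots receive the "$\maltese_\parr$"--style interpretations, the local duality property (\autoref{rem:completenessMLLatomic}, illustrated in \autoref{fig:localDuality}) fails at that link, forcing $S\notin\reali[\ibase]{\Gamma}$ — contradicting the hypothesis. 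Concretely, one reduces $S$ against appropriate tests / against $\maltese_\parr$--based opponents in the $Y$--slot and observes, as in \autoref{rem:nhomcutelimWHY}(1), that $\maltese_2\notin\reali[\parbase]{X,Y}$; the same obstruction propagates through the multiplicative context of $S$ because the $\parr$-- and $\compo$--constructions preserve the relevant (non-)orthogonality, and because homogeneous cut elimination localises the interaction to the offending daimon link (cf. \autoref{rem:mllIncorrectStyle} and \autoref{prop:orthoperfect}). Hence every daimon link of $S$ must be typeable by a pair of dual atoms, which is exactly the condition in \autoref{def:proofrep} characterising representations of $\mll$ proofs; therefore $S\vdash_{\mll}\Gamma$.

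I expect the main obstacle to be the bookkeeping needed to localise the argument at a single daimon link while all the \emph{other} daimon links of $S$ — which may legitimately be typed by dual atoms — are still present and interacting. One must choose the separating basis carefully (assigning independent propositional variables, with the $\{\maltese_\parr\}^\bot$ / $\{\maltese_\parr\}^{\bibot}$ polarity split, to the atoms appearing in the suspect link, while giving the remaining atoms approximable interpretations so that the rest of $S$ still realises its part of $\Gamma$), and then show that the interaction of $S$ with the corresponding opponent net factors — via \autoref{prop:interAsAction} and the factorisation of cut elimination (\autoref{prop:SN}) — so that the failure at the suspect link is not masked by successful behaviour elsewhere. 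Once this localisation is in place, the contradiction with $S\in\reali[\ibase]{\Gamma}$ is immediate, and the theorem follows.
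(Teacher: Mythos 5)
Your high-level strategy matches the paper's: first invoke $\mlldai$ completeness (via $\baseone$ and \autoref{prop:collapse}) to get $S\vdash_{\mlldai}\Gamma$, then use the gadget $\maltese_\parr$ and the failure of $\maltese_2\perp\maltese_\parr\parallel\maltese_\parr$ to rule out daimon links typed by non-dual atoms. Both of those ingredients are exactly the ones the paper uses. The divergence — and the gap — is in the localisation step, which you explicitly defer ("I expect the main obstacle to be the bookkeeping needed to localise the argument at a single daimon link"). That deferred step is where essentially all of the paper's proof lives: rather than attacking a single suspect daimon link directly with a tailored basis, the paper performs a structural induction on the proof tree represented by $S$, peeling off the terminal $\otimes$ or $\parr$ link at each stage via a decomposition proposition (\autoref{prop:decomp}) and a splitting lemma (\autoref{lem:splitting} together with \autoref{prop:parallelExactly}), so that the $\maltese_\parr$ computation is only ever applied in the base case, to a bare $\maltese_2$ realising a two-formula atomic sequent $X,Y$ in every basis. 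The hypersequent machinery and the $\parallel$/$\compo$ constructions exist in the paper largely to make this induction go through.

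Your proposed direct localisation, as sketched, has concrete unresolved issues. First, the atoms of the suspect link may also occur in correctly-typed daimon links elsewhere in $S$, so you cannot simultaneously give them the $\{\maltese_\parr\}^{\bot}$-style interpretation at the bad link and an "approximable interpretation" at the good ones; what actually saves the good links is the local duality of \autoref{fig:localDuality} (which needs $\reali[\ibase]{X^\bot}=\reali[\ibase]{X}^\bot$, not approximability), and your proposal conflates these. Second, the claim that "homogeneous cut elimination localises the interaction to the offending daimon link" does not account for the fact that the $\maltese_\parr$-based opponents create non-homogeneous $(\parr/\maltese)$ cuts, whose elimination is non-deterministic and can merge or split daimons across the net; showing that the disconnection produced at the suspect link cannot be repaired by reductions happening elsewhere is precisely the content of the splitting lemma and of \autoref{prop:parallelExactly} (which itself needs the two tensor premises to share no propositional variables — a hypothesis you would also have to arrange). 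So the proposal identifies the right obstruction but does not supply the argument that makes it bite globally; as written it is a correct plan with its decisive step missing.
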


\begin{figure}
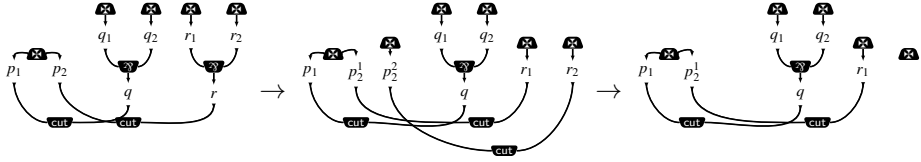

  \centering
  \scalebox{0.6}{\tikzfig{interactionb}}
  $\rightarrow$
  \scalebox{0.6}{\tikzfig{interactionb1}}
  $\rightarrow$
  \scalebox{0.6}{\tikzfig{interactionb2}}
  \caption{
    The daimon link $\maltese_2$
    is not orthogonal to $\maltese_\parr\parallel\maltese_\parr$:
    a disconnected net never reduces to a connected one 
    (and $\maltese_0$ is connected).
  }\label{fig:dai2fail}
\end{figure}
\begin{figure}
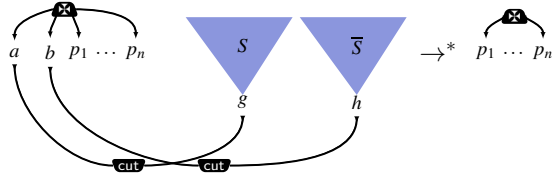

  \centering
  \scalebox{0.75}{\tikzfig{daiOrthoPairElim}}
  \raisebox{60pt}{$\rightarrow^*$}
  \scalebox{0.75}{\tikzfig{daiOrthoPairElim2}}
  \caption{The interaction of two orthogonal nets $S$ and $\overline S$
  with a daimon reduces to a daimon (with two less outputs).
  }\label{fig:localDuality}
\end{figure}

\begin{remark}\label{rem:mllAdequacy}
  A result of adequacy for $\mll$ can also be stated:
  given an interpretation basis $\ibase$ (not necessarily approximable)
  such that for each propositional variable $X$
  we have $\reali[\ibase]{X^\bot} = \reali[\ibase]{X}^\bot$,
  for any net $S$,
  if $S\vdash_{\mll} \Gamma$ then $S\in\reali[\ibase]{\Gamma}$.
\end{remark}


\begin{remark}\label{rem:completenessMLLatomic}
  The completeness result for $\mll$ (\autoref{thm:completeness})
  only identifies cut--free and \emph{atomic} proofs 
  (i.e. where axioms introduce sequents of the form $X,X^\bot$).
  This is because 
  for any atomic formulas $X$ and $Y$, 
  and for any basis $\ibase$ such that $\reali[\ibase]{X^\bot}=\reali[\ibase]{X}^\bot$,
  $\maltese_2 \in \reali[\ibase]{X\parr X^\bot , Y\parr Y^\bot }$
  while 
  $X\parr X^\bot$ and $Y\parr Y^\bot$ 
  are not dual formulas: 
  contrary to the atomic case 
  we cannot use $\maltese_2$ to distinguish $\reali[\parbase]{X\parr X^\bot , Y\parr Y^\bot}$ 
  from $\reali[\parbase]{X\parr X^\bot , X^\bot\otimes X}$.

  The fact that $\maltese_2 \in \reali[\parbase]{X\parr X^\bot , Y\parr Y^\bot }$
  (and more generally for any basis $\ibase$ such that $\reali[\ibase]{X^\bot}=\reali[\ibase]{X}^\bot$)
  is derived from the fact that,
  for any integer $k$
  and for  
  any two orthogonal nets $S_1$ and $S_2$ with one conclusion,
  the interaction $\maltese_{k+2} ::  (S_1 \parallel S_2)$ 
  has \emph{at least one} reduction to $\maltese_k$ 
  by cut elimination (\autoref{fig:localDuality}).
  We use this property for $k=2$ and $k=4$
  to show that $\maltese_2 \in \reali[\ibase]{X\parr X^\bot , Y\parr Y^\bot}$.
  More precisely, we prove that,
  $\maltese_2 \perp \reali[\ibase]{X\parr X^\bot}^\bot \parallel \reali[\ibase]{Y\parr Y^\bot}^\bot$:
  given $S,\overline S$ and $R,\overline R$
  two pairs of orthogonal nets (with one conclusion),
  when all nets $S,\overline S, R ,\overline R$ have disjoint
  sets of vertices, we can derive the following:

$$    
\begin{array}{r l r}
    & \dailink{a,b} :: S + \overline S 
    +\tenslink{S(1) , \overline S(1)}{q}  
    + 
    R + \overline R 
    +\tenslink{R(1) , \overline R(1)}{r}  \\
    \rightarrow\cdot \rightarrow &
    \dailink{a_1,a_2,b_1,b_2} :: 
    S + \overline S + 
    R + \overline R \\
    \rightarrow^* &
    \dailink{b_1,b_2} :: 
    R + \overline R \\
    \rightarrow^* &
    \maltese_0 
\end{array}
$$
  
\end{remark}



\newpage



\typeout{}
\bibliography{maktaba}{}

\begin{thebibliography}{10}

\bibitem{bechet:correctness}
Denis Bechet.
\newblock Minimality of the correctness criterion for multiplicative proof
  nets.
\newblock {\em Mathematical Structures in Computer Science}, 8(6):543–558,
  1998.
\newblock \href {https://doi.org/10.1017/S096012959800262X}
  {\path{doi:10.1017/S096012959800262X}}.

\bibitem{beffara:concurrentLinearRealisability}
Emmanuel Beffara.
\newblock A concurrent model for linear logic.
\newblock {\em Electronic Notes in Theoretical Computer Science}, 155:147--168,
  2006.
\newblock Proceedings of the 21st Annual Conference on Mathematical Foundations
  of Programming Semantics (MFPS XXI).
\newblock URL:
  \url{https://www.sciencedirect.com/science/article/pii/S1571066106001927},
  \href {https://doi.org/10.1016/j.entcs.2005.11.055}
  {\path{doi:10.1016/j.entcs.2005.11.055}}.

\bibitem{beffara:conjonctiveRealisability}
Emmanuel Beffara, F{\'{e}}lix Castro, Mauricio Guillermo, and {\'{E}}tienne
  Miquey.
\newblock Concurrent realizability on conjunctive structures.
\newblock In Marco Gaboardi and Femke van Raamsdonk, editors, {\em 8th
  International Conference on Formal Structures for Computation and Deduction,
  {FSCD} 2023, July 3-6, 2023, Rome, Italy}, volume 260 of {\em LIPIcs}, pages
  28:1--28:21. Schloss Dagstuhl - Leibniz-Zentrum f{\"{u}}r Informatik, 2023.
\newblock URL: \url{https://doi.org/10.4230/LIPIcs.FSCD.2023.28}, \href
  {https://doi.org/10.4230/LIPICS.FSCD.2023.28}
  {\path{doi:10.4230/LIPICS.FSCD.2023.28}}.

\bibitem{curien:criterions}
Pierre{-}Louis Curien.
\newblock Introduction to linear logic and ludics, part {II}.
\newblock {\em CoRR}, abs/cs/0501039, 2005.
\newblock URL: \url{http://arxiv.org/abs/cs/0501039}, \href
  {https://arxiv.org/abs/cs/0501039} {\path{arXiv:cs/0501039}}.

\bibitem{danos89}
Vincent Danos and Laurent Regnier.
\newblock The structure of multiplicatives.
\newblock {\em Archive for Mathematical Logic}, 28(3):181--203, 10 1989.
\newblock \href {https://doi.org/10.1007/BF01622878}
  {\path{doi:10.1007/BF01622878}}.

\bibitem{depaiva:dialecticaLL}
Valeria C.~V. de~Paiva.
\newblock A dialectica-like model of linear logic.
\newblock In David~H. Pitt, David~E. Rydeheard, Peter Dybjer, Andrew~M. Pitts,
  and Axel Poign{\'e}, editors, {\em Category Theory and Computer Science},
  pages 341--356, Berlin, Heidelberg, 1989. Springer Berlin Heidelberg.

\bibitem{girard_1987}
Jean-Yves Girard.
\newblock Linear logic.
\newblock {\em Theoretical Computer Science}, 50(1):1 -- 101, 1987.
\newblock URL:
  \url{http://www.sciencedirect.com/science/article/pii/0304397587900454},
  \href {https://doi.org/10.1016/0304-3975(87)90045-4}
  {\path{doi:10.1016/0304-3975(87)90045-4}}.

\bibitem{girard:multiplicatives}
Jean{-}Yves Girard.
\newblock Multiplicatives.
\newblock In G.~Lolli, editor, {\em Logic and Computer Science: New Trends and
  Applications}, pages 11--34. Rosenberg \& Sellier, 1987.

\bibitem{girard_1996}
Jean-Yves Girard.
\newblock Proof-nets: The parallel syntax for proof-theory.
\newblock In {\em Logic and Algebra}, pages 97--124. Marcel Dekker, 1996.

\bibitem{girard:ludics}
Jean-Yves Girard.
\newblock Locus solum: From the rules of logic to the logic of rules.
\newblock In Laurent Fribourg, editor, {\em Computer Science Logic}, pages
  38--38, Berlin, Heidelberg, 2001. Springer Berlin Heidelberg.

\bibitem{JoinetSeiller}
Jean-Baptiste Joinet and Thomas Seiller.
\newblock From abstraction and indiscernibility to classification and types:
  revisiting hermann weyl’s theory of ideal elements.
\newblock {\em Kagaku tetsugaku}, 53(2):65--93, 2021.
\newblock \href {https://doi.org/10.4216/jpssj.53.2_65}
  {\path{doi:10.4216/jpssj.53.2_65}}.

\bibitem{krivine:realiclassi}
Jean-Louis Krivine.
\newblock {Realizability in classical logic}.
\newblock {\em {Panoramas et synth{\`e}ses}}, 27:197--229, 2005.
\newblock URL: \url{https://hal.science/hal-00154500}.

\bibitem{LLhandbook}
International Research Network (IRN)~Linear Logic.
\newblock {\em Handbook of Linear Logic}.
\newblock International Research Network (IRN) Linear Logic, 2023.
\newblock URL:
  \url{https://ll-handbook.frama.io/ll-handbook/ll-handbook-public.pdf}.

\bibitem{oliva:modrealiLL}
Paulo Oliva.
\newblock Modified realizability interpretation of classical linear logic.
\newblock In {\em 22nd Annual IEEE Symposium on Logic in Computer Science (LICS
  2007)}, pages 431--442, 2007.
\newblock \href {https://doi.org/10.1109/LICS.2007.32}
  {\path{doi:10.1109/LICS.2007.32}}.

\bibitem{seiller:igm}
Thomas Seiller.
\newblock Interaction graphs: Multiplicatives.
\newblock {\em Annals of Pure and Applied Logic}, 163(12):1808--1837, 2012.
\newblock \href {https://doi.org/10.1016/j.apal.2012.04.005}
  {\path{doi:10.1016/j.apal.2012.04.005}}.

\bibitem{seiller:ige}
Thomas Seiller.
\newblock Interaction graphs: Exponentials.
\newblock {\em Log. Methods Comput. Sci.}, 15, 2013.

\bibitem{seiller:igf}
Thomas Seiller.
\newblock Interaction graphs: Full linear logic.
\newblock {\em CoRR}, abs/1504.04152, 2015.
\newblock URL: \url{http://arxiv.org/abs/1504.04152}, \href
  {https://arxiv.org/abs/1504.04152} {\path{arXiv:1504.04152}}.

\bibitem{seiller:iga}
Thomas Seiller.
\newblock Interaction graphs: Additives.
\newblock {\em Annals of Pure and Applied Logic}, 167(2):95--154, 2016.
\newblock \href {https://doi.org/10.1016/j.apal.2015.10.001}
  {\path{doi:10.1016/j.apal.2015.10.001}}.

\bibitem{seiller:igg}
Thomas Seiller.
\newblock Interaction graphs: Graphings.
\newblock {\em Annals of Pure and Applied Logic}, 168(2):278--320, 2017.
\newblock \href {https://doi.org/10.1016/j.apal.2016.10.007}
  {\path{doi:10.1016/j.apal.2016.10.007}}.

\bibitem{seiller:hdr}
Thomas Seiller.
\newblock Mathematical informatics, 2024.
\newblock Habilitation thesis.
\newblock URL: \url{https://theses.hal.science/tel-04616661}.

\end{thebibliography}

\newpage
\tableofcontents

\newpage

\appendix

\section{Additional Figures}

\begin{figure}[H]
    \begin{subfigure}{\textwidth}
    \centering
        \scalebox{0.48}{\tikzfig{cut_daitens1}}
        \small{$\rightarrow$}
        \scalebox{0.48}{\tikzfig{reduct_daitens1}}
        \scalebox{0.48}{\tikzfig{cut_daitens2}}
        \small{$\rightarrow$}
        \scalebox{0.48}{\tikzfig{reduct_daitens2}}
        \caption{
            Extra cases for the elimination 
        of $(\maltese / \otimes)$ cuts,
        on the left the elimination step 
        when one of the inputs belongs to the daimon above the cut,
        on the right the elimination step 
        when both inputs belong to the daimon above the cut.
        }
\end{subfigure}

\begin{subfigure}{\textwidth}
    \centering
    \scalebox{0.55}{\tikzfig{cut_daiparr1}}
    $\rightarrow$
    \scalebox{0.55}{\tikzfig{reduct_daiparr1A}}

\scalebox{0.55}{\tikzfig{cut_daiparr1}}
$\rightarrow$
\scalebox{0.55}{\tikzfig{reduct_daiparr1B}}
    \caption{Extra cases for the elimination 
    of $(\maltese / \parr)$ cuts:
    when 
    one of the inputs belongs to the daimon above the cut.}
\end{subfigure}

\begin{subfigure}{\textwidth}
    \centering
    \scalebox{0.55}{\tikzfig{cut_daiparr2}}
    $\rightarrow$
    \scalebox{0.55}{\tikzfig{reduct_daiparr2AA}}

\scalebox{0.55}{\tikzfig{cut_daiparr2}}
$\rightarrow$
\scalebox{0.55}{\tikzfig{reduct_daiparr2AB}}

\scalebox{0.55}{\tikzfig{cut_daiparr2}}
    $\rightarrow$
    \scalebox{0.55}{\tikzfig{reduct_daiparr2BA}}

\scalebox{0.55}{\tikzfig{cut_daiparr2}}
$\rightarrow$
\scalebox{0.55}{\tikzfig{reduct_daiparr2BB}}
    \caption{Extra cases for the elimination 
    of $(\maltese / \parr)$ cuts:
    when both 
    inputs belong to the daimon above the cut.}
\end{subfigure}
\caption{Complements to \autoref{fig:nhomcutelim}
for defining non homogeneous cut elimination (\autoref{def:nohomcut}).
}\label{fig:cutelimcomplement}
\end{figure}



\begin{figure}[H]
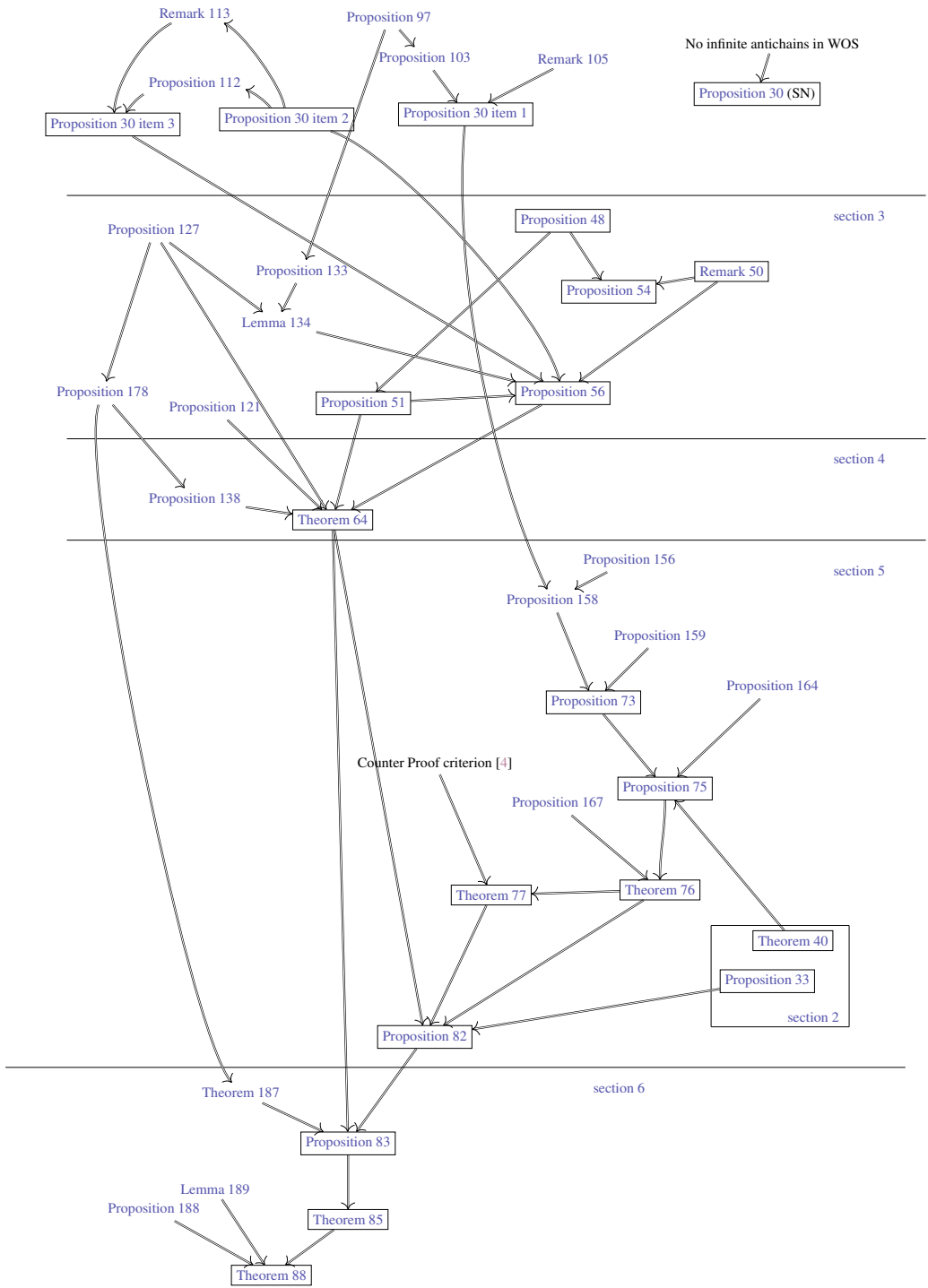

  \centering
  \scalebox{0.6}{\tikzfig{structuralAutorefs}}
  \caption{    
  Relation of ``dependency'' between propositions
  in this document (one can observe that this is an acyclic graph!).
    The nodes of the graph are (some of) the propositions
    in the document, 
    while an edge from a proposition $P$ to proposition $Q$
    means that in order to prove $Q$ we have used the proposition $P$.
    Nodes inside a rectangle 
    are propositions which occur in the main text of the paper,
    while nodes without rectangle 
    are proposition occuring only in the appendix.
  }
\end{figure}

\newpage

\renewcommand*{\merge}[1][]{\bowtie^{#1}}
\def\dais#1{\mathcal D_{#1}}

\newcommand*{\formurep}{\equiv_{\mathcal F}}
\newcommand*{\formunet}[2]{\underline {#1}(#2) }
\newcommand*{\placehold}{\mathbf V }
\newcommand*{\formuat}[2]{\mathbf{F}_{#1}(#2) }
\newcommand*{\formlink}[3]{#1\langle #2 \triangleright #3 \rangle}
\newcommand*{\copos}[2][]{\mathsf{co}_{#1}({#2})}
\newcommand*{\setimg}[1]{\mathsf{set}({#1})}

\section{Complements to \autoref{sect:hypergraphs}}


We recall the notion of homomorphism and isomorphism between hypergraphs.

\begin{definition}[Homomorphism and isomorphisms of hypergraphs]\label{def:isohgraph}
  An \emph{homomorphism}
  between two (labelled) hypergraphs
  $\hgraph_1 =(V_1,E_1,\source_1,\target_1,\labl_1)$
  and $\hgraph_2 = (V_2,E_2,\source_2,\target_2,\labl_2)$
  is a pair of bijective maps $\tuple{f^V,f^E}$,
  such that
  \begin{itemize}
    \item ({\scshape commutation})
    For any edge $e_1$ of $\hgraph_1$, $\source_2(f^E(e_1)) = f^V(\source_1 (e_1))$,
    i.e. the following diagram commutes:
    \begin{center}
\begin{tikzcd}
e_1 \arrow[r, "\source_1"] \arrow[d, "f^E"'] & \source_1(e_1) \arrow[d, "f^V"] &                                                \\
e_2 \arrow[r, "\source_2"]                   & \source_2(e_2)                  &
\end{tikzcd}
\end{center}
    \item ({\scshape label-preserving})
    The function $f^E$ preserves the labels
    of the links,
    i.e. for any link $e$ in $\hgraph_1$;
    $\labl(e) = \labl (f^E(e))$
  \end{itemize}
The homomorphism is an isomorphism if both $f^V$ and $f^E$ are bijective.
\end{definition}

\subsection{Rewriting properties of nets: Strong Normalisation}

\begin{proposition}[Strong normalisation \autoref{prop:SN}]
  The cut elimination rewriting $\reduce{}$
  is strongly normalising.
\end{proposition}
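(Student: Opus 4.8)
The plan is to exhibit a single $\mathbb{N}$-valued measure $\mu$ on (unordered) nets that strictly decreases along every cut-elimination step; since $\mathbb{N}$ is well-founded, this yields strong normalisation of $\rightarrow$ at once. The naive measure ``number of links'' does not suffice: it works for homogeneous cut elimination (this is the reason given for \autoref{thm:snhomogeneous}), but the $(\parr/\maltese)$ rule of \autoref{fig:nhomcutelim} turns one $\parr$-link, one cut and one daimon into two cuts and two daimons, hence \emph{increases} the number of links by one. The observation that repairs this is that every reduction rule, homogeneous or not, destroys at least one $\otimes$- or $\parr$-link and creates none (reducts only ever contain new cut and daimon links), so the number of multiplicative links is non-increasing, and it strictly decreases on every step except the glueing step $(\maltese/\maltese)$, which instead decreases the total number of links. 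Accordingly I would set
\[
\mu(S)\;\defeq\; 2\cdot\#\{\, e \text{ a link of } S \mid \labl(e)\in\{\otimes,\parr\}\,\}\;+\;\#\{\, e \text{ a link of }S\,\}.
\]

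First I would record that $\mu$ is additive for the sum of hypergraphs, $\mu(\hgraph_1+\hgraph_2)=\mu(\hgraph_1)+\mu(\hgraph_2)$, which is immediate because the link set of a sum is the disjoint union of the link sets (\autoref{def:sumhgraph}) and both quantities counting links are additive. Since $\rightarrow$ is the contextual closure with respect to $+$ of the base rules of Figures~\ref{fig:homcutelim}, \ref{fig:nhomcutelim} and \ref{fig:cutelimcomplement}, additivity reduces the problem to checking $\mu(R)>\mu(R')$ for every base instance $R\rightarrow R'$. Then I would go through the cases of \autoref{def:cuttypes}: a multiplicative $(\otimes/\parr)$ step removes two multiplicative links and one link overall, so $\mu$ drops by $5$; an acyclic glueing $(\maltese/\maltese)$ step touches no multiplicative link and removes two links, so $\mu$ drops by $2$; a reversible $(\otimes/\maltese)$ step removes the tensor link and does not increase the number of links, so $\mu$ drops by at least $2$; an irreversible $(\parr/\maltese)$ step removes the par link and increases the number of links by at most one, so $\mu$ drops by at least $1$. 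Clash cuts and cyclic glueing cuts are not reducible, so no step occurs there. Hence $S\rightarrow S'$ implies $\mu(S)>\mu(S')$, and there is no infinite reduction sequence.

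The part that genuinely needs care — and the closest thing to an obstacle — is the bookkeeping for the non-homogeneous rules in \emph{all} their variants, i.e. the complementary cases of \autoref{fig:cutelimcomplement} in which one or both premises of the $\otimes$- or $\parr$-link already occur among the outputs of the daimon above the cut. There one must confirm (i) that no new $\otimes$- or $\parr$-link is ever created, and (ii) that the total number of links still grows by at most one for a $(\parr/\maltese)$ step and does not grow for a $(\otimes/\maltese)$ step; in those degenerate cases the reduct is in fact no larger (some of the new cuts become cyclic glueing cuts, or disappear), so the bounds above are comfortably met. An equivalent, perhaps more transparent packaging replaces $\mu$ by the lexicographic pair $(\#\text{multiplicative links},\,\#\text{links})\in\mathbb{N}^2$: every non-glueing step strictly decreases the first coordinate, while a glueing step fixes the first coordinate and strictly decreases the second.
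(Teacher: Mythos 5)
Your proof is correct and follows essentially the same route as the paper: the paper's proof uses the lexicographic measure $(\mathsf{connective}(S),\mathsf{cut}(S))$, observing that all non-glueing steps strictly decrease the number of $\otimes$/$\parr$-links while glueing steps preserve it and decrease the number of cuts, which is exactly the observation underlying your weighted sum $\mu$ and the lexicographic pair you give at the end (with ``total links'' in place of ``cut links'' as the tie-breaking coordinate, an immaterial difference). Your version is somewhat more explicit about additivity under $+$ and about the degenerate cases of \autoref{fig:cutelimcomplement}, but the key idea is identical.
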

\begin{proof}
  This is obtained by observing that an 
  appropriate measure on nets always decrease with cut elimination,
  i.e. whenever $S\rightarrow S'$
  then $m(S)> m(S')$.
  Such a measure $m$ may be 
  the pairs $(\mathsf{connective}(S),\mathsf{cut}(S))$
  where $\mathsf{connective}(S)$ is the number 
  of $\otimes$-- and $\parr$-- links in $S$,
  while $\mathsf{cut}(S)$ is the number of cut links in $S$.
  The order involved is then the lexicographical order.

  Indeed we have in that case that $S\rightarrow S'$
  implies $m(S)> m(S')$:
  the $(\otimes/\maltese)$, $(\parr/\maltese)$
  and multiplicative steps 
  always decrease the number $\mathsf{connective}(S)$,
  finally the glueing steps do not change $\mathsf{connective}(S)$ 
  however they make the numbe of cuts $\mathsf{cut}(S)$ decrease.
\end{proof}

\subsection{Rewriting properties of nets: unrelated cuts}

\begin{notation}
  Given two binary relations
  $R_1$ and $R_2$ on a set $X$
  we denote by $R_1\at R_2$
  the composition of the two relation,
  i.e. for two $x,y\in X$
  $x R_1\at  R_2 y$
  if and only if there exists $z$
  such that $x R_1 z$
  and $z R_2 y$.
\end{notation}

\begin{notation}
  Given a net $S$ and cut link $c$
  we denote by $S\reduce c S'$,
  whenever $S\rightarrow S'$
  by eliminating the cut $c$. 
  Note that whenever $S\rightarrow S'$
  there exists a cut $c$
  such that $S\rightarrow c S'$,
  namely the unique cut which belong to $S$
  but not $S'$.
\end{notation}

\begin{notation}
  We denote by $\redmult$
  the relation between nets such 
  that $S\redmult S'$
  whenever $S\reduce c S'$
  and $c$ is multiplicative,
  on the other hand we denote 
  $S\rednotmult S'$
  whenever 
  $S\reduce c S'$
  and $c$ is not a multiplicative cut.
  Similarly one defined 
  $\redglue$, $\rednotglue$,
  $\redparrdai$ , $\rednotparrdai$,
   $\redtensdai$ and $\rednottensdai$.
\end{notation}

\begin{definition}
  Given a net $S$
  two cuts $c_1$ and $c_2$
  in $S$ are \emph{unrelated}
  whenever:
  \begin{itemize}
    \item $S$ can be written as $C + R_1 + R_2$
    where $R_1$ is the redex of the cut $c_1$
    and $R_2$ is the redex of the cut $c_2$.
    \item By eliminating $c_1$, $S$ reduces to $C + R_1' + R_2$.
    \item By eliminating $c_2$, $S$ reduces to $C + R_1 + R_2'$.
  \end{itemize}
\end{definition}

\begin{proposition}[Strong Confluence of unrelated cuts]\label{prop:unrelatedConflu}
  Given $S$ some net 
  containing two reductible and unrelated cuts $c_1$ and $c_2$.
  The diagram below commutes:
  \begin{center}
\begin{tikzcd}
S \arrow[r, "c_1"] \arrow[d ,"c_2"] & S_1 \arrow[d, "c_1",dotted] \\
S_1' \arrow[r ,"c_2",dotted]             & S_2                
\end{tikzcd}
\end{center}
Where the dotted arrows are the existence of a reduction.
\end{proposition}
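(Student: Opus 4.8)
The plan is to exploit the fact that, by \autoref{def:homcut} and \autoref{def:nohomcut}, the cut-elimination relation $\rightarrow$ is the \emph{contextual closure} (with respect to $+$) of a family of purely \emph{local} rewrites, together with the fact that the definition of \emph{unrelated} already records that the two redexes sit in disjoint parts of $S$. Concretely, write $S = C + R_1 + R_2$ with $R_i$ the redex of $c_i$; by assumption eliminating $c_1$ produces $C + R_1' + R_2$ and eliminating $c_2$ produces $C + R_1 + R_2'$, where $R_i'$ is obtained from $R_i$ by the appropriate step of \autoref{fig:homcutelim} or \autoref{fig:nhomcutelim}. So the two reducts $S_1$ and $S_1'$ already differ from $S$ only inside $R_1$, resp. $R_2$.

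First I would fix a reduction $S \reduce{c_1} S_1$; by unrelatedness $S_1 = C + R_1' + R_2$, and I would \emph{record} the precise local rewrite $R_1 \leadsto R_1'$ used — if $c_1$ is a $(\parr/\maltese)$ cut this includes the choice of the partition $\sigma,\sigma',\tau,\tau'$ of \autoref{fig:nhomcutelim}, and otherwise the step is deterministic up to isomorphism. Symmetrically I fix $S \reduce{c_2} S_1'$ with $S_1' = C + R_1 + R_2'$, recording $R_2 \leadsto R_2'$. The claim will be that $S_2 \triangleq C + R_1' + R_2'$ closes the square.

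Next I would check that $c_2$ is still a reducible cut of $S_1 = C + R_1' + R_2$ and that its reduction can be taken to localise to $R_2$. Reducibility is immediate: the link set of $R_2$ is untouched in $S_1$, so the type of $c_2$ (\autoref{def:cuttypes}) is unchanged, and for a glueing cut acyclicity is determined by whether the two daimon links of $R_2$ coincide — again unchanged; a clash cut is excluded since $c_2$ is reducible in $S$. Moreover, by the module property of nets, the positions that the step $R_2 \leadsto R_2'$ deletes are simultaneously a source of $c_2$ and a target of a link of $R_2$, hence (by source- and target-disjointness) occur in no link of $C + R_1'$; the positions it merely redistributes are daimon outputs and persist; the positions it creates are fresh. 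Therefore applying the \emph{same} recorded rewrite $R_2 \leadsto R_2'$ inside $S_1$ yields a legitimate single step $S_1 \reduce{c_2} C + R_1' + R_2' = S_2$. The symmetric argument gives $S_1' \reduce{c_1} C + R_1' + R_2' = S_2$, which closes the diagram.

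The main obstacle I anticipate is bookkeeping rather than mathematics: one must make precise that "the same local rewrite" is meaningful across the two nets — i.e. that reducing $c_2$ in $S$ and in $S_1$ may be made to agree on the non-deterministic choice when $c_2$ is a $(\parr/\maltese)$ cut — and that, because the glueing step genuinely erases $p_i,q_j$ and the multiplicative step erases $p,q$, those erasures never collide with $R_2$ (resp. $R_1$) or $C$; the module hypothesis is exactly what rules this out. It is also worth stating explicitly that this is why the lower arrows of the diagram are dotted (existential): full confluence fails (\autoref{rem:nondet-proofsearch}), so strong confluence here only asserts a common reduct for a \emph{suitable} choice of the remaining steps.
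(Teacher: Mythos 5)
Your proposal is correct and follows essentially the same route as the paper: decompose $S = C + R_1 + R_2$ using the definition of unrelated cuts, then apply contextual closure to perform the remaining local rewrite in each reduct, closing the square at $C + R_1' + R_2'$. The extra bookkeeping you supply (persistence of the cut type, non-collision of erased positions via target/source-disjointness, and matching the non-deterministic choice for $(\parr/\maltese)$ cuts) is a more careful justification of the step the paper dispatches with a single appeal to contextual closure, but it is the same argument.
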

\begin{proof}
  Let us fix a net $S$ 
  containing (at least) two unrelated cuts 
  that we denote $c_1$ and $c_2$.
  This means that 
  one can write 
  $S$ has $C + R_1 + R_2$
  where the $R_i$ are the corresponding redexes.
  Assuming (without loss of generality) 
  that $R_i\rightarrow R_i'$:
  by contextual closure, the elimination 
  of $c_1$ rewrites $S$ 
  in $S_1 = C + R_1' + R_2$
  and the elimination of $c_2$
  rewrites $S$ in $S_{2} = C + R_1 + R_2'$

  Using contextual closure once more we conclude that
  $S_1$ reduces to $C + R_1' + R_2'$ 
  by eliminating $c_2$
  while $S_2$ reduces to $C + R_1' + R_2'$  
  by eliminating $c_1$.
\end{proof}

\begin{remark}\label{rem:stableUnrelatedCuts}
  Consider a cut $c$  which is not multiplicative in a net $S$
  if $c$ is not related with a cut $c'$ in $S$
  then if $S\reduce c S'$
  the cut produced by $c$ 
  are also unrelated with $c'$.
\end{remark}

\begin{definition}[Strongly unrelated cuts]
We say that two cuts 
$c_1 =\cutlink{p_1,q_1}$
and $c_2 = \cutlink{p_2,q_2}$
are \emph{strongly unrelated} 
whenever they are distinct 
and the set $D_1$ of daimon links 
above $p_1$ or $q_1$
does not intersect with 
the set $D_2$ of daimon links 
above $p_2$ or $q_2$.

\end{definition}

\begin{remark}\label{rem:strongUnrelateMult}
The notion of strongly unrelated 
cuts 
is to obtain the stability of unrelated cuts (as in \autoref{rem:stableUnrelatedCuts})
also for multiplicative cuts.
Given a multiplicative cut $c$ of a net $S$
and a cut $c'$ of $S$ unrelated with $c$,
if $S\reduce c S'$
then the cut produced by the elimination of $c$
remain unrelated with $c$ in $S'$. 
\end{remark}

\begin{proposition}\label{prop:unrelatedFactorization}
  Given a net $S$
  and a partition of its set of cut links 
  $\mathsf{cut}(S) = C_1\uplus C_2$
  such that each cuts of $C_1$ and $C_2$
  are strongly unrelated
  then the assertions are equivalent:
  \begin{enumerate}
    \item 
    $S\rightarrow^* T$
    \item 
    $S\rightarrow^*_1 S' \rightarrow^*_2 T$,
    where $\rightarrow_1$ eliminates cut from 
    $C_1$ or the cut it produces 
    and $\rightarrow_2$ eliminates cuts from $C_2$
    or the cut it produces.
  \end{enumerate}
\end{proposition}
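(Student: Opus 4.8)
The plan is to prove the nontrivial implication $(1)\Rightarrow(2)$ by permuting steps inside a fixed reduction sequence; the converse $(2)\Rightarrow(1)$ is immediate, since $\rightarrow_1$ and $\rightarrow_2$ are both sub-relations of $\rightarrow$, so $\rightarrow_1^{*}\at\rightarrow_2^{*}\ \subseteq\ \rightarrow^{*}$.

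First I would introduce a bookkeeping device. Along any reduction issued from $S$, colour each cut link with $1$ or $2$: the cuts of $S$ get their colour from the partition $\mathsf{cut}(S)=C_1\uplus C_2$, and every cut created by eliminating a colour-$i$ cut inherits colour $i$. Thus a $\rightarrow_1$-step is exactly the elimination of a colour-$1$ cut, and likewise for $\rightarrow_2$. The key lemma is an invariant: \emph{along every reduction $S\rightarrow^{*}S'$, in $S'$ every colour-$1$ cut is strongly unrelated to every colour-$2$ cut}. I would prove it by induction on the length of the reduction, the base case being exactly the hypothesis on $C_1\uplus C_2$. For the inductive step, say the last step eliminates a colour-$1$ cut $c$: a colour-$2$ cut $b$ of $S'$ was already present in the previous net (eliminating a colour-$1$ cut produces only colour-$1$ cuts), so one must check it is still strongly unrelated, in $S'$, both to the colour-$1$ cuts already present and to those produced by $c$. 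Write $D_x$ for the set of daimon links above an input of a cut $x$, as in the definition of strongly unrelated cuts. For a colour-$1$ cut $a$ already present, one observes that eliminating $c$ does not change $D_b$: the daimons occurring in the redex of $c$ lie in $D_c$, which is disjoint from $D_b$ by the induction hypothesis, and when $c$ is a $(\parr/\maltese)$ cut the daimon it splits into two is, for the same reason, not above $b$; a symmetric inspection shows that $D_a$ is only altered by replacing such a daimon of $D_c$ by the fresh ones it splits into, which cannot lie in $D_b$ either. For a colour-$1$ cut produced by $c$, one appeals directly to the stability of (strongly) unrelated cuts under reduction, namely \autoref{rem:stableUnrelatedCuts} when $c$ is not multiplicative and \autoref{rem:strongUnrelateMult} when it is. This is the step where the transitive ``above'' in the definition of strongly unrelated cuts is essential, and the one I expect to demand the most care — chiefly the effect of the $(\parr/\maltese)$ rule, which duplicates a daimon.

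The second ingredient is a local exchange of consecutive steps: in any reduction from $S$, if two consecutive steps eliminate first a colour-$2$ cut $c_2$ and then a colour-$1$ cut $c_1$, they can be replayed in the order $c_1$ then $c_2$, reaching the same net. Indeed $c_1\neq c_2$ (different colours) and they are strongly unrelated (the invariant); since a net is a module (source- and target-disjoint), two distinct strongly unrelated reducible cuts have disjoint redexes — a link lying directly above an input of $c_1$ and directly above an input of $c_2$ would be a single link with two distinct outputs, hence a daimon, hence a daimon in $D_{c_1}\cap D_{c_2}$, which is forbidden. In particular the redex of $c_1$ is untouched by eliminating $c_2$, so $c_1$ is already reducible in the earlier net and $c_1,c_2$ are \emph{unrelated} in the sense of the definition; \autoref{prop:unrelatedConflu} then closes the diagram. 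The possible non-determinism of $(\parr/\maltese)$ reductions causes no difficulty: as the two redexes are disjoint, the very reducts chosen in the original order can be chosen again in the exchanged order, so the common endpoint is literally the same net.

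Finally I would run the reordering. Given $S=S_0\reduce{c_1}S_1\reduce{c_2}\cdots\reduce{c_n}S_n=T$ with each step coloured as above, an exchange replaces an adjacent colour-pattern $(2,1)$ by $(1,2)$ while leaving every other step — and the endpoint $S_n=T$ — unchanged; hence the natural number $\sum\{\,i\mid\text{the $i$-th step has colour }1\,\}$ drops by exactly one. Being bounded, it can decrease only finitely often, so after finitely many exchanges the reduction performs all colour-$1$ steps before all colour-$2$ steps, i.e. it factors as $S\rightarrow_1^{*}S'\rightarrow_2^{*}T$, which is assertion $(2)$. Note that no appeal to strong normalisation (\autoref{prop:SN}) is needed, since we only permute steps within a fixed finite sequence.
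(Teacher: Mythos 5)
Your proof is correct and follows essentially the same route as the paper's: both reduce the claim to the persistence of (strong) unrelatedness between the descendants of $C_1$ and $C_2$ along reduction, and then invoke the strong confluence of unrelated cuts (\autoref{prop:unrelatedConflu}) to permute colour-$2$ steps past colour-$1$ steps. Your version merely makes explicit what the paper leaves implicit — the colour/inheritance bookkeeping, the fact that the invariant must preserve \emph{strong} unrelatedness (not just unrelatedness) for the multiplicative case of the induction to go through, and the terminating measure for the reordering — so no further changes are needed.
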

\begin{proof}
  $1\Rightarrow 2.$
  If two cut $c$ and $d$ are unrelated: 
  if $c$ is not multiplicative, any cut created by the elimination of $c$ 
  will still be unrelated with $d$ (\autoref{rem:stableUnrelatedCuts}).
  If $c$ is multiplicative since $C_1$ and $C_2$ are strongly unrelated 
  and thus $c$ and $d$ are strongly unrelated cuts,
  the non multiplicative cuts say $c_1$ and $c_2$ created by the elimination of $c$
  involve a different daimon link than that of $d$
  thus $c_1$ and $c_2$ are still unrelated with $d$ (\autoref{rem:strongUnrelateMult}).
  Therefore, it follows that for each step of cut elimination 
  $\reduce{c}_1$ 
  and $\reduce{c'}_2$ 
  the cuts $c$ and $c'$ are unrelated 
  thus using the property of confluence of unrelated cuts 
  (\autoref{prop:unrelatedConflu})
  one concludes.

  $2\Rightarrow 1.$
  indeed because $\rightarrow_1$
  and $\rightarrow_2$
  are special case of the cut elimination $\rightarrow$.
\end{proof}

\subsection{Multiplicative Factorization of Cut Elimination 
-- Proof of \autoref{prop:SN} \autoref{item:factorisation}}

\begin{remark}\label{rem:multiUnrelation}
Given a cut $c$ in a net $S$
if $c$ is a multiplicative cut 
then for any other cut $c'$ of $S$:
$c$ and $c'$ are unrelated,
because the redexes of cuts only involve three links
made of the cut link and the two links above the input of that cut.
Therefore two cuts of two distinct multiplicative redexes cannot involve a same connective link
since otherwise it means 
that these two cuts 
have for input a same position 
i.e. the target of the connective link
(this violates the property of \emph{targget--disjointness} of nets).
\end{remark}

\begin{proposition}[General commutation of the Multiplicative cuts]\label{prop:multiCommutes}
For any net $S$
containing a multiplicative cut $c$,
for any sequence of cuts $\alpha^*$ in $S$
then the following diagrams commutes;
\begin{center}
  \begin{tikzcd}
  S \arrow[r, "c"] \arrow[d, dashed ,"\alpha^*"] & S_1 \arrow[d, "\alpha^*"] \\
  S_1' \arrow[r, dashed ,"c"]             & S_2                
  \end{tikzcd}
  \end{center}
    Where the dotted arrows are the existence of a reduction.

\end{proposition}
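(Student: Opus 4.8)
The plan is to reduce the statement to iterated applications of the one--step diamond of \autoref{prop:unrelatedConflu}, exploiting the fact that a multiplicative cut never interferes with the rest of a net.

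The engine is a \emph{one--step commutation lemma}: if $c$ is a multiplicative cut of $S$ and $d\neq c$ is any other cut of $S$, then $c$ and $d$ strongly commute, i.e.\ writing $S\reduce{c}S_c$ and $S\reduce{d}S_d$ there is a net $U$ with $S_c\reduce{d}U$ and $S_d\reduce{c}U$:
\begin{center}
\begin{tikzcd}
S \arrow[r,"c"] \arrow[d,"d"'] & S_c \arrow[d,"d"] \\
S_d \arrow[r,"c"] & U
\end{tikzcd}
\end{center}
Indeed, by \autoref{rem:multiUnrelation} a multiplicative cut is unrelated with every other cut of $S$, and its redex --- the cut link together with the two connective links above its inputs --- is link--disjoint from the redex of $d$; since both cuts are reducible, \autoref{prop:unrelatedConflu} applies and yields the square. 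The point I would insist on is that eliminating the foreign cut $d$ touches none of the three links forming the redex of $c$, so the residual of $c$ in $S_d$ is \emph{the very same} multiplicative cut, still reducible; symmetrically the residual of $d$ in $S_c$ is again a single cut, with no ``splitting'' on the $c$ side. Hence a multiplicative step can be permuted past any single reduction step, on either side, without creating new cuts.

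I would then conclude by induction on the length of $\alpha^*$. If $\alpha^*$ is empty the diagram commutes trivially. Otherwise write $\alpha^*=d\at\beta^*$, so that $S\reduce{d}S_d\reduce{\beta^*}S_1'$, with $S\reduce{c}S_1$ the given multiplicative step. If $d=c$ then $S_d=S_1$ and $S_1'$ is obtained from $S_1$ by $\beta^*$, so taking $S_2=S_1'$ closes the diagram (the bottom arrow being the empty reduction). If $d\neq c$, the one--step lemma produces $U$ with $S_1\reduce{d}U$ and $S_d\reduce{c}U$, and it guarantees that the residual of $c$ in $S_d$ is again a multiplicative cut; applying the induction hypothesis to $S_d$, that cut, and the reduction $\beta^*$ gives a net $S_2$ with $U\rightarrow^* S_2$ and $S_1'\rightarrow^* S_2$. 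Gluing the two squares produces $S_1\reduce{d}U\rightarrow^* S_2$, i.e.\ $S_1\rightarrow^* S_2$, together with $S_1'\rightarrow^* S_2$: this is precisely the required diagram, bearing in mind that its bottom and right arrows are only asked to be \emph{some} reduction, so there is no need to keep track of the exact residual sequences (in particular, if $c$ itself occurs inside $\alpha^*$, its bottom residual is empty and one can take $S_2=S_1'$).

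The only genuine work lies in the bookkeeping packed into the one--step lemma: one has to check, using \autoref{rem:multiUnrelation}, that a multiplicative cut --- being unrelated with and link--disjoint from every other cut --- is never disturbed by a foreign elimination step, hence survives unchanged and reducible all along $\alpha^*$, and that commuting it past a single step never duplicates that step. Everything else is a routine iteration of \autoref{prop:unrelatedConflu}; in particular strong normalisation (\autoref{prop:SN}) is not needed here, the statement being a pure commutation property. One could equivalently phrase the induction as ``pushing the multiplicative step $c$ to the front of the reduction sequence $\alpha^*$'', which is the same argument read in the other direction.
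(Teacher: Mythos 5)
Your proof is correct and follows essentially the same route as the paper: Remark~\ref{rem:multiUnrelation} gives that a multiplicative cut is unrelated with (and link--disjoint from) every other cut, the one--step square is Proposition~\ref{prop:unrelatedConflu}, and the diagram for a whole sequence $\alpha^*$ follows by induction on its length. The paper states this in one line; your write-up merely makes explicit the induction and the (correct) observation that the residual of $c$ survives unchanged as a multiplicative cut after each foreign step, which is exactly what the terse official proof leaves implicit.
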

\begin{proof}
Because the multiplicative cut $c$
is unrelated with any cut $d$ of $S$ (\autoref{rem:multiUnrelation})
we conclude using the confluence of unrelated cuts (\autoref{prop:unrelatedConflu}).
\end{proof}

\begin{remark}\label{rem:multRedexPreserved}
If $S$ reduces to $S'$
by eliminating a non multiplicative cut 
then any multiplicative cut $c$ occuring in $S'$ 
is a multiplicative cut occuring in $S$.
\end{remark}

\begin{remark}\label{rem:multCreatesMult}
  Given a net $S$ containing a cut $c$
  and one of its reduction $S\rightarrow_{c} S'$,
  if a multiplicative or clash cut $c'$ occurs in $S'$ and not $S$:
  then $c$ was a multiplicative cut.

  This is because (1) glueing cuts 
  make cuts disappear 
  (2) $(\otimes/\maltese)$ and $(\parr/\maltese)$ cuts 
  create 
  two new cuts which can be of types $(\otimes/\maltese)$,
  $(\parr/\maltese)$ or $(\maltese/\maltese)$
  but these new cuts are never multiplicative cuts.
\end{remark}

\begin{proposition}[Cut Elimination Factorisation]\label{prop:factorisation}
The relation induced by the cut--elimination rewriting 
$\rightarrow^*$ of \autoref{def:nohomcut} and \autoref{def:homcut}
is equal to 
$\rightarrow_{\mathsf{mult}}^* \at \rightarrow_{\neg {\mathsf{mult}}}^*$.
\end{proposition}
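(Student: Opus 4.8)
The plan is to prove the two inclusions separately. The inclusion $\redmult^* \at \rednotmult^* \subseteq \rightarrow^*$ is immediate, since $\redmult$ and $\rednotmult$ are by definition restrictions of $\rightarrow$. For the converse I would argue by induction on the length $n$ of a given reduction $\rho \colon S = S_0 \reduce{c_1} S_1 \reduce{c_2} \cdots \reduce{c_n} S_n = T$, where $c_i$ denotes the cut eliminated at step $i$, calling a step \emph{multiplicative} when its cut is a multiplicative cut of the net it is eliminated from.

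If $n = 0$, or if $\rho$ contains no multiplicative step, then $\rho$ itself witnesses $S \redmult^* \at \rednotmult^* T$ (with an empty multiplicative prefix). Otherwise let $j$ be the position of the \emph{first} multiplicative step, so $c_1,\dots,c_{j-1}$ are non-multiplicative; applying \autoref{rem:multRedexPreserved} to each of the non-multiplicative steps $S_{i-1}\reduce{c_i}S_i$ for $i = 1,\dots,j-1$ shows that $c_j$ is already a multiplicative cut of $S$. Now I would invoke \autoref{prop:multiCommutes} with the multiplicative cut $c_j$ and the sequence $c_1\cdots c_{j-1}$ of cuts carrying $S$ to $S_{j-1}$: the commutation square produces a net $S'$ with $S\reduce{c_j}S'$, together with reductions $S'\rightarrow^* Z$ and $S_{j-1}\reduce{c_j}Z$ for a common net $Z$. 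Since the elimination of a multiplicative cut is deterministic up to isomorphism, eliminating $c_j$ from $S_{j-1}$ yields $S_j$, hence $Z = S_j$. Splicing in the remaining steps $S_j\reduce{c_{j+1}}\cdots\reduce{c_n}T$ of $\rho$ gives a reduction $S\reduce{c_j}S'\rightarrow^* S_j\rightarrow^* T$, which begins with one multiplicative step followed by a reduction $S'\rightarrow^* T$ of length $n-1$. By the induction hypothesis the latter factorises as $S'\redmult^*\at\rednotmult^* T$, and prepending $S\reduce{c_j}S'$ yields $S\redmult^*\at\rednotmult^* T$, as required.

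The points to verify are minor: a multiplicative cut is always reducible (it is neither a clash nor a glueing cut, cf.\ \autoref{rem:invalidCuts}) and its elimination is deterministic up to isomorphism — this is what forces the common reduct $Z$ of the commutation square to be exactly $S_j$ — and \autoref{rem:multRedexPreserved} must be legitimately applicable at each of the first $j-1$ steps, which is precisely why $j$ is chosen to be the first multiplicative position. I expect the only genuinely delicate ingredient to be the commutation of a multiplicative cut past an arbitrary sequence of cuts in the presence of the non-deterministic $(\parr/\maltese)$ rule; but this is exactly the content of \autoref{prop:multiCommutes}, which rests on the strong confluence of unrelated cuts (\autoref{prop:unrelatedConflu}) together with the fact that a multiplicative cut is unrelated to every other cut (\autoref{rem:multiUnrelation}), so in the present proof it can be used as a black box.
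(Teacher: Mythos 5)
Your proof is correct and follows essentially the same route as the paper's: both arguments rest on \autoref{rem:multRedexPreserved} together with the commutation of a multiplicative cut past an arbitrary reduction sequence, the only difference being that you pull the \emph{first} multiplicative step to the very front in a single induction on the length, whereas the paper peels off the \emph{last} step and runs a nested induction to push a trailing multiplicative step leftwards through the preceding non-multiplicative block. One small caution: \autoref{prop:multiCommutes} is literally stated in the ``delay'' orientation (the multiplicative step is the given solid arrow at the top), while you invoke it in the ``anticipate'' orientation; this is harmless because the underlying diamond of \autoref{prop:unrelatedConflu} is symmetric and the determinism of the multiplicative step identifies the common reduct, exactly as the paper itself uses it in its own proof.
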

\begin{proof}
The inclusion
of $\rightarrow_{\mathsf{mult}}^* \at \rightarrow_{\neg {\mathsf{mult}}}^*$
in $\rightarrow^*$ is obvious.

Let us now show the converse inclusion.
We do so by induction on the length of the reduction sequence.
If there is only one reduction this is trivial.

Let us treat the general case when $S\rightarrow^* S'$
doing a sequence of $n>1$ steps.
Then this can be decomposed 
as $S\rightarrow^* S_0 \reduce {c} S'$;
apply the induction hypothesis 
yielding that there exists a net $S_1$ such that
$S\redmult^* S_1 \rednotmult^* S_0$.
Two case may occur depending on the type of the cut $c$.
\begin{itemize}
  \item 
  If $c$ is not multiplicative;
  then $S_0\rednotmult S'$
  and from $S\redmult^* \at \rednotmult^* S_0$.
  we derive 
   $S\redmult^* \at \rednotmult^* S'$.
  \item 
  If $c$ is multiplicative;
  note that whenever 
  $T\rednotmult^* T'$
  a multiplicative cut $c$ occuring in $T'$
  will aslo occur in $T$ and be also a multiplicative cut (\autoref{rem:multRedexPreserved}).
  To conclude however it will be necessary to do an induction on the reduction sequence 
  $ S_1 \rednotmult^* S_0$:
  \begin{itemize}
    \item 
    If it contains only one step of reduction;
    then the situation is the following 
    $S_1 \reduce {c'} S_0 \reduce c S'$
    where $c'$ is not multiplicative 
    and $c$ is a multiplicative cut.
    Applying \autoref{rem:multRedexPreserved}
    non multiplicative cuts cannot create multiplicative cuts, 
    therefore the cut $c$ also appears in $S_1$
    and thus both cuts $c$ and $c'$ occur in $S_1$.
    As a consequence since multiplicative 
    commute (\autoref{prop:multiCommutes})
    it follows that 
    $S_1 \reduce {c} S_0' \reduce {c'} S'$.
    Therefore $S_1\redmult ^* \rednotmult^* S'$.

    \item
    If the sequence of non multiplicative cut--elimination 
    contains $n$ element.
    Note that it may be factorized as 
    $S_1 \rednotmult^* S_2 \rednotmult S_0$
    while $S_0\redmult S'$.
    Applying the previous result (the case when the $\rednotmult$ sequence is of length $1$)
    we derive that 
    $S_1\rednotmult^* S_2 \redmult S_0' \rednotmult S'$.
    Since the sequence $S_1\rednotmult^* S_2$
    has a smaller length than that of 
    $S_1 \rednotmult^* S_2 \rednotmult S_0$
    we call the induction hypothesis 
    and ensure 
    $S_1\redmult^* \at \rednotmult^* S_0'$
    Finally since $S_0'\rednotmult S'$
    we derive 
    $S_1\redmult^* \at \rednotmult^* S'$.
  \end{itemize}

  Now we can conclude;
  the situtation was 
  $S\redmult^* S_1 \rednotmult^* S_0\redmult S'$
  But we have shown that 
  $S_1 \rednotmult^* S_0\redmult S'$
  yields 
  $S_1 \redmult^* \at \rednotmult^* S'$.
  This yields 
  $S\redmult^* S_1 \redmult^* \at \rednotmult^* S'$
  which means 
  $S\redmult^* \at \rednotmult^* S'$.

\end{itemize}

\end{proof}

\begin{remark}
The previous proposition 
is a novelty of the cut elimination with generalised axiom we have introduced.
The standard cut--elimination procedure of $\mll$ proof structure (\cite{girard_1987})
does not enjoy such a factorisation:
mainly because the \autoref{rem:multCreatesMult}
fails, 
a non multiplicative cut (that is in that case an $\mathsf{(ax/cut)}$ cut)
can be eliminated creating a single cut 
that may be a multiplicative cut.
\end{remark}

\begin{remark}
If $S\rightarrow_{\mathsf{mult}}^* S_0$
i.e. $S$ reduces to $S_0$
by eliminating multiplicative cuts only,
then $S$ and $S_0$ have the same orthogonal.
\end{remark}

\subsection{Delaying Irreversible cuts elimination -- proof of \autoref{prop:SN} \autoref{item:delayParrs}}

\begin{proposition}[Non--Deterministic non homogeneous cut elimination commutes to the right]\label{prop:rightcom}
Given $S$ some net 
containing a non homogeneous cut link $c$
of a daimon link against a $\parr$--link.
The diagram below commutes,
for any cut kind of cut link $c'$;
\begin{center}
\begin{tikzcd}
S \arrow[r, "c"] \arrow[d, dashed ,"c'"] & S_1 \arrow[d, "c'"] \\
S_1' \arrow[r, dashed ,"c"]             & S_2                
\end{tikzcd}
\end{center}
Where the dotted arrows are the existence of a reduction.
\end{proposition}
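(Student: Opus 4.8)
The plan is to reduce the statement to a local inspection of how the redex of the $(\parr/\maltese)$ cut $c$ meets the redex of $c'$, distinguishing the case where the two redexes are disjoint from the case where they overlap. Write $c=\cutlink{q,p}$ with $\parr$-link $P=\parrlink{q_1,q_2}{q}$ above one input and daimon $D=\dailink{\overline a,p,\overline b}$ above the other. First I would record a preliminary observation: eliminating $c$ consumes exactly $P$, $D$ and $c$, replaces $D$ by two daimons whose non-fresh outputs together form the outputs of $D$ other than $p$ (freely repartitioned), and creates the residual cuts $\cutlink{q_1,p^1}$ and $\cutlink{q_2,p^2}$. By source-disjointness of nets, no cut other than $c$ can have one of $q_1,q_2,q$ as a source, so $c$ is the only cut whose redex contains $P$; hence every cut $c'\neq c$ of $S$ still occurs in $S_1$ (if $c'$ touches $D$ it is a cut on an output $a\neq p$ of $D$, and that output survives as an output of one of the two new daimons). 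Dually, eliminating $c'$ never destroys $c$: it leaves $P$ untouched and at most repartitions, expands or merges $D$ around one of its outputs, so $p$, and hence $c$, survives. Thus all four arrows of the square are meaningful, and only the existence of the common reduct $S_2$ remains to be shown.

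If the redexes of $c$ and $c'$ are disjoint, then $c$ and $c'$ are unrelated in the sense of the definition preceding \autoref{prop:unrelatedConflu}, and the square closes by that proposition: writing $S=C+R_c+R_{c'}$ with $R_c,R_{c'}$ the two redexes, performing $c$ then $c'$ and performing $c'$ then $c$ both yield $C+R_c'+R_{c'}'$ as soon as one makes the same nondeterministic partition choices in the two $(\parr/\maltese)$ steps and in the step for $c'$. This is the only point at which nondeterminism enters in this case, and it is harmless because the choices concern disjoint parts of the net.

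If the redexes overlap, then by the disjointness constraints on nets the overlap is necessarily through $D$: $c'$ is a cut on some output $a\neq p$ of $D$, so $c'$ has type $(\maltese/\otimes)$, $(\maltese/\parr)$, or is an acyclic glueing cut (a multiplicative or clash type is impossible since one side of $c'$ is the daimon $D$, and a cyclic glueing cut is not reducible, hence excluded by the hypothesis $S_1\reduce{c'}S_2$). I would dispatch these sub-cases by direct inspection of the rules in \autoref{fig:nhomcutelim}, \autoref{fig:homcutelim} and their complements in \autoref{fig:cutelimcomplement}. In the $(\maltese/\otimes)$ sub-case, reducing $c'$ merely turns the output $a$ of (a copy of) $D$ into two outputs $a^1,a^2$; reducing $c$ afterwards one reaches the same net as in the other order, provided that in the $(\parr/\maltese)$ split one routes $a^1$ and $a^2$ to the side to which $a$ was routed. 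The glueing sub-case is dual: $c'$ merges $D$ with another daimon $D'$, and splitting the merged daimon exactly as $D$ was going to be split — sending all surviving outputs of $D'$ to the side that absorbed $a$ — produces the common reduct.

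The delicate sub-case is $(\maltese/\parr)$, i.e. when $c$ and $c'$ are both irreversible cuts whose daimon side is the \emph{same} $D$, attached through two distinct outputs $p$ and $p'$. Here reducing one cut then the other splits $D$ twice, and I must show that the three-block partition of the effective outputs of $D$ that the given reduction $S_1\reduce{c'}S_2$ produces can be reproduced in the reversed order $S\reduce{c'}S_1'\reduce{c}S_2$. The point is that the $(\parr/\maltese)$ rule leaves the two-block partition entirely free, so in $S\reduce{c'}S_1'$ I may choose to which side of the $p'$-split the pivot $p$ of the delayed cut $c$ is routed — and this extra degree of freedom is exactly what is needed to absorb the given three-block partition (concretely, to place the residuals $(p')^1,(p')^2$ of $c'$ on the correct side relative to the residuals $p^1,p^2$ of $c$, depending on whether the given reduction put $p^1$ with $(p')^1$ or with $(p')^2$). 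Making this synchronization precise — together with the bookkeeping for the ``complement'' configurations of \autoref{fig:cutelimcomplement}, where a premise $q_1$ or $q_2$ of a $\parr$-link is itself an output of $D$, and the consistent renaming of the fresh positions $p^1,p^2,(p')^1,(p')^2$ so that the two reducts become literally equal (we work up to isomorphism) — is the main obstacle; everything else is a routine verification against the rewriting rules.
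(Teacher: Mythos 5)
Your proof follows the same route as the paper's: dispose of the case where the redexes of $c$ and $c'$ are disjoint via the strong confluence of unrelated cuts (\autoref{prop:unrelatedConflu}, and \autoref{prop:multiCommutes} when $c'$ is multiplicative), observe that any remaining overlap must go through the daimon $D$ of $c$'s redex (by source-disjointness the output $q$ of the $\parr$-link cannot feed a second cut), and then check by hand --- glueing, $(\otimes/\maltese)$, $(\parr/\maltese)$ --- that the nondeterministic partition choices in the two orders of elimination can be synchronized so as to meet in the same net. Your treatment of the double split of $D$ in the $(\parr/\maltese)$-against-$(\parr/\maltese)$ sub-case is exactly the paper's explicit computation, phrased as a synchronization of three-block partitions.

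The one step that does not hold up is your dismissal of the case where $c'$ is a \emph{cyclic} glueing cut on $D$, i.e.\ both inputs of $c'$ are outputs of $D$. You claim this is ``excluded by the hypothesis $S_1 \reduce{c'} S_2$'' because cyclic glueing cuts are irreducible; but the hypothesis only guarantees that $c'$ is reducible in $S_1$, not in $S$, and eliminating the $(\parr/\maltese)$ cut $c$ splits $D$ in two and may send the two inputs of $c'$ to distinct daimons --- precisely the cycle-breaking phenomenon of \autoref{fig:cycleParrBreak}. Concretely, take $S = \dailink{p,a,b} + \parrlink{q_1,q_2}{q} + \cutlink{q,p} + \cutlink{a,b} + \dailink{q_1} + \dailink{q_2}$ with $c = \cutlink{q,p}$ and $c' = \cutlink{a,b}$: choosing the partition $\{a\}\mid\{b\}$ when firing $c$ makes $c'$ an acyclic, hence reducible, glueing cut in $S_1$, yet $c'$ is irreducible in $S$ (\autoref{rem:acyclicCuts}), so the left dashed arrow of the square does not exist. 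To be fair, the paper's own proof has the same blind spot --- its glueing case silently writes $c'$ as joining $D$ to a second, distinct daimon --- so this is as much a limitation of the proposition as stated as of your argument; but you should either add the acyclicity of $c'$ in $S$ as an explicit hypothesis or drop the claim that the given hypothesis rules this configuration out.
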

\begin{proof}
If $c'$ is a multiplicative cut
the commutation holds (\autoref{prop:multiCommutes}).
And more generally if $c'$ and $c$ are unrelated then 
the commutation also hold (\autoref{prop:unrelatedConflu}).
Thus $c'$ is a glueing or non homogeneous cut 
involving the same daimon link as $c$.

Let us treat each cases
depending on the type of the cut $c'$;
\begin{itemize}
  \item
  Assuming that $c'$ is a glueing cut,
  and assuming (without loss of generality) 
  that the two first conclusions 
  of the daimon link are the premisses of the cut links $c$ and $c'$,
  the reduction will be of the form:
  \begin{center}
    \scalebox{0.9}{\begin{tabular}{l l}
    &  $\dailink{p_1,\dots,p_n} + \parrlink{r_1,r_2}{r} 
      +\dailink{q_1,\dots,q_k}
    + \cutlink{p_1,r} + \cutlink{p_2,q_1}$ \\
     $\rightarrow$ & 
    $\dailink{p_1^1,p_2,A} + \dailink{p_1^2,B} 
    +\dailink{q_1,\dots,q_k}
  + \cutlink{p_1^1,r_2} + \cutlink{p_1^2,r_2} + \cutlink{p_2,q_1}$ \\
    $\rightarrow$ &
    $\dailink{p_1^1,q_2,\dots,q_k,A} + \dailink{p_1^2,B} 
    + \cutlink{p_1^1,r_2} + \cutlink{p_1^2,r_2} $
    \end{tabular}}
  \end{center}
  By consistently 
  chosing the partition during the elimination 
  of the cut $c$ 
  this can be matched by first starting with the elimination 
  of the glueing cut $c'$.
  \begin{center}
    \scalebox{0.9}{\begin{tabular}{l l}
    &  $\dailink{p_1,\dots,p_n} + \parrlink{r_1,r_2}{r} 
      +\dailink{q_1,\dots,q_k}
    + \cutlink{p_1,r} + \cutlink{p_2,q_1}$ \\
     $\rightarrow$ & 
     $\dailink{p_1, q_2,\dots,q_k,\dots,p_n} + \parrlink{r_1,r_2}{r} 
   + \cutlink{p_1,r} $\\
    $\rightarrow$ &
    $\dailink{p_1^1,q_2,\dots,q_k,A} + \dailink{p_1^2,B} 
   + \cutlink{p_1^1,r_1}+ \cutlink{p_1^2,r_2} $
    \end{tabular}}
  \end{center}
  \item In the first non homogeneous case the redex is of the following form 
  $$\dailink{p_1,\dots,p_n} + \parrlink{r_1,r_2}{r} +\tenslink{q_1,q_2}{q}
  + \cutlink{p_1,r} + \cutlink{p_2,q}.$$
  After one step of cut elimination this becomes 
  in all generality, given that $A= \{ a_1,\dots,a_k\}$
  and $B= \{b_1,\dots,b_l\}$ partition $p_2,\dots,p_n$,
  $$\dailink{p_1^1,p_2, A'} + \dailink{p_1^2,B} + 
  +\tenslink{q_1,q_2}{q}
  + \cutlink{p_1^1,r_1}  + \cutlink{p_1^2,r_2} + \cutlink{p_2,q}.$$
  Without loss of generality assume that 
  $p_2$ occurs in the class $A = \{p_2\}\cup A'$,
  then after one step of cut elimination this becomes;
  $$\dailink{p_1^1,p_2^1,p_2^2,A'} + \dailink{p_1^2,B} + 
  + \cutlink{p_1^1,r_1}  + \cutlink{p_1^2,r_2}
  + \cutlink{p_2^1,q_1}  + \cutlink{p_2^2,q_2}.$$

  Indeed one can obtain the same redex 
  by first eliminating $c'$,
  when eliminating the $\parr$--link
  we need to make a consistent choice 
  e.g. 
  the partition of $p_2^1,p_2^2,\dots,p_n$
  made of the two classes $B$
  and $A'\cup\{ p_2^1,p_2^2\}$:
  \begin{center}
  \scalebox{0.9}{\begin{tabular}{l l}
  &  $\dailink{p_1,\dots,p_n} + \parrlink{r_1,r_2}{r} +\tenslink{q_1,q_2}{q}
  + \cutlink{p_1,r} + \cutlink{p_2,q}$ \\
   $\rightarrow$ & 
  $\dailink{p_1,p_2^1,p_2^2\dots,p_n} + \parrlink{r_1,r_2}{r} 
  + \cutlink{p_1,r} + \cutlink{p_2^1,q_1}
  + \cutlink{p_2^2,q_2}$ \\
  $\rightarrow$ &
  $\dailink{p_1,p_2^1,p_2^2,\dots,p_n} + \parrlink{r_1,r_2}{r} 
  + \cutlink{p_1,r} + \cutlink{p_2^1,q_1}
  + \cutlink{p_2^2,q_2}$ \\
  $\rightarrow $ &
  $\dailink{p_1^1,p_2^1,p_2^2,A'} + \dailink{p_1^2,B} + 
  + \cutlink{p_1^1,r_1}  + \cutlink{p_1^2,r_2}
  + \cutlink{p_2^1,q_1}  + \cutlink{p_2^2,q_2}.$
  \end{tabular}}
\end{center}
  \item In the non homogeneous second case 
  both cuts are made of $\parr$ links 
  against a daimon thus 
  in all generality a reduction is of the 
  following form form;
  \begin{center}
  \scalebox{0.9}{\begin{tabular}{l l}
  & $\dailink{p_1,\dots,p_n} + 
    \parrlink{r_1,r_2}{r} +\parrlink{q_1,q_2}{q}
  + \cutlink{p_1,r} + \cutlink{p_2,q}$ \\
   $\rightarrow $&
  $\dailink{p_1^1, p_2, A} + \dailink{p_1^2,B} 
  + \parrlink{q_1,q_2}{q} 
  + \cutlink{p_2,q} 
  + \cutlink{p_1^1,r_1} + \cutlink{p_1^2,r_2}$ \\
   $\rightarrow $&
  $\dailink{p_1^1, p_2^1, A_1} + \dailink{p_2^2, A_2^2} + \dailink{p_1^2,B} 
  + \cutlink{p_2^1,q_1} + \cutlink{p_2^2,q_2}
  + \cutlink{p_1^1,r_1} + \cutlink{p_1^2,r_2}$
  \end{tabular}}
\end{center}
  Indeed starting by eliminating the cut $c'$
  we can obtain the same redex
  by making consistent choices in the partitions:
  \begin{center}
  \scalebox{0.9}{\begin{tabular}{l l}
   & $\dailink{p_1,\dots,p_n} + 
    \parrlink{r_1,r_2}{r} +\parrlink{q_1,q_2}{q}
  + \cutlink{p_1,r} + \cutlink{p_2,q}$ \\
   $\rightarrow$ &
  $\dailink{p_2^1, p_1, A_1 \cup B} + \dailink{p_2^2,A_2} 
  + \parrlink{r_1,r_2}{r} + \cutlink{p_1,r} 
  + \cutlink{p_2^1,q_1} + \cutlink{p_2^2,q_2}$ \\
   $\rightarrow$ &
  $\dailink{p_1^1, p_2^1, A_1} + \dailink{p_2^2, A_2} + \dailink{p_1^2,B} 
  + \cutlink{p_2^1,q_1} + \cutlink{p_2^2,q_2}
  + \cutlink{p_1^1,r_1} + \cutlink{p_1^2,r_2}$
  \end{tabular}}
\end{center}
\end{itemize}
\end{proof}

\begin{proposition}[General right commutation, \autoref{prop:SN}\autoref{item:delayParrs}]\label{prop:delayingThePar}
For any net $S$
containing an irreversible cut $c$,
and given $\alpha^*$
a series of cut elimination 
of cuts occuring in $S$
then the following diagram commutes;
\begin{center}
  \begin{tikzcd}
  S \arrow[r, "c"] \arrow[d, dashed ,"\alpha^*"] & S_1 \arrow[d, "\alpha^*"] \\
  S_1' \arrow[r, dashed ,"c"]             & S_2                
  \end{tikzcd}
  \end{center}
\end{proposition}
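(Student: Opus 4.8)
The plan is to prove the statement by induction on the length $n$ of $\alpha^*$, using the single-step right commutation (\autoref{prop:rightcom}) to push the elimination of the irreversible cut $c$ one step further to the right at each stage. For $n=0$ there is nothing to prove (take $S_1':=S$), and for $n=1$ the statement is exactly \autoref{prop:rightcom}.

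For the inductive step, decompose $\alpha^*$ as $c_1\cdot\beta^*$, where $c_1$ is the first cut eliminated and $\beta^*$ has length $n-1$; by hypothesis $c_1$ and the cuts of $\beta^*$ all occur in $S$, and the given reduction reads $S\reduce{c}S_1\reduce{c_1}T$ followed by a reduction from $T$ to $S_2$ along $\beta^*$. Applying \autoref{prop:rightcom} to the prefix $S\reduce{c}S_1\reduce{c_1}T$ produces a net $S'$ with $S\reduce{c_1}S'\reduce{c}T$. Two things then have to be observed. First, $c$ is still an irreversible cut in $S'$: it is literally the same cut link, since the $\parr$-link feeding $c$ cannot be affected by eliminating any other cut (source-disjointness of nets prevents a second link from sharing the outputs of that $\parr$-link), while the daimon feeding the other premiss of $c$ is, at worst, merged, split or expanded by the elimination of $c_1$, so that premiss still lies below a daimon; hence $c$ is again a $(\parr/\maltese)$ cut. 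Second, the cuts of $\beta^*$ are cuts of $S$ distinct from $c_1$, so they survive the elimination of $c_1$ and still occur in $S'$. The induction hypothesis therefore applies to the reduction $S'\reduce{c}T$ followed by $\beta^*$, yielding a net $S''$ with $S'$ reducing to $S''$ along $\beta^*$ and $S''\reduce{c}S_2$. Concatenating, $S\reduce{c_1}S'$ followed by the $\beta^*$-reduction to $S''$ is a reduction along $\alpha^*=c_1\cdot\beta^*$, and $S''\reduce{c}S_2$; setting $S_1':=S''$ closes the induction.

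The computational work is concentrated in \autoref{prop:rightcom}: it is there that the relative positions of $c$ and the cut commuted past it are analysed (unrelated, multiplicative, a glueing cut, or a non-homogeneous cut on $c$'s daimon), and it is there that the non-determinism of the $(\parr/\maltese)$ step is tamed by choosing \emph{consistent} partitions on the two sides of the square. In the induction these choices simply compose -- each invocation of \autoref{prop:rightcom} dictates which partition to use when $c$ is finally eliminated. The point that genuinely requires care, and the reason the hypothesis restricts $\alpha^*$ to cuts already occurring in $S$ (and, in effect, reducible along $\alpha^*$ on $S$ itself), is that this is precisely what keeps every application of \autoref{prop:rightcom} legitimate: eliminating $c$ first can \emph{unlock} a cut that has no analogue beforehand -- for instance a cyclic glueing cut sitting on $c$'s daimon, stuck in $S$ but reducible once $c$ has split that daimon -- and such a cut cannot be performed on the bottom row of the square, where $c$ is eliminated last; ruling it out of $\alpha^*$ is exactly what makes the right commutation go through, and I expect tracking the interplay of these partition choices across the induction to be the main obstacle.
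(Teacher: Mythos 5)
Your proof is correct and follows essentially the same route as the paper's: induction on the length of $\alpha^*$, peeling off the first step and applying the single-step right commutation (\autoref{prop:rightcom}), then invoking the induction hypothesis on the tail. The extra checks you make explicit---that $c$ remains a $(\parr/\maltese)$ cut after the first step and that the remaining cuts of $\alpha^*$ still occur in the intermediate net so the induction hypothesis applies---are details the paper leaves implicit, and they are sound.
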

\begin{proof}
This follows from a simple induction
on $\alpha^*$.
If $\alpha^*$ is made of only one cut 
the proposition \ref{prop:rightcom}
gets us the conclusion.
Otherwise we decompose the sequence of reductions 
as $\alpha^* = \beta\at\beta^*$ and 
by applying the proposition \ref{prop:rightcom} we obtain 
the following diagram.
\begin{center}
  \begin{tikzcd}
  S \arrow[r, "c"] \arrow[d, dashed ,"\beta"] & S_1 \arrow[d, "\beta"]  \\
  S_1' \arrow[r, dashed ,"c"]           & S_2   \arrow[d, "\beta^*"]  \\
  {}  & S_3
  \end{tikzcd}
\end{center}
Calling the induction hypothesis on $\beta^*$
we can complete the diagram as follows
and conclude.
\begin{center}
  \begin{tikzcd}
  S \arrow[r, "c"] \arrow[d, dashed ,"\beta"] & S_1 \arrow[d, "\beta"]  \\
  S_1' \arrow[r, dashed ,"c"] \arrow[d,  dashed, red ,"\beta^*"]  & S_2  \arrow[d, "\beta^*"] \\ 
  S_2' \arrow[r, dashed, red, "c"]  & S_3           
  \end{tikzcd}
\end{center}
\end{proof}

\subsection{Reversible cuts can be anticipated -- Proof of \autoref{prop:SN} \autoref{item:earlyNonPar}}
\begin{proposition}[Deterministic non homogeneous cut elimination commutes to the left]\label{prop:leftcom}
Given $S$ some net 
containing a non homogeneous cut link $c$
of a daimon link against a $\otimes$--link.
The diagram below commutes,
for any cut kind of cut link $c'$;
\begin{center}
\begin{tikzcd}
S \arrow[r, "c'"] \arrow[d, dashed ,"c"] & S_1 \arrow[d, "c"] \\
S_1' \arrow[r, dashed ,"c'"]             & S_2                
\end{tikzcd}
\end{center}
Where the dotted arrows are the existence of a reduction.
\end{proposition}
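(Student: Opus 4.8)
The plan is to follow the structure of the proof of Proposition~\ref{prop:rightcom} (right--commutation of the irreversible $(\parr/\maltese)$ cut) and adapt it to the reversible case. Write the redex of $c$ as $\tenslink{p_1,p_2}{p} + \cutlink{p,q} + d$, where $d \defeq \dailink{\overline r}$ is the daimon link having $q$ among its outputs, and let $c'$ be the cut eliminated by the solid top arrow $S \reduce{c'} S_1$.

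First I would handle the generic case: if $c$ and $c'$ are \emph{unrelated}, Proposition~\ref{prop:unrelatedConflu} (strong confluence of unrelated cuts) closes the square at once; in particular a multiplicative $c'$ is always unrelated to $c$ by Remark~\ref{rem:multiUnrelation}, and a clash cut can never be eliminated, so $c'$ is never a clash. Now, by source-disjointness and target-disjointness of modules, none of $p$, $p_1$, $p_2$, $q$ can be a source of the cut link of $c'$, so neither the tensor link nor the cut link of $c$'s redex can belong to $c'$'s redex; hence the only link $c'$ can share with $c$ is the daimon $d$, i.e. $c'$ must be attached to $d$. As the only cut types involving a daimon are $(\maltese/\maltese)$, $(\maltese/\otimes)$ and $(\maltese/\parr)$, and a reducible glueing cut is acyclic, it then remains to treat three sub-cases: $c'$ is a glueing cut between $d$ and a \emph{distinct} daimon $e$; $c'$ is another reversible cut attached to $d$; $c'$ is an irreversible cut attached to $d$.

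In each sub-case I would exhibit the $c$--first path explicitly. Eliminating $c$ first turns $d$ into $d'$, obtained by replacing the output $q$ by two fresh outputs $q^1, q^2$, and adds the cuts $\cutlink{p_1,q^1}$ and $\cutlink{p_2,q^2}$, leaving the attachment point of $c'$ on $d$ untouched; hence $c'$ stays reducible in $S_1'$, and one checks it rewrites to exactly $S_2$: when $c'$ is a glueing cut one merges $d'$ with $e$ instead of $d$ with $e$; when $c'$ is another reversible cut $\cutlink{p',q'}$ one simply splits the (fresh) output $q'$ afterwards, the two fresh-output replacements trivially commuting; when $c'$ is an irreversible cut one performs the $(\parr/\maltese)$ step on $d'$ \emph{with the same partition of the outputs} as the one used by the solid arrow on $d$, except that $q$ is replaced by the pair $q^1, q^2$, both placed in the class that contained $q$. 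The ``extra cases'' of Figure~\ref{fig:cutelimcomplement} --- when an input of the $\otimes$-- or $\parr$--link of a redex is itself an output of $d$ --- are handled by the same bookkeeping, verbatim.

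The hard part will be this last sub-case, $c'$ an irreversible $(\parr/\maltese)$ cut: since that step is non-deterministic, the solid arrow fixes a particular partition of the outputs of $d$, and one must argue this partition lifts to a partition of the outputs of $d'$ realising the same net $S_2$. This works precisely because the residuals $q^1, q^2$ of $q$ are fresh and carry no constraint, so they can be assigned together to the class of $q$; note that there is no symmetric statement with the roles of $c$ and $c'$ exchanged --- this is exactly why the irreversible cut commutes only to the \emph{right} (Proposition~\ref{prop:rightcom}) while the reversible one commutes to the left. Since each case closes the square, the diagram commutes.
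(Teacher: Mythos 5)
Your proof is correct and follows the same overall skeleton as the paper's: dispose of the generic case via strong confluence of unrelated cuts (\autoref{prop:unrelatedConflu} and \autoref{rem:multiUnrelation}), observe that the only link the two redexes can share is the daimon $d$, and then run a case analysis on the type of $c'$. The one place where you diverge is the hardest sub-case, $c'$ an irreversible $(\parr/\maltese)$ cut attached to $d$. There the paper does not redo any computation: it simply invokes \autoref{prop:rightcom} with the roles swapped --- right-commutation of the irreversible cut $c'$ past the reversible cut $c$ says precisely that $S\reduce{c'}\at\reduce{c}S_2$ can be matched by $S\reduce{c}\at\reduce{c'}S_2$, which is the required square. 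Your explicit argument --- eliminate $c$ first, replacing $q$ by the fresh pair $q^1,q^2$ in $d$, then perform the $(\parr/\maltese)$ step with the lifted partition that places $q^1,q^2$ in the class that contained $q$ --- is a correct re-derivation of the corresponding case already worked out inside the proof of \autoref{prop:rightcom} (its ``first non homogeneous case''), so it buys nothing new but costs a page of bookkeeping; the paper's reuse is the more economical route. Your closing observation about why the asymmetry between left- and right-commutation arises (the residuals of $q$ carry no constraint and can always be grouped, whereas a partition choice made early cannot in general be undone) is accurate and is indeed the conceptual content of \autoref{prop:SN}, items \ref{item:delayParrs} and \ref{item:earlyNonPar}.
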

\begin{proof}
We reason as in the proof for proposition \ref{prop:rightcom},
if the two cuts involved different link of the net we 
conclude.
Furthermore if $c'$ is an homogeneous cut the proposition hold.
Thus we assume that $c$ and $c'$ 
are both non--homogeneous and involve the same daimon link.
\begin{itemize}
  \item If $c'$ is also a tensor link 
  its clear that the elimination of the two cuts commute.
  \item If $c'$ is a $\parr$ link 
  we can call the previous proposition \ref{prop:rightcom}
  on the cut $c'$, 
  claiming that its elimination commutes on the right 
  with any step of cut elimination.
  In particular any elimination of $c'$ followed by any elimination of $c$
  can be matched by an elimination of $c$ followed by an elimination of $c'$.
  Thus we conclude.
\end{itemize}
\end{proof}

\begin{proposition}[General left commutation]\label{prop:anticipateTensor}
For any net $S$
containing an reversible cut $c$,
and given $\alpha^*$
a series of cut elimination 
of cuts occuring in $S$
then the following diagram commutes;
\begin{center}
  \begin{tikzcd}
  S \arrow[r, "\alpha^*"] \arrow[d, dashed ,"c"] & S_1 \arrow[d, "c"] \\
  S_1' \arrow[r, dashed ,"\alpha^*"]             & S_2                
  \end{tikzcd}
  \end{center}
\end{proposition}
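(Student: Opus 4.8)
The plan is to mirror exactly the argument already used for Proposition~\ref{prop:delayingThePar} (General right commutation), but now for a \emph{reversible} cut $c$ (a $(\maltese/\otimes)$ cut) instead of an irreversible one, relying on Proposition~\ref{prop:leftcom} in place of Proposition~\ref{prop:rightcom}. So first I would set up the induction on the length of the reduction sequence $\alpha^*$. For the base case, $\alpha^*$ consists of a single cut-elimination step, and the commutation is precisely the content of Proposition~\ref{prop:leftcom}: eliminating the reversible cut $c$ commutes ``to the left'' past any single step $\alpha$.

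For the inductive step, I would decompose $\alpha^* = \beta^* \at \beta$ (note: for a \emph{left} commutation we peel off the \emph{last} step $\beta$, symmetrically to how the right-commutation proof peels off the first step). Apply the induction hypothesis to the shorter sequence $\beta^*$: the elimination of $c$ commutes to the left past $\beta^*$, producing an intermediate net, say $S_1''$, with $S \reduce{c}\at\beta^* S_1''$ matched by $S\,\beta^*\at\reduce{c}\,\dots$. Then apply Proposition~\ref{prop:leftcom} once more to push $c$ past the final step $\beta$. Concretely the diagram is completed by stacking two commuting squares vertically, exactly as in the proof of Proposition~\ref{prop:delayingThePar}:
\begin{center}
  \begin{tikzcd}
  S \arrow[r, "\beta^*"] \arrow[d, dashed ,"c"] & S' \arrow[r,"\beta"] \arrow[d, "c"] & S_1 \arrow[d,"c"] \\
  S_1' \arrow[r, dashed ,"\beta^*"]             & S'' \arrow[r, dashed, "\beta"] & S_2
  \end{tikzcd}
\end{center}
where the left square is the induction hypothesis and the right square is Proposition~\ref{prop:leftcom}.

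There is one subtlety worth flagging as the main obstacle: unlike the irreversible $(\parr/\maltese)$ step, the reversible $(\otimes/\maltese)$ step is deterministic (up to isomorphism), so no ``consistent choice of partition'' bookkeeping is needed here --- but I must be careful that when $c'$ (a step inside $\alpha^*$) is itself an irreversible $(\parr/\maltese)$ cut sharing the relevant daimon link with $c$, the commutation still goes through. This case is handled inside Proposition~\ref{prop:leftcom} by invoking Proposition~\ref{prop:rightcom} on $c'$ (the irreversible cut commutes to the right, hence $c$ commutes to the left past it). So the whole argument reduces cleanly to the single-step lemma plus induction; the only real content has already been discharged in Propositions~\ref{prop:rightcom} and~\ref{prop:leftcom}, and what remains is the routine inductive assembly described above. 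Together with Propositions~\ref{prop:factorisation} and~\ref{prop:delayingThePar}, this establishes all of Proposition~\ref{prop:SN}.
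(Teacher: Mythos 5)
Your proposal is correct and takes essentially the same route as the paper: the paper also proceeds by induction on the length of $\alpha^*$, decomposes it as $\beta^*\at\beta$ (peeling off the last step), invokes Proposition~\ref{prop:leftcom} for the single-step case, and pastes the two commuting squares exactly as in your diagram. The only cosmetic point is that the right square must be filled in before the left one (the given solid arrow sits at the far right, so Proposition~\ref{prop:leftcom} is needed to produce the middle vertical arrow that the induction hypothesis then uses), but this does not affect the substance.
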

\begin{proof}
As for the previous proposition 
it follows from a simple induction
on $\alpha^*$.
If $\alpha^*$ is made of only one cut 
the proposition \ref{prop:leftcom}
gets us to conclude.
Otherwise we decompose the sequence of reductions 
as $\alpha^* = \beta^*\at\beta$ and 
by applying the proposition \ref{prop:leftcom} 
together with the induction hypothesis 
we obtain the following diagram.
\begin{center}
  \begin{tikzcd}
  S \arrow[d, red, dashed, "c"] \arrow[r ,"\beta^*"] 
  & S_1 \arrow[r, "\beta"] \arrow[d, dashed, "c"] 
  & S_2 \arrow[d, "c"]\\
  S_1' \arrow[r , dashed ,red,"\beta^*"]  
  & S_2  \arrow[r, dashed,"\beta"]  
  & S_3
  \end{tikzcd}
\end{center}
\end{proof}

\begin{remark}\label{rem:anticipateGlue}
Moreover, following a similar method, we can easily establish 
that glueing cuts commute to the left 
with any kind of cut.
The case with a $(\parr/ \maltese)$ cut 
is taken care of 
since we know these cuts can be delayed (\autoref{prop:SN} \autoref{item:delayParrs}).
For the other cases note that two related glueing cuts indeed commute,
furthermore by some simple calculations,
one can prove that a glueing cut and $(\otimes/\maltese)$ cut 
will also commute.
Indeed the result of commutation of two cuts may then be generalised to sequences 
of cut elimination as in \autoref{prop:anticipateTensor}.
\end{remark}

\begin{proposition}[\autoref{prop:SN} \autoref{item:earlyNonPar}]
Given a net $S$ containing a cut $c$ 
that is not a $(\parr/\maltese)$ cut.
If $S\rightarrow^*\reduce{c} S'$
then $S\reduce{c}\rightarrow^* S'$.
\end{proposition}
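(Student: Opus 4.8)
The plan is to reduce the statement to the ``left commutation'' results already established in the appendix, by a case analysis on the type of the cut $c$. Write the hypothesis as $S\rightarrow^* T\reduce{c}S'$. Since the last step eliminates $c$, the cut $c$ is reducible in $T$; because $c$ is not a $(\parr/\maltese)$ cut and is not a clash cut either (clash cuts are preserved by reduction and are never reducible, \autoref{rem:invalidCuts}), and because the type of a cut can only change between $(\maltese/\maltese)$ configurations (links never change label, only daimons merge or split), $c$ is — already in $S$ — a multiplicative cut, a reversible $(\otimes/\maltese)$ cut, or a glueing $(\maltese/\maltese)$ cut. I case-split on this.

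In each case I apply the corresponding commutation square with the $\rightarrow^*$-side being $S\rightarrow^* T$ and the distinguished single step being $\reduce{c}$: \autoref{prop:multiCommutes} when $c$ is multiplicative, \autoref{prop:anticipateTensor} when $c$ is reversible, and \autoref{rem:anticipateGlue} when $c$ is a glueing cut (the glueing left-commutation being obtained exactly as \autoref{prop:anticipateTensor}, by induction on the length of $\rightarrow^*$, using that two related glueing cuts commute, that a glueing cut and an $(\otimes/\maltese)$ cut commute, that $(\parr/\maltese)$ cuts can be delayed by \autoref{prop:SN} \autoref{item:delayParrs}, and that unrelated cuts commute by \autoref{prop:unrelatedConflu}). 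This yields a net $S_1$ with $S\reduce{c}S_1\rightarrow^* S_2$ together with $T\reduce{c}S_2$. Now, eliminating a cut that is not of type $(\parr/\maltese)$ is deterministic up to isomorphism — only the $(\parr/\maltese)$ step involves a choice of partition — so from $T\reduce{c}S'$ and $T\reduce{c}S_2$ we get $S_2=S'$. Hence $S\reduce{c}S_1\rightarrow^* S'$, which is the desired conclusion. One should also note that $\rightarrow^*$ does not eliminate $c$ itself (it is singled out as the cut removed at the last step) and never destroys its inputs, so the descendant of $c$ is well defined in every intermediate net.

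The main obstacle lies in the glueing case, for two reasons. First, \autoref{rem:anticipateGlue} is only sketched, so the single-step commutations involving a glueing cut must be checked explicitly; the computations are short, but several configurations arise, including the degenerate ones of \autoref{fig:cutelimcomplement} where an argument of a daimon link is itself cut. Second, and more delicately, a glueing cut $c$ that is reducible (acyclic) in $S$ may, along $S\rightarrow^* T$, temporarily become non-reducible: a glueing step merging the two daimons carrying the inputs of $c$ turns $c$ into a cyclic cut, and only a later $(\parr/\maltese)$ step splits those daimons again. A naive left-to-right induction must therefore track the ``current shape'' of $c$ and argue that reducibility is recovered; I expect this bookkeeping, confined entirely to the glueing case (multiplicative and reversible cuts can never become cyclic), to be where the argument needs the most care, and it is most cleanly handled by first anticipating — via the commutation of $(\parr/\maltese)$ steps — the reductions in $\rightarrow^*$ that re-separate the two inputs of $c$, so as to land in the situation where $c$ stays reducible throughout.
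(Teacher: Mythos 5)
Your proof follows essentially the same route as the paper's, which simply invokes \autoref{prop:anticipateTensor} and \autoref{rem:anticipateGlue} (with \autoref{prop:multiCommutes} absorbing the multiplicative case): the same case analysis on the type of $c$, the same left-commutation squares, and the same induction on the length of the reduction sequence. The subtlety you isolate in the glueing case --- an acyclic $(\maltese/\maltese)$ cut becoming cyclic, hence temporarily non-reducible, at an intermediate net, which threatens the step-by-step commutation --- is genuine and is also left unaddressed by the paper's own \autoref{rem:anticipateGlue}, so your proposal is, if anything, more careful than the published argument.
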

\begin{proof}
This is a consequence 
of \autoref{prop:anticipateTensor}
and \autoref{rem:anticipateGlue}.
\end{proof}

\section{Complements to \autoref{sect:MLL}}

\subsection{Multigraph}

The notion of multigraph 
is used several times:
(1) in order to define orthogonal partitions 
one must compute a multigraph
and 
(2) in all generality 
the graph underlying an hypergraph 
should also be a multigraph 
-- this will be defined in the next section of this appendix (\autoref{app:underGraph}).

\begin{definition}
  An \emph{(undirected) multigraph} is a tuple 
  $G = (V_G,E_G,\border[G])$
  such that:
  \begin{itemize}
    \item $V_G$ and $E_G$
    are two disjoint sets, 
    whose elements are respectively called \emph{nodes}
    and \emph{edges}.
    \item 
    The border function $\border[G]:E_G \rightarrow \mathcal P{(V_G)}$
    maps edges to sets of nodes of cardinal $1$ or $2$.
    The intuition is that 
    whenever $\border[G](e) = \{x ,y\}$
    then $x$ and $y$ are the two endpoints of the edge $e$,
    while if $\border[G](e) = \{x\}$
    the edge $e$ forms a \emph{loop}.
  \end{itemize}
\end{definition}


\subsection{Graph underlying an hypergraph}\label{app:underGraph}

In \autoref{sect:MLL}
  we informally refer to the notion of graph
  underlying an hypergraph.
  This induced graph can be formally defined 
  as in the following \autoref{def:underGraph}

  \begin{definition}[Underlying graph]\label{def:underGraph}
    Given an hypergraph $\hgraph = (V_\hgraph,E_\hgraph,\target_\hgraph,\source_\hgraph)$ 
    its \emph{underlying directed graph}
    is
    a directed graph denoted 
    $\undergraph \hgraph = (V_G , E_G, \target_G , \source_G)$
    obtained from $\hgraph$
    as follow;
    \begin{itemize}
      \item A vertex of $\undergraph \hgraph$
      is either a position of the hypergraph $\hgraph$,
      or a link of $\hgraph$,
      i.e. $V_G = V_\hgraph \uplus E_\hgraph$.
      \item 
      An edge of $\undergraph \hgraph$
      is a tuple $(e,p,\mathsf{in})$
      of a link $e\in E_\hgraph$
      with $p\in\source(e)$
      or a tuple $(e,p,\mathsf{out})$
      with $p\in\target(e)$.
      \item 
      For an edge 
      $(e,p,\mathsf{in})$
      we have 
      $\target(e,p,\mathsf{in}) = e$
      and $\source (e,p,\mathsf{in}) = p$.
      For an edge 
      $(e,p,\mathsf{out})$
      we have 
      $\target(e,p,\mathsf{out}) = p$
      and $\source (e,p,\mathsf{out}) = e$.
    \end{itemize}
  \end{definition}


  \begin{definition}[Induced undirected graph]
    Given a directed graph $G=(V_G,E_G,\target_G,\source_G)$
    its induced undirected graph
    is the graph 
    $G^* = (V_G^*,E_G^*,\border[G^*])$
    such that:
    \begin{itemize}
      \item 
      The set of vertices $V_G$ and $V_G^*$
      are the same;
      \item 
      The set of edges $E_G$ and $E_G^*$
      are the same;
      \item 
      For any edge $e$  
      its border in $G^*$
      is given by 
      $\border[G^*](e) = \{\source_G(e),\target_G(e)\}$.
    \end{itemize}
  \end{definition}

  \begin{remark}
    The paths of 
    the induced undirected graph $G^*$
    of a directed (multi)graph
    $G=(V_G,E_G,\target_G,\source_G)$ 
    are exactly the paths in $G$ 
    where we allow edges to be travelled 
    through in both direction 
    (source--target or target--source).
  \end{remark}  

  \begin{definition}
    Given an hypergraph $\hgraph$
    its \emph{underlying graph} $\undergraph\hgraph ^*$
    is the induced undirected graph of $\undergraph \hgraph$.
  \end{definition}

  The previous definition is the one involved in 
  in the theorem of Danos Regnier (\autoref{thm:standardcrit}):
  given a switching $\sigma S$ of a net $S$
  the acyclicity and connectedness of $\sigma  S$ 
  is that of its underlying undirected graph $\undergraph{\sigma S} ^*$.

\subsection{On substitutions -- proof of \autoref{prop:smallproof}}

\propSubstiProof*
\begin{proof}
  This is because 
  if a sequent $\Delta$ can be introduced 
  by a daimon rule (resp. an axiom rule for $\mll$)
  then for any substitution $\theta$
  the sequent $\theta \Delta$
  can be introduced by a daimon rule (resp. an axiom rule for $\mll$).
  Thus a proof by induction on the represented proof tree
  allows us to conclude.
\end{proof}

\begin{remark}\label{rem:subproof}
  The \autoref{prop:smallproof} implies that
  whenever $A\leq B$
  the set 
  $\proofs{A:\mlldai}$
  is contained in $\proofs{B:\mlldai}$
  but also that 
  $\proofs{A:\mll}$
  is contained in $\proofs{B:\mll}$
\end{remark}

  \section{Complements to \autoref{sect:RealModel}}

\subsection{On the notion of Interaction}
  \subsubsection{Proof of \autoref{prop:interAsAction}}

  We have presented 
  a notion of interaction $S::T$
  between two net $S$ and $T$ in \autoref{def:interaction},
  this definition makes direct use of the order on the conclusions 
  of the nets (the arrangement of \autoref{sect:hypergraphs}).
  An alternative can be to define 
  the interaction between two nets $S$ and $T$
  using a notion of \emph{interface} $\sigma$,
  a functional and injective relation between the conclusions 
  of $S$ and $T$,
  then $S::_{\sigma} T$
  is equal to 
  $S + T + \sum_{(p,q)\in\sigma} \cutlink{S(p),T(p)}$.
  This makes orthogonality more subtle to define 
  so we chose to work with ordered conclusions
  which determines the interaction $S::T$,
  in some sense the order on the conclusions 
  is used to define a canonical interface.
  
  \interAction*
  \begin{proof}
    Consider three net $S,T$ and $R$
    such that $\#S \geq \#T+\#R$.
    Note that $\# T\parallel \# R = \#T+\#R$
    thus $\#S\geq \# T\parallel \# R$
    and so 
    $min(\#S, \# T\parallel \# R) = \# T\parallel \# R$.
    As a consequence $S::(T\parallel R)$
    is equal to 
    $$
    S::(T\parallel R) = S + T + R + \sum_{1\leq i \leq \# T\parallel \# R} \cutlink{S(i),(T\parallel R)(i)}
    $$

    Note that 
    for any $1\leq i \leq \#T$
    we have $(T\parallel R)(i) = T(i)$
    thus:
    $$
    S::(T\parallel R) = 
    \left( S +T + \sum_{1\leq i\leq \#T} \cutlink{S(i),T(i)} \right)+ R + 
    \sum_{\# T +1 \leq i \leq \# T\parallel R} \cutlink{S(i),(T\parallel R)(i)}
    $$

    Furthermore 
    note that for each $\# T +1 \leq i \leq \# S$
    we have that 
    the conclusion $S(i)$
    is the conclusion $(S::T)(i-\#T)$ of $S$.
    Also for any $\# T +1 \leq i \leq \# T+\#R$
    we have $(T\parallel R)(i)$
    equal to $R(i-\# T)$.
    Thus we can rewrite the interaction $S::(T\parallel R)$
    as follows:

    \scalebox{0.9}{\begin{minipage}{\linewidth}
    \begin{align*}      
    & S::(T\parallel R) \\
    = & 
    \left( 
      S +T + \sum_{1\leq i\leq \#T} \cutlink{S(i),T(i)} \right)+ R + 
    \sum_{\# T +1 \leq i \leq \# T\parallel R} \cutlink{S(i ),(T\parallel R)(i)}\\
    = & 
    \left( 
      S +T + \sum_{1\leq i\leq \#T} \cutlink{S(i),T(i)} \right)+ R + 
    \sum_{\# T +1 \leq i \leq \# T\parallel R} \cutlink{(S::T)(i- \#T),R(i - \# T)}\\
    = &
    \left( S +T + \sum_{1\leq i\leq \#T} \cutlink{S(i),T(i)} \right)+ R + 
    \sum_{\#T + 1 -\# T \leq i \leq \# T\parallel R -\#T} \cutlink{(S::T)(i ),R(i)}\\
    = &
    \left( S +T + \sum_{1\leq i\leq \#T} \cutlink{S(i),T(i)} \right)+ R + 
    \sum_{  1  \leq i \leq  R } \cutlink{(S::T)(i ),R(i)} \\
    = &
    S::T 
    + R + 
    \sum_{  1  \leq i \leq  R } \cutlink{(S::T)(i ),R(i)} \\
    = &
    (S::T)::R
    \end{align*}      
  \end{minipage}}

  \end{proof}

  \subsubsection{Interaction of nets and unrelated connective links}

      The following proposition 
      will be used in the proof of the result of adequacy 
      (\autoref{thm:adequacy}) in the inductive cases.
      In particular when one interprets $S$ as the image of a proof 
      $\pi_0$ of $\Gamma,A,B$,
      the link $e$ represents a $\parr$-link (say a link $\parrlink{p_1,p_2}{p}$)
      and $T$ is an opponent from $\reali[\ibase]{\Gamma}^\bot$. 

    \begin{proposition}\label{prop:commutationInterSumlink}
      Given two nets $S$ and $T$
      with $\# S \geq \# T$
      and a link $e$ such that 
      the positions of $\source(e)$
      are included in the outputs of $S::T$
      and are the only conclusions of $S$.
      Then we have:
      $$ (S:: T)  + e = (S+e)::T.$$
    \end{proposition}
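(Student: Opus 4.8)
The plan is to unwind \autoref{def:interaction} on both sides of the claimed identity and check that the two ordered nets obtained agree, treating separately the underlying unordered hypergraph and the arrangement. Set $k=\# T$. Since $\# S\geq\# T$ we have $\min(\# S,\# T)=k$, so $\body{S::T}=\body S+\body T+\sum_{1\leq i\leq k}\cutlink{S(i),T(i)}$, and the $k$ cut links turn the positions $S(1),\dots,S(k)$ together with \emph{all} conclusions of $T$ into sources; hence $\conclu{S::T}=\{S(k+1),\dots,S(\# S)\}$ with $(S::T)(j)=S(k+j)$. First I would extract from the hypothesis the only fact needed, namely $\source(e)\subseteq\conclu{S::T}=\{S(k+1),\dots,S(\# S)\}$, so that $\source(e)$ is disjoint from $\{S(1),\dots,S(k)\}$. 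Two consequences: the positions $S(1),\dots,S(k)$ are still conclusions of $S+e$ and, whatever the precise convention for ordering the conclusions of $S+e$ (the surviving conclusions of $S$ keep their relative order and the new conclusions $\target(e)$ are inserted only in the region vacated by $\source(e)$ or appended afterwards, all beyond position $k$, as for the switching rewriting), we have $(S+e)(i)=S(i)$ for $1\leq i\leq k$; moreover $\#(S+e)=\# S-|\source(e)|+|\target(e)|\geq k$ since $|\source(e)|\leq\# S-k$, hence $\min(\#(S+e),\# T)=k$ as well, and $(S+e)::T$ cuts exactly on $(S+e)(1),\dots,(S+e)(k)=S(1),\dots,S(k)$.

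For the unordered hypergraphs I would just expand both sides: on the one hand $\body{(S::T)+e}=\body S+\body T+\bigl(\sum_{1\leq i\leq k}\cutlink{S(i),T(i)}\bigr)+e$; on the other hand $\body{(S+e)::T}=(\body S+e)+\body T+\sum_{1\leq i\leq k}\cutlink{(S+e)(i),T(i)}$, and the last sum equals $\sum_{1\leq i\leq k}\cutlink{S(i),T(i)}$ by the identity $(S+e)(i)=S(i)$ for $i\leq k$ recorded above. The two expressions then coincide because the sum of hypergraphs is commutative and associative (\autoref{rem:sumHgraphAbelian}).

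For the arrangement, the point is that passing from $S$ to $S+e$ modifies only conclusions lying strictly past position $k$ (since $\source(e)$ avoids $S(1),\dots,S(k)$), hence it commutes with the restriction-and-collapse step of \autoref{def:interaction}, which discards the first $k$ conclusions. Spelled out: the conclusion sequence of $(S::T)+e$ is obtained from $S(k+1),\dots,S(\# S)$ by deleting the entries in $\source(e)$ and inserting $\target(e)$, while the conclusion sequence of $(S+e)::T$ is obtained by performing the same deletion/insertion on $S(1),\dots,S(\# S)$ and then dropping the first $k$ entries, which are precisely $S(1),\dots,S(k)$ and are untouched by that deletion/insertion. Both produce the same sequence, so $\arrange{(S::T)+e}=\arrange{(S+e)::T}$ and the two nets are equal.

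I do not expect a genuine obstacle here: the statement is a compatibility (a kind of naturality) between ``plugging an extra link onto the conclusions'' and ``forming an interaction'', and once the hypothesis is read as $\source(e)\subseteq\conclu{S::T}$ everything is definitional. The only delicate point is the arrangement bookkeeping — making sure that the restriction-and-collapse on the $(S+e)$ side drops exactly $S(1),\dots,S(k)$ and hence leaves the $e$-modification intact — and this is precisely where the hypothesis that $\source(e)$ sits among the conclusions of $S::T$ (equivalently, among the last $\# S-\# T$ conclusions of $S$) is used.
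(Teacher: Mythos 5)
Your proof is correct and follows essentially the same route as the paper's: expand Definition~\ref{def:interaction}, observe that the hypothesis forces $\source(e)\subseteq\{S(\#T+1),\dots,S(\#S)\}$ so that $(S+e)(i)=S(i)$ for $i\leq\#T$, and conclude by commutativity/associativity of the sum of hypergraphs. You spell out the arrangement bookkeeping in somewhat more detail than the paper (which simply records the convention $\arrange{S+e}=(\arrange S\setminus\source(e))\cdot\target(e)$), but the argument is the same.
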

    \begin{proof}
      Recall 
      the definition of the interaction
      $$S::T
      =
      S +\sum_{ 1\leq i \leq \#T} \cutlink{S(i),T(i)}
      +T
      $$

      Thus we have that 
      $$(S::T) +e
      =
      \left(S +\sum_{ 1\leq i \leq \#T} \cutlink{S(i),T(i)}
      +T\right) +e
      $$

      Because each input of $e$ is a conclusion of $S$
      in $S::T$
      then each input of $e$ is a conclusions 
      $S(j)$ of $S$ with $j\geq\# T$.
      Furthermore say $S = (\body S ,\arrange S)$
      then $S+e = (\body S + e , (\arrange S \setminus{\source(e)} )\at \target(e))$
      and for each $1\leq i \leq \# T$
      we have $S(i) = (S+e) (i)$.
      It follows that: 

      $$
      \left(S +\sum_{ 1\leq i \leq \#T} \cutlink{S(i),T(i)}
      +T\right) +e
      =
      \left(S +\sum_{ 1\leq i \leq \#T} \cutlink{(S+e)(i),T(i)}
      +T\right) +e
      $$

      Which itself is equal to 

      $
      (S + e) +\sum_{ 1\leq i \leq \#T} \cutlink{(S+e)(i),T(i)}
      +T 
      = (S+e) :: T
      $

    \end{proof}

    \subsection{On types, proving duality results and associativity}

  \subsubsection{The Merge Operator}\label{ann:merge}

  \begin{definition}[Merge of two nets]\label{def:mergeDaimon}
    The \emph{merge} of two daimon links $d = \dailink{\overline p}$
    and $d'=\dailink{\overline q}$
    is the daimon link $d\merge d' = \dailink{\overline p ,\overline q}$.
    Given 
    a net $S$ containinng a daimon link $d$ 
    i.e. $S= d + S_0$
    and a net $S'$ containing a daimon link $d'$
    i.e. $S' = d' + S'_0$.
    We denote by $S\merge[(d,d')]S'$
    the net $S_0 + S_0' + (d\merge d')$.
  \end{definition}

  \begin{proposition}\label{prop:mergeDecompo}
    Given two nets $S$ and $S'$
    respectively containing daimons $d$ and $d'$
    so that $S=S_0+d$ and $S' = S_0'+d'$:
    $S\merge[(d,d')] S'$
    is equal to $S\merge[(d,d')] d' + S_0'$.
  \end{proposition}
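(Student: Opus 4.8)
The plan is to unfold both sides of the claimed equality directly from Definition~\ref{def:mergeDaimon} and then to reduce everything to the abelian monoid laws for the sum of hypergraphs recorded in Remark~\ref{rem:sumHgraphAbelian}. Throughout I work up to the identifications already fixed on $S$ and $S'$, so no position renaming issues arise.

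First I would compute the left-hand side. Since $S = S_0 + d$ and $S' = S_0' + d'$, the definition of the merge applies verbatim and gives
$$S \merge[(d,d')] S' = S_0 + S_0' + (d \merge d').$$

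Next I would treat the right-hand side, where the only point demanding care is the correct parsing of $S \merge[(d,d')] d' + S_0'$, namely as $(S \merge[(d,d')] d') + S_0'$. Here $d'$ is read as a net in its own right: the single-link net $\dailink{\overline q}$ consisting of the daimon $d'$ alone, whose decomposition into a daimon link plus a remainder is $d' = d' + \emptyset$, with \emph{empty} remainder. Applying Definition~\ref{def:mergeDaimon} to $S = S_0 + d$ against this single-link net therefore yields
$$S \merge[(d,d')] d' = S_0 + \emptyset + (d \merge d') = S_0 + (d \merge d'),$$
where the second equality uses that the empty hypergraph is neutral for the sum. Adding $S_0'$ then gives $(S \merge[(d,d')] d') + S_0' = S_0 + (d \merge d') + S_0'$.

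Finally I would conclude by invoking the commutativity and associativity of the sum (Remark~\ref{rem:sumHgraphAbelian}) to rewrite $S_0 + (d \merge d') + S_0'$ as $S_0 + S_0' + (d \merge d')$, which coincides with the expression obtained for the left-hand side. The main obstacle is conceptual rather than computational: one must recognize that in $S \merge[(d,d')] d'$ the second argument is the standalone daimon net $d'$ with empty remainder, and \emph{not} the full net $S'$. Once this parsing is settled, the identity is pure bookkeeping in the abelian monoid of hypergraphs.
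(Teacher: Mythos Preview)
Your proof is correct and matches the paper's approach exactly: the paper's proof simply says to write both $S\merge[(d,d')] S'$ and $(S\merge[(d,d')] d') + S_0'$ as sums of links and conclude, which is precisely the unfolding-and-monoid-laws argument you give in full detail.
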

  \begin{proof}
    One merely needs 
    to write the nets 
    $S\merge[(d,d')] S'$
    and $(S\merge[(d,d')] d') + S_0'$
    as sum of links 
    to conclude.
  \end{proof}
  
  \begin{notation}
    We fix a convention 
    to denote the daimon links resulting from cut elimination steps
    involving daimons (glueing and non homogeneous):
    \begin{itemize}
      \item 
      In a glueing step
      say $d = \dailink{\overline p,a}$
      and $d' = \dailink{\overline q,b}$
      the glueing redex $d +\cutlink{a,b} + d'$
      reducing to $\dailink{\overline p , \overline q}$
      we denote that resulting daimon link by $g_{a,b}(d,d')$.
      \item 
      In a $(\otimes/\maltese)$ step
      say $d = \dailink{\overline p,a}$
      the reddex $d +\cutlink{a,b} + \tenslink{b_1,b_2}{b}$
      reducing to $\dailink{\overline p , a_1,a_2} + \cutlink{ a_1,b_1} + \cutlink{a_2,b_2}$
      we denote that resulting daimon link by $p_{a}(d)$.
      \item 
      In a $(\parr/\maltese)$ step
      say $d = \dailink{\overline p,a}$
      the reddex $d +\cutlink{a,b} + \parrlink{b_1,b_2}{b}$
      reducing to $\dailink{\overline p ^A , a_1}  + \dailink{\overline p ^B , a_2 }+ 
      \cutlink{ a_1,b_1} + \cutlink{a_2,b_2}$
      for a partition $\overline p = \overline p ^ A \uplus \overline p ^ B$
      we denote that resulting daimon links by $t^1_{a, A}(d)$
      and $t^2_{a,B}(d)$.
    \end{itemize}

    Given a reduction step $S \rightarrow S'$
    eliminating a cut $c$
    we denote $S\rightarrow^c S' $.
    Given an elimination step $S\rightarrow^c S'$
    that isn't multiplicative,
    we denote by 
    $d_c$ the daimon involve in the redex 
    if it is unique,
    otherwise we denote 
    the unordered pair of daimons $g(d_c)$
    in the case of a glueing cut.

  \end{notation}

  \begin{proposition}\label{prop:cutelimMerge}
    Given a net $S$ containing a cut $c=\cutlink{a,b}$
    and daimon $d$ of $S$
    if $S\rightarrow^c S'$:
    \begin{itemize}
      \item 
      If $c$ is multiplicative 
      $S\merge[d]\maltese_1 \rightarrow^c S'\merge[d] \maltese_1$.
      \item 
      If $d$ isn't $d_c$ or contained in $g(d_c)$
      then 
      $S\merge[d]\maltese_1 \rightarrow^c S'\merge[d] \maltese_1$.
      \item 
      If $c$ is a glueing cut 
      and $d$ belongs to $g(d_c)$
      then $S\merge[d]\maltese_1 \rightarrow S'\merge[g_{\source(c)}(g(d_c))]\maltese_1$
      \item 
      If $c$ is a $(\otimes/\maltese)$ cut 
      and $d=d_c$
      then $S\merge[d]\maltese_1 \rightarrow S'\merge[p_a(d)]\maltese_1$
      \item 
      If $c$ is a $(\parr/\maltese)$ cut 
      and $d=d_c$
      and $S\rightarrow^c S'$ by chosing a partition $A\uplus B$ of the daimon $d = \dailink{\overline p}$
      then $S\merge[d]\maltese_1 \rightarrow S'\merge[t^1_{a,A}(d)]\maltese_1$
      and $S\merge[d]\maltese_1 \rightarrow S'\merge[t^2_{a,B}(d)]\maltese_1$.
    \end{itemize}
  \end{proposition}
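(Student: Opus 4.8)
The plan is to argue by a case analysis driven by the type of the eliminated cut $c$ and by the position of the daimon $d$ relative to the redex of $c$, and to check in each case that the single extra pending conclusion produced by $\merge[d]\maltese_1$ is merely transported by the reduction step. I will systematically use \autoref{prop:mergeDecompo} to present $S\merge[d]\maltese_1$ as the net obtained from $S$ by replacing the daimon link $d$ with the daimon link $d\merge d'$, where $d'$ denotes the unique link of $\maltese_1$; concretely this appends one fresh output (call it $c_0$) to $d$, and $c_0$ is a target of no link but the modified daimon. All identities below are of course meant up to isomorphism.

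\emph{The two context cases (bullets 1 and 2).} If $c$ is multiplicative, its redex $R$ (a cut against a $\otimes$ and a $\parr$, \autoref{fig:homcutelim}) contains no daimon, so $d$ lies in the context: write $S = R + d + C_0$. Then $S\merge[d]\maltese_1 = R + (d\merge d') + C_0$, and since the multiplicative rule fires by contextual closure we get $S\merge[d]\maltese_1 \rightarrow R' + (d\merge d') + C_0 = S'\merge[d]\maltese_1$, eliminating the very same cut $c$; note $d$ is untouched, so $S'\merge[d]\maltese_1$ makes sense. If $c$ is not multiplicative but $d$ is neither $d_c$ nor a member of $g(d_c)$, the argument is identical: a reducible non-multiplicative cut is glueing or non-homogeneous (clash cuts and cyclic glueing cuts are irreducible, \autoref{rem:invalidCuts}), its redex touches only $d_c$ (resp.\ the two daimons of $g(d_c)$), and $d$ again sits in the context and survives the step unchanged.

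\emph{The three redex cases (bullets 3--5).} If $c = \cutlink{a,b}$ is a glueing cut with $d\in g(d_c)$, then $c$ is acyclic (being reducible) so $g(d_c)$ consists of two distinct daimons; say $d = \dailink{\overline p, a}$ and the other is $\dailink{\overline q, b}$. After the merge, $d\merge d' = \dailink{\overline p, a, c_0}$ still has $a$ among its outputs, so the glueing redex still fires, now to $\dailink{\overline p, c_0, \overline q}$, which is exactly $g_{\source(c)}(g(d_c))\merge d'$; hence $S\merge[d]\maltese_1 \rightarrow S'\merge[g_{\source(c)}(g(d_c))]\maltese_1$. The $(\otimes/\maltese)$ case with $d = d_c$ is analogous (\autoref{fig:nhomcutelim}): merging adds $c_0$ to $d_c$, the cut still fires, the output $a$ of $d_c$ being replaced by fresh $a_1,a_2$ while every other output of $d_c$ --- in particular $c_0$ --- is kept, so the daimon becomes $p_a(d)\merge d'$; the cyclic subcases of \autoref{fig:cutelimcomplement}, where an input of the $\otimes$ is itself an output of $d_c$, behave the same since $c_0$ is distinct from all those positions. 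Finally, if $c$ is a $(\parr/\maltese)$ cut with $d = d_c$, the rule partitions the outputs of $d_c$ and splits $d_c$ into $t^1_{a,A}(d)$ and $t^2_{a,B}(d)$; the merged daimon $d\merge d'$ has exactly one additional output $c_0$, and the non-determinism of the rule lets us assign $c_0$ to either side of the partition: the choice $(A\cup\{c_0\})\uplus B$ yields $S\merge[d]\maltese_1 \rightarrow S'\merge[t^1_{a,A}(d)]\maltese_1$ and the choice $A\uplus(B\cup\{c_0\})$ yields $S\merge[d]\maltese_1 \rightarrow S'\merge[t^2_{a,B}(d)]\maltese_1$ (the remaining links produced by the step, $\cutlink{q_1,p^1}$ and $\cutlink{q_2,p^2}$, coincide with those of $S'$).

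The argument is elementary once the cases are separated; the only points that need attention are verifying that each reduction step still applies after the merge --- which it does, since $\merge[d]\maltese_1$ modifies only the daimon $d$ and leaves both positions of $\source(c)$ and the other link of the redex intact --- and, in the $(\parr/\maltese)$ case, tracking which class the fresh output $c_0$ is placed in so as to match the notations $t^1_{a,A}(d)$ and $t^2_{a,B}(d)$. The ``cyclic but reducible'' configurations of \autoref{fig:cutelimcomplement} require no separate treatment, precisely because the output introduced by the merge is always fresh.
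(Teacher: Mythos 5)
Your proof is correct and follows essentially the same route as the paper's: a case split on whether $d$ lies in the context (settled by contextual closure) or in the redex, followed by a direct computation showing the fresh output appended by the merge is simply carried along by the glueing, $(\otimes/\maltese)$, and $(\parr/\maltese)$ steps. The only difference is cosmetic — you explicitly note that the cyclic subcases of \autoref{fig:cutelimcomplement} need no separate treatment because the merged output is fresh, a point the paper leaves implicit.
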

  \begin{proof}
    If the reduction $S\rightarrow S'$
    is a multiplicative step,
    then it does not affects the daimon of $S$
    and therefore we conclude.
    In the other cases one (or two for the glueing cut) daimon
    are involved in the cut--elimination,
    if these daimons don't contain $d$ 
    then by contextual closure one can ensure the proposition.

    Now let us treat the three cases where the daimon $d$ 
    is involved.
    \begin{itemize}
      \item Say the 
      reduction in $S$ 
      is that of a glueing cut:
      $$\dailink{\overline p , a} + \cutlink{a,b} +
        \dailink{\overline q, b} \rightarrow \dailink{\overline p,\overline q}.$$

      Say (without loss of generality) that $d = \dailink{\overline p, a}$
      then in $S \merge [d] \maltese_1$
      the daimon $d$ is isomorphic to $\dailink{\overline p, a , u}$
      and the redex becomes:
      $$\dailink{\overline p , a , u} + \cutlink{a,b} +
      \dailink{\overline q, b} \rightarrow \dailink{\overline p, u, \overline q}.$$

      Indeed the resulting daimon 
      is $g _{a,b}(d_1,d_2) \merge \maltese_1$
      and therefore the redex of $S\merge[d]\maltese_1$
      is $S'\merge[g_{a,b}(d_1,d_2)]\maltese_1$.

      \item 
      Say the reduction is that of 
      a $(\otimes,\maltese)$ cut:
      $$
      \dailink{\overline p, a} + \cutlink{a,b} +
      \tenslink{b_1,b_2}{b} \rightarrow 
      \dailink{\overline p, a_1, a_2} + \cutlink{a_1,b_1} +\cutlink{a_2,b_2} .$$

      It follows that:
      \begin{align*}
        & \dailink{\overline p, a} + \cutlink{a,b} +
      \tenslink{b_1,b_2}{b} \merge \maltese_1  \\
         = \quad &
        \dailink{\overline p, a ,u} + \cutlink{a,b} +
      \tenslink{b_1,b_2}{b} \\
         \rightarrow \quad &
        \dailink{\overline p, a_1, a_2 , u} + \cutlink{a_1,b_1} +\cutlink{a_2,b_2}   \\
         = \quad &
        \dailink{\overline p, a_1, a_2} + \cutlink{a_1,b_1} +\cutlink{a_2,b_2} \merge\maltese_1 
      \end{align*}

      Thus we have shown that 
      in $S\rightarrow S'$
      the daimon $d$ became 
      $p_{a}(d)$
      and we have that 
      $S\merge[d] \maltese_1$
      rewrites to 
      $S' \merge[p_a(d)] \maltese_1$.

      \item 
      Say the reduction is that of 
      a $(\parr,\maltese)$ cut:
      $$
      \dailink{\overline p, a} + \cutlink{a,b} +
      \parrlink{b_1,b_2}{b} \rightarrow 
      \dailink{\overline p_A, a_1} + \dailink{\overline p_B,a_2} +
       \cutlink{a_1,b_1} +\cutlink{a_2,b_2} .$$

      It follows that:
      \begin{align*}
        & \dailink{\overline p, a} + \cutlink{a,b} +
      \parrlink{b_1,b_2}{b}  \merge \maltese_1  \\
        = \quad &
        \dailink{\overline p, a ,u} + \cutlink{a,b} +
      \parrlink{b_1,b_2}{b} \\
        \rightarrow \quad &
        \dailink{\overline p_A, a_1 ,u} + \dailink{\overline p_B,a_2} +
       \cutlink{a_1,b_1} +\cutlink{a_2,b_2}   \\
        = \quad &
        \dailink{\overline p_A, a_1 } + \dailink{\overline p_B,a_2} +
       \cutlink{a_1,b_1} +\cutlink{a_2,b_2}  \merge\maltese_1 
      \end{align*}

      Thus we have shown that 
      in $S\rightarrow S'$
      the daimon $d$ can rewrites in two daimons 
      $t^1_{a,A}(d)$
      or $t^2_{b,B}(d)$
      for any choice of partition $A$ and $B$ of $\overline p$.
      As a consequence
      $S\merge[d] \maltese_1$
      rewrites to 
      $S' \merge[t^1_{a,A}(d)] \maltese_1$
      or to $S'\merge [t^2_{a,B}(d)]\maltese_1$
      still for any choice of partition.

    \end{itemize}
  \end{proof}
    
\begin{remark}\label{rem:mergeAndCutelim}
  The previous proposition 
  ensures the (less precise) statement:
  given a net $S$ with a daimon link $d$ 
  if $S\rightarrow S'$
  then there exists a daimon $d'$
  in $S'$ so that $S\merge[d]\maltese_1 \rightarrow S'\merge[d']\maltese_1$.
  By a simple induction, this can be generalised to multiple steps of cut elimination.
\end{remark}

  \begin{proposition}\label{prop:mergeCompute}
    Given $S$ and $T$ two nets such that $S::T\rightarrow \maltese_k$,
    $d$ a daimon of $S$, 
    and $d_0=\maltese_n$ some daimon link:
    the net $(S\merge[d] \maltese_n)::T$
    reduces to $\maltese_{n+k}$.
  \end{proposition}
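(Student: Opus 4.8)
The plan is to show that adjoining $n$ fresh outputs to the daimon $d$ commutes with the whole cut-elimination sequence $S :: T \rightarrow^{*} \maltese_k$, and then to exploit that $\maltese_k$ consists of a single daimon link. First I would establish the identity $(S \merge[d] \maltese_n) :: T = (S :: T) \merge[d] \maltese_n$. By \autoref{def:mergeDaimon}, $S \merge[d] \maltese_n$ is obtained from $S$ by replacing $d = \dailink{\overline p}$ with $\dailink{\overline p, \overline q}$ for $n$ fresh positions $\overline q$, its conclusions being those of $S$ followed by $\overline q$. In the situation at hand we have $\# S \ge \# T$, so $\min(\# S + n, \# T) = \# T$ and \autoref{def:interaction} cuts exactly the conclusions $S(1), \dots, S(\# T)$ against $T(1), \dots, T(\# T)$ --- precisely the cut links already present in $S :: T$. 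Hence $(S \merge[d] \maltese_n) :: T$ and $(S :: T) \merge[d] \maltese_n$ have the same underlying hypergraph and, chasing the arrangements of \autoref{def:interaction} and \autoref{def:mergeDaimon}, the same arrangement; moreover $d$ still occurs (now carrying the extra outputs $\overline q$) in both, and $d$ is a daimon of $S :: T$ since adding cut links does not destroy it.

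Next I would invoke the commutation of $\merge$ with reduction. The statements \autoref{prop:cutelimMerge} and \autoref{rem:mergeAndCutelim} are given for $\maltese_1$, but the $\maltese_n$ version follows either by rerunning the same case analysis with $n$ auxiliary outputs $u_1, \dots, u_n$ in place of the single $u$ (in the $(\parr/\maltese)$ step one simply keeps all the $u_i$ on one side of the chosen partition), or by iterating the $\maltese_1$ statement $n$ times along the fixed reduction while carrying the tracked daimon. Applying this to $S :: T \rightarrow^{*} \maltese_k$ yields a daimon $d'$ of $\maltese_k$ with $(S :: T) \merge[d] \maltese_n \rightarrow^{*} \maltese_k \merge[d'] \maltese_n$. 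Since $\maltese_k$ consists of a single daimon link, $d'$ must be that link, and $\maltese_k \merge[d'] \maltese_n = \maltese_{k+n}$. Combining with the identity above gives $(S \merge[d] \maltese_n) :: T \rightarrow^{*} \maltese_{k+n}$, which is the claim.

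The only real work here is bookkeeping: pinning down the arrangement identity $(S \merge[d] \maltese_n) :: T = (S :: T) \merge[d] \maltese_n$ --- this is where $\# S \ge \# T$ is used, since otherwise the fresh daimon outputs would be cut against conclusions of $T$ that end up on the very same daimon after reduction, producing cyclic cuts that block normalisation to $\maltese_{k+n}$ --- and checking that \autoref{rem:mergeAndCutelim} lifts verbatim from $\maltese_1$ to $\maltese_n$. Neither point is conceptually difficult given the conventions already fixed for $::$ and $\merge$; everything else is an immediate consequence of the commutation results developed just above.
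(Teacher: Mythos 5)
Your proof is correct and follows essentially the same route as the paper's: establish $(S\merge[d]\maltese_n)::T=(S::T)\merge[d]\maltese_n$, push the merge through the reduction $S::T\rightarrow^{*}\maltese_k$ via the commutation of \autoref{rem:mergeAndCutelim}, and conclude because $\maltese_k$ has a single daimon link. You are in fact slightly more careful than the paper, which applies the $\maltese_1$ commutation statement to $\maltese_n$ without comment and likewise leaves the arrangement identity (and the implicit $\#S\ge\#T$ it relies on) unjustified.
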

  \begin{proof}
    Assume that $S$ and $T$ are orthogonal,
    then $S::T \rightarrow \maltese_k$.
    Consider $d$ some daimon of $S$
    then in particular it is a daimon of $S::T$
    and $S\merge[d]d_0 ::T$
    is equal to $S::T \merge[d] d_0$.
    Because $S::T\rightarrow \maltese_k$
    it follows that 
    $S::T\merge[d]\maltese_n\rightarrow \maltese_k\merge[d']d_0$
    where $d'$ is a daimon link of $\maltese_k$
    (\autoref{rem:mergeAndCutelim}).
    Since $\maltese_k$ is a net made of a single daimon link,
    necessarily $d' = \maltese_k$
    and thus 
    since $d_0=\maltese_n$
    $\maltese_k\merge[d']d_0=\maltese_{n+k}$
    Therefore we conclude 
    $S::T\merge[d]d_0\rightarrow \maltese_{n+k}$.
  \end{proof}

\begin{proposition}\label{prop:addConcluReduction}
  Given two nets $S$ and $T$ are orthogonal.
  $S'$ a net obtained from $S$
  by adding $n$ conclusions to some (eventually distinct) daimons  of $S$
  then $S'::T$ reduces to $\maltese_n$.
\end{proposition}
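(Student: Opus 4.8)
The plan is to deduce the statement from \autoref{prop:mergeCompute} by a straightforward induction, after first recording that building $S'$ from $S$ is an iterated ``add one output to a daimon'' process. Precisely, I would first observe that if $S'$ is obtained from $S$ by adding $n$ outputs spread over some (possibly repeated) daimon links, then there is a finite sequence of nets $S = S_0, S_1, \dots, S_n = S'$ and, for each $i < n$, a daimon link $e_{i+1}$ of $S_i$ such that $S_{i+1} = S_i \merge[e_{i+1}] \maltese_1$: indeed, by \autoref{def:mergeDaimon}, adding a single output to a daimon link $d$ of a net $N$ produces exactly $N \merge[d] \maltese_1$, and whenever an output is added to a daimon that has already been extended by an earlier step, that extended daimon is still a daimon of the net built so far, so the next step is again of this shape. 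This is where the word ``eventually'' in the statement is absorbed: $e_{i+1}$ is only required to be a daimon of $S_i$, not of $S$, so adding outputs twice to what was originally the same daimon of $S$ causes no trouble.

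Then I would prove by induction on $i$ that $S_i :: T \rightarrow^* \maltese_i$ for every $0 \le i \le n$; instantiating this at $i = n$ gives the claim, since $S_n = S'$. The base case $i = 0$ is nothing but the hypothesis $S \perp T$, which unfolds to $S :: T \rightarrow^* \maltese_0$. For the induction step, suppose $S_i :: T \rightarrow^* \maltese_i$. Since $e_{i+1}$ is a daimon of $S_i$, I would apply \autoref{prop:mergeCompute} to the nets $S_i$ and $T$, the daimon $e_{i+1}$, and the daimon link $\maltese_1$ (the instance of that proposition whose second daimon has a single output), obtaining $S_{i+1} :: T = (S_i \merge[e_{i+1}] \maltese_1) :: T \rightarrow^* \maltese_{i+1}$.

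I do not anticipate a real obstacle, as \autoref{prop:mergeCompute} — which itself rests on \autoref{rem:mergeAndCutelim}, tracking a designated daimon through an arbitrary cut-elimination sequence — already carries the analytic content. The only point needing a little care is the bookkeeping in the first paragraph: checking that ``adding $n$ outputs to some daimons of $S$'' genuinely decomposes into $n$ single-output additions, uniformly handling the case where two of the designated daimons coincide, so that each intermediate step has the form ``merge a daimon of the current net with $\maltese_1$'' and \autoref{prop:mergeCompute} applies verbatim at every step.
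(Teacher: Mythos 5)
Your proof is correct and follows exactly the paper's route: the paper's own proof is simply ``one inductively uses \autoref{prop:mergeCompute}'', and your decomposition into $n$ successive single-output merges followed by induction is precisely the spelled-out version of that argument.
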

\begin{proof}
  One inductively uses the previous proposition(\autoref{prop:mergeCompute}).
\end{proof}

\subsubsection{Behavior of Identity cut--nets}

\begin{definition}
  We call a cut $c$ 
  occuring in a net $S$ 
  \emph{irreversible} whenever it is of type $(\maltese/\parr)$
  (\autoref{def:cuttypes}).
  An \emph{identity cut--net}
  whenever it is of the following form 
  $$\dailink{\vec a , p_1 , \vec b , p_2, \vec c}
  + \parrlink{p_1,p_2}{p}
  + \cutlink{p,q} 
  + \dailink{\vec d , q , \vec e}$$

  Namely, the $\parr$--involved above the input $p$ of the 
  cut link
  is such that both its inputs are outputs of the same daimon .
\end{definition}

\begin{proposition}[Normal Form of Identity cut--nets]\label{prop:cutnetNF}
  An identity cut--net 
  $S$ with $n$ conclusions
  has a unique normal form 
  which is (up to isomorphism)
  $\maltese_n$.
\end{proposition}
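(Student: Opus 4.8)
The plan is to exploit that an identity cut--net contains a single cut, so that its reduction is forced up to the partition chosen in one non--homogeneous step, and then to compute the outcome by hand. Write the identity cut--net as $S = d_1 + \parrlink{p_1,p_2}{p} + \cutlink{p,q} + d_2$ with $d_1 = \dailink{\vec a , p_1 , \vec b , p_2 , \vec c}$ and $d_2 = \dailink{\vec d , q , \vec e}$. Since $S$ is a module, the target sets of the two daimons are disjoint, and the only positions that are a source of some link are $p_1,p_2$ (sources of the $\parr$--link) and $p,q$ (sources of the cut link); hence the conclusions of $S$ are exactly the positions occurring in $\vec a, \vec b, \vec c, \vec d, \vec e$, so that $n$ is the total length of these five sequences.

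Since $\cutlink{p,q}$ is the unique cut of $S$, and it is of type $(\parr/\maltese)$ and acyclic, every reduction of $S$ must begin by firing it with the $(\parr/\maltese)$ rule of \autoref{fig:nhomcutelim}. The key point is that the inputs $p_1,p_2$ of the $\parr$--link belong to the target set of $d_1$, which is disjoint from the target set of the daimon $d_2$ forming the premise of the cut; thus none of the extra cases of \autoref{fig:cutelimcomplement} applies, and the plain rule fires. It splits $d_2$ --- according to a partition of its free outputs $\vec d$ and $\vec e$ --- into two daimons $d_2^1, d_2^2$ carrying fresh principal outputs $q^1,q^2$, introduces the two cuts $\cutlink{p_1,q^1}$ and $\cutlink{p_2,q^2}$, and leaves $d_1$ untouched. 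So, whatever the partition, $S$ reduces to a net of the form $S_1 = d_1 + d_2^1 + d_2^2 + \cutlink{p_1,q^1} + \cutlink{p_2,q^2}$.

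Now both cuts of $S_1$ are $(\maltese/\maltese)$ glueing cuts, and they are acyclic because $d_1, d_2^1, d_2^2$ are pairwise distinct links; since glueing steps never create a non--homogeneous cut, every reduction issued from $S_1$ is homogeneous, hence by \autoref{thm:confluencehomogeneous} the net $S_1$ has a unique normal form, obtained by eliminating the two glueing cuts in either order. Each of these steps merges a pair of daimons and erases the two premises of the eliminated cut; after performing both, one is left with a single daimon link whose output set is the union of the free outputs of $d_1$ (namely $\vec a, \vec b, \vec c$) with those of $d_2^1$ and of $d_2^2$, which recombine exactly to $\vec d$ and $\vec e$. This single daimon is therefore $\maltese_n$, with the same $n$ regardless of the partition chosen in the first step. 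As $\maltese_n$ is cut--free it is already in normal form, and since every reduction from $S$ goes through some $S_1$ and then to $\maltese_n$, the net $S$ has $\maltese_n$ as its unique normal form (and, lifting along \autoref{rem:netlift}, as an ordered net with the arrangement inherited from $S$).

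The only step requiring genuine attention is the bookkeeping of daimon outputs through the non--deterministic $(\parr/\maltese)$ step: one has to check that the partition of $\vec d$ and $\vec e$ created there is exactly reassembled by the two subsequent glueing merges, so that no output is lost or duplicated. This is a routine unwinding of the rewriting rules --- in particular of the way the reduct in \autoref{fig:nhomcutelim} redistributes the surviving outputs of the split daimon --- and presents no real difficulty.
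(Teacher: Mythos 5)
Your proof is correct and follows essentially the same route as the paper's: decompose the net, fire the unique $(\parr/\maltese)$ cut, then eliminate the two resulting glueing cuts and check that the outputs reassemble to $\maltese_n$ independently of the partition chosen. The only cosmetic difference is that you appeal to confluence of homogeneous reduction to handle the order of the two glueing steps, where the paper simply computes both orders explicitly.
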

\begin{proof}
  Note that the conclusions of an identity 
  cut--net are all outputs of 
  (one of the two) daimon links.
  Writing $S$ as the following:
  $$S ::= \dailink{\vec a , p_1 , \vec b , p_2, \vec c}
  + \parrlink{p_1,p_2}{p}
  + \cutlink{p,q} 
  + \dailink{\vec d , q , \vec e}$$
  The $n$ conclusions of $S$
  correspond 
  to the set obtained from the vector 
  $\vec a, \vec b , \vec c , \vec d , \vec e $
  (note that this vetor contains no repetition).

  Fixing a choice for the partition 
  of the context $\vec d , \vec e$
  of the daimon containing the position $q$ 
  we obtain a reduction:
  \begin{align*}
    &
  \dailink{\vec a , p_1 , \vec b , p_2, \vec c}
  + \parrlink{p_1,p_2}{p}
  + \cutlink{p,q} 
  + \dailink{\vec d , q , \vec e} \\ 
  \rightarrow \quad &
  \dailink{\vec a , p_1 , \vec b , p_2, \vec c}
  + \cutlink{p_1,q_1}
  + \cutlink{p_2,q_2}
  + \dailink{\vec d_{\pleft} , q_1 , \vec e_{\pleft}}
  + \dailink{\vec d_{\pright} , q_2 , \vec e_{\pright}}  
  \end{align*}

  At this point two cuts occur thus two reductions are possible,
  both lead to the same net.
  Eliminating first $\cutlink{p_1,q_1}$:
  \begin{align*}
    &
  \dailink{\vec a , p_1 , \vec b , p_2, \vec c}
  + \cutlink{p_1,q_1}
  + \cutlink{p_2,q_2}
  + \dailink{\vec d_{\pleft} , q_1 , \vec e_{\pleft}}
  + \dailink{\vec d_{\pright} , q_2 , \vec e_{\pright}} \\ 
  \rightarrow  \quad &
  \dailink{\vec a  , \vec b , p_2, \vec c , \vec d_{\pleft}  , \vec e_{\pleft}}
  + \cutlink{p_2,q_2}
  + \dailink{\vec d_{\pright} , q_2 , \vec e_{\pright}} \\ 
  \rightarrow  \quad &
  \dailink{\vec a  , \vec b , \vec d_{\pright} , \vec e_{\pright} , \vec c , \vec d_{\pleft}  , \vec e_{\pleft}}\\
  \equiv \quad & 
  \dailink{\vec a , \vec b , \vec c , \vec d , \vec e}\\ 
  \equiv \quad & 
  \dailink{S(1),\dots, S(n)}
  \end{align*}

  Eliminating first $\cutlink{p_2,q_2}$:
  \begin{align*}
    &
  \dailink{\vec a , p_1 , \vec b , p_2, \vec c}
  + \cutlink{p_1,q_1}
  + \cutlink{p_2,q_2}
  + \dailink{\vec d_{\pleft} , q_1 , \vec e_{\pleft}}
  + \dailink{\vec d_{\pright} , q_2 , \vec e_{\pright}} \\
  \rightarrow \quad &
  \dailink{\vec a , p_1 , \vec b , \vec d_{\pright}  , \vec e_{\pright} , \vec c}
  + \cutlink{p_1,q_1}
  + \dailink{\vec d_{\pleft} , q_1 , \vec e_{\pleft}} \\ 
  \rightarrow \quad &
  \dailink{\vec a , \vec d_{\pleft}  , \vec e_{\pleft} , \vec b , \vec d_{\pright}  , \vec e_{\pright} , \vec c} \\ 
  \equiv \quad & 
  \dailink{\vec a , \vec b , \vec c , \vec d , \vec e}\\ 
  \equiv \quad & 
  \dailink{S(1),\dots, S(n)}
  \end{align*}
  This is true for any choice of partition 
  during the first $(\maltese/\parr)$
  step,
  therefore 
  $S$ has a unique normal
  which is (upto iso) $\dailink{S(1),\dots,S(n)}$.
\end{proof}

\begin{remark}\label{rem:cutnet0conclu}
  From \autoref{prop:cutnetNF}
  one derive that an identity cut--net without 
  conclusions always reduces to 
  $\maltese_0$.
\end{remark}

\begin{remark}\label{rem:cutfreeNoOutputNets}
  A net which is cut--free and has no conclusions 
  is a sum of daimon links without conclusions:
  such a net cannot contain connective 
  otherwise to have no conclusion 
  a cut needs to be added,
  similarly its daimon links cannot have outputs 
  otherwise to have no conclusions a cut must be present in the net.
\end{remark}

\begin{proposition}[Identity cut--net behavior]\label{prop:cutnetBehavior}
  Consider an identity cut $S$ without conclusions
  and denote its two daimons $d_1$ and $d_2$.
  Let $T_1$ and $T_2$ be two nets without conclusions,
  and $d_1'$ a daimon of $T_1$ while $d_2'$ denotes a daimon of $T_2$,
  the assertions are equivalent:
  \begin{enumerate}
    \item 
    $T_1\rightarrow ^* \maltese_0$ and $T_2\rightarrow^* \maltese_0$.
    \item 
    $S \merge[d_1,d_1'] T_1 \merge[d_2,d_2'] T_2$ 
    reduces to $\maltese_0$
  \end{enumerate}
\end{proposition}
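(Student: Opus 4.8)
The plan is to recognise the merged net as an \emph{identity cut--net} with two inert contexts plugged in, to reduce the identity--cut part once and for all, and then to control the leftover computation with the merge lemmas.

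\textbf{Set-up.} Since $S$ has no conclusions it has the shape $S=\dailink{p_1,p_2}+\parrlink{p_1,p_2}{p}+\cutlink{p,q}+\dailink{q}$, so $d_1=\dailink{p_1,p_2}$ and $d_2=\dailink{q}$. Write $T_1=(T_1)_0+d_1'$ and $T_2=(T_2)_0+d_2'$ with $d_1'=\dailink{\overline u}$, $d_2'=\dailink{\overline v}$ (all vertices disjoint, up to isomorphism). Unfolding \autoref{def:mergeDaimon},
\[
N \;:=\; S\merge[d_1,d_1']T_1\merge[d_2,d_2']T_2 \;=\; S^{(2)} + (T_1)_0 + (T_2)_0,
\qquad
S^{(2)} \;:=\; \dailink{p_1,p_2,\overline u}+\parrlink{p_1,p_2}{p}+\cutlink{p,q}+\dailink{q,\overline v},
\]
and $S^{(2)}$ is again an identity cut--net (both inputs of the $\parr$--link lie in the daimon $\dailink{p_1,p_2,\overline u}$), with conclusions exactly the positions of $\overline u,\overline v$. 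By \autoref{prop:cutnetNF} the net $S^{(2)}$ has the unique normal form $\dailink{\overline u,\overline v}$, and since every cut fired in that computation stays inside $S^{(2)}$, contextual closure gives
\[
N \;\rightarrow^*\; \dailink{\overline u,\overline v}+(T_1)_0+(T_2)_0 \;=\; T_1\merge[d_1',d_2']T_2 \;=:\; M .
\]

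\textbf{$(1)\Rightarrow(2)$.} Assume $T_1\rightarrow^*\maltese_0$ and $T_2\rightarrow^*\maltese_0$. Read $M=(T_1)_0+\dailink{\overline u,\overline v}+(T_2)_0$ as the net $T_1$ in which the daimon $d_1'=\dailink{\overline u}$ has been enlarged by the extra outputs $\overline v$, with $(T_2)_0$ plugged into $\overline v$. The argument of \autoref{prop:cutelimMerge} and its iteration \autoref{rem:mergeAndCutelim} goes through verbatim with $\maltese_1$ replaced by a daimon with $|\overline v|$ outputs — it suffices, at each $(\parr/\maltese)$ step, to place all the extra outputs into a single class of the chosen partition — so from $T_1\rightarrow^*\maltese_0$ one gets $(T_1)_0+\dailink{\overline u,\overline v}\rightarrow^*\dailink{\overline v}$, hence $M\rightarrow^*\dailink{\overline v}+(T_2)_0=T_2$. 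Composing, $N\rightarrow^* M\rightarrow^* T_2\rightarrow^*\maltese_0$.

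\textbf{$(2)\Rightarrow(1)$.} Assume $N\rightarrow^*\maltese_0$ and fix such a reduction $\rho$. The cut $c=\cutlink{p,q}$ is a $(\parr/\maltese)$ cut, hence can be delayed (\autoref{prop:SN} \autoref{item:delayParrs}), and the two glueing cuts it creates can be anticipated (\autoref{rem:anticipateGlue}). Using these commutations one rearranges $\rho$ so that, as early as permitted, $c$ is fired and its two glueing residuals are resolved as a consecutive block; the remaining tail is then a reduction of (a reduct of) $T_1\merge[d_1',d_2']T_2$, and together with the confluence of the identity--cut computation (\autoref{prop:cutnetNF}) this shows $\maltese_0$ is reachable from $M=T_1\merge[d_1',d_2']T_2$ as well. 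Finally, in a reduction $M=(T_1)_0+\dailink{\overline u,\overline v}+(T_2)_0\rightarrow^*\maltese_0$ the fused daimon $\dailink{\overline u,\overline v}$ is the only link joining the disjoint pieces inherited from $(T_1)_0$ and $(T_2)_0$, so the same commutations split the reduction into a part witnessing $(T_1)_0+\dailink{\overline u,\overline v}\rightarrow^*\dailink{\overline v}$ and a part witnessing $(T_2)_0+\dailink{\overline v}\rightarrow^*\maltese_0$; discarding the inert outputs and the passive context yields $T_1\rightarrow^*\maltese_0$ and, symmetrically, $T_2\rightarrow^*\maltese_0$.

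\textbf{Main obstacle.} The delicate point lies entirely in $(2)\Rightarrow(1)$: cut elimination is not confluent (\autoref{rem:nondet-proofsearch}), and because the daimon $\dailink{p_1,p_2,\overline u}$ feeds outputs into $(T_1)_0$, a $(\parr/\maltese)$ step fired inside $(T_1)_0$ \emph{below} that daimon may split it and tear $p_1$ apart from $p_2$, momentarily destroying the identity--cut shape of $S^{(2)}$. The cure is to postpone any such harmful split past the firing of $c$, which is legitimate since $(\parr/\maltese)$ cuts delay (\autoref{prop:SN} \autoref{item:delayParrs}): once $c$ and its glueing residuals are consumed, the two daimons of $S^{(2)}$ have already coalesced with $(T_2)_0$ into the single fused daimon, so the computation has genuinely factored through $M$. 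Making this rearrangement precise — checking that ``fire $c$ and its residuals as soon as the delay/anticipation lemmas of \autoref{prop:SN} and \autoref{rem:anticipateGlue} permit, then finish'' always applies and loses nothing, using \autoref{prop:SN} \autoref{item:factorisation} to tidy the multiplicative steps — is the bulk of the proof; once it is in place, both implications fall out of the merge lemmas.
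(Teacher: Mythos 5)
Your direction $(1)\Rightarrow(2)$ is essentially correct and in the spirit of the paper's (it uses \autoref{prop:cutnetNF} and the merge lemmas in a slightly different order). The problems are in $(2)\Rightarrow(1)$, where there are two genuine gaps. First, you propose to rearrange the given reduction $\rho$ of $N$ so that $c$ is fired ``as early as permitted''. But the only commutations available are \autoref{prop:SN}~\autoref{item:delayParrs}, which pushes a $(\parr/\maltese)$ cut to the \emph{end} of a reduction, and \autoref{prop:SN}~\autoref{item:earlyNonPar}, which pulls a cut to the \emph{front} only when it is \emph{not} a $(\parr/\maltese)$ cut. Since $c$ is a $(\parr/\maltese)$ cut, neither lemma anticipates it, and anticipation of $(\parr/\maltese)$ cuts is false in general (this is exactly the source of non-confluence in \autoref{rem:nondet-proofsearch}): committing early to a partition can destroy reducibility to $\maltese_{0}$. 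Your ``main obstacle'' paragraph names the danger but the proposed cure --- postponing the harmful splits past $c$ --- is itself an unproved rearrangement of the same kind. The paper does the opposite: it delays $c$ to the very end, so the prefix of $\rho$ fires only cuts of $T_1$ and $T_2$, and the net reached just before $c$ is forced to be the bare identity cut--net (any leftover material would survive into the normal form), which directly yields $T_1\rightarrow^*\maltese_0$ and $T_2\rightarrow^*\maltese_0$.

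Second, even granting $M=T_1\merge T_2\rightarrow^*\maltese_0$, your splitting of that reduction into a $T_1$-phase followed by a $T_2$-phase needs the factorization through strongly unrelated cuts (\autoref{prop:unrelatedFactorization}), and the detour through $M$ destroys precisely the hypothesis of that lemma: in $M$ the two daimons of $S$ have been fused into the single $\dailink{\overline u,\overline v}$, so a cut of $(T_1)_0$ rooted in $\overline u$ and a cut of $(T_2)_0$ rooted in $\overline v$ now share a daimon above their inputs and are no longer strongly unrelated. In $N$, by contrast, $\dailink{p_1,p_2,\overline u}$ and $\dailink{q,\overline v}$ are distinct, the cuts coming from $T_1$ and from $T_2$ are strongly unrelated, and the factorization applies to the prefix of the delayed reduction --- this is why the paper works in $N$ with $c$ fired last rather than in $M$ with $c$ fired first; proving the splitting property for $M$ directly is essentially as hard as the proposition itself. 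Finally, ``discarding the inert outputs'' at the end is a \emph{converse} of \autoref{prop:cutelimMerge} (extracting a reduction of $T$ to $\maltese_0$ from a reduction of the enlarged net), which is plausible but is not among the stated lemmas and would need its own proof.
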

\begin{proof}
 $1\Rightarrow 2.$ This is the easy implication.
 $T_2\rightarrow^* \maltese_0$ thus by applying \autoref{prop:mergeCompute}
 $S \merge[d_1,d_1'] T_1 \merge[d_2,d_2'] T_2$ 
 reduces to 
 $S \merge[d_1,d_1'] T_1 \merge[d_2,f (d_2')] \maltese_0$
 that is $S \merge[d_1,d_1'] T_1$
 again applying \autoref{prop:mergeCompute}
 with $T_1\rightarrow ^* \maltese_0$ 
 yields that 
 $S \merge[d_1,d_1'] T_1$ 
 reduces to 
 $S\merge[d_1] \maltese_0$
 that is $S$.
 We conclude since $S\rightarrow^*\maltese_0$ 
 (\autoref{rem:cutnet0conclu}).

 $2\Rightarrow 1.$
 The other direction requires the use of \autoref{prop:SN} (\autoref{item:delayParrs})
 and \autoref{prop:SN} (\autoref{item:earlyNonPar}).
 The cut that we will denote $c$ occuring in $S$
 is a $(\parr/\maltese)$ cut 
 thus it can be performed last (\autoref{prop:SN}\autoref{item:delayParrs}).
 Thus if     $S \merge[d_1,d_1'] T_1 \merge[d_2,d_2'] T_2$ 
reduces to $\maltese_0$
one can factorize the reduction as 
$S \merge[d_1,d_1'] T_1 \merge[d_2,d_2'] T_2 
\rightarrow^* U$
with $U\rightarrow_c \maltese_0$,
one can verify that this implies that $U$ is (isomorphic to) the following net 
$$
\dailink{p_1,p_2}{p} +\cutlink{p,q} + \dailink{q}
$$
That is, $U$ and $S$ are isomorphic nets.

In that case note that 
the cuts of $T_1$ and $T_2$
in the net $S \merge[d_1,d_1'] T_1 \merge[d_2,d_2'] T_2 $
are unrelated
and thus applying \autoref{prop:unrelatedConflu},
one can factorize the reduction leading to $U$ as follows:

\begin{equation}\label{equa:theReduction}
  S \merge[d_1,d_1'] T_1 \merge[d_2,d_2'] T_2 
\rightarrow^* _2
S_1
\rightarrow^*_1
S
\rightarrow_c \rightarrow^*
\maltese_0
\end{equation}

where $\rightarrow_1$ eliminates cuts occuring in $T_1$ 
or its reducts only 
and $\rightarrow_2$ eliminates cuts occuring in $T_2$ or its reducts only.
As a consequence using \autoref{prop:mergeCompute}
one may rewrite the first step of the reduction:
$(S \merge[d_1,d_1'] T_1) \merge[d_2,d_2'] T_2 
\rightarrow^* _2
(S \merge[d_1,d_1'] T_1)\merge[d_2,f(d_2')]T_2'$

Necessarily because $c$ (and the cut it produces) is a cut that is not in $T_2$
or any of its redexes,
and because the cuts of $T_1$ and $T_2$ also are disjoint
the reduction \autoref{equa:theReduction}
implies that $T_2'$ must be cut--free
and thus in particular a normal form of $T_2$. 

Similarly 
because 
$(S \merge[d_1,d_1'] T_1)\merge[d_2,f(d_2')]T_2'$ 
reduces to 
$S$ by $\rightarrow^*_1 $
applying \autoref{prop:mergeCompute}
one obtains that 
$S_1 = (S \merge[d_1,d_1'] T_1)\merge[d_2,f(d_2')]T_2'$ 
reduces to $S$
in the following way:
there is a reduction $T_1\rightarrow T_1'$
and $S$ equals 
$(S \merge[d_1,d_1'] T_1')\merge[d_2,f(d_2')]T_2'$.
Again
because $c$ (and the cut it produces) is a cut that is not in $T_1$
or any of its redexes,
and because the cuts of $T_1$ and $T_2$ also are disjoint
the reduction \autoref{equa:theReduction}
implies that $T_1'$ must be cut--free
and thus in particular a normal form of $T_1$. 

From this argument one shows that 
\begin{equation}\label{equa:TheEqual}
  S = (S \merge[d_1,d_1'] T_1')\merge[d_2,f(d_2')]T_2'.
\end{equation}
Furthermore $T_1'$ and $T_2'$
are both cut--free and without conclusions 
they are therefore sums of $\maltese_0$ links \autoref{rem:cutfreeNoOutputNets}.
If $T_1'$ (resp $T_2'$) 
is $\sum_{1\leq i \leq n}\maltese_0$ (resp. $\sum_{1\leq i \leq k}\maltese_0$)
then 
$S = (S \merge[d_1,d_1'] \sum_{1\leq i \leq n}\maltese_0 )\merge[d_2,f(d_2')]T_2'$
equals 
$(S + \sum_{1\leq i \leq n-1}\maltese_0 )\merge[d_2,f(d_2')]T_2'$
that will be 
$(S + \sum_{1\leq i \leq n-1}\maltese_0 )\merge[d_2,f(d_2')]\sum_{1\leq i \leq k}\maltese_0$
that is 
$S + \sum_{1\leq i \leq n-1}\maltese_0 + \sum_{1\leq i \leq k-1} \maltese_0$.

From \autoref{equa:TheEqual}
this means that necessarily 
$n-1 =0$ and $k-1 = 0$
i.e. 
$T_1'$ and $T_2'$ equal $\maltese_0$.
Therefore we conclude that 
$T_1\rightarrow ^* \maltese_0$
and 
$T_2\rightarrow^*  \maltese_0$

\end{proof}

  \subsubsection{Proofs of section \autoref{sect:RealModel} --
  \autoref{prop:dualPreC} , \autoref{prop:dualC} and \autoref{prop:Associativity}}

\propdualPreC*
  \begin{proof} 
    Consider a net $S$
    orthogonal to $\mathbf A \parallel \mathbf B$. For any $a\in\mathbf A$ and $b\in\mathbf B$
    the net $S::(a\parallel b)$
    reduces to $\maltese_0$. Since $S::(a\parallel b)=S::a::b$ (\autoref{prop:interAsAction})
    and since the orthogonality holds for any pair $(a,b)$, in particular for any net $b\in\mathbf B$ the net 
    $(S::a)::b$ reduces to $\maltese_0$.
    This means that, for any $a\in\mathbf A$, $S::a$ is orthogonal to $\mathbf B$ i.e.\ $S::a\in\mathbf B ^\bot$.
    Since $\mathbf A = (\mathbf A^{\bot})^\bot$, this means that $S\in\mathbf A^\bot \compo \mathbf B ^\bot$.
  
    On the other hand, consider a net $S\in\mathbf A^\bot\compo\mathbf B^\bot$:  
    for any $a\in\mathbf A = (\mathbf A^{\bot})^\bot$ the net $S::a$ belongs to $\mathbf B^\bot$ and so for any $b\in\mathbf B$
    the net $S::a::b$ reduces to $\maltese_0$.
    Since $S::a::b=S::(a\parallel b)$ (\autoref{prop:interAsAction}), we have that $S$ is orthogonal to $\mathbf A \parallel \mathbf B$ (using Remark~\ref{rem:density}). 
    
    Hence we showed $(\mathbf A\parallel \mathbf B) ^\bot=\mathbf A^\bot \compo \mathbf B ^\bot$. As a consequence $(\mathbf A \compo \mathbf B)^\bot=(\mathbf A^{\bibot} \compo \mathbf B^{\bibot})^\bot=(\mathbf A ^\bot \parallel \mathbf B^\bot )^{\bibot}=\mathbf A ^\bot \parallel \mathbf B^\bot$.
  \end{proof}

  \begin{lemma}\label{lem:parrDualityHardCase}
    Given a net with no conclusions $R$
    and two types with one conclusion $\mathbf A$
    and $\mathbf B$:
    If 
    $R\merge[d]\dailink{p_1,p_2} + \parrlink{p_1,p_2}{p} $belongs to $\mathbf A^\bot \parr \mathbf B^\bot$
    then 
    $R\merge[d]\dailink{p} $ belongs to $\mathbf A^\bot \parr \mathbf B^\bot$
  \end{lemma}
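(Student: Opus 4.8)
The plan is to establish $P_1 := R\merge[d]\dailink{p}\in\mathbf A^\bot\parr\mathbf B^\bot$ by checking $P_1\perp T$ for every $T$ in $(\mathbf A^\bot\parr\mathbf B^\bot)^\bot$, transferring to $P_1$ the orthogonalities already enjoyed by $P_2 := R\merge[d]\dailink{p_1,p_2}+\parrlink{p_1,p_2}{p}$. Write $Q_2 := R\merge[d]\dailink{p_1,p_2}$, so that $P_2$ is obtained from $Q_2$ by plugging a $\parr$ under its two conclusions $p_1,p_2$, which sit on the common daimon $d\merge\dailink{p_1,p_2}$. Unfolding the definition of $\parr$ on types together with \autoref{prop:dualPreC}, one has $\mathbf A^\bot\parr\mathbf B^\bot = Y^{\bibot}$ and hence $(\mathbf A^\bot\parr\mathbf B^\bot)^\bot = Y^\bot$, where $Y$ is the set of nets $S+\parrlink{S(1),S(2)}{p'}$ with $S$ ranging over $(\mathbf A\parallel\mathbf B)^\bot$. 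So it suffices to prove $P_1\perp T$ for each $T\in Y^\bot$; by \autoref{rem:clashcut} (the conclusion of $P_2$ being a $\parr$-output), the unique conclusion of such a $T$ is the output of a $\otimes$-link or of a $\maltese$-link, and I would split on these two cases.

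The common engine is a multiplicative-cut bridge. For $S\in\mathbf A\parallel\mathbf B$, the net $S^\otimes := S+\tenslink{S(1),S(2)}{q}$ lies in $Y^\bot$: confronting it with a generator $W+\parrlink{W(1),W(2)}{p'}$ of $Y$ fires the multiplicative $(\otimes/\parr)$ cut and leaves $S::W$, which normalises to $\maltese_0$ since $S\perp W$, a multiplicative cut commuting with the rest of the reduction (\autoref{prop:multiCommutes}). Using this with the hypothesis $P_2\in Y^{\bibot}$: $P_2$ is orthogonal to $S^\otimes$, and the interaction $P_2::S^\otimes$ fires the multiplicative $(\parr/\otimes)$ cut, leaving — up to isomorphism — the net $Q_2::S$, which therefore reduces to $\maltese_0$; ranging over $S$ gives $Q_2\in(\mathbf A\parallel\mathbf B)^\bot$. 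The same bridge settles the $\otimes$-case for the opponent $T$: if $T(1)$ is the output of $\tenslink{t_1,t_2}{T(1)}$, deleting that link yields a two-conclusion net $T^-$ that the bridge (confronting $T$ with generators of $Y$) places in $\mathbf A\parallel\mathbf B$; since $P_1::T$ fires the reversible $(\maltese/\otimes)$ cut — splitting the daimon output $p$ into $p_1,p_2$ — and thereby reduces to $Q_2::T^-$, and $Q_2\perp T^-$, we obtain $P_1::T\to^*\maltese_0$.

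The delicate case is $T(1)$ a $\maltese$-output, say on the daimon $d_T=\dailink{\overline a,T(1),\overline b}$ of $T$: here one cannot simply fire the interaction cut on both sides and compare, since on $P_2::T$ it is a non-confluent $(\parr/\maltese)$ cut whereas on $P_1::T$ it is a $(\maltese/\maltese)$ glueing. Instead I would recognise $P_2::T$ as (isomorphic to) the identity cut-net $\dailink{p_1,p_2}+\parrlink{p_1,p_2}{p}+\cutlink{p,T(1)}+\dailink{T(1)}$ with its two daimons merged — via the operator of \autoref{prop:mergeDecompo} — with $R$ (reconstituting the daimon of $Q_2$) and with the conclusion-less net $T_{-1}$ obtained from $T$ by erasing $T(1)$ from $d_T$ (reconstituting $d_T$). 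The hypothesis $P_2::T\to^*\maltese_0$ then forces, by the identity cut-net behaviour (\autoref{prop:cutnetBehavior}), both $R\to^*\maltese_0$ and $T_{-1}\to^*\maltese_0$. On the other side, firing the glueing cut in $P_1::T$ merges $d\merge\dailink{p}$ with $d_T$ and produces exactly $R\merge[(d,\hat d)]T_{-1}$ ($\hat d$ the daimon of $T_{-1}$); by \autoref{prop:mergeDecompo} this equals $(R\merge[d]\hat d)+(T_{-1}\setminus\hat d)$, and since $R\to^*\maltese_0$, \autoref{rem:mergeAndCutelim} gives $R\merge[d]\hat d\to^*\hat d$, so the whole net reduces to $T_{-1}$ and hence to $\maltese_0$. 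Thus $P_1\perp T$ in this case too, and letting $T$ range over $Y^\bot$ concludes.

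The main obstacle is precisely this last case: the mismatch between the $(\parr/\maltese)$ redex seen from $P_2$ and the $(\maltese/\maltese)$ redex seen from $P_1$, aggravated by the non-determinism of the $(\parr/\maltese)$ rule, blocks the naive ``common reduct'' argument; the function of \autoref{prop:cutnetBehavior} is exactly to convert the single orthogonality $P_2\perp T$ into the two independent reductions $R\to^*\maltese_0$ and $T_{-1}\to^*\maltese_0$ that the $P_1$-side computation needs. A recurring minor technicality is that the net identifications above hold only up to isomorphism — renaming the fresh positions created by cut elimination, and reordering the outputs of a single daimon link — neither of which affects whether a net reduces to $\maltese_0$.
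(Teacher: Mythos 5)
Your proof is correct and follows essentially the same route as the paper's: a case split on the terminal link of the opponent $T$, with the tensor case handled by commuting the multiplicative cut (so that $P_1::T$ and $P_2::T$ share, up to that cut, the reduct $Q_2::T^-$), and the daimon case handled by recognising the identity cut--net inside $P_2::T$ and invoking \autoref{prop:cutnetBehavior} to extract the two independent reductions $R\to^*\maltese_0$ and $T_{-1}\to^*\maltese_0$. The only divergences are cosmetic: you route the tensor case through the intermediate typings $Q_2\in(\mathbf A\parallel\mathbf B)^\bot$ and $T^-\in\mathbf A\parallel\mathbf B$ where the paper compares reducts directly, and in the daimon case you fire the glueing cut first and then collapse $R$ and $T_{-1}$, whereas the paper collapses $R$ and $R'$ before glueing --- both orders are legitimate.
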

  \begin{proof}
    To do so one shows that 
    $R\merge[d]\dailink{p} $ is orthogonal to any net 
    orthogonal to $\mathbf A^\bot \parr \mathbf B^\bot$.
    Consider therefore an opponent 
    $U$
    orthogonal to $\mathbf A^\bot \parr \mathbf B^\bot$.
    We distinguish two cases:
    \begin{itemize}
      \item 
      If $U = U'+ \tenslink{U'(1),U'(2)}{q}$
      has a terminal tensor link.
      Note that, by assumption: 
      $$R\merge[d]\dailink{p_1,p_2} + \parrlink{p_1,p_2}{p} :: U \rightarrow ^* \maltese_0$$
      Since multiplicative cuts can be performed first
      and 
      $R\merge[d]\dailink{p_1,p_2} + \parrlink{p_1,p_2}{p} :: U$ 
      reduces to 
      $R\merge[d]\dailink{p_1,p_2} :: U'$ by eliminating a multiplicative cut 
      hence we derive the following (\autoref{prop:factorisation})
      $$ R\merge[d]\dailink{p_1,p_2} :: U' \rightarrow ^* \maltese_0.$$

      In the meanwhile we have the following reduction:
      \begin{align*}
        &R\merge[d]\dailink{p}  :: U \\
        =\quad &
        R\merge[d]\dailink{p}  :: U'+ \tenslink{U'(1),U'(2)}{q}  \\
        \rightarrow \quad & 
        R\merge[d]\dailink{p_1,p_2}  :: U'   \\
        \rightarrow \quad & 
        \maltese_0 \mbox{(previous argument)}
      \end{align*}

      \item 
      If $U$ has a terminal daimon link 
      outputing its only conclusion $q$.
      Then $U$ may written as
      as $\dailink{q} \merge[d'] R'$.
      Therefore one can identify an identity cut--net $V$ in the interaction:
      \begin{align*}
        & R\merge[d]\dailink{p_1,p_2} + \parrlink{p_1,p_2}{p} :: U \\
        =\quad &
        R\merge[d]\dailink{p_1,p_2} + \parrlink{p_1,p_2}{p} :: \dailink{q} \merge[d'] R'\\
        =\quad &
        (R\merge[d]\dailink{p_1,p_2}) + \parrlink{p_1,p_2}{p} +\cutlink{p,q} + (\dailink{q} \merge[d'] R') \\
        =\quad &
        [\dailink{p_1,p_2} + \parrlink{p_1,p_2}{p} +\cutlink{p,q} + (\dailink{q} \merge[d'] R')] 
        \merge[d]R \\
        =\quad &
        [ \dailink{p_1,p_2}+ \parrlink{p_1,p_2}{p} +\cutlink{p,q} + \dailink{q} ]\merge[d'] R' \merge[d] R \\
        =\quad &
        [ V ]\merge[d'] R' \merge[d] R
      \end{align*}

      Now since $V$ is a cut net without conclusions
      while $R'$ and $R$ both have no conclusion 
      while 
      $[ V ]\merge[d'] R' \merge[d] R \rightarrow^* \maltese_0$ 
      (because by assumption $R\merge[d]\dailink{p} :: U \rightarrow \maltese_0$)
      we derive that $R\rightarrow^* \maltese_0$
      and $R'\rightarrow\maltese_0$ (\autoref{prop:cutnetBehavior}).

      We then derive the following 
      \begin{align*}
        & R\merge[d]\dailink{p} :: U \\
        = \quad 
        & R\merge[d]\dailink{p} ::  \dailink{q} \merge[d']R' \\
        \rightarrow^* \quad
        & \maltese_0\merge[d]\dailink{p} ::  \dailink{q} \merge[d']R' \\
        = \quad
        & \dailink{p} ::  \dailink{q} \merge[d']R' \\
        \rightarrow^* \quad
        & \dailink{p} ::  \dailink{q} \merge[d'] \maltese_0 \\
        = \quad
        & \dailink{p} ::  \dailink{q} \\
        \rightarrow \quad
        & \maltese_0 
      \end{align*}

      in the end we have shown that 
      $R\merge[d]\dailink{p} $ belongs to $(\mathbf A^\bot \parr \mathbf B^\bot)^{\bibot}$
      which is $\mathbf A^\bot \parr \mathbf B^\bot$ and conclude.
    \end{itemize}
  \end{proof}

\propdualC*
  \begin{proof} 
  One equality implies the other: say that $(\mathbf A \otimes \mathbf B)^\bot = \mathbf A^\bot \parr \mathbf B^\bot$ 
    holds for any pair of types. Then $\mathbf A^\bot \otimes \mathbf B^\bot=(\mathbf A^\bot \otimes \mathbf B^\bot)^{\bibot}=(\mathbf A^{\bibot} \parr \mathbf B^{\bibot})^\bot=(\mathbf A\parr \mathbf B)^\bot$.
  
  Let us now prove $\mathbf A^\bot \parr \mathbf B^\bot=(\mathbf A \otimes \mathbf B)^\bot$. As for 
    $\mathbf A^\bot \parr \mathbf B^\bot\subseteq (\mathbf A \otimes \mathbf B)^\bot$, we prove 
     $$
    \{ S + \tenslink{S(1),S(2)}{p} \mid S \in \mathbf A\parallel \mathbf B \}
    \subseteq \{ T + \parrlink{T(1),T(2)}{p} \mid T \in \mathbf A^\bot\compo \mathbf B^\bot \}^\bot
    $$
    which is enough to conclude because the previous inclusion yields (since bi--orthogonality preserves inclusion)
    $$
    \mathbf A \otimes \mathbf B
    \subseteq \{ T + \parrlink{T(1),T(2)}{p} \mid T \in \mathbf A^\bot\compo \mathbf B^\bot \}^\bot
    $$
    that is (since the tri--orthogonal of a set is the orthogonal of this set):
    $$
    \mathbf A \otimes \mathbf B
    \subseteq
    (\mathbf A^\bot \parr \mathbf B^\bot )^\bot.
    $$
    Which 
    implies $(\mathbf A \otimes \mathbf B)^\bot
    \supseteq
    (\mathbf A^\bot \parr \mathbf B^\bot )$ (Orthogonality inverts inclusion).

    So let 
    $S_{0}\in\{ S + \tenslink{S(1),S(2)}{p} \mid S \in \mathbf A\parallel \mathbf B \}$ 
    and 
    $T_{0}\in\{ T + \parrlink{T(1),T(2)}{p} \mid T \in \mathbf A^\bot\compo \mathbf B^\bot \}$. 
    One easily sees that $S_{0}\bot T_{0}$: after eliminating the multiplicative cut we obtain a net $S::T$, 
    where $S\in\mathbf A^\bot \compo \mathbf B^\bot$ and $T\in\mathbf A\parallel \mathbf B$: 
    these constructions are orthogonal (Proposition~\ref{prop:dualPreC}).

    \bigskip

As for $(\mathbf A \otimes \mathbf B)^\bot\subseteq\mathbf A^\bot \parr \mathbf B^\bot$, 
let $T\in (\mathbf A \otimes \mathbf B)^\bot$ i.e.
$T\bot\{ S + \tenslink{S(1),S(2)}{p} \mid S \in \mathbf A\parallel \mathbf B \}^{\bot\bot}$; 
we can deduce that the unique terminal link of $T$ is either a $\parr$--link or a $\maltese$--link: 
\begin{itemize}
\item
if $T=T_0 + l$ 
has a terminal $\parr$--link  $l$ then, 
for every $S_0\in\mathbf A\parallel \mathbf B$
the net $S::=S_0+\tenslink{S(1),S(2)}{p}$ belongs $\mathbf A \otimes \mathbf B$. 
In a single multiplicative step of cut elimination $T::S$ reduces to
$T_0::S_0$ and since  $S::T \rightarrow^* \maltese_0$
it follows that $T_0::S_0 \rightarrow^* \maltese_0$, using \autoref{prop:SN} \autoref{item:earlyNonPar}. 
As a consequence $T_0\bot\ \mathbf A\parallel \mathbf B$ 
i.e. $T_{0}\in\mathbf A ^\bot\compo \mathbf B^\bot$ and $T\in\mathbf A^\bot \parr \mathbf B^\bot$;

    \item
if $T$ has for terminal link a daimon link then
$T$ may be written as $R \merge [d] \dailink{p}$
where $R$ is a net without conclusions.
Let us start an argument 
from the assumption that $T$ belongs to $(\mathbf A \otimes \mathbf B)^\bot$:
\begin{align*}
  &
  T \perp \mathbf A \otimes \mathbf B \\
  \Leftrightarrow \quad &
  T \perp  \{ S_a \parallel S_b \mid S_a \in \mathbf A , S_b \in \mathbf B\}^{\bibot} \\
  \Leftrightarrow \quad &
  T \perp  \{ S_a \parallel S_b + \tenslink{S_a(1),S_b(1)}{p}\mid S_a \in \mathbf A , S_b \in \mathbf B\} \\
  \Leftrightarrow \quad &
  T ::  S_a \parallel S_b + \tenslink{S_a(1),S_b(1)}{p} \rightarrow ^* \maltese_0 
  \quad (\mbox{for any } S_a\in\mathbf A , S_b\in \mathbf B)\\
  \Leftrightarrow \quad &
  R \merge [d] \dailink{p} ::  S_a \parallel S_b + \tenslink{S_a(1),S_b(1)}{p} 
  \rightarrow ^* \maltese_0 
  \quad (\mbox{for any } S_a\in\mathbf A , S_b\in \mathbf B) \\
  \Leftrightarrow \quad &
  R \merge [d] \dailink{p_1,p_2} ::  S_a \parallel S_b
  \rightarrow ^* \maltese_0 
  & (\mbox{\autoref{prop:SN}\autoref{item:earlyNonPar}})\\
  \Leftrightarrow \quad &
  R \merge [d] \dailink{p_1,p_2} \perp \{ S_a \parallel S_b \mid S_a \in \mathbf A , S_b \in \mathbf B \} \\
  \Leftrightarrow \quad &
  R \merge [d] \dailink{p_1,p_2} \perp \mathbf A \parallel \mathbf B \\
  \Leftrightarrow \quad &
  R \merge [d] \dailink{p_1,p_2} \in \mathbf A^\bot \compo \mathbf B^\bot \\
  \Rightarrow \quad &
  R \merge [d] \dailink{p_1,p_2} +\parrlink{p_1,p_2}{p} \in \mathbf A^\bot \parr \mathbf B^\bot \\
  \Rightarrow \quad &
  R \merge [d] \dailink{p} \in \mathbf A^\bot \parr \mathbf B^\bot 
  & (\mbox  {\autoref{lem:parrDualityHardCase}})\\
  \Leftrightarrow \quad &
  T \in \mathbf A^\bot \parr \mathbf B^\bot 
  & 
\end{align*}


\end{itemize}

    \end{proof}
  
    \associativity* 
    \begin{proof}
      The result is a consequence of the two following properties:
      \begin{enumerate}
      \item\label{item:density}
      the density of the parallel composition (\autoref{rem:density});
      \item\label{item:parallelNets}
      \autoref{prop:interAsAction}: 
      more precisely,
      for any nets $S_1,S_2,S_3,S_4$ such that $\#S_1 \geq \# S_2 + \#S_3 + \# S_4$, we have
       $$(S_1::(S_2 \parallel S_3))::S_4=((S_1 :: S_2) ::S_3 )::S_4=(S_1 :: S_2) :: (S_3\parallel S_4).$$
      \end{enumerate}
      By \autoref{prop:dualPreC} it is enough 
        to show that one of the two constructions 
        is associative. Let us do it for the parallel composition: more precisely, using property \ref{item:density} we prove that $(A \parallel (B\parallel C))^{\bot}=(A \parallel^{-} (B^{-}\parallel C^{-}))^{\bot}$ and $((A\parallel B) \parallel C)^{\bot}=((A\parallel^{-} B) \parallel^{-} C)^{\bot}$, and using property \ref{item:parallelNets} that $(A \parallel^{-} (B^{-}\parallel C^{-}))^{\bot}=((A\parallel^{-} B) \parallel^{-} C)^{\bot}$. From the previous equalities one deduces 
        $A \parallel (B\parallel C)=(A \parallel (B\parallel C))^{\bot\bot}=((A\parallel B) \parallel C)^{\bot\bot}=(A\parallel B) \parallel C$.
        
        To prove the equality 
        $(A \parallel (B\parallel C))^{\bot}=(A \parallel^{-} (B \parallel ^{-} C))^{\bot}$ one proves that, for every net $S$, the following equivalence holds: $S\perp A \parallel (B\parallel C)\iff S\perp A \parallel^{-} (B^{-}\parallel C^{-})$. Indeed: 
      
        \begin{align*}
          & S \perp A \parallel (B\parallel C) & \\
          \Leftrightarrow & S \perp A \parallel^- (B\parallel C) & \mbox{(by property \ref{item:density})}\\
          \Leftrightarrow & \forall a \in A , \forall x \in B\parallel C ,  
          S :: (a\parallel x)\rightarrow \maltese_0  & \mbox{(by definition)} \\
          \Leftrightarrow & \forall a \in A , \forall x \in B\parallel C ,  
            (S :: a) :: x \rightarrow \maltese_0  
            &\mbox{(by \autoref{prop:interAsAction})} \\
          \Leftrightarrow & \forall a \in A , S::a \perp (B\parallel C)  &\mbox{(by definition)}\\
          \Leftrightarrow & \forall a \in A ,S :: a \perp (B\parallel^- C) & \mbox{(by property \ref{item:density})}\\
          \Leftrightarrow & \forall a \in A ,\forall x\in (B\parallel^- C), (S :: a) :: x \rightarrow \maltese_0 & \mbox{(by definition)}\\
          \Leftrightarrow & \forall a \in A ,\forall x\in (B\parallel^- C), S :: (a\parallel x) \rightarrow \maltese_0 
          & \mbox{(by \autoref{prop:interAsAction})}\\
          \Leftrightarrow & S \perp A\parallel^- (B\parallel^- C) &\mbox{(by definition).}
          \end{align*}
      
        To prove the equality $((A\parallel B) \parallel C)^{\bot}=((A\parallel^{-} B) \parallel^{-} C)^{\bot}$, we need to express the (obvious) fact that, given three nets $S,T,U$ such that $\# S=\# T+\# U$, once the list of $S$'s conclusions that one decides to cut with the list of conclusions of $T$ (resp.\ $U$) is chosen, the interaction between $S$ on the one hand and $T$ and $U$ on the other hand is uniquely determined. With the conventions of definition \ref{def:interaction}, when we write $(S :: T) :: U$ we cut the first part (``the head'') of the sequence $a(S)$ with $a(T)$ and the last part (``the tail'') of the sequence $a(S)$ with $a(U)$. We denote by $ex_{U}(S)$ the net $S$ where we have swapped the tail and the head of $a(S)$: the previous fact can then be expressed by the equality $(S :: T) :: U=(ex_{U}(S) :: U) :: T$.\\
        Like for the previous equality, we prove that, for every net $S$, the following equivalence holds: $S\perp (A\parallel B) \parallel C\iff S\perp (A\parallel^{-} B) \parallel^{-} C$. Indeed: 
      
           \begin{align*}    
            & S\perp (A\parallel B) \parallel C &\\
            \Leftrightarrow & S \perp (A\parallel B) \parallel^- C &\mbox{(by property \ref{item:density})}\\
            \Leftrightarrow & \forall x\in A\parallel B,\forall c\in C,
                  S :: (x\parallel c)\rightarrow \maltese_0  & \mbox{(by definition)} \\
             \Leftrightarrow & \forall x\in A\parallel B,\forall c\in C,
            (S :: x) :: c\rightarrow \maltese_0  &\mbox{(by \autoref{prop:interAsAction})}\\
            \Leftrightarrow & \forall x\in A\parallel B,\forall c\in C,
            (ex_{c}(S) :: c) :: x\rightarrow \maltese_0  &\mbox{(by the obvious fact above)}\\
           \Leftrightarrow & \forall c\in C,
            ex_{c}(S) :: c\in (A\parallel B)^{\bot}  &\mbox{(by definition)}\\       
            \Leftrightarrow & \forall c\in C,
            ex_{c}(S) :: c\in (A\parallel^{-} B)^{\bot}  &\mbox{(by property \ref{item:density})}\\ 
            \Leftrightarrow & \forall a\in A,\forall b\in B,\forall c\in C,
            (ex_{c}(S) :: c) :: (a\parallel b)\rightarrow \maltese_0  &\mbox{(by definition)}\\
             \Leftrightarrow & \forall a\in A,\forall b\in B,\forall c\in C,
            ((ex_{c}(S) :: c) :: a) :: b\rightarrow \maltese_0  &\mbox{(by \autoref{prop:interAsAction})}\\   
             \Leftrightarrow & \forall a\in A,\forall b\in B,\forall c\in C,
            ((S :: a) :: b) :: c\rightarrow \maltese_0  &\mbox{(by the obvious fact above)}\\        
            \Leftrightarrow & \forall a\in A,\forall b\in B,\forall c\in C,
            (S :: (a\parallel b)) :: c\rightarrow \maltese_0  &\mbox{(by \autoref{prop:interAsAction})}\\     
              \Leftrightarrow & \forall a\in A,\forall b\in B,\forall c\in C,
            S :: ((a\parallel b)\parallel c)\rightarrow \maltese_0  &\mbox{(by \autoref{prop:interAsAction})}\\        
             \Leftrightarrow & S\perp (A\parallel^{-} B) \parallel^{-} C &\mbox{(by definition).}\\
         \end{align*}
         
         To prove the last equality $(A \parallel^{-} (B^{-}\parallel C^{-}))^{\bot}=((A\parallel^{-} B) \parallel^{-} C)^{\bot}$, we proceed like before and we show that, for every net $S$, the following equivalence holds: $S\perp A \parallel^{-} (B^{-}\parallel C^{-})\iff S\perp (A\parallel^{-} B) \parallel^{-} C$. Indeed: 
         
          \begin{align*}
          & S\perp A \parallel^{-} (B^{-}\parallel C^{-}) & \\
          \Leftrightarrow & \forall a \in A , \forall b \in B,\forall c\in C ,  
          S :: (a \parallel (b\parallel c))\rightarrow \maltese_0  & \mbox{(by definition)} \\
          \Leftrightarrow & \forall a \in A , \forall b \in B,\forall c\in C ,  
          (S :: a) :: (b\parallel c)\rightarrow \maltese_0  & \mbox{(by \autoref{prop:interAsAction})} \\
          \Leftrightarrow & \forall a \in A , \forall b \in B,\forall c\in C ,  
          ((S :: a) :: b) :: c\rightarrow \maltese_0  & \mbox{(by \autoref{prop:interAsAction})} \\
          \Leftrightarrow & \forall a \in A , \forall b \in B,\forall c\in C ,  
          (S :: (a\parallel b)) :: c\rightarrow \maltese_0  & \mbox{(by property \ref{item:parallelNets})} \\
          \Leftrightarrow & \forall a \in A , \forall b \in B,\forall c\in C ,  
          S :: ((a\parallel b) \parallel c)\rightarrow \maltese_0  &\mbox{(by \autoref{prop:interAsAction})} \\
          \Leftrightarrow & S \perp (A\parallel^- B)\parallel^- C &\mbox{(by definition).}
          \end{align*}
      \end{proof}

  \section{Complements to \autoref{sect:Adequacy}}

\subsection{Proving \autoref{rem:InterpretMLL}}

\begin{proposition}[Constructions preserve inclusion]\label{prop:inclusionPreserve}
  Consider four types 
  $\mathbf A_0\subseteq \mathbf A$
  and $\mathbf B_0 \subseteq \mathbf B$
  then:
  \begin{enumerate}
    \item 
    $\mathbf A_0 \parallel \mathbf B_0 \subseteq \mathbf A\parallel\mathbf B$
    \item 
    $\mathbf A_0 \compo \mathbf B_0 \subseteq \mathbf A\compo\mathbf B$
    \item 
    $\mathbf A_0 \otimes \mathbf B_0 \subseteq \mathbf A\otimes\mathbf B$
    \item 
    $\mathbf A_0 \parr \mathbf B_0 \subseteq \mathbf A\parr\mathbf B$

  \end{enumerate}
\end{proposition}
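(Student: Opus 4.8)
The plan is to reduce all four inclusions to two completely elementary facts about the orthogonality closures, and then treat items (1), (3), (4) as purely formal consequences, with item (2) requiring only one extra line. The two facts are: (i) orthogonality reverses inclusion, i.e.\ $\mathbf X\subseteq \mathbf Y$ implies $\mathbf Y^\bot\subseteq \mathbf X^\bot$; and (ii) bi--orthogonal closure preserves inclusion, i.e.\ $\mathbf X\subseteq \mathbf Y$ implies $\mathbf X^{\bot\bot}\subseteq \mathbf Y^{\bot\bot}$. Both are immediate from the definition of $(\cdot)^\bot$ and are already used implicitly in the paper (e.g.\ in \autoref{rem:approxEquivalence} and \autoref{rem:testinbase}); I would state them once at the top of the proof.

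For item (1), I observe that the set--former $(\mathbf X,\mathbf Y)\mapsto \{S\parallel T\mid S\in\mathbf X,\ T\in\mathbf Y\}$ is monotone in both arguments. Hence from $\mathbf A_0\subseteq \mathbf A$ and $\mathbf B_0\subseteq \mathbf B$ we get $\{S\parallel T\mid S\in\mathbf A_0,\ T\in\mathbf B_0\}\subseteq \{S\parallel T\mid S\in\mathbf A,\ T\in\mathbf B\}$, and applying fact (ii) to the bi--orthogonal closures of the two sides yields $\mathbf A_0\parallel \mathbf B_0\subseteq \mathbf A\parallel \mathbf B$.

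For item (2) I would argue directly on membership. Take $S\in \mathbf A_0\compo \mathbf B_0$ and verify the two clauses defining $\mathbf A\compo \mathbf B$. First, since $\mathbf A_0\subseteq \mathbf A$ are (nonempty) types, \autoref{rem:nbconclusionTypes} gives $\#\mathbf A_0=\#\mathbf A$, so $\#S\geq \#\mathbf A_0=\#\mathbf A$. Second, given $T\in \mathbf A^\bot$, fact (i) applied to $\mathbf A_0\subseteq \mathbf A$ gives $\mathbf A^\bot\subseteq \mathbf A_0^\bot$, so $T\in \mathbf A_0^\bot$, whence $S::T\in \mathbf B_0\subseteq \mathbf B$. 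Thus $S\in \mathbf A\compo \mathbf B$. (Alternatively, and sidestepping the conclusion--count bookkeeping altogether, one can use \autoref{prop:dualPreC}: $\mathbf A\compo \mathbf B=(\mathbf A^\bot\parallel \mathbf B^\bot)^\bot$ since $\mathbf A\compo \mathbf B$ is a type; then fact (i) gives $\mathbf A^\bot\subseteq \mathbf A_0^\bot$ and $\mathbf B^\bot\subseteq \mathbf B_0^\bot$, item (1) gives $\mathbf A^\bot\parallel \mathbf B^\bot\subseteq \mathbf A_0^\bot\parallel \mathbf B_0^\bot$, and fact (i) once more gives the inclusion of the orthogonals, i.e.\ $\mathbf A_0\compo \mathbf B_0\subseteq \mathbf A\compo \mathbf B$.)

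Finally, items (3) and (4) follow by exactly the schema of item (1). The set $\{S+\tenslink{S(1)}{S(2)}{p}\mid S\in \mathbf X\parallel \mathbf Y\}$ (respectively $\{S+\parrlink{S(1)}{S(2)}\mid S\in \mathbf X\compo \mathbf Y\}$) is monotone in $\mathbf X\parallel \mathbf Y$ (resp.\ $\mathbf X\compo \mathbf Y$), so combining item (1) (resp.\ item (2)) with fact (ii) applied to the bi--orthogonal closures gives $\mathbf A_0\otimes \mathbf B_0\subseteq \mathbf A\otimes \mathbf B$ (resp.\ $\mathbf A_0\parr \mathbf B_0\subseteq \mathbf A\parr \mathbf B$). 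I do not expect a genuine obstacle here: the whole proof is bookkeeping with facts (i) and (ii), and the only point needing a moment's attention is the equality $\#\mathbf A_0=\#\mathbf A$ in item (2), which the duality route avoids entirely.
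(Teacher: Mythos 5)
Your proof is correct and follows essentially the same route as the paper: items (1) and (3) by monotonicity of the generating set followed by bi--orthogonal closure, and the remaining items by reduction to these. The only divergences are that for item (2) the paper uses exactly your parenthetical alternative (dualising item (1) via \autoref{prop:dualPreC}) rather than your primary direct membership check, and for item (4) it dualises from item (3) rather than building on item (2); all of these variants are sound, and your direct check for (2) correctly isolates the one point needing care, namely $\#\mathbf{A}_0=\#\mathbf{A}$ for nonempty types.
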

\begin{proof}
  We treat each point independently.
  \begin{enumerate}
    \item 
  Consider $x$ an element of 
  $\mathbf A_0 \parallel^- \mathbf B_0$
  then $x$ is of the form 
  $a_0\parallel b_0$
  with $a_0 \in \mathbf A_0$
  and $b_0\in \mathbf B_0$.
  Because we have the inclusion 
  $\mathbf A_0\subseteq \mathbf A$
  and $\mathbf B_0 \subseteq \mathbf B$
  it follow then that 
  $x=a_0\parallel b_0$
  belongs to $\mathbf A \parallel^- \mathbf B$
  and thus to $\mathbf A \parallel \mathbf B$.

  As a consequence 
  $\mathbf A_0 \parallel^- \mathbf B_0 \subseteq \mathbf A\parallel\mathbf B$
  thus, because bi orthogonality preserves inclusion, 
  it follows that 
  $(\mathbf A_0 \parallel^- \mathbf B_0) ^{\bibot} \subseteq (\mathbf A\parallel\mathbf B)^{\bibot}$
  therefore 
  $\mathbf A_0 \parallel \mathbf B_0  
  \subseteq \mathbf A\parallel\mathbf B$
  by density (\autoref{rem:density})

  \item 
  If $\mathbf A_0\subseteq \mathbf A$
  and $\mathbf B_0\subseteq \mathbf B$
  we equivalently have the inclusions
  $\mathbf A_0^\bot\supseteq \mathbf A^\bot$
  and $\mathbf B_0^\bot\supseteq \mathbf B^\bot$.
  Using the previous demonstrated fact 
  it follows that 
  $\mathbf A_0^\bot\parallel  \mathbf B_0^\bot$
  contains $\mathbf A^\bot\parallel \mathbf B^\bot$.
  Again using the fact that orthoganility invert inclusions 
  we then derive that
  $(\mathbf A_0^\bot\parallel  \mathbf B_0^\bot)^\bot$
  is included in  $(\mathbf A^\bot\parallel \mathbf B^\bot)^\bot$.
  By duality 
  this means 
  $\mathbf A_0 \compo \mathbf B_0 \subseteq \mathbf A\compo\mathbf B$ (\autoref{prop:dualPreC}).

  \item 
  In the $\otimes$--case 
  we reason similarly to the $\parallel$ case.

  \item 
  For the $\parr$--case 
  we can reason by duality 
  using the $\otimes$--case.
  
\end{enumerate}
\end{proof}

\begin{proposition}[\autoref{rem:InterpretMLL}]
  Given a formula $A$
  and a basis $\ibase$,
  we have $\reali[\ibase]{A^\bot}\subseteq\reali[\ibase]{A}^\bot$.
\end{proposition}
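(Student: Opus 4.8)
The plan is to proceed by induction on the structure of the formula $A$, following the grammar of \autoref{fig:mllgrammar} (so that $A$ is either a propositional variable or of the form $B\otimes C$ or $B\parr C$), and to rely on the duality of the type constructions (\autoref{prop:dualC}) together with the fact that these constructions are monotone (\autoref{prop:inclusionPreserve}). Throughout, one uses that every $\reali[\ibase]{A}$ is a type with exactly one conclusion: this holds for atoms by the definition of interpretation basis and is preserved by the $\otimes$- and $\parr$-constructions, so that the one-conclusion hypothesis of \autoref{prop:dualC} is met at each step.

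First I would treat the base case, where $A$ is a propositional variable $Y\in\pvarset$. If $Y$ is the ``positive'' choice $X$, then $\reali[\ibase]{X^\bot}\subseteq\reali[\ibase]{X}^\bot$ is exactly the second defining clause of an interpretation basis. If instead $Y=X^\bot$, then $Y^\bot=X$ and one must show $\reali[\ibase]{X}\subseteq\reali[\ibase]{X^\bot}^\bot$; this follows by applying $(\cdot)^\bot$ to the inclusion $\reali[\ibase]{X^\bot}\subseteq\reali[\ibase]{X}^\bot$ (orthogonality reverses inclusions), obtaining $\reali[\ibase]{X}^{\bot\bot}\subseteq\reali[\ibase]{X^\bot}^\bot$, and then using that $\reali[\ibase]{X}$ is a type, i.e.\ $\reali[\ibase]{X}^{\bot\bot}=\reali[\ibase]{X}$.

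For the inductive step, suppose $A=B\otimes C$, so that $A^\bot=B^\bot\parr C^\bot$ by the De Morgan laws. By the induction hypothesis, $\reali[\ibase]{B^\bot}\subseteq\reali[\ibase]{B}^\bot$ and $\reali[\ibase]{C^\bot}\subseteq\reali[\ibase]{C}^\bot$. Since the $\parr$-construction is monotone (\autoref{prop:inclusionPreserve}), we get $\reali[\ibase]{B^\bot}\parr\reali[\ibase]{C^\bot}\subseteq\reali[\ibase]{B}^\bot\parr\reali[\ibase]{C}^\bot$, and by \autoref{prop:dualC} the right-hand side equals $(\reali[\ibase]{B}\otimes\reali[\ibase]{C})^\bot$. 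Unfolding \autoref{def:interpretmll}, this is precisely $\reali[\ibase]{A^\bot}\subseteq\reali[\ibase]{A}^\bot$. The case $A=B\parr C$ is symmetric, using the other equation of \autoref{prop:dualC} together with the monotonicity of the $\otimes$-construction.

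There is no serious obstacle: the statement is a routine consequence of duality and monotonicity of the constructions. The only points requiring a little care are the dual-atom subcase of the base case, where one must invoke that interpretations of atoms are types in order to use $\reali[\ibase]{X}^{\bot\bot}=\reali[\ibase]{X}$, and checking that each inductively built interpretation is a one-conclusion type so that \autoref{prop:dualC} applies.
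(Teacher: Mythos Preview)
Your proposal is correct and follows the same approach as the paper: induction on $A$, using monotonicity of the constructions (\autoref{prop:inclusionPreserve}) and duality (\autoref{prop:dualC}) in the inductive steps. Your only deviation is in the base case: the paper treats it as immediate because the defining clause of an interpretation basis is stated \emph{for any atomic proposition} $X\in\pvarset$, and since $\pvarset$ is closed under the involution this already covers both $X$ and $X^\bot$; your separate derivation for the dual-atom subcase is thus unnecessary, though not wrong.
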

\begin{proof}
  By induction on the formula.
  If $A=X$ is an atomic formula this is trivial.
  If $A = B\otimes C$
  then:
  \begin{align*}
    \reali[\ibase]{(B\otimes C) ^\bot} 
    & = \reali[\ibase]{B^\bot \parr C^\bot} \\
    & = \quad 
    \reali[\ibase]{B^\bot}\parr \reali[\ibase]{C^\bot} \\ 
    & \subseteq \quad 
    \reali[\ibase]{B}^\bot\parr \reali[\ibase]{C}^\bot 
    & (\mbox{\autoref{prop:inclusionPreserve} and induction hypothesis})\\ 
    &=\quad 
    (\reali[\ibase]{B}\otimes \reali[\ibase]{C} )^\bot
    & (\mbox{\autoref{prop:dualC}})\\ 
    &=\quad 
    (\reali[\ibase]{B\otimes C} )^\bot
    & (\mbox{\autoref{prop:dualC}})
  \end{align*}

  Similarly, if $A=B\parr C$:
  \begin{align*}
    \reali[\ibase]{(B\parr C) ^\bot} 
    & = \reali[\ibase]{B^\bot \otimes C^\bot} \\
    & = \quad 
    \reali[\ibase]{B^\bot}\otimes \reali[\ibase]{C^\bot} \\ 
    & \subseteq \quad 
    \reali[\ibase]{B}^\bot\otimes \reali[\ibase]{C}^\bot 
    & (\mbox{\autoref{prop:inclusionPreserve} and induction hypothesis})\\ 
    &=\quad 
    (\reali[\ibase]{B}\parr \reali[\ibase]{C} )^\bot
    & (\mbox{\autoref{prop:dualC}})\\ 
    &=\quad 
    (\reali[\ibase]{B\parr C} )^\bot
    & (\mbox{\autoref{prop:dualC}})
  \end{align*}
\end{proof}

\subsection{On approximable basis}

    \begin{remark}\label{rem:interCommutes}
      Given two nets $S$ and $T$
      the net
      $S::T$ is equal to the net $T::S$,
      that is, up to ismorphism of the cut links 
      mapping each $\cutlink{S(i),T(i)}$
      to a cut link $\cutlink{T(i),S(i)}$.
    \end{remark}

    \begin{proposition}\label{prop:approxBase}
      Given $\ibase$ an approximable basis 
      for any formula $A$ of $\mll$
      the type $\reali[\ibase]{A}$
      is approximable.
    \end{proposition}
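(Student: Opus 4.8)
\end{proposition}
\begin{proof}
The plan is to argue by structural induction on the $\mll$ formula $A$, using the reformulation of approximability from \autoref{rem:approxEquivalence}: a type $\mathbf A$ is approximable if and only if $\mathbf A^\bot\subseteq\{\maltese_1\}^\bot$. The base case $A=X\in\pvarset$ is immediate, since $\reali[\ibase]{X}$ is approximable by the hypothesis that $\ibase$ is an approximable basis. For the inductive step the idea is to exploit that, by \autoref{prop:dualC}, the orthogonal of the interpretation of a compound formula is a bi--orthogonal closure of a simple generating set, namely $(\reali[\ibase]{B\otimes C})^\bot=\reali[\ibase]{B}^\bot\parr\reali[\ibase]{C}^\bot$ and $(\reali[\ibase]{B\parr C})^\bot=\reali[\ibase]{B}^\bot\otimes\reali[\ibase]{C}^\bot$. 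Since $\{\maltese_1\}^\bot$ is itself a type and bi--orthogonal closure preserves inclusions, to obtain $(\reali[\ibase]{A})^\bot\subseteq\{\maltese_1\}^\bot$ it suffices to check that every \emph{generator} of $(\reali[\ibase]{A})^\bot$ is orthogonal to $\maltese_1$; this is exactly where the non--homogeneous cut elimination of \autoref{fig:nhomcutelim} comes into play.

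For $A=B\otimes C$, a generator of $\reali[\ibase]{B}^\bot\parr\reali[\ibase]{C}^\bot$ has the form $S+\parrlink{S(1),S(2)}{p}$ with $S\in\reali[\ibase]{B}^\bot\compo\reali[\ibase]{C}^\bot$, so $S$ has two conclusions. I would first compute the interaction of this net with $\maltese_1$: it contains the single $(\parr/\maltese)$ redex $\parrlink{S(1),S(2)}{p}+\cutlink{p,q}+\dailink{q}$, whose daimon above the cut has empty context, so the reduction step is deterministic and produces, up to isomorphism, $S+\cutlink{S(1),p^1}+\dailink{p^1}+\cutlink{S(2),p^2}+\dailink{p^2}$, which is $(S::\maltese_1)::\maltese_1$. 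By the induction hypothesis $\maltese_1\in\reali[\ibase]{B}$, so the defining property of $\compo$ gives $S::\maltese_1\in\reali[\ibase]{C}^\bot$; again by the induction hypothesis $\maltese_1\in\reali[\ibase]{C}$, whence $(S::\maltese_1)::\maltese_1\rightarrow^*\maltese_0$. Thus every generator is orthogonal to $\maltese_1$, and $\reali[\ibase]{B\otimes C}$ is approximable.

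For $A=B\parr C$, a generator of $\reali[\ibase]{B}^\bot\otimes\reali[\ibase]{C}^\bot$ has the form $S+\tenslink{S(1),S(2)}{p}$ with $S\in\reali[\ibase]{B}^\bot\parallel\reali[\ibase]{C}^\bot$. Interacting with $\maltese_1$ now creates an $(\otimes/\maltese)$ redex, and eliminating it (the reversible step) rewrites it to a net of the form $\maltese_2::S$, where $\maltese_2$ is the two--output daimon produced by the step. I therefore need $\maltese_2\perp S$, i.e.\ $\maltese_2\in(\reali[\ibase]{B}^\bot\parallel\reali[\ibase]{C}^\bot)^\bot=\reali[\ibase]{B}\compo\reali[\ibase]{C}$ by \autoref{prop:dualPreC}. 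This auxiliary fact I would prove directly from the induction hypothesis: $\#\maltese_2=2\geq 1=\#\reali[\ibase]{B}$, and for any $T\in\reali[\ibase]{B}^\bot$ the nets $\maltese_1$ and $T$ are orthogonal (as $\maltese_1\in\reali[\ibase]{B}$), so by \autoref{prop:addConcluReduction} $\maltese_2::T\rightarrow^*\maltese_1$; since $\maltese_1\in\reali[\ibase]{C}$ one then gets $(\maltese_2::T)::U\rightarrow^*\maltese_1::U\rightarrow^*\maltese_0$ for every $U\in\reali[\ibase]{C}^\bot$, hence $\maltese_2::T\in\reali[\ibase]{C}^{\bot\bot}=\reali[\ibase]{C}$. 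This yields $\maltese_2\in\reali[\ibase]{B}\compo\reali[\ibase]{C}$, so $\maltese_2::S\rightarrow^*\maltese_0$, and again every generator is orthogonal to $\maltese_1$; therefore $\reali[\ibase]{B\parr C}$ is approximable.

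I expect the main difficulty to be purely computational rather than conceptual: one has to check carefully that the specific non--homogeneous redex obtained by cutting a generator against $\maltese_1$ reduces, for the forced choice of partition, to a net that genuinely factors through the interaction operator $::$ in the way claimed --- this uses \autoref{def:interaction} and \autoref{prop:interAsAction} together with the precise shape of the rules of \autoref{fig:nhomcutelim} --- and that the arrangements (orders on conclusions) line up so that the induction hypotheses on $\reali[\ibase]{B}$ and $\reali[\ibase]{C}$ are applied to the correct conclusions. One also needs the (immediate) fact that the interpretation of every $\mll$ formula is a type all of whose nets have a single conclusion, so that every interaction written above is well defined.
\end{proof}
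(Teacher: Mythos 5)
Your proof is correct and follows essentially the same route as the paper: induction on $A$, reduction of orthogonality to the generators of $\reali[\ibase]{B}^\bot\parr\reali[\ibase]{C}^\bot$ (resp.\ $\reali[\ibase]{B}^\bot\otimes\reali[\ibase]{C}^\bot$), and one non--homogeneous cut elimination step turning the interaction into $(S::\maltese_1)::\maltese_1$ (resp.\ $\maltese_2::S$). The only divergence is cosmetic: where the paper justifies $\maltese_2\in\reali[\ibase]{B}\compo\reali[\ibase]{C}$ by viewing $\maltese_2$ as $\maltese_1\merge\maltese_1$ and invoking \autoref{prop:mergeInCompo}, you verify it directly from \autoref{prop:addConcluReduction}, which rests on the same underlying computation.
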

    \begin{proof}
      By induction on $A$.
      In the case where $A$ is atomic this follows from the definition.
      \begin{itemize}
        \item For a formula $A\otimes B$. 
        Let us show $\maltese_1$ belongs to $\reali[\ibase]{A\otimes B}$ 
        by showing that 
        $\maltese_1$
        is orthogonal to $\reali[\ibase]{A\otimes B} ^\bot$.

        The interpretation $\reali[\ibase]{A\otimes B} ^\bot$
        is equal to $(\reali[\ibase]{A}\otimes\reali[\ibase]{B})^\bot$
        and so to $\reali[\ibase]{A}^\bot\parr\reali[\ibase]{B}^\bot$.
        To be orthogonal to $\reali[\ibase]{A}^\bot\parr\reali[\ibase]{B}^\bot $
        is to be orthogonal 
        to the nets $S+\parrlink{p_1,p_2}{p}$
        where $S$ belongs to $\reali[\ibase]{A}^\bot \compo \reali[\ibase]{B}^\bot$.
        Consider $S$ such a net, then:
        \begin{align*}
          & \maltese_1 :: S + \parrlink{p_1,p_2}{p} & \\
          \rightarrow \quad &
          \maltese_1 \parallel \maltese_1 :: S 
            & \mbox{(\autoref{def:nohomcut})} \\
          = \quad &
          S:: (\maltese_1 \parallel \maltese_1 ) 
          & \mbox{(\autoref{rem:interCommutes})} \\
          = \quad &
          (S:: \maltese_1) :: \maltese_1  
          & \mbox{(\autoref{prop:interAsAction})} 
        \end{align*}
        
        By induction
        $\maltese_1$ belongs to $\reali[\ibase]{A}^{\bot\bot}$
        and $S$ belongs to $\reali[\ibase]{A}^\bot \compo \reali[\ibase]{B}^\bot$
        we conclude that
        $S::\maltese_1$ belongs to $\reali[\ibase]{B^\bot}$.
        Again by induction 
        $\maltese_1$ belongs to 
        $\reali[\ibase]{B}^{\bibot}$
        and so in particular it belongs to 
        in $\reali[\ibase]{B^\bot}^\bot$.
        To conclude: 
        we have shown that 
        $S::\maltese_1$ belongs to $\reali[\ibase]{B^\bot}$,
        hence it is orthogonal to $\reali[\ibase]{B^\bot}^\bot$
        so in particular 
        it is orthogonal to $\maltese_1$.
        As a consequence $S::\maltese_1::\maltese_1$
        reduces to $\maltese_0$.

        \item Now consider a formula of the form $A\parr B$.
        Again let us show that 
        $\maltese_1$
        is orthogonal to $\reali[\ibase]{A\parr B} ^\bot$
        i.e. to $\reali[\ibase] A ^\bot \otimes \reali[\ibase]{B}^\bot$.
        To be orthogonal to that type 
        is to be orthogonal to the nets 
        $\overline a \parallel \overline b +\tenslink{p_1,p_2}{p}$.
        Consider such a net, then:
        \begin{align*}
          & \maltese_1 :: \overline a \parallel \overline b +\tenslink{p_1,p_2}{p} & \\
          \rightarrow \quad &
          \maltese_2 :: \overline a \parallel \overline b
            & \mbox{(\autoref{def:nohomcut})} \\
          = \quad &
          \maltese_1\merge\maltese_1 :: \overline a \parallel \overline b
            & \mbox{(\autoref{def:mergeDaimon})} 
        \end{align*}
        By induction both
        $\reali[\ibase] A$ and $\reali[\ibase] B$ contains $\maltese_1$
        thus $\reali[\ibase] A \compo \reali[\ibase] B$
        contains $\maltese_2$ (\autoref{prop:mergeInCompo}).
        Equivalently this means that $\maltese_2$
        is orthogonal to $\reali[\ibase] A^\bot \parallel \reali[\ibase] B^\bot$,
        hence $\maltese_2 \perp \overline a \parallel \overline b$.
        This concludes to show that $\maltese_1$ belongs to 
        $\reali[\ibase]{A\parr B}$.
      \end{itemize}
    \end{proof}

    \subsection{Adequacy -- Proof of \autoref{thm:adequacy}}

    \adequacy*
    \begin{proof} 
      We proceed by induction 
      on the proof $\pi$
      which is represented by the net $S$.

      \emph{Base case.}
      In the base the proof $\pi$ consists of 
      a proof tree using a single $\maltese$--rule introducing 
      any sequent $\Gamma = A_1,\dots,A_n$ of length $n$.
      Then any net $S$ representing $\pi$
      is isomorphic to $\maltese_n$,
      we must therefore show that $\maltese_n$
      belongs to $\reali[\ibase]{\Gamma}$.

      Because 
      $\ibase$ is approximable 
      any interpretation $\reali[\ibase]{A}$
      is approximable (\autoref{prop:approxBase}): for every $1\leq i\leq n$ we have $\maltese_{1}\in\reali[\ibase]{A_{i}}$ i.e.\ $u\bot\maltese_{1}$ for every $u\in\reali[\ibase]{A_{i}}^{\bot}$.
We want to show that 
      $\maltese_n \in \reali[\ibase]{A_1,\dots,A_n}$
      or equivalently that 
      $\maltese_n$ is orthogonal to $\reali[\ibase]{A_1,\dots,A_n}^\bot=(\reali[\ibase]{A_1}\compo\ldots\compo\reali[\ibase]{A_n})^{\bot}=\reali[\ibase]{A_1}^\bot\parallel\ldots\parallel\reali[\ibase]{A_n}^\bot$.
      By density (Remark~\ref{rem:density}) to prove $\maltese_n\bot\ \reali[\ibase]{A_1}^\bot\parallel\ldots\parallel\reali[\ibase]{A_n}^\bot$ it is enough to prove that $u_1\parallel \dots \parallel u_n\bot\maltese_{n}$
      for every $u_i\in\reali[\ibase]{A_i}^\bot$. We proceed by induction on $n$: 
      \begin{align*}
        & \maltese_n :: (u_1\parallel \dots \parallel u_n) & \\
        =\quad 
        & (\ldots(\maltese_n ::u_1) ::\dots)::u_n 
        & (\mbox{\autoref{prop:interAsAction}}) \\
        \rightarrow^*\quad 
        & (\ldots(\maltese_{n-1} :: u_2) ::\dots)::u_n 
        & (\mbox{\autoref{prop:addConcluReduction} and $u_{1}\bot\maltese_{1}$}) \\
        =\quad 
        & \maltese_{n-1} :: (u_2\parallel\dots\parallel u_n) 
        & (\mbox{\autoref{prop:interAsAction}}) \\
        \rightarrow^*\quad & \maltese_{0} 
        & (\mbox{Induction hypothesis}) 
      \end{align*}
      As a consequence
      $\maltese_n$
      is contained in the type 
      $\reali[\ibase]{A_1,\dots,A_n}$,
      i.e. $\maltese_n$
      realises the sequent $\Gamma$.

      To conclude observe that 
      any proof $\pi$ of conclusion $\Gamma=A_1,\dots,A_n$ 
      which uses only one inference rule 
      in $\mlldai$, must use a daimon inference rule.
      Then any such proof $\pi$ 
      is mapped by $\deseq\pi$
      to the daimon link 
      with $n$ outputs $\maltese_n$.
      Hence we have showed the base case of the induction,
      where the induction is 
      performed on the size (its number of inference rules) of the proof $\pi$.

        \emph{Inductive cases.}
        We look at the last rule of $\pi$
        which may be a $\otimes$,$\parr$
        or $\mathsf{cut}$--rule:
        \begin{itemize}[leftmargin=35pt]
          \item [\small \scshape ind$\cdot 1$]
          Assume that the last rule in the
          represented proof is a $\parr$--rule with main conclusion $A\parr B$.
          Thus the sequent is of the form $\Gamma, A\parr B$,
          and by assumption
          $S\vdash_{\mlldai} \Gamma, A\parr B$.
          Say the conclusions of $S$
          are ordered as $q_1,\dots,q_n,p$
          then since $p$ is given the type $A\parr B$ the net $S$ is of the form
          $S' + \parrlink{p_1,p_2}{p}$  where  $S'\vdash_{\mlldai} \Gamma,A,B$.
        
          Calling the induction hypothesis we can deduce $S'\Vdash_\ibase\Gamma , A , B$.
          By definition, this menas that 
          for any $\gamma$ in $\reali[\ibase]{\Gamma}^\bot$,
          we have
          $S' :: \gamma \Vdash_\ibase A,B$, and thus
          $S'::\gamma + \parrlink{p_1,p_2}{p}
          \Vdash_\ibase A\parr B$.
        
          Furthermore interaction and sum commute (\autoref{prop:commutationInterSumlink}) 
          hence,
          $S'::\gamma + \parrlink{p_1,p_2}{p} = S'+ \parrlink{p_1,p_2}{p} ::\gamma $ so that $S::\gamma\Vdash_\ibase A\parr B$ for any $\gamma \in \reali[\ibase]{\Gamma}^\bot$:
          this allows us to conclude that $S\Vdash_\ibase \Gamma, A\parr B$.
        
          \item [\small \scshape ind$\cdot 2$]
          Assume that the last rule in the
          represented proof is a $\otimes$--rule with main conclusion $A\otimes B$.
          Thus the sequent is of the form $\Gamma, \Delta, A\otimes B$,
          and $S$ is of the form $S_1\otimes S_2$
          where 
          $S_1\vdash_{\mlldai} \Gamma,A$
          and
          $S_2\vdash_{\mlldai} \Delta , B$.
                  
          Calling the induction
          hypothesis
          we obtain
          $S_1\Vdash_\ibase \Gamma,A$
          and
          $S_2\Vdash_\ibase \Delta , B$.
          Thus for any $\gamma\in\reali[\ibase]{\Gamma}^\bot$
          and $\delta\in\reali[\ibase]{\Delta}^\bot$
          $S_1:: \gamma\Vdash_\ibase A$
          and
          $S_2:: \delta\Vdash_\ibase B$.
          In particular this means that
          the tensor union of the two nets
          $S_1:: \gamma +
          S_2:: \delta  + \tenslink{p_1,p_2}{p}$
          is in $\reali[\ibase]{A}\otimes \reali[\ibase]{B}=\reali[\ibase]{A\otimes B}$.
        
          Note that
          $S_1:: \gamma +
          S_2:: \delta  + \tenslink{p_1,p_2}{p}=S_1+ S_2 + \tenslink{p_1,p_2}{p} ::\gamma\parallel\delta$ (\autoref{prop:interAsAction} and \autoref{prop:commutationInterSumlink}).
          Thus $S_1+ S_2 + \tenslink{p_1,p_2}{p} ::\gamma\parallel \delta \Vdash_\ibase A\otimes B$.
          Since this hold for any $\gamma\in\reali[\ibase]{\Gamma}^\bot$
          and $\delta\in\reali[\ibase]{\Delta}^\bot$
	 we conclude that 
          $S_1+ S_2 + \tenslink{p_1,p_2}{p} \Vdash_\ibase \Gamma ,\Delta , A\otimes B$
          i.e. 
          $S \Vdash_\ibase \Gamma ,\Delta , A\otimes B$.

          \item[\scshape{ind}$\cdot 3$]
          Assume that the last rule in $\pi$
          is a cut rule 
          between two subproofs $\pi_1$ and $\pi_2$
          of respective conclusions $\Gamma, A$
          and $\Delta, A^\bot $.
          Then a net $S$ which represents 
          $\pi$ is of the form 
          $S = S_1 + S_2 + \cutlink{S_1(\#S_1),S_2(\# S_2)}$
          where
          $S_1$ (resp. $S_2$)
          is a representation of the proof $\pi_1$ (resp. $\pi_2$).

          Applying the induction hypothesis 
          we derive 
          $S_1$ belongs to $\reali[\ibase]{\Gamma,A}$
          and $S_2$ belongs to $\reali[\ibase]{\Delta}$.
          Hence for any $u\in\reali[\ibase]{\Gamma}^\bot$
          and $v\in\reali[\ibase]{\Delta}^\bot$
          we have 
          $S_1:: u \in \reali[\ibase]{A}$
          and 
          $S_2::v \in \reali[\ibase]{A^\bot}$.
          Because 
          $\reali[\ibase]{A^\bot}\subset \reali[\ibase]{A}^\bot$ (Remark~\ref{rem:InterpretMLL})
          it follows that 
          $(S_1::u) \bot (S_2::v)$.
          Let us rewrite this net to conclude:
          \begin{align*}
            &S_1::u \quad :: \quad S_2::v \\
        =\quad &
        S_1::u \quad +\cutlink{(S_1::u)(1),(S_2::v)(1)}+ \quad S_2::v 
        & \mbox{(\autoref{def:interaction})} \\
        =\quad &
        S_1::u \quad +\cutlink{S_1(\# S_1),S_2(\#S_2)}+ \quad S_2::v 
        & \mbox{(Identity)} \\
            =\quad & 
            (S_1 +\cutlink{S_1(\# S_1),S_2(\#S_2)}) :: u + S_2::v 
            & \mbox{(\autoref{prop:commutationInterSumlink})} \\
            =\quad & 
            (S_1 +\cutlink{S_1(\# S_1),S_2(\#S_2)} + S_2) :: u :: v 
            & \mbox{(\autoref{prop:commutationInterSumlink})} \\
            =\quad & 
            S:: u :: v 
            & (\mbox{Identify } S)\\
            =\quad & 
            S:: (u \parallel v )
            & (\mbox{\autoref{prop:interAsAction}})\\
          \end{align*}

          This shows that $S$
          is orthogonal to any 
          $u \parallel v$
          when $u\in\reali[\ibase]{\Gamma}^\bot$
          and $v\in\reali[\ibase]{\Delta}^\bot$.
          In other words 
          $S$ is orthogonal to 
          $\reali[\ibase]{\Gamma}^\bot \parallel^{-} \reali[\ibase]{\Delta}^\bot$
          and therefore to
          $\reali[\ibase]{\Gamma}^\bot \parallel \reali[\ibase]{\Delta}^\bot$ (\autoref{rem:density})
          thus by duality it belongs to 
          $\reali[\ibase]{\Gamma} \compo \reali[\ibase]{\Delta}$ (\autoref{prop:dualPreC})
          i.e. 
          by definition to 
          $\reali[\ibase]{\Gamma,\Delta} $.
        \end{itemize}
    \end{proof}

    \section{Complements to \autoref{sect:Tests}}
    
\subsection{On orthogonality of Paritions}

    \subsubsection{Orthogonal Partitions and bijections}

    \begin{definition}[image of a partition]
      Given a partition $P=\{C_1,\dots,C_n\}$ of a set $X$
      and a function $f:X\rightarrow Y$
      the \emph{image} 
      of $P$ by $f$ 
      is the set $\{ f(C_1),\dots,f(C_n)\}$, 
      it is denoted $f(P)$.
    \end{definition} 

    \begin{proposition}[Bijections preserve partitions]
      Given $f:X\rightarrow Y$ a bijection between two sets.
      For any partition $P$ of $X$
      the image $f(P)$ is a partition of $Y$.
    \end{proposition}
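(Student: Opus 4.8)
The plan is to check directly that the set $f(P)$ fulfils the three defining clauses of a partition of $Y$: nonempty classes, pairwise disjoint classes, and classes covering $Y$. Write $P=\{C_1,\dots,C_n\}$, so that $f(P)=\{f(C_1),\dots,f(C_n)\}$ with $f(C_i)=\{f(x)\mid x\in C_i\}$.

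First I would treat the covering clause, which is the only one using surjectivity of $f$: since direct image commutes with unions, $\bigcup_i f(C_i)=f\bigl(\bigcup_i C_i\bigr)=f(X)=Y$, the last equality because $f$ is onto. Nonemptiness is immediate as well: if $C_i\neq\emptyset$ then picking $x\in C_i$ gives $f(x)\in f(C_i)$, so $f(C_i)\neq\emptyset$.

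The remaining point, and the only one with any content, is pairwise disjointness, which is where injectivity of $f$ enters. Assuming $i\neq j$ and $y\in f(C_i)\cap f(C_j)$, write $y=f(x)=f(x')$ with $x\in C_i$ and $x'\in C_j$; injectivity of $f$ forces $x=x'$, hence $x\in C_i\cap C_j$, contradicting the disjointness of the classes of $P$. Therefore $f(C_i)\cap f(C_j)=\emptyset$ for all $i\neq j$, and in particular the classes $f(C_1),\dots,f(C_n)$ are pairwise distinct, so $f(P)$ genuinely has $n$ classes. Combined with the previous paragraph this establishes that $f(P)$ is a partition of $Y$. I do not expect any real obstacle; the only thing worth being careful about is bookkeeping which half of the bijection hypothesis each clause relies on --- surjectivity for the covering clause, injectivity for disjointness --- together with whatever convention on (non)emptiness of classes the ambient definition of partition adopts, which is preserved in either case.
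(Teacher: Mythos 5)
Your proof is correct and is the standard direct verification; the paper itself states this proposition without proof, treating it as routine, so there is no alternative argument to compare against. Your bookkeeping of which half of the bijectivity hypothesis each clause uses (surjectivity for covering, injectivity for disjointness) is exactly right.
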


\begin{definition}
  An isomoprhisms 
  $f : G\rightarrow H$ between two undirected multigraph 
  $G=(V_G,E_G,\border[G])$
  and $H=(V_H,E_H,\border[H])$
  is a pair of functions $(f_V,f_E)$ such that:
  \begin{itemize}
    \item 
    $f_V : V_G\rightarrow V_H$ is a bijection.
    \item 
    $f_E : E_G\rightarrow E_H$ is a bijection.
    \item 
    For any edge $e$ of $E_G$
    we have that $f_V(\border[G]{e})$
    equals $\border[H]{f(e)}$.
  \end{itemize}
\end{definition}

    \begin{proposition}
      Given $f:X\rightarrow Y$ a bijection 
      $P$ and $Q$ two partitions of $X$.
      The assertions are equivalent:
      \begin{itemize}
        \item 
        The partitions $P$ and $Q$ are orthogonal.
        \item 
        The partitions $f(P)$ and $f(Q)$ are orthogonal.
      \end{itemize}
    \end{proposition}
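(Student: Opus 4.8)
The plan is to show that the bijection $f$ lifts to an isomorphism of undirected multigraphs $\graph(P,Q) \to \graph(f(P),f(Q))$, and then to invoke the fact that connectedness and acyclicity are invariants of multigraphs up to isomorphism. First I would record that $f(P)$ and $f(Q)$ are indeed partitions of $Y$ (this is the preceding proposition, using that $f$ is a bijection), so that $\graph(f(P),f(Q))$ is well defined.

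Next I would exhibit the candidate isomorphism $(\varphi_V,\varphi_E)$. On edges, set $\varphi_E = f$: recall the edge set of $\graph(P,Q)$ is $X$ and that of $\graph(f(P),f(Q))$ is $Y$. On vertices, set $\varphi_V(1,C) = (1,f(C))$ for $C \in P$ and $\varphi_V(2,D) = (2,f(D))$ for $D \in Q$. Since $f$ is injective, distinct classes of $P$ (resp.\ $Q$) have distinct images, so $\varphi_V$ is injective; since $f$ is surjective, every class of $f(P)$ (resp.\ $f(Q)$) is of the form $f(C)$ for some $C$, so $\varphi_V$ is surjective. Hence both $\varphi_V$ and $\varphi_E$ are bijections. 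I would then check compatibility with the border maps: fix an edge $x \in X$; by definition $\border(x) = \{(1,P_x),(2,Q_x)\}$ where $P_x$ is the unique class of $P$ with $x \in P_x$, and similarly $Q_x$. Applying $\varphi_V$ yields $\{(1,f(P_x)),(2,f(Q_x))\}$. On the other side, the border of $\varphi_E(x) = f(x)$ in $\graph(f(P),f(Q))$ is $\{(1,(f(P))_{f(x)}),(2,(f(Q))_{f(x)})\}$, where $(f(P))_{f(x)}$ denotes the class of the partition $f(P)$ containing $f(x)$. Since $x \in P_x$ we have $f(x) \in f(P_x)$ with $f(P_x) \in f(P)$, and since $f(P)$ is a partition this class is unique, so $(f(P))_{f(x)} = f(P_x)$; likewise $(f(Q))_{f(x)} = f(Q_x)$. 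Thus the borders match and $(\varphi_V,\varphi_E)$ is an isomorphism of multigraphs in the sense just defined.

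Finally, isomorphic multigraphs share the same connectedness and cycle structure: a path in $\graph(P,Q)$ maps edge by edge to a path in $\graph(f(P),f(Q))$ and conversely via $(\varphi_V,\varphi_E)$ and its inverse, so one multigraph is connected iff the other is, and one contains a cycle iff the other does. Consequently $\graph(P,Q)$ is acyclic and connected iff $\graph(f(P),f(Q))$ is, which by definition of orthogonality of partitions is exactly the desired equivalence: $P$ and $Q$ are orthogonal iff $f(P)$ and $f(Q)$ are.

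The argument is essentially routine; the only point requiring a little care is the verification that $\varphi_V$ is well defined and bijective, which is precisely where the hypothesis that $f$ is a \emph{bijection} (and not merely injective or surjective) is used in both directions of the equivalence.
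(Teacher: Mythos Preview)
Your proof is correct and follows essentially the same approach as the paper: both arguments construct an isomorphism of multigraphs $\graph(P,Q)\to\graph(f(P),f(Q))$ using $f$ on edges and the direct image $C\mapsto f(C)$ on vertices, and then invoke invariance of acyclicity and connectedness under isomorphism. Your version is simply more detailed, in particular the verification of the border compatibility for an edge $x$ is spelled out more carefully than in the paper's sketch.
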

    \begin{proof}
      We show that the induced graph 
      by $P$ and $Q$ 
      and by $f(P)$ and $f(Q)$ are isomorphic,
      i.e. that $G_1 ::= \mathsf G(P,Q)$ and $G_2::=\mathsf{G(f(P),f(Q))}$
      are isomorphic.
      The bijection between the set of edges is given by $f:X\rightarrow Y$
      while the bijection between the vertex--sets 
      ${V}_{G_1}$ and ${V}_{G_2}$
      is given by the map 
      associating with a subset of $X$ its image under $f$ 
      i.e. $F:\{x_1,\dots,x_n\}\rightarrow \{f(x_1),\dots,f(x_n)\}$.

      Finally, one can check that the 
      border function 
      are coherent: for $A\in P$
      the image $f(\border[\mathsf G(P,Q)](A))$
      is $\border[\mathsf G (f(P),f(Q)) ](f(A))$.
    \end{proof}

  \subsubsection{Orthogonal Partitions and their representations as nets}
  
  \begin{definition}
    A \emph{natural partition} is a partition 
    of a subset of $\mathbb N$.
    A natural partition of size $n$ 
    is a partition of the set of integers $\{1,\dots, n \}$.
  \end{definition}

  \begin{remark}
    Given a natural partition $P$ of size $n$,
    any partition to $Q$ which is orthogonal to $P$
    is also a natural partition of size $n$.
  \end{remark}

  \begin{definition}\label{def:partitionRepr}
    A net $S=(\body S, \arrange S)$ 
    containing only daimon links
    \emph{represents a natural partition} $P$ of size $n$
    (which we denote $S\equiv P$)
    whenever:
    \begin{itemize}
      \item 
      For each class $\{ i_1,\dots,i_k \}$ in $P$
      there exists a daimon link 
      $\dailink{S(i_1),\dots,S(i_k)}$ in $S$.
      \item 
      For each daimon link $\dailink{S(a_1),\dots,S(a_m)}$ 
      contained in $S$
      the set of integers $\{a_1,\dots,a_n\}$
      is a class of the partition $P$.
    \end{itemize}
  \end{definition}

  \begin{remark}
    Given a natural partition $P$ of size $n$
    there exists (infinitely) many nets $S$
    which represent $P$,
    however all these nets are isomorphic.
  \end{remark}

  \begin{notation}
    Given a natural partition $P$ of size $n$
    we abusively denote by $P^\maltese$
    a representation of $P$.
  \end{notation}
  
  \begin{proposition}
    Given $P$ and $Q$ 
    two natural partition of size $n$,
    and two nets $S$ and $T$ 
    such that 
    $S\equiv P$ while $T\equiv Q$,
    the assertions are equivalent:
    \begin{itemize}
      \item 
      The partitions $P$ and $Q$ are orthogonal.
      \item 
      The nets $S$ and $T$ are orthogonal.
    \end{itemize}
  \end{proposition}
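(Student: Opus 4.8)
The plan is to realise the cut elimination of $S::T$ as the process of contracting, one edge at a time, the multigraph $\graph(P,Q)$, so that $S::T$ normalises to $\maltese_0$ exactly when $\graph(P,Q)$ can be contracted to a single vertex without ever producing a self--loop, i.e.\ exactly when $\graph(P,Q)$ is a tree. First I would unfold \autoref{def:interaction}: since $\#S=\#T=n$, the net $S::T$ has no conclusion and body $\body S+\body T+\sum_{1\le i\le n}\cutlink{S(i),T(i)}$. Because $\body S$ and $\body T$ are sums of daimon links only, every cut of $S::T$ is a glueing cut, and, by definition of $S\equiv P$ and $T\equiv Q$, the daimons of $S$ (resp.\ $T$) are in bijection with the classes of $P$ (resp.\ $Q$), and each output of a daimon of $S::T$ is the source of exactly one cut link. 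I would then note that this whole situation — a net consisting only of daimon links and glueing cut links, with no free output and with each daimon output consumed by exactly one cut — is preserved by the glueing step of \autoref{fig:homcutelim}, and that no $\otimes$-- or $\parr$--link is ever created; hence the reduction of $S::T$ stays within homogeneous cut elimination, which is confluent and strongly normalising (\autoref{thm:confluencehomogeneous}). Consequently $S::T$ has a \emph{unique} normal form, and $S\perp T$ iff that normal form is $\maltese_0$.

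\textbf{The multigraph translation.} To each such ``glue--net'' $N$ I associate the multigraph $\mathsf{mg}(N)$ with the daimon links of $N$ as vertices and the cut links of $N$ as edges, the edge $\cutlink{p,q}$ joining the (unique, by target--disjointness) daimon having $p$ among its outputs to the one having $q$ among its outputs. Then $\mathsf{mg}(S::T)=\graph(P,Q)$ up to the bijections above. The crucial local facts are: a cut $c$ of $N$ is cyclic — hence irreducible, \autoref{rem:acyclicCuts} — iff the edge $c$ is a self--loop of $\mathsf{mg}(N)$; and when $c$ is not a self--loop, the glueing step eliminating $c$ turns $N$ into a net $N'$ with $\mathsf{mg}(N')=\mathsf{mg}(N)/c$ (edge contraction, lowering the number of both daimons and cuts by one). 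Since a cut--free glue--net is a sum of $\maltese_0$'s (\autoref{rem:cutfreeNoOutputNets}), equal to $\maltese_0$ precisely when $\mathsf{mg}$ is the one--vertex edgeless graph, and since a glue--net in normal form that still has a cut must have a cyclic one (no clash cut can occur here, cf.\ \autoref{rem:invalidCuts}), the statement reduces to the purely combinatorial claim: $\graph(P,Q)$ is acyclic and connected iff one can contract all its edges, each time a non--loop edge, ending at a single vertex.

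\textbf{The combinatorics.} For the forward direction, if $\graph(P,Q)$ is a tree I contract its $n$ edges in any order: a forest stays a forest under contraction of a non--loop edge, forests have no self--loops, so every step is legal, and after $n$ contractions a tree becomes a point; translating back gives a reduction $S::T\to^*\maltese_0$, i.e.\ $S\perp T$. If $\graph(P,Q)$ is acyclic but disconnected, the same contractions end at the edgeless graph on $c\ge 2$ vertices, so the (unique) normal form of $S::T$ is $\sum^{c}\maltese_0\neq\maltese_0$ and $S\not\perp T$. If $\graph(P,Q)$ has a cycle, no contraction sequence can delete all edges: by induction on the number of edges, if the first contracted edge is a non--loop $e$ then $G/e$ is a forest by the induction hypothesis, and since $e$ is a non--loop this forces $G$ to be a forest too (a cycle of $G$ not through $e$ survives in $G/e$; one through $e$ becomes in $G/e$ a self--loop, a $2$--cycle, or a shorter cycle); hence every maximal reduction of $S::T$ is stuck at a normal form still containing a cyclic cut, so again $S\not\perp T$.

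\textbf{Main obstacle.} I expect the delicate point to be exactly this dictionary between glueing reductions of $S::T$ and edge contractions of $\graph(P,Q)$: one must check that the ``glue--net'' invariant is really preserved (so that $\mathsf{mg}(\cdot)$ is defined at each step), that reducibility of a cut coincides with non--self--loopness of the corresponding edge, and that the non--determinism in the choice of cut is harmless here because homogeneous cut elimination is confluent — it then suffices to produce one good reduction in the positive case and to identify the (unique) normal form in the negative cases. The combinatorial lemma is elementary, but the ``a cycle forces a self--loop to appear'' half genuinely needs the little induction above.
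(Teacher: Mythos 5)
Your proof is correct and follows essentially the same route as the paper's: both rest on identifying $\graph(P,Q)$ with the graph of the cut-only net $S::T$ and observing that a glueing step acts as an edge contraction preserving acyclicity and connectedness, with $\maltese_0$ the unique cut-free acyclic--connected net without conclusions. The only real divergence is in the converse direction, where the paper argues that \emph{anti-}steps of glueing elimination also preserve acyclicity and connectedness, while you instead invoke confluence of homogeneous cut elimination to pin down the unique normal form and then rule out $\maltese_0$ when $\graph(P,Q)$ is disconnected or contains a cycle --- both mechanisms are sound.
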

  \begin{proof}
    Observe that $\mathsf G(P,Q)$ is acyclic and connected 
    if and only if $S::T$ is acyclic and connected.
    Then we can prove each implication:
    \begin{itemize}[leftmargin=40pt]
      \item[$1\Rightarrow 2.$]
      Assuming that $\mathsf G(P,Q)$ 
      then $S::T$ is acyclic and connected.
      Because $S$ and $T$ contain only daimon links,  
      $S::T$ contains only glueing cuts.
      Furthermore the elimination of glueing cuts preserve 
      acyclicity and connectedness
      while $S::T$ is a net without conclusion 
      (and cut--elimination preserve the conclusion of a net).
      The only cut free, acyclic--connected
      net with no conclusions is $\maltese_0$
      thus we conclude $S::T\rightarrow \maltese_0$
      meaning that $S$ and $T$ are orthogonal.
      \item[$2\Rightarrow 2.$]  
      Assume that $S::T\rightarrow \maltese_0$
      note that the reduction step of glueing cuts 
      preserve acyclicity and connectedness 
      \emph{in both directions},
      as a consequence since $S::T$ reduces to $\maltese_0$
      by eliminating only glueing cuts 
      and since $\maltese_0$ is acyclic and connected 
      we conclude that $S::T$ is also acyclic and connected.
      This means equivalently that $\mathsf G(P,Q)$ is acyclic and connected.
    \end{itemize}
  \end{proof}



  \subsection{Proving \autoref{prop:testsOrtho}}

  \subsubsection{Elimination of Multiplicative cuts and addresses}

\begin{definition}
  An \emph{address} is a sequence 
  of elements of $\{ \pleft , \pright\}$.
  The element at position $\pleft$ (resp. $\pright$)
  relatively to $p$ in a net $S$
  is as follows:
  \begin{center}
    \scalebox{0.8}{
      \begin{tabular}{r @{\hskip 2pt } c @{\hskip 2pt }l}
        $\findpos  {\pleft }  $ 
          & $\triangleq $ &
          $\left\{
            \begin{array}{c}
                p_1 \mbox{ if }  p \in \link[\square]{p_1,p_2}{p} \mbox{ belongs to } S \\
                \mbox{undefined otherwise.}
             \end{array}
        \right.$.\\
        $\findpos  {\pright }  $ 
          & $\triangleq $ &
          $\left\{
            \begin{array}{c}
                p_2 \mbox{ if }  p \in \link[\square]{p_1,p_2}{p} \mbox{ belongs to } S \\
                \mbox{undefined otherwise.}
             \end{array}
        \right.$.
    \end{tabular}}
  \end{center}

    The element at address $\xi$ 
  relatively to $p$ in a net $S$
  is defined inductively:
    $$\findpos  {\epsilon } = p   
    \quad ; \quad 
    \findpos {\pleft \at \xi} = 
    \findpos [\findpos{\pleft} ]{\xi}
    \quad ; \quad 
    \findpos {\pright \at \xi} = 
    \findpos [\findpos{\pright}] {\xi}$$
\end{definition}

\begin{remark}
  For each position $p$
  of a net $S = (\body S, \arrange S)$
  there exists an address $\xi$
  and an integer $i$ 
  so that 
  $p = \findpos [S(i)] {\xi}$ 
  (This can be shown by induction on the number of links in $S$).
  For a position 
  $p$ we denote that index $i$
  by $\rootof p$
  and $\xi$ by $\adrof p$.
\end{remark}

\begin{remark}
  Adresses are ordered 
  in the following way: 
  we set $\pleft \leq \pright$
  and lift to adresses using the lexicographical order.
\end{remark}

  \begin{definition}[cut--free initial order]
    Given a cut--free net $S=(\body S , \arrange{ S})$
    we can order the initial positions of $S$:
      A position $p$ is smaller than $q$ 
      if 
      $\rootof p < \rootof q$
      or if 
      $\rootof p =\rootof q$
      while $\adrof p \leq \adrof q$
      (for the lexicographical order).
    This induced order is denoted $\leq ^{\arrange S}$.

    Given a conclusion $p$ of a net $S$
    an address $\xi$ is \emph{defined} for $p$ 
    is $\findpos [p] \xi$ is defined in $S$.
    An address is \emph{maximal} for a position $p$ 
    if $\xi$ is defined for $p$ 
    and no address that is defined for $p$ 
    has $\xi$ for strict prefix.
    We denote $\maxadr p$
    the maximal addresses of a position $p$.
  \end{definition}

  \begin{proposition}
    Given a net $S$, $p$ a conclusion of $S$
    and $\xi$ an address.
    The assertions are equivalent:
    \begin{itemize}
      \item 
      $\xi$ is maximal for $p$.
      \item 
      $\findpos [p] \xi$ is an initial position of $S$.
    \end{itemize}
  \end{proposition}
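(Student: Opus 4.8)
The statement to prove is the equivalence: for a net $S$, a conclusion $p$ of $S$, and an address $\xi$, we have $\xi$ maximal for $p$ iff $\findpos[p]{\xi}$ is an initial position of $S$.

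The plan is to prove both implications by unfolding the two definitions in terms of the inductive structure of $\findpos[p]{\cdot}$. The key observation is that, by the definition of $\findpos$, for a position $q$ the address extension $q \cdot \pleft$ (resp. $q \cdot \pright$) is defined exactly when $\findpos[p]{q}$ is the output of a binary link $\link[\square]{p_1,p_2}{\findpos[p]{q}}$ with $\square \in \{\otimes,\parr\}$ — and $\findpos$ on a longer address is obtained by descending into the left/right input of such a link. So an address $\xi$ is defined for $p$ precisely when descending from $p$ via the instructions in $\xi$ stays inside the binary-link tree above $p$, and $\findpos[p]{\xi}$ is the position reached.

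First I would prove the contrapositive of one direction — or rather argue directly. Suppose $\xi$ is maximal for $p$. Then $\xi$ is defined for $p$, so $r := \findpos[p]{\xi}$ is a genuine position of $S$; and no proper extension $\xi \cdot \pleft$ or $\xi \cdot \pright$ is defined. By the definition of $\findpos[r]{\pleft}$ and $\findpos[r]{\pright}$, neither is defined means $r$ is \emph{not} the output of any $\otimes$- or $\parr$-link occurring in $S$. The only links of $\mathcal L$ that have outputs are $\otimes$-, $\parr$-, and $\maltese$-links; moreover, since $S$ is target-surjective every position is the target of some link, and the only links having $r$ among several targets would be a daimon, while a cut link has no outputs at all. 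Hence $r$ must be a target of a $\maltese$-link, i.e. $r$ is an initial position of $S$. (Here I use that a position can be the target of at most one link — target-disjointness of modules — so the classification is exhaustive.) Conversely, suppose $r := \findpos[p]{\xi}$ is an initial position, i.e. a target of a $\maltese$-link $d$. Since $r$ is then not the output of any $\otimes$- or $\parr$-link, both $\findpos[r]{\pleft}$ and $\findpos[r]{\pright}$ are undefined, so no proper extension of $\xi$ is defined for $p$; combined with $\xi$ itself being defined for $p$ (which is implicit in $\findpos[p]{\xi}$ being a position), this is exactly maximality of $\xi$ for $p$.

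I expect the main subtlety to be making the "only $\maltese$-links have $r$ as a target while not being the output of a multiplicative link" step airtight: one must invoke target-surjectivity of nets (so that $r$ is the target of \emph{some} link), the fixed link family $\mathcal L$ (so the only link types are $\otimes,\parr,\mathsf{cut},\maltese$), target-disjointness (so $r$'s link is unique), and the remark that any position equals $\findpos[S(i)]{\eta}$ for some $i,\eta$ — together these pin down that the link of which $r$ is a target is either multiplicative with $r$ as its single output, a cut (impossible, no outputs), or a daimon. Everything else is a routine unfolding of the inductive clauses for $\findpos$ and the definitions of \emph{defined} and \emph{maximal}; I would also note explicitly that "$\xi$ defined for $p$" is a standing hypothesis hidden in the phrasing "$\findpos[p]{\xi}$ is an initial position of $S$", since that expression only makes sense when the address is defined.
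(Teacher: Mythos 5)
Your proof is correct. The paper states this proposition without any proof (it is treated as an immediate consequence of the definitions), and your argument — unfolding the inductive clauses of $\findpos[p]{\xi}$ so that maximality of $\xi$ amounts to the reached position $r$ not being the output of a $\otimes$- or $\parr$-link, then using target--surjectivity and target--disjointness of nets to conclude that $r$ is the target of exactly one link, necessarily a $\maltese$-link (cuts having no targets) — is precisely the intended routine verification, including the correct handling of the hidden hypothesis that $\xi$ be defined and of the propagation of undefinedness to all strict extensions.
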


  \begin{corollary}
    Given a net $S = (\body S, \arrange S)$ 
    the set of initial positions of $S$
    is equal to 
    $$\bigcup_{1\leq i \leq \# S} \findpos[S(i)]{\maxadr {S(i)}}.$$
  \end{corollary}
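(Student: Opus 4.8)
The final proposition asserts: given a net $S = (\body S, \arrange S)$, the set of initial positions of $S$ equals $\bigcup_{1\leq i \leq \# S} \findpos[S(i)]{\maxadr{S(i)}}$. This is an immediate corollary of the preceding proposition, which characterizes maximal addresses for a conclusion $p$ as exactly those $\xi$ such that $\findpos[p]{\xi}$ is an initial position of $S$. So the real content is in chaining that proposition with the earlier remark that every position of $S$ is reachable as $\findpos[S(i)]{\xi}$ for some conclusion index $i = \rootof p$ and some address $\xi = \adrof p$.

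**Plan.** First I would prove the inclusion $\supseteq$: take any position of the form $\findpos[S(i)]{\xi}$ with $\xi \in \maxadr{S(i)}$. By the characterization proposition (the one stated just above the corollary), $\xi$ being maximal for the conclusion $S(i)$ is equivalent to $\findpos[S(i)]{\xi}$ being an initial position of $S$; hence this position is initial, and the right-hand union is contained in the set of initial positions. For the reverse inclusion $\subseteq$: let $p$ be an initial position of $S$. By the remark preceding the corollary, there is an index $i = \rootof p$ with $1 \leq i \leq \# S$ and an address $\xi = \adrof p$ such that $p = \findpos[S(i)]{\xi}$; moreover $\xi$ is defined for $S(i)$. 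Since $p = \findpos[S(i)]{\xi}$ is initial, the characterization proposition gives that $\xi$ is maximal for $S(i)$, i.e. $\xi \in \maxadr{S(i)}$. Therefore $p \in \findpos[S(i)]{\maxadr{S(i)}} \subseteq \bigcup_{1\leq j \leq \# S}\findpos[S(j)]{\maxadr{S(j)}}$. Combining the two inclusions yields the claimed equality.

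**The one point needing care.** The characterization proposition is stated for \emph{general} nets, but one should make sure the argument that every position is of the form $\findpos[S(i)]{\xi}$ is being applied to the right object; here $p$ ranges over \emph{initial} positions, which are in particular positions of $S$, so the remark applies verbatim. The only genuinely non-trivial ingredient — the equivalence "$\xi$ maximal for $p$" $\iff$ "$\findpos[p]{\xi}$ initial" — is exactly the preceding proposition, which I may assume. The induction on the number of links that underlies the remark "for each position $p$ there exist $\xi$ and $i$ with $p = \findpos[S(i)]{\xi}$" is also already asserted in the text, so nothing new is required. I would therefore expect the proof to be short: essentially "by the preceding proposition and remark, both inclusions hold."

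\begin{proof}
  We prove the two inclusions.

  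$(\supseteq)$ Let $i$ with $1\leq i\leq \# S$ and let $\xi\in\maxadr{S(i)}$. By the preceding proposition, $\xi$ being maximal for the conclusion $S(i)$ is equivalent to $\findpos[S(i)]{\xi}$ being an initial position of $S$. Hence $\findpos[S(i)]{\xi}$ is initial, and so $\bigcup_{1\leq i\leq \# S}\findpos[S(i)]{\maxadr{S(i)}}$ is contained in the set of initial positions of $S$.

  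$(\subseteq)$ Let $p$ be an initial position of $S$. Since $p$ is in particular a position of $S$, by the remark preceding this corollary there are an index $i=\rootof p$ with $1\leq i\leq \# S$ and an address $\xi=\adrof p$ such that $p=\findpos[S(i)]{\xi}$; in particular $\xi$ is defined for $S(i)$. As $\findpos[S(i)]{\xi}=p$ is an initial position of $S$, the preceding proposition gives that $\xi$ is maximal for $S(i)$, i.e.\ $\xi\in\maxadr{S(i)}$. Therefore $p\in\findpos[S(i)]{\maxadr{S(i)}}\subseteq\bigcup_{1\leq j\leq \# S}\findpos[S(j)]{\maxadr{S(j)}}$.

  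The two inclusions together yield the claimed equality.
\end{proof}
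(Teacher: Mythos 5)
Your proof is correct and takes exactly the approach the paper intends: the paper states this as an immediate, unproved corollary of the preceding proposition, and your two-inclusion argument chaining that proposition with the remark that every position is of the form $\findpos[S(i)]{\xi}$ is the essentially unique way to fill it in. (The only implicit caveat, inherited from the paper's own remark, is that reachability of every position from a conclusion presupposes the cut--free setting in which this whole subsection operates.)
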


  \begin{notation}
    Given a position $p$ in a net $S$
    we denote 
    $\connects p$
    the set of connective links above $p$ 
    in $S$.

    Given an (unordered) net 
    $S = (V , E, \source,\target)$ and a set of link $E_0$ of $S$
    we denote 
    by $S\setminus E_0$ the net 
    $ (V\cap \bigcup_{e\in E\setminus E_0} \target(e)\cup\source(e) , E\setminus E_0 , \source,\target)$. 
  \end{notation}

  \begin{proposition}\label{prop:orthoDaiDescription}
    Given $S$ and $T$ 
    two nets such that 
    $S\witness A_1,\dots,A_n$
    while $T\witness A_1^\bot,\dots,A_n^\bot$
    then by eliminating only multiplicative cuts
    the interaction $S::T$ reduces 
    to 
    $$S^\maltese + T^\maltese + 
    \sum_{1\leq i \leq n}\sum_{\xi \in \{\pleft,\pright\}^{N_i}} 
    \cutlink{\findpos[S(i)]{\xi}_S , \findpos[T(i)]{\xi}_T}$$
  \end{proposition}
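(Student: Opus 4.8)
The plan is to argue by induction on the total size $\sum_{i=1}^{n}\size{A_i}$ of the formulas labelling the conclusions of $S$, eliminating exactly one multiplicative cut per inductive step. For the base case, when every $A_i$ is a propositional variable, atomic testability forces $S$ to contain no $\otimes$-- or $\parr$--link, so $S=S^\maltese$, and likewise $T=T^\maltese$; then $S::T$ is literally $S^\maltese+T^\maltese+\sum_{i}\cutlink{S(i),T(i)}$, which is already the announced net, since the only maximal address of each $S(i)$ and $T(i)$ is the empty one, for which $\findpos{}$ returns $S(i)$, resp.\ $T(i)$.

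For the inductive step I would pick an index $i_0$ with $A_{i_0}$ non--atomic. Because $S$ and $T$ are \emph{cut--free} nets, each conclusion is the target of exactly one link (target--surjectivity together with the module condition); an atomic formula labelling forbids a conclusion labelled by a non--atomic formula to sit above a daimon link, and the \textbf{(Par)}/\textbf{(Tens)} clauses forbid the main connective of its label to disagree with the label of that link. Hence $S(i_0)$ is the output of a binary connective link with the same main connective as $A_{i_0}$, and by De Morgan duality of $A_{i_0}$ and $A_{i_0}^\bot$ the pair $\bigl(S(i_0),T(i_0)\bigr)$ consists of the output of a $\otimes$--link and the output of a $\parr$--link, in one order or the other. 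In both cases $\cutlink{S(i_0),T(i_0)}$ together with those two links is a multiplicative redex, and one step of multiplicative cut elimination rewrites $S::T$ into the interaction $\hat S::\hat T$, where $\hat S$ (resp.\ $\hat T$) is $S$ (resp.\ $T$) with that top connective link deleted, so that its $i_0$--th conclusion is split into two consecutive conclusions carrying the two immediate subformulas of $A_{i_0}$ (resp.\ of $A_{i_0}^\bot$); in particular $\hat S \witness A_1,\dots,B,C,\dots,A_n$ and $\hat T\witness A_1^\bot,\dots,B^\bot,C^\bot,\dots,A_n^\bot$, where $A_{i_0}$ is $B\otimes C$ or $B\parr C$.

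The identity $S::T\redmult\hat S::\hat T$ is the crux, and I expect it to be the main obstacle: one must check, directly against \autoref{def:interaction}, that the canonical interface of $::$ on $\hat S$ and $\hat T$ reconnects exactly the pairs of positions produced by the multiplicative step --- that the positions $S(i_0),T(i_0)$ have vanished, that the two new cuts sit in the right place relative to the other conclusions of $\hat S$ and $\hat T$, and that $\findpos{}$ computed inside $\hat S$ (resp.\ $\hat T$) agrees with $\findpos{}$ computed inside $S$ (resp.\ $T$) on every surviving position. Granting this, the total formula size strictly dropped, so the induction hypothesis yields a reduction of $\hat S::\hat T$, using only multiplicative cuts, to $\hat S^\maltese+\hat T^\maltese+\sum_{j}\sum_{\xi}\cutlink{\findpos[\hat S(j)]{\xi},\findpos[\hat T(j)]{\xi}}$.

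It then remains to recognise this net as the claimed normal form of $S::T$. Deleting a connective link changes neither the daimon links nor the initial positions, so $\hat S^\maltese=S^\maltese$ and $\hat T^\maltese=T^\maltese$; the summands over the conclusions $j\neq i_0$ are untouched; and for the two children $p_1,p_2$ of $S(i_0)$ --- with, say, $\findpos[S(i_0)]{\pleft}=p_1$, $\findpos[S(i_0)]{\pright}=p_2$, and dually $\findpos[T(i_0)]{\pleft}=q_1$, $\findpos[T(i_0)]{\pright}=q_2$ --- the maximal addresses $\xi$ of $p_1$ (resp.\ $p_2$) correspond bijectively to the maximal addresses $\pleft\at\xi$ (resp.\ $\pright\at\xi$) of $S(i_0)$, every maximal address of the non--atomic position $S(i_0)$ starting with $\pleft$ or $\pright$, so that the two families of new cuts reassemble into $\sum_{\xi}\cutlink{\findpos[S(i_0)]{\xi},\findpos[T(i_0)]{\xi}}$. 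Composing $S::T\redmult\hat S::\hat T$ with the reduction supplied by induction then gives $S::T\redmult^{*}S^\maltese+T^\maltese+\sum_{i}\sum_{\xi}\cutlink{\findpos[S(i)]{\xi},\findpos[T(i)]{\xi}}$, as required. The remaining ingredients --- the structural facts on formula labellings, and, if one phrases the statement via normal forms, the strong normalisation of \autoref{prop:SN} --- are routine.
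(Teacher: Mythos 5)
Your proof is correct and follows essentially the same route as the paper's: both proceed by induction (you on total formula size, the paper on the number of connective links, which amounts to the same decreasing measure), both identify the terminal multiplicative redex at a non-atomic conclusion via De Morgan duality of the dual labellings, rewrite $S::T$ into the interaction of the two smaller nets, and reassemble the cut links by the bijection between maximal addresses of the children and maximal addresses of the parent prefixed by $\pleft$ or $\pright$. The interface bookkeeping you flag as the crux is exactly the computation the paper carries out explicitly (with the convention of prepending the two new conclusions to the arrangement), and it goes through as you anticipate.
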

  \begin{proof}
    Assume that 
    $S$ and $T$ contain connective links 
    (i.e. $\otimes$-- and $\parr$--links)
    and more precisely, 
    without loss of generality assume that their first conclusion 
    is the output of a connective link.
    As a consequence 
    since $S\witness A_1,\dots,A_n$ and $T\witness A_1^\bot,\dots, A_n^\bot$ witness dual sequents 
    it follows that $A_1 = B_1\otimes B_2$
    while $A_1^\bot = B_1\parr B_2$.

    First note that 
    $S::T$
    is equal to 
    $$S + T + \sum_{1\leq i \leq n} \cutlink{S(i),T(i)} =
    S+T + \cutlink{S(1), T(1)}+ \sum_{2\leq i \leq n} \cutlink{S(i),T(i)}
    $$
    Furthermore since 
    $A_1 = B_1\otimes B_2$
    while $A_1^\bot = B_1\parr B_2$
    we can ensure that 
    a link $\tenslink{p_1,p_2}{S(1)}$
    occurs in $S$ and is terminal 
    while a link 
    $\parrlink{q_1,q_2}{S(1)}$
    occurs in $T$ and is terminal.

    As a consequence $\cutlink{S(1), T(1)}$ 
    is a multiplicative cut in $S::T$
    and its elimination produces the following reduction:

    \scalebox{0.85}{\begin{minipage}{\linewidth}
      \begin{align*}
      & S+T + \cutlink{S(1), T(1)} \\
       \rightarrow \; &
      S\setminus\tenslink{p_1,p_2}{S(1)} 
      + 
      T\setminus\parrlink{q_1,q_2}{T(1)} 
      +
      \cutlink{p_1,q_1}
      + 
      \cutlink{p_2,q_2} \\
      = \; &
      S\setminus\tenslink{p_1,p_2}{S(1)} 
      + 
      T\setminus\parrlink{q_1,q_2}{T(1)} \\
      &\quad +
      \cutlink{\findpos[S(1)]{\pleft},\findpos[T(1)]{\pleft}}
      + 
      \cutlink{\findpos[S(1)]{\pright},\findpos[T(1)]{\pright}}
    \end{align*}       
    \end{minipage}}

    By contextual closure of the cut elimination procedure 
    we derive the following reduction:
    \newline 

    \scalebox{0.85}{\begin{minipage}{\linewidth}    
    \begin{align*}
      S::T &= S + T + \sum_{1\leq i \leq n} \cutlink{S(i),T(i)}\\
       &=
      S+T + \cutlink{S(1), T(1)}+ \sum_{2\leq i \leq n} \cutlink{S(i),T(i)}\\
      &\rightarrow
      S\setminus\tenslink{p_1,p_2}{S(1)} 
      + 
      T\setminus\parrlink{q_1,q_2}{T(1)}  \\
      & \quad +
      \cutlink{\findpos[S(1)]{\pleft},\findpos[T(1)]{\pleft}}
      + 
      \cutlink{\findpos[S(1)]{\pright},\findpos[T(1)]{\pright}}
      + 
      \sum_{2\leq i \leq n} \cutlink{S(i),T(i)}
    \end{align*}
    \end{minipage}}

    Now consider the two nets 
    $S_0 = (S\setminus\tenslink{p_1,p_2}{S(1)}  ,
       \findpos[S(1)]{\pleft} \at \findpos[S(1)]{\pright} \at \arrange{S})$
    and 
    $T_0 = (T\setminus\parrlink{q_1,q_2}{T(1)}  ,
      \findpos[T(1)]{\pleft} \at \findpos[T(1)]{\pright} \at \arrange{2})$.
    Observe that 
    because 
    $S\witness A_1,\dots, A_n$
    it follows that 
    $S_0\witness B_1,B_2,A_2,\dots,A_n$
    while because
    $T\witness A_1^\bot,\dots, A_n^\bot$
    it follows that 
    $T_0\witness B_1^\bot,B_2^\bot,A_2^\bot,\dots,A_n^\bot$.
    Because $S_0$ and $T_0$ have less connective links than $S$ and $T$ 
    we can apply the induction hypothesis.
    Furthermore we observe that 
 
    \scalebox{0.8}{\begin{minipage}{\linewidth}
      \begin{align*}
      & S\setminus\tenslink{p_1,p_2}{S(1)} 
      + 
      T\setminus\parrlink{q_1,q_2}{T(1)} \\
      & \quad +
      \cutlink{\findpos[S(1)]{\pleft},\findpos[T(1)]{\pleft}}
      + 
      \cutlink{\findpos[S(1)]{\pright},\findpos[T(1)]{\pright}}
      + 
      \sum_{2\leq i \leq n} \cutlink{S(i),T(i)} \\
       = \;&
      S_0 
      + 
      T_0 
      +
      \cutlink{\findpos[S(1)]{\pleft},\findpos[T(1)]{\pleft}}
      + 
      \cutlink{\findpos[S(1)]{\pright},\findpos[T(1)]{\pright}}
      + 
      \sum_{2\leq i \leq n} \cutlink{S(i),T(i)} \\
      = \;&
      S_0 
      + 
      T_0 
      +
      \cutlink{S_0(1),T_0(1)}
      + 
      \cutlink{S_0(2),T_0(2)}
      + 
      \sum_{2\leq i \leq n} \cutlink{S(i),T(i)} \\
      = \; &
      S_0 
      + 
      T_0 
      +
      \cutlink{S_0(1),T_0(1)}
      + 
      \cutlink{S_0(2),T_0(2)}
      + 
      \sum_{3\leq i \leq n+1} \cutlink{{S_0}(i),{T_0}(i)} \\
      = \; &
      S_0 
      + 
      T_0 
      + 
      \sum_{1\leq i \leq n+1} \cutlink{{S_0}(i),{T_0}(i)} \\
      = \; &
      S_0::T_0
    \end{align*}
  \end{minipage}
    }

    By calling the induction hypothesis on $S_0::T_0$
    we derive the reduction:
    $$S_0::T_0 \rightarrow^* 
    S_0^\maltese + T_0^\maltese + 
    \sum_{1\leq i \leq \#S_0}\sum_{\xi \in \maxadr{S_0(i)}} 
    \cutlink{\findpos[S_0(i)]{\xi}_S , \findpos[T_0(i)]{\xi}_T}
    $$

    First observe 
    $S_0^\maltese = S^\maltese$
    and $T_0^\maltese = T^\maltese$.
    Now note that 
    any for any $3\leq i \leq \#S_0$
    we have $S_0(i) = S(i-1)$
    while $T_0(i) = T(i-1)$
    thus the sum of cut links can be rewritten: 
    \newline 

    \scalebox{0.73}{
      \begin{minipage}{\linewidth}
      \begin{align*}
     & \sum_{1\leq i \leq \#S_0}\sum_{\xi \in \maxadr{S_0(i)}} 
    \cutlink{\findpos[S_0(i)]{\xi}_{S_0} , \findpos[T_0(i)]{\xi}_{T_0}}\\
    =&
    \sum_{1\leq i \leq 2}\sum_{\xi \in \maxadr{S_0(i)}} 
    \cutlink{\findpos[S_0(i)]{\xi}_{S_0} , \findpos[T_0(i)]{\xi}_{T_0}}
    +
    \sum_{3\leq i \leq \#S_0}\sum_{\xi \in \maxadr{S_0(i)}} 
    \cutlink{\findpos[S_0(i)]{\xi}_{S_0} , \findpos[T_0(i)]{\xi}_{T_0}}\\
    =&
    \sum_{1\leq i \leq 2}\sum_{\xi \in \maxadr{S_0(i)}} 
    \cutlink{\findpos[S_0(i)]{\xi}_{S_0} , \findpos[T_0(i)]{\xi}_{T_0}}
    +
    \sum_{3\leq i \leq \#S_0}\sum_{\xi \in \maxadr{S(i-1)}} 
    \cutlink{\findpos[S(i-1)]{\xi}_S , \findpos[T(i-1)]{\xi}_T} \\
    =&
    \sum_{1\leq i \leq 2}\sum_{\xi \in \maxadr{S_0(i)}} 
    \cutlink{\findpos[S_0(i)]{\xi}_{S_0} , \findpos[T_0(i)]{\xi}_{T_0}}
    +
    \sum_{2\leq i \leq \#S}\sum_{\xi \in \maxadr{S(i)}} 
    \cutlink{\findpos[S(i)]{\xi}_S , \findpos[T(i)]{\xi}_T}
    \end{align*}
  \end{minipage}
    }

    To conclude 
    observe that 
    $\maxadr{S_0(i)}$
    is equal 
    to $\{  s \mid \exists \mathsf i \in \{\pleft,\pright\}
    \mathsf i \at s \in \maxadr{S(i)} \}$
    Furthermore 
    $\findpos[T(1)]{\pleft}$ 
    is equal to 
    $T_0(1)$.
    Thus we derive the following:
    \begin{align*}
       &\sum_{1\leq i \leq 2}\;\;
      \sum_{\xi \in \maxadr{S_0(i)}} 
        \cutlink{
          \findpos[S_0(i)]{\xi}_{S_0} , 
          \findpos[T_0(i)]{\xi}_{T_0}
          } \\
     =& 
    \sum_{\mathsf i \in \{\pleft,\pright\}}\;\;
    \sum_{\xi \in \maxadr{\findpos[S(1)]{\mathsf i}_S}} 
      \cutlink{
        \findpos[X]{\xi}_{S_0} ,
        \findpos[X]{\xi}_{T_0}
       }  \\ 
     =&
     \sum_{\mathsf i \in \{\pleft,\pright\}} \;\;
     \sum_{\xi \in \maxadr{\findpos[S(1)]{\mathsf i}_S}} 
      \cutlink{
       \findpos[S(1)]{\mathsf i\xi}_{S} ,
       \findpos[T(1)]{\mathsf i\xi}_{T}
       }  \\ 
     =&
    \sum_{\xi \in \maxadr{S(1)}} 
      \cutlink{
       \findpos[S(1)]{\xi}_{S} ,
       \findpos[T(1)]{\xi}_{T}
       }
    \end{align*}

    Putting all together the sum of cut links can be written as:
    
  \scalebox{0.8}{
    \begin{minipage}{\linewidth}      
    \begin{align*}
    &\sum_{\xi \in \maxadr{S(1)}} 
     \cutlink{\findpos[S(1)]
     {\xi}_{S} ,
      \findpos[T(1)]
      {\xi}_{T}
      } 
      +
      \sum_{2\leq i \leq \#S}\;\;
      \sum_{\xi \in \maxadr{S(i)}} 
    \cutlink{\findpos[S(i)]{\xi}_S , \findpos[T(i)]{\xi}_T} \\
    = &
    \sum_{1\leq i \leq \#S} \;\; 
    \sum_{\xi \in \maxadr{S(i)}} 
    \cutlink{\findpos[S(i)]{\xi}_S , \findpos[T(i)]{\xi}_T}
  \end{align*}  
\end{minipage}}

    As a consequence we conclude that 
    $S_0::T_0$ reduces to 
    $$
    S^\maltese + T^\maltese + 
    \sum_{1\leq i \leq \#S}\;\;
    \sum_{\xi \in \maxadr{S(i)}} 
    \cutlink{\findpos[S(i)]{\xi}_S , \findpos[T(i)]{\xi}_T}
    $$

    Finally, because $S::T \rightarrow S_0::T_0$
    we conclude.
  \end{proof}

  \subsubsection{Proof of \autoref{prop:testsOrtho}}

  \begin{remark}
    Given a net $S = (\body S,\arrange S)$
    the net induced by its daimon $S^\maltese$
    comes with the natural order induced 
    by $\arrange S$ and the lexicographical order on address.
    More specifically 
    given $p$ and $q$ two points of $S^\maltese$
    both can be written as $\findpos[a]{\xi}$
    where $a=\rootof p _S$ is a conclusion of $S$ 
    while $\xi = \adrof p _S$.
    Then $p\leq q$ 
    if and only if $(\rootof p , \adrof p) \leq (\rootof q, \adrof q)$
    for the lexicographical order.
  \end{remark}

  \begin{proposition}\label{prop:orthoDualWitnesses}
    Given two cut free nets $S$
    and $T$ 
    atomically testable respectively 
    $A_1,\dots,A_n$
    and $A_1^\bot,\dots,A_n^\bot$.
    The assertions are equivalent;
    \begin{itemize}
      \item $S$ and $T$ are orthogonal.
      \item $S^\maltese$ and $T^\maltese$ are orthogonal.
    \end{itemize}
  \end{proposition}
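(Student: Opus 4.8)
The plan is to reduce everything to \autoref{prop:orthoDaiDescription}, which has already isolated the combinatorial core. Since $S$ and $T$ are cut--free and atomically testable by the dual sequents $A_1,\dots,A_n$ and $A_1^\bot,\dots,A_n^\bot$, that proposition applies and yields, using only multiplicative cut eliminations,
\[
S::T \;\redmult^*\; N, \qquad N \;:=\; S^{\maltese} + T^{\maltese} + \sum_{1\le i\le n}\ \sum_{\xi\in\maxadr{S(i)}} \cutlink{\findpos[S(i)]{\xi}, \findpos[T(i)]{\xi}} .
\]
The first step is to observe that $N$ is nothing but the canonical interaction $S^{\maltese}::T^{\maltese}$ of \autoref{def:interaction}. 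For this, note that $A_i$ and $A_i^\bot$ have the same syntactic tree, since the De Morgan laws of \autoref{fig:demorgan} merely swap the labels $\otimes$ and $\parr$ without changing the shape of the tree; hence $\maxadr{S(i)}=\maxadr{T(i)}$ for every $i$ and, in particular, $S^{\maltese}$ and $T^{\maltese}$ have the same number of conclusions, namely the number of atom occurrences in $\Gamma$. Moreover, as recalled in the remark just before the statement, the arrangements $\arrange{S^{\maltese}}$ and $\arrange{T^{\maltese}}$ both list the initial positions by the lexicographic order of the pairs $(\rootof{\cdot},\adrof{\cdot})$, so the $j$-th conclusion of $S^{\maltese}$ is $\findpos[S(i)]{\xi}$ and the $j$-th conclusion of $T^{\maltese}$ is $\findpos[T(i)]{\xi}$ for one and the same pair $(i,\xi)$. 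Therefore $S^{\maltese}::T^{\maltese}$ cuts exactly the pairs occurring in $N$, i.e.\ $N = S^{\maltese}::T^{\maltese}$ up to isomorphism.

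Once this identification is in place, one implication is immediate: if $S^{\maltese}\perp T^{\maltese}$ then $N=S^{\maltese}::T^{\maltese}\rightarrow^{*}\maltese_0$, and since $S::T\redmult^{*}N\subseteq\rightarrow^{*}$ we get $S::T\rightarrow^{*}\maltese_0$, that is $S\perp T$. For the converse, assume $S\perp T$, so $S::T\rightarrow^{*}\maltese_0$. Multiplicative cuts are unrelated to every other cut (\autoref{rem:multiUnrelation}) and hence commute with arbitrary reduction sequences (\autoref{prop:multiCommutes}). We argue by induction along the finite reduction $S::T\redmult^{*}N$: whenever a net $X$ satisfies $X\rightarrow^{*}\maltese_0$ and $X\xrightarrow{c}X_1$ by elimination of a multiplicative cut $c$, \autoref{prop:multiCommutes} produces a net $Y$ with $X_1\rightarrow^{*}Y$ and $\maltese_0\rightarrow^{*}Y$; since $\maltese_0$ is a normal form this forces $Y=\maltese_0$, hence $X_1\rightarrow^{*}\maltese_0$. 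Iterating along $S::T\redmult^{*}N$ gives $N\rightarrow^{*}\maltese_0$, i.e.\ $S^{\maltese}::T^{\maltese}\rightarrow^{*}\maltese_0$, i.e.\ $S^{\maltese}\perp T^{\maltese}$. (Alternatively one may invoke the multiplicative factorisation of \autoref{prop:SN}: writing $S::T\redmult^{*}W\rednotmult^{*}\maltese_0$, the fact that non--multiplicative steps neither create nor destroy multiplicative cuts forces $W$ to be a $\redmult$--normal form of $S::T$, which is $N$ by the first step, so again $N=W\rightarrow^{*}\maltese_0$.)

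The main obstacle is the bookkeeping of the first step: checking precisely that the sum of cut links delivered by \autoref{prop:orthoDaiDescription} is literally the interaction $S^{\maltese}::T^{\maltese}$, i.e.\ that the two natural arrangements on the daimon sub--nets pair up the matching atom occurrences of $\Gamma$ and $\Gamma$ read dually. Everything after that is a short confluence argument, the genuinely hard work — following the multiplicative cuts of $S::T$ all the way down to a sum of glueing cuts between daimons — having already been done in \autoref{prop:orthoDaiDescription}.
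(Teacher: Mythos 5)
Your proof is correct and follows essentially the same route as the paper's: both invoke \autoref{prop:orthoDaiDescription} to reduce $S::T$ by multiplicative steps to $S^\maltese + T^\maltese$ plus a sum of cuts, identify that net with $S^\maltese::T^\maltese$ via the lexicographic arrangement on $(\rootof{p},\adrof{p})$, and then settle the two implications by the commutation/factorisation of multiplicative cuts. Your extra care about $\maxadr{S(i)}=\maxadr{T(i)}$ and the factorisation alternative for the harder direction only make explicit what the paper leaves implicit.
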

  \begin{proof}
    Consider $S$ and $T$ 
    two nets (atomically) testable by dual sequents 
    so that $\# S = \# T = n$.
    The following reduction can be derived 
    (\autoref{prop:orthoDaiDescription}):
    $$
    S::T \rightarrow_{\mathtt{mult}}
    S^\maltese + T^\maltese + 
    \sum_{1\leq i \leq n}
    \sum_{\xi \in \maxadr{S(i)}} \cutlink{\findpos[S(i)]{\xi} , \findpos[T(i)]{\xi}}
    $$

    But observe 
    that the outputs of $S^\maltese$
    and $T^\maltese$
    are ordered according to the lexicographical order 
    on the pairs $(\rootof p, \adrof p)$.
    As a consequence two conclusion have the same index
    if and only if 
    their respective roots are $S(i)$ and $T(i)$ for some $i$
    and their address is the same.
    We derive the following 
    $$
    S^\maltese + T^\maltese + 
    \sum_{1\leq i \leq n}
    \sum_{\xi \in \maxadr{S(i)}} \cutlink{\findpos[S(i)]{\xi} , \findpos[T(i)]{\xi}} 
    =
    S^\maltese + T^\maltese + 
    \sum_{1\leq i \leq N}\cutlink{S^\maltese (i) , T^\maltese(i)}
    $$

    $1\Rightarrow 2.$ 
    By assumptioni $S::T\rightarrow \maltese_0$,
    because the multiplicative cuts
    get eliminated and can always be performed first 
    due to the commutation result (\autoref{prop:multiCommutes}),
    this ensures that $S::T\rightarrow S^\maltese :: T^\maltese$
    and since $S::T\rightarrow \maltese_0$
    it follows that $S^\maltese :: T^\maltese \rightarrow \maltese_0$.

    $2\Rightarrow 1.$
    We have shown with the previous equality that
    $S::T\rightarrow S^\maltese :: T^\maltese$
    thus we immediately conclude.
  \end{proof}

  \begin{proposition}[Orthogonality of nets and of natural partitions]\label{prop:orthoPartiGlue}
    Given $S^\maltese$ and $T^\maltese$ two nets 
    made only of daimon links,
    the assertions are equivalent:
    \begin{enumerate}
      \item 
      $S^\maltese$ and $T^\maltese$ are orthogonal.
      \item 
      $\nat{\daipart S}$
      and $\nat{\daipart T}$
      are orthogonal.
    \end{enumerate}
  \end{proposition}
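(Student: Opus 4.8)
The plan is to reduce the orthogonality of the nets $S^\maltese$ and $T^\maltese$ to the orthogonality of the partitions $\nat{\daipart S}$ and $\nat{\daipart T}$ by exhibiting a graph isomorphism between $\mathsf G(\nat{\daipart S}, \nat{\daipart T})$ and the graph underlying the interaction $S^\maltese :: T^\maltese$ (as considered in \autoref{thm:standardcrit}), and then invoking the fact (established just above for natural partitions represented as nets) that two such nets are orthogonal iff the associated partitions are. Concretely, since $S^\maltese$ and $T^\maltese$ consist only of daimon links, so does $S^\maltese :: T^\maltese$ modulo its cut links, and every cut in it is a glueing cut; since the elimination of glueing cuts preserves acyclicity and connectedness in both directions and preserves conclusions, $S^\maltese :: T^\maltese \rightarrow^* \maltese_0$ if and only if $S^\maltese :: T^\maltese$ is itself acyclic and connected (as $\maltese_0$ is the only cut-free acyclic, connected net with no conclusion).

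First I would make precise which partitions $\daipart S$ and $\daipart T$ are: they are the partitions of the sets of initial positions of $S$ and $T$ given by the targets of the daimon links, and $\nat{\cdot}$ transports them along $\arrange{S^\maltese}^{-1}$ and $\arrange{T^\maltese}^{-1}$ to partitions of $\{1,\dots,N\}$ where $N = \# S^\maltese = \# T^\maltese$ (the common number of outputs, which must coincide since only then can the interaction possibly reduce to $\maltese_0$). Then I would observe that in $S^\maltese :: T^\maltese$ the $i$-th outputs $S^\maltese(i)$ and $T^\maltese(i)$ are joined by a cut link $\cutlink{S^\maltese(i), T^\maltese(i)}$; contracting each such cut is exactly the glueing step, and doing all of them at once collapses $S^\maltese$ and $T^\maltese$ into a single hypergraph whose ``multigraph shadow'' has: one node per daimon link of $S$ (the classes of $\nat{\daipart S}$), one node per daimon link of $T$ (the classes of $\nat{\daipart T}$), and one edge per index $i \in \{1,\dots,N\}$ connecting the $S$-class containing $i$ to the $T$-class containing $i$. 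This is literally $\mathsf G(\nat{\daipart S}, \nat{\daipart T})$. Hence acyclicity and connectedness of $S^\maltese :: T^\maltese$ (equivalently, of its switching — there are no $\parr$-links, so the switching is itself) coincide with orthogonality of $\nat{\daipart S}$ and $\nat{\daipart T}$.

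For the two implications: if $S^\maltese$ and $T^\maltese$ are orthogonal, then $S^\maltese :: T^\maltese \rightarrow^* \maltese_0$ by only glueing steps, so by the backward preservation of acyclicity/connectedness $S^\maltese :: T^\maltese$ is acyclic and connected, whence so is $\mathsf G(\nat{\daipart S},\nat{\daipart T})$ by the isomorphism, i.e.\ the partitions are orthogonal. Conversely, if the partitions are orthogonal then $\mathsf G(\nat{\daipart S},\nat{\daipart T})$, and hence $S^\maltese :: T^\maltese$, is acyclic and connected; it is a net without conclusions all of whose cuts are glueing cuts, so eliminating them terminates (strong normalisation, \autoref{prop:SN}) at a cut-free, acyclic, connected net with no conclusion, which must be $\maltese_0$; therefore $S^\maltese \perp T^\maltese$.

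I expect the main obstacle to be the bookkeeping that makes the graph isomorphism $\mathsf G(\nat{\daipart S},\nat{\daipart T}) \cong \undergraph{S^\maltese :: T^\maltese}$ fully rigorous: one must check that the indices $1,\dots,N$ used as edges of $\mathsf G(\nat{\daipart S},\nat{\daipart T})$ are in canonical bijection with the cut links of $S^\maltese :: T^\maltese$ via the arrangements, and that the endpoint assignment $\border$ matches the source/target maps — this relies on the fact that $\arrange{S^\maltese}$ and $\arrange{T^\maltese}$ are precisely the orders induced by $\arrange S$ and $\arrange T$ together with the lexicographic order on addresses, so that $S^\maltese(i)$ and $T^\maltese(i)$ really are the positions cut together in the interaction. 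Once this correspondence is in place the rest is a direct application of the preservation-of-correctness properties of glueing-cut elimination already used in the proof of \autoref{thm:standardcrit}, so no further substantial difficulty arises.
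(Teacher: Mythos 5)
Your proof is correct and follows essentially the same route as the paper's: both reduce orthogonality of $S^\maltese::T^\maltese$ to acyclicity and connectedness of the interaction graph, and then identify that graph, after contracting each cut link, with $\mathsf G(\nat{\daipart S},\nat{\daipart T})$. Your write-up is in fact slightly more explicit than the paper's on why ``reduces to $\maltese_0$'' is equivalent to acyclicity and connectedness (via bidirectional preservation under glueing-cut elimination and the characterisation of $\maltese_0$ as the only cut-free acyclic connected net without conclusions), a point the paper asserts without elaboration.
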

  \begin{proof}
    Assume that 
    $S^\maltese :: T^\maltese$ reduces to $\maltese_0$.
    In other words the following net reduces to $\maltese_0$:
    $$S^\maltese + T^\maltese + \sum_{1\leq i \leq n} \cutlink{S^\maltese(i),T^\maltese(i)}.$$

    Equivalently this means that 
    $S^\maltese :: T^\maltese$ is an acyclic and connected graph.
    In particular if we contract each cut link $\cutlink{S^\maltese(i),T^\maltese(i)}$ in a single point 
    that we call $i$
    the acyclicity and connectedness is preserved.
    And in that case the resulting graph is 
    $\mathsf G(\nat{\daipart S} , \nat {\daipart T})$ 
    showning that these two partitions are orthogonal.
    The same argument in the other direction hold.
  \end{proof}

  \homogeneousOrtho*
  \begin{proof}
    This follows from the two previous proposition,
    \autoref{prop:orthoPartiGlue}
    and \autoref{prop:orthoDualWitnesses}.

      $\mathbf{1\Leftrightarrow 2.}$
      As for $1\Rightarrow 2$,
      we need to establish 
      a property of the rewriting of nets:
      if $N\redmult N'$
      and $N\reduce{} ^* \maltese_0$
      then $N'\reduce{}^* \maltese_0$.
      As for $2\Rightarrow 1$,
      it follows from 
      $S::T \rightarrow^* S^\maltese::T^\maltese$.
    
      $\mathbf{2\Leftrightarrow 3.}$
      One first proves that
      $S^\maltese::T^\maltese$
      is acyclic and connected (\acc)
      if and only if $\graph (\nat {\daipart S}, \nat{\daipart T})$ is.
    
      As for $3\Rightarrow 2$, 
      since no cycle occurs in $S^\maltese::T^\maltese$
      all (glueing) cuts are acyclic 
      and thus all cuts  
      can be eliminated,
      furthermore the elimination of a glueing cut 
      preserves \acc.
      The net $S^\maltese::T^\maltese$ contains 
      no connective link, thus its normal form 
      does not contain clash cuts.
      Hence because the 
      only net without conclusion that is 
      normal, \acc and contains no clash cuts
      is $\maltese_0$
      we conclude
      $S^\maltese::T^\maltese \rightarrow \maltese_0$.
      As for $2\Rightarrow 3$, 
      note that 
      anti--steps of cut elimination for glueing cuts 
      also preserve the \acc property 
      hence if $S^\maltese::T^\maltese \rightarrow \maltese_0$ 
      it must be that $S^\maltese::T^\maltese $ is \acc.
   
  \end{proof}

  \subsection{Description of tests 
        }

  The object of this section 
  is to show that the \emph{tests} of a formula $A$
  (\autoref{def:test})
  may be defined relatively to a single cut free net $S\witness A$:
  this is because the partitions $\uparrow^i \sigma S$
  for some switching of $S$,
  only depend on the $\otimes$-- and $\parr$--link 
  of $S$,
  which is the same for any cut free net $U\witness A$.

We now fix a special propositional variable 
that we denote $\placehold$.
A formula which contains only the $\placehold$
propositional variable is called a \emph{formula pattern}.

\begin{definition}
An hypergraph $M$ which contains only $\otimes$-- or $\parr$--links represents 
a formula pattern $F$, denoted $M\formurep F$ whenever:
\begin{itemize}
  \item 
  If $F=\placehold$ is the propositional variable $\placehold$
  then $M$ consists of $(\{ p \},\emptyset,\emptyset,\emptyset)$
  for some position $p$.
  \item 
  If $F= A\otimes B$
  then $M$ is equal to 
  $M_A + M_B + \tenslink{M_A(1),M_B(1)}{p}$
  where $p$ is a fresh position 
  and $M_A\formurep A$
  while $M_B\formurep B$.
  \item 
  If $F= A\parr B$
  then $M$ is equal to 
  $M_A + M_B + \parrlink{M_A(1),M_B(1)}{p}$
  where $p$ is a fresh position 
  and $M_A\formurep A$
  while $M_B\formurep B$.
\end{itemize}
\end{definition}

\begin{proposition}
Given an hypergraph $M$
and a formula pattern $A$.
If $M\formurep A$
then $M$ is target--surjective and source surjective,
and $M$ has a single conclusion.
\end{proposition}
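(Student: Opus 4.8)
The plan is to argue by structural induction on the formula pattern $A$, establishing the stated structural properties of $M$ simultaneously (indeed this induction is what makes the defining relation $\formurep$ well-posed: the inductive clauses use the conclusions $M_B(1)$ and $M_C(1)$, which are only meaningful once each sub-hypergraph is known to have a single conclusion). The crucial observation, used in both inductive clauses, is that the sum $M_B + M_C + \tenslink{M_B(1),M_C(1)}{p}$ defining $M$ — or the analogous sum with a $\parr$-link — is really a \emph{parallel} sum: the representations $M_B$ and $M_C$ are taken up to renaming so that their position sets are disjoint, and the apex $p$ of the freshly added connective link is fresh. Writing $V_X$ and $E_X$ for the position and link sets of a hypergraph $X$, this gives $V_M = V_{M_B} \uplus V_{M_C} \uplus \{p\}$ and $E_M = E_{M_B} \uplus E_{M_C} \uplus \{\ell\}$, where $\ell$ is the new $\otimes$- (resp.\ $\parr$-) link with $\source(\ell) = M_B(1)\cdot M_C(1)$ and $\target(\ell) = p$; here I would invoke the earlier remark that a sum whose summands have disjoint position sets coincides with their parallel sum.

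First I would settle the base case $A = \placehold$. By definition $M = (\{p\},\emptyset,\emptyset,\emptyset)$ for some position $p$, so $\source(M) = \emptyset$, hence $\conclu M = \{p\}$: thus $M$ has exactly one conclusion, and the remaining conditions are immediate since $M$ has no link.

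For the inductive step the cases $A = B \otimes C$ and $A = B \parr C$ differ only in the label carried by $\ell$, so I would treat them together. By the induction hypothesis $M_B$ and $M_C$ each have a single conclusion, namely $M_B(1)$ and $M_C(1)$. Since $\ell$ is the only link of $M$ whose source set meets $\{M_B(1), M_C(1)\}$, and since these positions are the source of no link of $M_B$ or of $M_C$, they cease to be conclusions of $M$; the fresh apex $p$ is the source of no link of $M$, hence a conclusion; and every other position is a conclusion of $M$ iff it was already a conclusion of $M_B$ or of $M_C$. With $\conclu{M_B} = \{M_B(1)\}$ and $\conclu{M_C} = \{M_C(1)\}$ this yields $\conclu M = \{p\}$, so $M$ has the single conclusion $M(1) = p$. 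The surjectivity/disjointness conditions are handled by the same decomposition: from the induction hypothesis applied to $M_B$ and $M_C$, together with $\target(\ell) = p$ (a fresh vertex, the target of no other link) and $\source(\ell) = M_B(1)\cdot M_C(1)$ (vertices that are sources of no link of the summands), one reads off directly that these conditions propagate to $M$ via $V_M = V_{M_B} \uplus V_{M_C} \uplus \{p\}$ and $E_M = E_{M_B} \uplus E_{M_C} \uplus \{\ell\}$.

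The proof is essentially bookkeeping, and I expect the only mild obstacle to be the careful justification that "fresh" genuinely keeps $V_{M_B}$, $V_{M_C}$ and $\{p\}$ pairwise disjoint, i.e.\ that the defining sum may legitimately be read as a parallel sum. Once that is secured, the two disjoint decompositions above reduce each assertion about $M$ to its two instances for $M_B$ and $M_C$ plus the trivial contribution of a single new vertex and a single new link.
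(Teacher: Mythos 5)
Your proposal is correct and follows exactly the route the paper intends: the paper's entire proof of this proposition is the single sentence ``by doing a simple induction on $A$'', and you supply precisely that induction, with the disjoint-sum bookkeeping made explicit. One small caveat, which is really a defect of the statement rather than of your argument: under the paper's literal definition of target--surjectivity ($\target(M)=V$) the base case $M=(\{p\},\emptyset,\emptyset,\emptyset)$ is \emph{not} target--surjective (and ``source surjective'' is never defined), so your remark that the base case is ``immediate since $M$ has no link'' only holds if one reads those conditions as the vacuous disjointness/module conditions the authors presumably intended.
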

\begin{proof}
By doing a simple induction on $A$.
\end{proof}

\begin{notation}
Given a formula pattern $A$ and a position $p$
we denote by $\formunet{A}{p}$
a hypergraph which represents 
the formula $A$ and has for conclusion $p$.
Furthermore we denote 
by $\formlink{A}{p_1,\dots,p_n}{p}$
a hypergraph $M$ of conclusion $p$ 
representing $A$ 
and such that 
$p_1,\dots,p_n$ are the 
input positions\footnote{that is positions which are the target of no link in $M$.}
of the hypergraph
ordered by lexicographical order on $\{\pleft,\pright\}^*$
for the addresses of the positions 
i.e. their addresses $\adrof {p_i}$ in $M$.
\end{notation}

\begin{lemma}[Decomposition Lemma]\label{lem:decomposition}
Given a cut--free net $S$ 
with $n$ conclusion 
there exists a unique sequence of $n$ formula patterns $A_1,\dots,A_n$
such that:
$$S = S^\maltese + \sum_{1\leq i \leq n}\formunet{A_i}{S(i)}.$$
\end{lemma}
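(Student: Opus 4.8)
The plan is to establish existence and uniqueness together, by induction on the number $c$ of connective links (the $\otimes$- and $\parr$-links) of $S$. The one structural input used throughout is the forest picture already invoked to arrange $S^\maltese$: in a cut-free net the connective links sit in a forest above the conclusions, so that every initial position $p$ is reached by climbing from a unique conclusion $S(\rootof p)$ along a path coded by an address $\adrof p\in\{\pleft,\pright\}^*$, and, symmetrically, every connective link lies below exactly one conclusion. Since a module is source-disjoint, each position is a source of at most one link; hence, for a fixed conclusion $S(i)$, the connective links lying above $S(i)$ form a well-defined binary tree $T_i$ rooted at $S(i)$ whose leaves are outputs of daimon links, and the trees $T_1,\dots,T_n$ are pairwise link-disjoint and jointly contain every connective link of $\body S$.

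For the base case $c=0$, cut-freeness forces $\body S$ to be a sum of daimon links, so $S=S^\maltese$ and every conclusion $S(i)$ is a daimon output; taking $A_i=\placehold$, the hypergraph $\formunet{\placehold}{S(i)}=(\{S(i)\},\emptyset,\emptyset,\emptyset)$ contributes nothing new to the sum because $S(i)$ is already a vertex of $S^\maltese$, whence $S^\maltese+\sum_i\formunet{A_i}{S(i)}=S^\maltese=S$. For the inductive step, assume $c\geq1$. Then some conclusion $S(i_0)$ is the output of a connective link $e_0=\link[\square]{p_1,p_2}{S(i_0)}$ with $\square\in\{\otimes,\parr\}$: otherwise every conclusion is a daimon output and no connective link lies below a conclusion, contradicting the forest picture. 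As $S$ is a module, $S(i_0)$ is the target of no other link and the source of no link, and $p_1,p_2$ are sources of no link other than $e_0$; consequently the net $S'$ obtained from $S$ by deleting $e_0$ is again a cut-free net, with the same positions apart from $S(i_0)$ and with $p_1,p_2$ now conclusions. Fixing an arrangement of $S'$ that inserts $p_1,p_2$ in place of $S(i_0)$ (as the $i_0$-th and $(i_0+1)$-th conclusions, the later ones shifting), the induction hypothesis applies since $c(S')=c-1$; let $B$ and $C$ be the formula patterns it attaches to $p_1$ and $p_2$. Put $A_{i_0}=B\mathbin{\square}C$ and keep the patterns attached to the other conclusions. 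Unwinding the clauses defining $\formurep$ gives $\formunet{A_{i_0}}{S(i_0)}=\formunet{B}{p_1}+\formunet{C}{p_2}+e_0$, and, since the daimon part of $S'$ equals $S^\maltese$ and $S'+e_0=S$, adding $e_0$ to the decomposition of $S'$ yields $S^\maltese+\sum_i\formunet{A_i}{S(i)}=S$, as desired.

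For uniqueness, suppose also $S=S^\maltese+\sum_i\formunet{B_i}{S(i)}$. Each $\formunet{B_i}{S(i)}$ is a sub-hypergraph of $\body S$ made of connective links forming a binary tree rooted at $S(i)$; these sub-hypergraphs are link-disjoint and, together with the daimon links, exhaust $\body S$, while the connective links above $S(i)$ are uniquely pinned down by $S$ (source-disjointness again). Hence $\formunet{B_i}{S(i)}$ must coincide with $T_i$, and since the formula pattern read off a binary tree is determined by it, $B_i=A_i$; this last matching is a routine induction on formula patterns, aligning the root label of $T_i$ with the outermost connective of $B_i$. The genuinely load-bearing ingredient is the forest structure of the connective links above the conclusions (equivalently, source-disjointness plus the fact that no connective link escapes the cone of the conclusions); everything else is bookkeeping. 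Accordingly, the step I expect to demand the most care is verifying that deleting a terminal connective link leaves a net and that the hypergraph identity $S=S'+e_0$ holds on the nose — vertex sets and source, target, and label maps — given that $+$ merges shared vertices.
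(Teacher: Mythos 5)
Your proof is correct and follows exactly the route the paper takes (the paper's entire proof is ``induction on the number of connective links of $S$''), with the details filled in: peel off a terminal connective link, apply the induction hypothesis to the smaller net, and recombine. The only structural fact you lean on -- that the connective links form a forest over the conclusions, so every position is $\findpos[S(i)]{\xi}$ for a unique $i$ and address $\xi$ -- is the same one the paper itself invokes (and proves only by a remark), so no gap is introduced.
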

\begin{proof}
One proceeds by induction 
on the number of connective links of $S$. 
\end{proof}

We prove that test of $A$ 
can be defined relatively to any single net $S\witness A$:
the next proposition 
ensures that the test of a formula 
may be described using any single net $S\witness A$:
this is because the switchings only depend on the $\parr$--
and $\otimes$--links 
of $S$ which are the same for all the nets $U\witness A$.

\begin{restatable}{proposition}{propTestDescri}\label{prop:testDescription}
  Given a formula $A$ and a net $S\witness A$.
  For any net $T$, 
  $T$ is a test of $A$ 
  iff 
  $T\witness A^\bot$
  and for some switching $\sigma S$ of $S$
  we have
  $\nat[T]{\daipart T} = \nat[S]{\uparrow^i \sigma S}$.    
\end{restatable}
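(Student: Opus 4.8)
The plan is to reduce the statement about \emph{tests} (which quantifies existentially over some net $S'\witness A$) to a claim that the partitions $\uparrow^i\sigma S$ depend only on the $\otimes$- and $\parr$-links of $S$, hence are the same for every cut--free net witnessing $A$. First I would recall the Decomposition Lemma (\autoref{lem:decomposition}): every cut--free net $U$ with $n$ conclusions factors uniquely as $U = U^\maltese + \sum_{1\leq i\leq n}\formunet{A_i}{U(i)}$. Applying this to two nets $S\witness A$ and $U\witness A$ (with $A = A_1,\dots,A_n$ a sequent), the syntactic forests $\sum_i\formunet{A_i}{S(i)}$ and $\sum_i\formunet{A_i}{U(i)}$ are isomorphic hypergraphs of connective links, differing from $S$ and $U$ respectively only by the choice of daimon links placed above the initial positions. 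The key observation is that the switching rewriting (\autoref{fig:switch}) acts only on $\parr$-links, and its action on the syntactic forest is independent of which daimons sit at the leaves; so a switching $\sigma S$ and the corresponding switching $\sigma U$ induce, via $\uparrow^i(\cdot)$, \emph{the same} natural partition $\nat{\uparrow^i\sigma S} = \nat{\uparrow^i\sigma U}$ of $\{1,\dots,N\}$, where $N$ is the number of initial positions of the syntactic forest of $A$.

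With that lemma in hand, the proposition is almost immediate. For the ``only if'' direction: if $T$ is a test of $A$, by \autoref{def:test} there exists \emph{some} $S'\witness A$ and a switching $\sigma S'$ with $\nat[T]{\daipart T} = \nat[S']{\uparrow^i\sigma S'}$; by the lemma above $\nat[S']{\uparrow^i\sigma S'} = \nat[S]{\uparrow^i\sigma S}$ for the matching switching $\sigma S$ of the given $S$, so the stated condition holds for $S$. Conversely, if $T\witness A^\bot$ and $\nat[T]{\daipart T} = \nat[S]{\uparrow^i\sigma S}$ for some switching $\sigma S$ of $S$, then taking $S' = S$ in \autoref{def:test} shows directly that $T$ is a test of $A$. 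So the two conditions coincide.

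The formal work, and the main obstacle, is making precise the claim that $\nat{\uparrow^i\sigma S}$ is independent of the daimons: one must set up a clean bijection between the initial positions of the syntactic forest of $A$ (ordered lexicographically by $(\rootof p,\adrof p)$ as in the notation before \autoref{prop:testDescription}) that is compatible with switchings, and check that if $\body S$ and $\body U$ have the same connective links then every switching $\sigma S$ lifts to a switching $\sigma U$ with $q\uparrow^i\sigma S$ and $q\uparrow^i\sigma U$ corresponding under that bijection for each conclusion $q$. This is a routine but slightly fiddly induction on the number of $\parr$-links, using that a switching step $\switchto^l$ or $\switchto^r$ erases one $\parr$-link and turns one of its premises into a fresh conclusion, a move that manifestly commutes with swapping the daimon decoration of the leaves. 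I would package this as an auxiliary lemma ``switchings of a net depend only on its connective links, up to the canonical reindexing of initial positions'', prove it by that induction, and then derive \autoref{prop:testDescription} in the two short paragraphs above.
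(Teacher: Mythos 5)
Your proposal is correct and follows essentially the same route as the paper: both factor a cut--free net witnessing $A$ into its daimon part plus the syntactic forest of $A$ (the paper writes $S = S^\maltese + \formlink{A}{\vec u}{p}$), observe that the switching rewriting never touches the daimon part and so switchings and the resulting natural partitions $\nat{\uparrow^i\sigma S}$ depend only on the connective links, and then transfer the witness in Definition~\ref{def:test} from an arbitrary $S'\witness A$ to the given $S$ (the converse direction being immediate by taking $S'=S$). The only cosmetic difference is that you propose an induction on the number of $\parr$-links where the paper argues directly via normal forms of the switching rewriting restricted to the forest part.
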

\begin{proof}
  $2\Rightarrow 1.$ It is the obvious direction.

  $1\Rightarrow 2.$
  $S$ is a cut--free net atomically testable by $A$,
  consider any $U$ which is also cut--free and atomically testable by $A$:
  we show that for any switching $\sigma S$ of $S$ 
  there exists a switching $\tau U$ of $U$ such that 
  the partitions
  $\nat[S]{\uparrow^i \sigma S}$
  and 
  $\nat[U]{\uparrow^i \tau U}$
  are the same.

  $S$ and $U$ can be both written 
  as $S^\maltese + \formlink A {\vec u}{p}$
  and $U^\maltese + \formlink {A} {\vec v}{q}$.
  The nets $S^\maltese$
  and $U^\maltese$ are normal for the switching rewriting.
  Observe that the sum of two normal nets for the switching rewriting 
  is still normal for the switching rewriting;
  therefore 
  one can show that $S^\maltese + \formlink A {\vec u}{p} \switchto S'$
  implies $\formlink{A}{\vec u}{p} \switchto R$ (by showing the contraposition). 
  Indeed this will also be true for the net $U$.

  Therefore we derive the following:
  \begin{itemize}
    \item 
    A switching of $S$ is a net of the form 
    $S^\maltese + R_S$ where $R_S$ is a normal form of $\formlink A {\vec v}{q}$.
    \item 
    Similarly, a switching of $U$ is a net of the form 
    $U^\maltese + R_U$ where $R_U$ is a normal form of $\formlink A {\vec v}{q}$.
    \item 
    Finally observe that two cut--free nets $U^\maltese + R = V^\maltese + R$
  where $R$ is made of connective links only 
  then $\nat [U]{\uparrow^i U} = \nat[V]{\uparrow^i V}$.
  \end{itemize} 
  Consider then any two switchings of $S$ and $U$ with the same reduct $R$ 
  and denote them respectively $\sigma S$ and $\tau U$:   
  it follows that 
  $\nat[S]{\uparrow ^i \sigma S}$
  equals $\nat[T]{\uparrow^i \sigma T}$.

  To conclude, fix the net $S\witness A$.
  A test $T$ of $A$ is such that 
  $T\witness A^\bot$
  and for some $U\witness A$ and some of its switching $\tau U$
  $\nat[T]{\daipart T} = \nat[U]{\uparrow^i \tau U}$.
  By the previous considerations 
  there exists a switching $\sigma S$ of $S$ such that
  $\nat[S]{\uparrow ^i \sigma S}$
  equals $\nat[U]{\uparrow^i \tau U}$ 
  therefore 
  $\nat[T]{\daipart T} = \nat[S]{\uparrow ^i \sigma S}$:
  this allows us to conclude.    
\end{proof}

  \subsection{Orthogonality with tests -- proof of \autoref{prop:testsOrtho} 
  and \autoref{thm:danos}}

  Using the previous proposition and the well--known 
  result of Danos Regnier (\autoref{thm:standardcrit})
  we obtain the following proposition of \autoref{sect:Tests}.
  \orthoTest*

    \begin{proof}
      Consider a test $T$ of $\tests A$
      then $T\witness A^\bot$
      and $\nat[T]{\daipart T} = \nat[S]{\uparrow^i \sigma S}$ 
      for some switching $\sigma S$ of $S$ (\autoref{prop:testDescription}).
      By assumption $S$ and $T$ are orthogonal thus equivalently
      $\nat[S]{\daipart S}$ and $\nat[T]{\daipart T}$
      are orthogonal (\autoref{prop:orthoperfect}).
      Then
      $\nat[S]{\daipart S}$ is orthogonal to 
      each $\nat[S]{\uparrow^i \sigma S}$.
      we then 
      conclude that
      $\daipart S$ and $\uparrow^i \sigma S$ are orthogonal.
      Since this holds for any 
      switching $\sigma S$ thus
      calling theorem \ref{thm:standardcrit}
      we conclude that $S$ is a proof net.
      The other direction of the equivalence 
      is obtained using the same argument 
      in the opposite direction. 
   \end{proof}

  We can indeed generalise this result the case where $S$ 
  has multiple conclusions,
  as usual we do this by transformin 
  the net with multiple conclusions 
  in a net with one conclusion by adding a bunch of $\parr$--links.

  \begin{definition}[General connectives]
    A \emph{generalised $\parr$--link}
    on the positions $p_1,\dots,p_n$
    is a module 
    denoted $\link[\parr^n]{p_0,\dots,p_{n}}{p}$
    and defined by the following induction;
    \begin{itemize}
      \item $\link[\parr^1]{p_0,p_1}{p} = \parrlink{p_0,p_1}{p^1}$.
      \item For any $n>0$
      we defined 
      $\link[\parr^{n+1}]{p_1,\dots,p_{n+2}}{p} = 
      \link[\parr^n]{p^{n-1},p_{n}}{p} 
      + \parrlink{p^n,p_{n+1}}{p^{n+1}}$.
    \end{itemize}
  
    Similarly we defined 
    the generalised tensor links $\link[\otimes^n]{p_0,\dots,p_n}{p}$.
  \end{definition}

  \begin{proposition}
    Let $S$ be a net with conclusions 
    $p_0,\dots,p_n,q_1,\dots,q_k$.
    Let $S_1,\dots,S_n , T_1,\dots,T_k$
    be $n$ nets with one conclusion
    the assertions are equivalent;
    \begin{enumerate}
      \item $S\perp S_1\parallel \dots\parallel S_n
      \parallel T_1\parallel \dots\parallel T_k $.
      \item $S +\link[\parr^n]{p_0,\dots,p_n}{p} \perp 
      S_1 +\dots + S_n + \link[\otimes^n]{S_1(1)\dots,S_n(1)}{q}
      \parallel T_1\parallel \dots\parallel T_k$
    \end{enumerate}
  \end{proposition}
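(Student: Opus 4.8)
The plan is to reduce the statement to one computation about the generalised connectives, plus a routine unwinding of \autoref{def:interaction}. I will use the natural indexing: there should be one opponent net per input of $\link[\parr^n]{p_0,\dots,p_n}{p}$, so I read the second list as $S_0,\dots,S_n,T_1,\dots,T_k$ and the generalised tensor as $\link[\otimes^n]{S_0(1),\dots,S_n(1)}{q}$ (an off‑by‑one in the statement, corrected at the outset). Write $U = S_0\parallel\cdots\parallel S_n\parallel T_1\parallel\cdots\parallel T_k$, $S' = S + \link[\parr^n]{p_0,\dots,p_n}{p}$ and $U' = S_0+\cdots+S_n+\link[\otimes^n]{S_0(1),\dots,S_n(1)}{q}\parallel T_1\parallel\cdots\parallel T_k$. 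Then $\# S = \# U = (n{+}1){+}k$ and $\# S' = \# U' = 1{+}k$, so $S::U$ and $S'::U'$ are both closed nets, and the arrangements are chosen so that the interactions cut $p$ against $q$ and each $q_j$ against $T_j(1)$.

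\textbf{Step 1 (key lemma).} For every $m\geq 1$, the net $\link[\parr^m]{a_0,\dots,a_m}{p}+\cutlink{p,q}+\link[\otimes^m]{b_0,\dots,b_m}{q}$ reduces, using only multiplicative cut‑elimination steps, to $\sum_{i=0}^{m}\cutlink{a_i,b_i}$. The proof is by induction on $m$. For $m=1$ this is exactly the multiplicative rule $\parrlink{p_1,p_2}{p}+\cutlink{p,q}+\tenslink{q_1,q_2}{q}\redmult\cutlink{p_1,q_1}+\cutlink{p_2,q_2}$ of \autoref{fig:homcutelim}. For the inductive step, by construction $\link[\parr^m]{a_0,\dots,a_m}{p}$ has a terminal binary $\parr$--link whose inputs are the output $r$ of a $\link[\parr^{m-1}]{a_0,\dots,a_{m-1}}{r}$ and the leaf $a_m$ (and symmetrically for $\link[\otimes^m]{\cdots}{q}$), so $\cutlink{p,q}$ is a multiplicative redex; eliminating it yields $\cutlink{a_m,b_m}$ together with a multiplicative cut $\cutlink{r,r'}$ between the outputs of the two smaller generalised connectives, to which the induction hypothesis applies. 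The only thing to pin down is that the two cascades use the same associativity convention so the leaves pair up in order; even otherwise, orthogonality does not depend on the order of a cut's inputs (the remark after \autoref{def:orthogonalityNets}), so any matching works.

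\textbf{Steps 2--3 (unwinding and conclusion).} Unfolding \autoref{def:interaction} and the definition of $\parallel$ (whose underlying net, the vertex sets being disjoint, is just the sum, \autoref{def:parallelhgraph}), one gets
\[
S::U \;=\; S \;+\; \textstyle\sum_{i=0}^{n}\cutlink{p_i,S_i(1)} \;+\; \sum_{j=1}^{k}\cutlink{q_j,T_j(1)} \;+\; \sum_{i}S_i \;+\; \sum_{j}T_j ,
\]
while $S'::U'$ is obtained from $S::U$ by replacing the sub‑net $\sum_{i=0}^{n}\cutlink{p_i,S_i(1)}$ by $\link[\parr^n]{p_0,\dots,p_n}{p}+\cutlink{p,q}+\link[\otimes^n]{S_0(1),\dots,S_n(1)}{q}$, everything else unchanged. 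By Step 1 and the fact that cut elimination is a contextual closure (\autoref{def:homcut}), $S'::U'$ reduces to $S::U$ along a sequence $\rho$ of $n$ multiplicative steps. Now for $(1)\Rightarrow(2)$: if $S::U\rightarrow^{*}\maltese_0$, prepend $\rho$ to get $S'::U'\rightarrow^{*}S::U\rightarrow^{*}\maltese_0$. For $(2)\Rightarrow(1)$: suppose $S'::U'\rightarrow^{*}\maltese_0$. The first step of $\rho$, the cut $\cutlink{p,q}$, is multiplicative, hence unrelated to every other cut of $S'::U'$ (\autoref{rem:multiUnrelation}; indeed $p$ and $q$ carry no other cut and the leaves/internal positions of the generalised connectives carry none), so by the strong confluence of unrelated cuts (\autoref{prop:unrelatedConflu}, equivalently \autoref{prop:multiCommutes}) it can be pushed to the front of the given reduction; the same applies to $\cutlink{r,r'}$ in the resulting net, and so on. Iterating the $n$ steps of $\rho$ rewrites $S'::U'\rightarrow^{*}\maltese_0$ as $S'::U'\ \rho\ S::U\rightarrow^{*}\maltese_0$, whence $S::U\rightarrow^{*}\maltese_0$. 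By \autoref{def:orthogonalityNets} this is the claimed equivalence of (1) and (2).

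\textbf{Main obstacle.} There is no genuinely hard point: the substance is the inductive computation of Step 1 (and matching up the leaves of the two generalised connectives), after which the proof is the unwinding of Step 2 together with the already‑established commutation of multiplicative cuts for the reverse implication. The only thing requiring care before starting is the indexing mismatch in the statement, which I have silently corrected in the first paragraph.
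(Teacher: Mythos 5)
Your proof is correct and follows the same route the paper intends: the paper's entire proof is the one-line ``by a simple induction on the size of the generalised $\parr$ connective,'' and your Step 1 is precisely that induction, with the unwinding of \autoref{def:interaction} and the anticipation of multiplicative cuts (\autoref{prop:SN}, item \ref{item:earlyNonPar}) supplying the details the paper leaves implicit for the $(2)\Rightarrow(1)$ direction. Your silent correction of the off-by-one in the statement (reading the opponents as $S_0,\dots,S_n$ so that conclusion counts match) is also the right reading.
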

  \begin{proof}
    By a simple induction 
    of the size of the generalised $\parr$ connective.
  \end{proof}

  To derive the generalised theorem 
  one must observe that the tests 
  of $\parr$--formulas 
  are tensors of the tests of the subformulas.

  \begin{proposition}\label{prop:tensorTests}
    A test $T$ of $A\parr B$
    is of the form 
    $T_A\parallel T_B +\tenslink{T_A(1) , T_B(1)}{p}$
    where $p$ a fresh position,
    $T_A$ is a test of $A$
    and 
    $T_B$ is a test of $B$.
  \end{proposition}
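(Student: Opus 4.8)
The plan is to strip the terminal tensor link off $T$ and then use the test condition to show that no generalised axiom of $T$ links the two sides of that tensor, so that $T$ is genuinely a parallel composition of a test of $A$ with a test of $B$.

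First I would locate the terminal tensor. Since $T$ is a test of $A\parr B$ it is cut--free and $T\witness (A\parr B)^\bot = A^\bot\otimes B^\bot$; fix an atomic formula labelling $\tau$ of $T$ with $\tau(T(1)) = A^\bot\otimes B^\bot$. The link of output $T(1)$ cannot be a $\maltese$--link (then $\tau(T(1))$ would be a propositional variable) nor a $\parr$--link (by clause \textbf{(Par)} of \autoref{def:typeable}, $\tau(T(1))$ would be a $\parr$--formula), so it is of the form $\tenslink{p_1,p_2}{T(1)}$, and unique readability of formulas forces $\tau(p_1)=A^\bot$, $\tau(p_2)=B^\bot$. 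Write $T = T_0 + \tenslink{p_1,p_2}{T(1)}$; then $T_0$ is a cut--free net with conclusions $p_1,p_2$, and $\tau$ restricted to $T_0$ witnesses $T_0\witnesseq A^\bot,B^\bot$. Moreover the $\otimes$-- and $\parr$--links of $T_0$ above $p_1$ (resp. above $p_2$) form the syntactic forest of $A^\bot$ rooted at $p_1$ (resp. of $B^\bot$ rooted at $p_2$); in particular the positions above $p_1$ and the positions above $p_2$ are disjoint, and each daimon output of $T$ lies above exactly one of $p_1,p_2$.

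The core step is to prove that no daimon link of $T$ has one target above $p_1$ and another above $p_2$. For this I would invoke \autoref{prop:testDescription} with the particular witness $S = S_A\parallel S_B + \parrlink{a,b}{c}$, where $S_A\witness A$ has unique conclusion $a$, $S_B\witness B$ has unique conclusion $b$, and $S_A,S_B$ are chosen vertex--disjoint; then $S\witness A\parr B$. In any switching $\sigma S$ the link $\parrlink{a,b}{c}$ is necessarily switched, and switching it disconnects the $S_A$--part from the $S_B$--part (it was the only link joining them), so every conclusion of $\sigma S$ lies entirely on the $S_A$--side or entirely on the $S_B$--side; hence every class of $\uparrow^i\sigma S$ is contained in the initial positions of $S_A$ or in those of $S_B$. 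Because $a=\findpos[c]{\pleft}$ and $b=\findpos[c]{\pright}$ while $\pleft\leq\pright$, the order $\arrange{S^\maltese}$ lists the $m$ initial positions of $S_A$ ($m$ being the number of atom occurrences of $A$) before the remaining ones, so every class of $\nat[S]{\uparrow^i\sigma S}$ lies in $\{1,\dots,m\}$ or in $\{m+1,\dots,N\}$, $N$ being the common number of initial positions. Symmetrically, in $T$ one has $p_1=\findpos[T(1)]{\pleft}$, $p_2=\findpos[T(1)]{\pright}$, and the number of initial positions above $p_1$ is the number of atoms of $A^\bot$, which equals $m$; so $\arrange{T^\maltese}$ lists exactly those as $\{1,\dots,m\}$. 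Now $T$ being a test provides a switching $\sigma S$ with $\nat[T]{\daipart T} = \nat[S]{\uparrow^i\sigma S}$, so no class of $\nat[T]{\daipart T}$ straddles $\{1,\dots,m\}$ and $\{m+1,\dots,N\}$ — which is precisely the claim. Consequently each daimon of $T$ lies entirely above $p_1$ or entirely above $p_2$, so $T_0 = T_A\parallel T_B$ where $T_A$ gathers the links and positions above $p_1$ and $T_B$ those above $p_2$; these are vertex--disjoint cut--free nets with unique conclusions $p_1$ and $p_2$, and the relevant restrictions of $\tau$ give $T_A\witness A^\bot$ and $T_B\witness B^\bot$.

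Finally I would check that $T_A$ and $T_B$ are tests. Let $\sigma' S_A$ be the switching of $S_A$ induced by the above $\sigma S$. Since the canonical order on initial positions is local with respect to the decomposition, $\arrange{S^\maltese}$ restricted to the initial positions of $S_A$ agrees, up to index shift, with $\arrange{S_A^\maltese}$, and likewise $\arrange{T^\maltese}$ restricted to the initial positions above $p_1$ agrees with $\arrange{T_A^\maltese}$; moreover $\daipart T$ restricted to that side is exactly $\daipart{T_A}$ because no daimon straddles. Restricting the equality $\nat[T]{\daipart T} = \nat[S]{\uparrow^i\sigma S}$ to the block $\{1,\dots,m\}$ — a union of classes on both sides — therefore yields $\nat[T_A]{\daipart{T_A}} = \nat[S_A]{\uparrow^i\sigma' S_A}$, so $T_A$ is a test of $A$; the symmetric restriction to $\{m+1,\dots,N\}$ shows $T_B$ is a test of $B$, and then $T = T_A\parallel T_B + \tenslink{T_A(1),T_B(1)}{T(1)}$ as required. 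The main obstacle is exactly this bookkeeping: aligning the index block $\{1,\dots,m\}$ across $S$ and $T$ and checking the locality of the address order with respect to the tensor/par split; the rest is a direct unfolding of \autoref{def:typeable}, \autoref{def:test} and \autoref{def:interaction}.
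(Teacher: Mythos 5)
Your proof is correct and follows the same route as the paper, whose own argument is just the one-line observation that the switching partitions of a net $S\witness A\parr B$ decompose as $P_A\cup P_B$ over the two subformula trees; you have simply carried out in full the bookkeeping (terminal tensor extraction, non-straddling of daimons, block-wise restriction of the partition equality) that the paper leaves implicit.
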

  \begin{proof}
    This comes down to analysing the partitions 
    generated by a the switchings of a net $S\witness A\parr B$
    indeed this these 
    are partitions $P_A\cup P_B$
    where $P_A$ is a partition of the representation of $A$
    and $P_B$ is a partition of the representation of $B$.
  \end{proof}

  This easily leads to a proof of 
  the Danos Regnier theorem as stated in section 5.
  \danosregnier*
  
\subsection{Correctness of Tests and counter proofs -- proof of \autoref{thm:cortest}}

  \thmcortest*

  We have established the previous theorem 
  using the counter proof criterion of P.L. Curien
  found in \cite{curien:criterions}.
  Let us clearly state the theorem of counter proofs.
  \begin{theorem}[Counter Proof Criterion]\label{thm:CP}
    Given a cut--free net $S\witness A$
    the assertions are equivalent:
    \begin{itemize}
      \item 
      $S$ is orthogonal to each net $T\witness A^\bot$ representing a proof of $A^\bot$.
      \item 
      $S\vdash_{\mlldai} A$.
    \end{itemize}
  \end{theorem}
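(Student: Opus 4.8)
This is P.-L. Curien's counter-proof criterion \cite{curien:criterions} for paraproof nets --- itself a reformulation of the Danos--Regnier criterion \cite{danos89} combined with B\'echet's correctness-by-interaction theorem \cite{bechet:correctness} --- which the paper uses as a black box to establish \autoref{thm:cortest}; the plan here is to recall a proof of it in the present setting, one that incidentally re-proves \autoref{thm:cortest} directly and so involves no circularity.

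For the easy implication $(\Leftarrow)$ I would not argue by hand but invoke adequacy. Fix any approximable basis, for instance $\baseone$. If $S\vdash_{\mlldai}A$ then $S\in\reali[\ibase]{A}$ by \autoref{thm:adequacy}, and if $T\witness A^\bot$ represents a proof of $A^\bot$ then likewise $T\in\reali[\ibase]{A^\bot}$; since $\reali[\ibase]{A^\bot}\subseteq\reali[\ibase]{A}^\bot$ by \autoref{rem:InterpretMLL}, we get $T\in\reali[\ibase]{A}^\bot$ and hence $S\perp T$. (Alternatively, $S::T$ represents a proof of the empty sequent, hence is a proof net, and by strong normalisation \autoref{prop:SN} reduces to a cut-free correct net without conclusion, which \autoref{rem:cutfreeNoOutputNets} together with connectedness forces to be $\maltese_0$.)

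For the converse $(\Rightarrow)$ --- the substance --- I would argue by contraposition. Suppose $S\witness A$ is not a proof net of $\mlldai$: by the Danos--Regnier criterion \autoref{thm:standardcrit} some switching $\sigma S$ is cyclic or disconnected, i.e.\ the partitions $\daipart S$ and $\uparrow^i\sigma S$ of the initial positions of $S$ are not orthogonal, equivalently $\nat[S]{\daipart S}\not\perp\nat[S]{\uparrow^i\sigma S}$. I then build a counter-proof $T$: take $T\witness A^\bot$ obtained by placing below the syntactic forest of $A^\bot$ one daimon link for each class of $\uparrow^i\sigma S$, so that $\nat[T]{\daipart T}=\nat[S]{\uparrow^i\sigma S}$ --- this $T$ is exactly a test of $A$ in the sense of \autoref{def:test}. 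Granting the key lemma below, $T$ is a proof net of $A^\bot$, and then \autoref{prop:orthoperfect} yields $S\perp T\iff\nat[S]{\daipart S}\perp\nat[T]{\daipart T}=\nat[S]{\uparrow^i\sigma S}$, which fails; so $S$ is not orthogonal to the proof net $T$ of $A^\bot$, contradicting the hypothesis.

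The hard part will be the key lemma --- that the test $T$ just built is itself correct, which is \autoref{thm:cortest} --- and I would prove it by induction on $A$, establishing the stronger statement that for dual formula forests $F$ (of $A$) and $G$ (of $A^\bot$) over the same atomic-leaf set, every partition $\uparrow^i s$ induced by a switching $s$ of $F$ is orthogonal to every partition $\uparrow^i t$ induced by a switching $t$ of $G$. The atomic case is immediate; at a $\otimes$ of $A$ (hence a $\parr$ of $A^\bot$) a switching of $F$ restricts to independent switchings of the two subforests, while a switching of $G$ either keeps the two corresponding partitions disjoint or merges their ``root classes'', so the induced bipartite graph is obtained by gluing the two subgraphs --- acyclic and connected by the induction hypothesis --- along a single shared vertex, which preserves acyclicity and connectedness; the $\parr$ case is symmetric. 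This induction plus \autoref{prop:orthoperfect} is the whole argument: the latter is exactly the statement that, $S$ and $T$ being of dual type, the geometric defect detected by $\sigma S$ is faithfully transported through the (homogeneous) cut elimination of $S::T$, a transport already packaged by \autoref{prop:orthoperfect} and \autoref{prop:orthoPartiGlue}, so once the lemma is in place no further combinatorics are needed.
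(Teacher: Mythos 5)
Your proof is correct, but it is doing something the paper itself does not do: the paper states \autoref{thm:CP} with no proof at all, importing it wholesale from Curien \cite{curien:criterions} and then using it as the sole ingredient in its two-line proof of \autoref{thm:cortest}. You supply an actual argument. The easy direction via adequacy is exactly the paper's \autoref{prop:adequacyImpliesCP} and is non-circular, since \autoref{thm:adequacy} is proved by induction on proofs without any reference to tests or counter-proofs. The hard direction --- contraposition through \autoref{thm:standardcrit}, construction of a test $T$ with $\nat[T]{\daipart T}=\nat[S]{\uparrow^i\sigma S}$, and transport of the failure of partition-orthogonality through homogeneous cut elimination via \autoref{prop:orthoperfect} --- is sound, and its combinatorial core (every switching partition of the forest of $A$ is orthogonal to every switching partition of the forest of $A^{\bot}$, proved by gluing two trees at the identified root-class vertex) is a correct induction that simultaneously re-establishes \autoref{thm:cortest} without invoking \autoref{thm:CP}. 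What this buys is the removal of the circular-looking chain in which \autoref{thm:cortest} rests on \autoref{thm:CP}, which the paper resolves only by outsourcing the latter; what it costs is that you still lean on the Danos--Regnier theorem \autoref{thm:standardcrit}, itself cited without proof, so your argument is a reduction to Danos--Regnier rather than a from-scratch proof. One caveat: your parenthetical alternative for the easy direction (``$S::T$ is a correct net, hence normalises to $\maltese_0$'') tacitly assumes that cut elimination preserves correctness, a fact the paper never states; the adequacy route is the one to keep.
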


  \begin{remark}
    The counter proof criterion
    is formulated in \cite{curien:criterions}
    where non--homogeneous cut elimination does not exists.
    Because this is not the case in our work 
    we add the hypothesis 
    that $S\witness A$
    and the proof--opponents are of the form $T\witness A^\bot$
    so that only homogeneous cut will appear and be eliminated.
    Also observe that 
    a proof of $A^\bot$ such that $T\witness A^\bot$
    is an atomic proof of $\mlldai$
    that is, 
    a proof such that its daimon rules introduce 
    sequents which contains only propositional variables.
  \end{remark}

One can establish that 
the fact that tests of $A$
are proofs of $A^\bot$
is equivalent to an implication of the counter proof criterion.

\begin{proposition}
  Given a formula $A$
  the assertions are equivalent:
  \begin{enumerate}
    \item 
    Each test of $A$ 
    is the representation of a proof of $A^\bot$.
    \item 
    For any cut--free net $S\witness A$
    if $S$ is orthogonal to each (cut--free) $T\witness A^\bot$ representing a proof of $A^\bot$
    then $S\vdash_\mlldai A$.
  \end{enumerate}
\end{proposition}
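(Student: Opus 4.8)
The plan is to deduce both implications directly from \autoref{prop:testsOrtho} (the single--formula instance of \autoref{thm:danos}), using only that orthogonality is symmetric and that $(\cdot)^\bot$ is an involution. The first thing to record is that a test of a formula $B$ is, by \autoref{def:test}, a cut--free net witnessing $B^\bot$, and that ``$T$ is the representation of a proof of $B^\bot$'' means exactly $T\vdash_{\mlldai} B^\bot$. So, writing $\mathbb P$ for the set of cut--free nets $N$ with $N\witness A^\bot$ and $N\vdash_{\mlldai} A^\bot$, the nets quantified over in~(2) are precisely those of $\mathbb P$, and by \autoref{prop:testsOrtho} a cut--free $S\witness A$ is a proof net of $A$ iff $S\perp\tests A$.

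For $(1)\Rightarrow(2)$ I would simply observe that~(1) says $\tests A\subseteq\mathbb P$; hence if $S$ is cut--free, $S\witness A$ and $S\perp\mathbb P$, then a fortiori $S\perp\tests A$, and \autoref{prop:testsOrtho} gives $S\vdash_{\mlldai} A$, which is~(2).

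For $(2)\Rightarrow(1)$ I would argue by contradiction. Suppose some $T_0\in\tests A$ is not the representation of a proof of $A^\bot$; since $T_0$ is cut--free with $T_0\witness A^\bot$, this means $T_0\not\vdash_{\mlldai} A^\bot$, so \autoref{prop:testsOrtho} applied at the formula $A^\bot$ yields a test $T_0'\in\tests{A^\bot}$ with $T_0\not\perp T_0'$. By \autoref{def:test}, $T_0'$ is cut--free with $T_0'\witness(A^\bot)^\bot$, i.e.\ $T_0'\witness A$ since $(\cdot)^\bot$ is an involution. Next I would check that $T_0'$ is orthogonal to every net of $\mathbb P$: indeed if $N\in\mathbb P$ then $N$ is cut--free, $N\witness A^\bot$ and $N\vdash_{\mlldai} A^\bot$, so $N\perp\tests{A^\bot}$ by \autoref{prop:testsOrtho}, whence $N\perp T_0'$. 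Thus $T_0'$ satisfies the hypothesis of~(2), so $T_0'\vdash_{\mlldai} A$, and \autoref{prop:testsOrtho} then gives $T_0'\perp\tests A$. But $T_0\in\tests A$, so $T_0'\perp T_0$, contradicting $T_0\not\perp T_0'$. Therefore every test of $A$ represents a proof of $A^\bot$, i.e.~(1) holds.

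The argument involves no real obstacle: the only point requiring attention is bookkeeping — instantiating \autoref{prop:testsOrtho} once at $A$ and twice at $A^\bot$ (for the ``proof opponent'' $N$ and for $T_0$), and making sure each time that the net involved is cut--free and witnesses the relevant formula, which is automatic since tests of a formula $B$ are by construction cut--free nets witnessing $B^\bot$. It is worth stressing that this proof does not itself invoke the counter--proof criterion \autoref{thm:CP}: it is a self--contained consequence of the tests machinery, so that \autoref{thm:cortest} is then recovered by remarking that~(2) is exactly the nontrivial implication of \autoref{thm:CP}.
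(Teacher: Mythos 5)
Your proof is correct, and while your $(1)\Rightarrow(2)$ direction coincides with the paper's (both reduce it to the inclusion $\tests A\subseteq\mathbb P$ and then invoke \autoref{prop:testsOrtho}), your $(2)\Rightarrow(1)$ direction takes a genuinely different route. The paper dispatches $(2)\Rightarrow(1)$ by pointing back to the proof of \autoref{thm:cortest}, which rests on Curien's counter--proof criterion (\autoref{thm:CP}) imported as an external result, and moreover applies that criterion at the formula $A^\bot$ rather than at $A$. You instead derive $(2)\Rightarrow(1)$ by contradiction entirely inside the tests machinery: from $T_0\in\tests A$ with $T_0\not\vdash_{\mlldai}A^\bot$ you extract, via \autoref{prop:testsOrtho} at $A^\bot$, a witness $T_0'\in\tests{A^\bot}$ non--orthogonal to $T_0$, check that $T_0'$ (which is cut--free and satisfies $T_0'\witness A$ since $(\cdot)^\bot$ is an involution) meets the hypothesis of $(2)$ at the formula $A$ itself, and then use \autoref{prop:testsOrtho} at $A$ plus symmetry of orthogonality to contradict $T_0\not\perp T_0'$. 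This buys two things: the equivalence becomes a self--contained consequence of \autoref{prop:testsOrtho} (itself proved from the Danos--Regnier criterion, independently of \autoref{thm:CP}), so there is no circularity; and the equivalence genuinely holds formula by formula, since you only ever instantiate $(2)$ at $A$, whereas the paper's route implicitly needs the criterion at $A^\bot$. The price is that you apply \autoref{prop:testsOrtho} three times (once at $A$, twice at $A^\bot$) where the paper's argument is a one--line citation.
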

\begin{proof}
  The fact that $2\Rightarrow 1$
  is the proof of \autoref{thm:cortest}
  (it uses the counter proof criterion \cite{curien:criterions}, \autoref{thm:CP}).
  To show $1\Rightarrow 2$,
  consider a net $S\witness A$
  and assume that each test $t$ of $A$
  represents a proof of $A^\bot$ 
  and is such that $t\witness A^\bot$.
  If the net $S$ is orthogonal to each 
  net $T\vdash_{\mlldai} A^\bot$
  with $T\witness A^\bot$
  then in particular it is orthogonal to 
  the set of tests $\tests A$:
  it follows that $S\vdash_{\mlldai} A$ (\autoref{thm:danos}).
\end{proof}

In fact the result of adequacy we have obtained for $\mlldai$
(\autoref{thm:adequacy})
allows to generalise the counter proof criterion of P.L. curien \cite{curien:criterions}
in presence of non homogeneous cuts.
More precisely, 
having an adequate basis implies 
a ``soundness'' result of a counter proof criterion,
i.e. any proof of $A$ is orthogonal to any proof of $A^\bot$.

\begin{proposition}\label{prop:adequacyImpliesCP}
  The first assertion implies the second:
  \begin{enumerate}
    \item There exists an adequate basis.
    \item Any $S\vdash_{\mlldai} A$
    is orthogonal to any $T\vdash_{\mlldai} A^\bot$. 
  \end{enumerate}
\end{proposition}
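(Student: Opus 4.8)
The goal is to prove Proposition~\ref{prop:adequacyImpliesCP}: the existence of an adequate basis implies that any $S\vdash_{\mlldai} A$ is orthogonal to any $T\vdash_{\mlldai} A^\bot$. The plan is to unfold the definitions carefully. Suppose $\ibase$ is an adequate basis, meaning that for any net $R$ and sequent $\Gamma$, $R\vdash_{\mlldai}\Gamma$ implies $R\in\reali[\ibase]{\Gamma}$. Now take $S\vdash_{\mlldai} A$ and $T\vdash_{\mlldai} A^\bot$. By adequacy applied to each, $S\in\reali[\ibase]{A}$ and $T\in\reali[\ibase]{A^\bot}$.

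First I would invoke \autoref{rem:InterpretMLL}, which gives $\reali[\ibase]{A^\bot}\subseteq\reali[\ibase]{A}^\bot$. Hence $T\in\reali[\ibase]{A}^\bot$. By the very definition of the orthogonal of a set (Definition~\ref{def:type}), $T\in\reali[\ibase]{A}^\bot$ means $T\perp R$ for every $R\in\reali[\ibase]{A}$; in particular, since $S\in\reali[\ibase]{A}$, we get $S\perp T$. This is exactly the conclusion. So the argument is essentially a three-line chain: adequacy places both nets in dual types, one type is contained in the orthogonal of the other, and membership in an orthogonal is precisely orthogonality to every element.

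One subtlety to address is that \autoref{rem:InterpretMLL} is stated for an \emph{interpretation basis} (satisfying $\reali[\ibase]{X^\bot}\subseteq\reali[\ibase]{X}^\bot$ on atoms) --- I would note that an adequate basis is in particular an interpretation basis, or alternatively that the inclusion $\reali[\ibase]{A^\bot}\subseteq\reali[\ibase]{A}^\bot$ for all formulas $A$ is exactly what \autoref{rem:InterpretMLL} establishes from the atomic hypothesis, and that an adequate basis necessarily satisfies this atomic hypothesis (otherwise the axiom/daimon rule $\maltese$ introducing $X,X^\bot$ would fail to be realised). I do not expect a real obstacle here: the only thing to be careful about is stating precisely which hypothesis on the basis is being used. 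Here is the proof.

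\begin{proof}
  Let $\ibase$ be an adequate basis, so that for every net $R$ and every sequent $\Gamma$ we have $R\vdash_{\mlldai}\Gamma\Rightarrow R\in\reali[\ibase]{\Gamma}$. Since $\ibase$ realises in particular the sequents $X,X^\bot$ introduced by a single $\maltese$--rule, for every propositional variable $X$ we have $\maltese_2\in\reali[\ibase]{X,X^\bot}=\reali[\ibase]{X}\compo\reali[\ibase]{X^\bot}$, from which one deduces $\reali[\ibase]{X^\bot}\subseteq\reali[\ibase]{X}^\bot$; hence $\ibase$ is an interpretation basis and, by \autoref{rem:InterpretMLL}, $\reali[\ibase]{B^\bot}\subseteq\reali[\ibase]{B}^\bot$ for every formula $B$.

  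Now let $S$ and $T$ be nets with $S\vdash_{\mlldai} A$ and $T\vdash_{\mlldai} A^\bot$. By adequacy, $S\in\reali[\ibase]{A}$ and $T\in\reali[\ibase]{A^\bot}$. By the inclusion above, $T\in\reali[\ibase]{A}^\bot$. By Definition~\ref{def:type}, this means $T\perp R$ for every $R\in\reali[\ibase]{A}$; applying this to $R=S$ yields $S\perp T$.
\end{proof}
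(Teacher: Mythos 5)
Your proof is correct and follows essentially the same route as the paper's: adequacy places $S$ and $T$ in $\reali[\ibase]{A}$ and $\reali[\ibase]{A^\bot}$ respectively, and the inclusion $\reali[\ibase]{A^\bot}\subseteq\reali[\ibase]{A}^\bot$ from \autoref{rem:InterpretMLL} together with the definition of the orthogonal gives $S\perp T$. One small remark: your side-argument deriving the atomic condition $\reali[\ibase]{X^\bot}\subseteq\reali[\ibase]{X}^\bot$ from $\maltese_2\in\reali[\ibase]{X}\compo\reali[\ibase]{X^\bot}$ does not actually go through (membership of $\maltese_2$ in the composition yields elements of $\reali[\ibase]{X^\bot}$, not an inclusion of it into $\reali[\ibase]{X}^\bot$), but it is also unnecessary, since every basis in the paper is by definition an interpretation basis already satisfying that condition, so \autoref{rem:InterpretMLL} applies directly.
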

\begin{proof}
  $1\Rightarrow 2.$
  Say $\ibase$ is an adequate 
  basis then for any formula $A$ we have 
  $\proofs{A:\mlldai} \subseteq \reali[\ibase]{A}$.
  Now since $\reali[\ibase]{A}$ equals $\reali[\ibase]{A^{\bibot}}$
  we derive $\reali[\ibase]{A}\subseteq \reali[\ibase]{A^\bot}^\bot$ (\autoref{def:interpretmll}).
  This means that any realiser of $A$ is a orthogonal to any realiser of $A^\bot$,
  since by adequacy a net $S\vdash_{\mlldai}A$ realises $A$
  and a net $T\vdash_{\mlldai} A^\bot$ realises $A^\bot$ we conclude.
\end{proof}

Then we can obtain a general form of counter proof criterion 
namely when we don't have the restriction that 
the proofs are atomic i.e. $S\witness A$ and $T\witness A^\bot$.

\begin{proposition}[Generalised Counter Proof criterion]\label{prop:GCP}
  Given a formula $A$ and $S$ a cut--free net
  the assertions are equivalent:
  \begin{enumerate}
    \item $S$ is orthogonal to each net $T$ representing a proof of $A^\bot$.
    \item $S\vdash_{\mlldai} A$.
  \end{enumerate}
\end{proposition}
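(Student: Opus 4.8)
The plan is to split into the two implications: $(2)\Rightarrow(1)$ is essentially already available, while $(1)\Rightarrow(2)$ requires recovering the multiplicative shape of $S$ before the test machinery can be applied.

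\emph{$(2)\Rightarrow(1)$.} This is the soundness direction and follows at once from \autoref{prop:adequacyImpliesCP}: by the adequacy theorem (\autoref{thm:adequacy}) every approximable basis is adequate, and $\baseone$ is approximable, so an adequate basis exists; hence every $S\vdash_{\mlldai}A$ is orthogonal to every net representing a proof of $A^{\bot}$, which is exactly $(2)\Rightarrow(1)$.

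\emph{$(1)\Rightarrow(2)$.} Assume $S$ is orthogonal to every net representing a proof of $A^{\bot}$. Since the one--rule proof of the sequent $A^{\bot}$ consisting of a single $\maltese$--rule is represented by $\maltese_1$, hypothesis $(1)$ gives $S\perp\maltese_1$, so by \autoref{rem:nbconclusionTypes} the net $S$ has exactly one conclusion $s$, and by the Decomposition Lemma (\autoref{lem:decomposition}) it has the form $S=S^{\maltese}+\formunet{P}{s}$ for a unique formula pattern $P$. The central step is to prove that $P$ is a pruning of the syntactic tree of $A$, equivalently that $S\witnesseq A$, i.e.\ $S\witness\Delta$ for some sequent $\Delta$ with $\Delta\le A$. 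I would establish this by induction on $A$, testing $S$ against carefully chosen proofs of $A^{\bot}$: for instance, if $A=B\otimes C$, a proof of $A^{\bot}=B^{\bot}\parr C^{\bot}$ built from a $\maltese$--rule followed by a $\parr$--rule is represented by a net with a terminal $\parr$--link, so by \autoref{rem:clashcut} the topmost link of $S$ cannot be a $\parr$--link; it is then a $\maltese$--link, in which case $P=\placehold$ already prunes $A$, or a $\otimes$--link, and then a multiplicative cut--elimination step splits the interaction, letting us recurse on the immediate subformulas with, as opponents, all proofs of $B^{\bot}$, resp.\ of $C^{\bot}$; the case $A=B\parr C$ is dual.

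With $S\witness\Delta$ and $\Delta\le A$ in hand, the argument closes using results already proved: since $\Delta\le A$ we have $\Delta^{\bot}\le A^{\bot}$, hence $\tests{\Delta}\subseteq\proofs{\Delta^{\bot}:\mlldai}\subseteq\proofs{A^{\bot}:\mlldai}$ by \autoref{thm:cortest} and \autoref{prop:smallproof}; so $(1)$ yields $S\perp\tests{\Delta}$, and since $S\witness\Delta$, \autoref{thm:danos} gives $S\vdash_{\mlldai}\Delta$, whence $S\vdash_{\mlldai}A$ again by \autoref{prop:smallproof}. I expect the shape--recovery step ``$S\witnesseq A$'' to be the main obstacle: one has to organise the induction so as to control how the non--homogeneous cut--elimination steps ($\maltese$ against a connective) reshape the daimons of $S$ during the probing interactions, and to check that confronting $S$ with \emph{all} proofs of $A^{\bot}$, rather than only the atomic ones as in \autoref{thm:CP}, is what makes it possible to force the multiplicative skeleton of $S$ to refine that of $A$.
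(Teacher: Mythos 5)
Your direction $(2)\Rightarrow(1)$ is correct and is exactly the paper's argument (adequacy via \autoref{prop:adequacyImpliesCP}). The gap is in $(1)\Rightarrow(2)$: the shape-recovery claim that orthogonality to all proofs of $A^\bot$ forces $S\witnesseq A$ is not merely ``the main obstacle'' --- it is false, because no opponent in the class ``proofs of $A^{\bot}$'' can detect that $S$ keeps decomposing below an \emph{atom} of $A$. Concretely, take $A=X$ and $S=\dailink{p_1,p_2}+\parrlink{p_1,p_2}{p}$. Every net representing a proof of $X^\bot$ has its conclusion output by a daimon link (an atom is never the main formula of a connective rule) and reduces, after elimination of its internal cuts, to $\maltese_1$; and the interaction of $S$ with $\maltese_1$ is an identity cut--net, which reduces to $\maltese_0$ by \autoref{prop:cutnetNF}. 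So $S$ satisfies hypothesis $(1)$, yet $S\witness Y\parr Z$ with $Y\parr Z\not\leq X$, and $S\not\vdash_{\mlldai}X$, since no proof represented by a net with a terminal $\parr$--link can have an atomic conclusion. The same trick plugged under a correct connective (e.g.\ $\dailink{a,b}+\parrlink{a,b}{c}+\dailink{d}+\tenslink{c,d}{e}$ against $A=X\otimes Y$) shows the failure is not confined to the degenerate case of $A$ atomic. Your clash--cut argument does rule out a \emph{wrong connective} at an address where $A$ has a connective, but it gives you nothing at the leaves of $A$'s syntactic tree, and that is exactly where your induction breaks.

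The resolution is that the proposition is meant to carry the testability hypothesis: the paper's own proof of $(1)\Rightarrow(2)$ opens with ``assume $S\witness A_0$ with $\theta A_0=A$'', i.e.\ $S\witnesseq A$ is taken as a standing assumption exactly as in \autoref{thm:CP}; the content of the ``generalised'' version is only that $S$ and the opponents $T$ need no longer be \emph{atomically} testable by dual formulas, which is absorbed by \autoref{prop:smallproof}. Once you grant $S\witness\Delta$ with $\Delta\leq A$, the closing paragraph of your argument --- tests of $\Delta$ are proofs of $\Delta^\bot$, hence of $A^\bot$ by \autoref{thm:cortest} and \autoref{prop:smallproof}, then \autoref{thm:danos} and \autoref{prop:smallproof} again --- is correct and essentially coincides with the paper's route through the counter--proof criterion.
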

\begin{proof}
  $1\Rightarrow 2.$
  Assume that $S\witness A_0$ with $A_0$ such that there exists $\theta$
  with $\theta A_0 = A$
  and that
  $S$ is orthogonal to each proof nets $T\vdash_{\mlldai} A^\bot$.
  In particular any net 
  $T\vdash_{\mlldai} A_0^\bot$
  with $T\witness A_0^\bot$
  is such that 
  $T\vdash_{\mlldai} A^\bot$ 
  (\autoref{prop:smallproof}). 

  As a consequence:
  $S$ is orthogonal to each $T\vdash_{\mlldai} A_0^\bot$
  with $T\witness A_0^\bot$.
  by the counter proof criterion \cite{curien:criterions}
  we derive $S\vdash_{\mlldai} A_0$
  and thus $S\vdash_{\mlldai} A$ (again using \autoref{prop:smallproof}).

  $2\Rightarrow 1.$
  We aim at showing that 
  any proof $S\vdash_{\mlldai} A$
  is orthogonal to any proof $T\vdash_{\mlldai} A^\bot$
  which may have a different syntax tree (and here lies the novelty with respect to 
  P-L. Curien's theorem \cite{curien:criterions}).
  Showing this implication 
  by analyzing the rewriting rules of cut--elimination 
  can be delicate,
  however
  in the language of realisability this becomes a trivial consequence
  of the adequacy theorem (\autoref{prop:adequacyImpliesCP}).
  Hence because adequacy hold (\autoref{thm:adequacy}), we conclude.
\end{proof}

\subsection{On Substitutions and Testability -- Proof of \autoref{prop:collapse}}

Recall that a \emph{substitution} 
is a map $\theta$ which maps 
propositional variables to formulas.
Naturally substitutions can be lifted 
by induction to formulas and to sequents;
$\theta(A\square B) \triangleq \theta A \square \theta B$
and $\theta(\Gamma,A) = \theta(\Gamma), \theta A$.
A sequent $\Gamma$ is an \emph{instance} 
of a sequent $\Delta$ whenever there exists 
a substitution $\theta$
such that $\theta\Delta  = \Gamma$.
In that case we denote $\Delta \leq \Gamma$.



\begin{remark}
  Given two representation of sequents 
  $\Gamma= A_1,\dots, A_n$ and $\Delta= B_1,\dots, B_n$
  are such that $\Delta\leq \Gamma$
  this implies 
  that for each index $1\leq i \leq n$
  we have $B_i \leq A_i$
  specifically 
  $A_i[X_1\mapsto F_1,\dots,X_k\mapsto F_k] = B_i$.
\end{remark}

\begin{remark}
  Whenever a cut--free net $S$ is such that 
  $S\witnesseq \Gamma$
  then there exists a substitution $\theta$
  and a sequent $\Delta$
  such that $S\witness \Delta$
  and $\theta \Delta$.
\end{remark}

Let us provide a proof of the previous remark.

\begin{proposition}\label{prop:witness}
  Given a cut--free net $S$ and a sequent $\Gamma$.
  If $S\witnesseq \Gamma$ 
  then there exists a sequent $\Delta$
  and a substitution $\theta$
  such that $\theta\Delta = \Gamma$ 
  and $S\witness \Delta$.
\end{proposition}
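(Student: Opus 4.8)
The plan is to exploit the fact that, in a cut--free net, a formula labelling is completely determined by its values on the initial positions. Indeed, since $S$ is a net (hence target--surjective and a module, so target--disjoint and source--disjoint), every position is the target of exactly one link, and discarding the daimon links leaves a genuine forest whose leaves are precisely the initial positions of $S$. So, given the formula labelling $\tau$ witnessing $S\witnesseq\Gamma$ (with $\Gamma=A_1,\dots,A_n$ and $\tau(S(i))=A_i$), I will replace the formulas that $\tau$ assigns to the initial positions by fresh propositional variables in order to obtain an \emph{atomic} labelling, and then recover $\tau$ — and hence $\Gamma$ — by a substitution.

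Concretely, I would first enumerate the (finitely many) initial positions $p$ of $S$ and pick for each one a propositional variable $X_p\in\pvarset$ in such a way that all the $X_p$ together with all the duals $X_p^{\bot}$ are pairwise distinct; this is possible since the set of initial positions of $S$ is finite and $\pvarset$ is infinite. Then I would define $\tau'$ to be the unique formula labelling of $S$ with $\tau'(p)=X_p$ for every initial position $p$: existence and uniqueness follow by induction along the forest structure, propagating $\tau'$ downward from the leaves via the \textbf{(Par)} and \textbf{(Tens)} clauses. By construction each output of a daimon link carries a propositional variable, so $\tau'$ is an atomic formula labelling; setting $\Delta := \tau'(S(1)),\dots,\tau'(S(n))$ we get a sequent with $S\witness\Delta$.

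Next I would define a substitution $\theta$ by $\theta(X_p)=\tau(p)$ and $\theta(X_p^{\bot})=\tau(p)^{\bot}$ for every initial position $p$, and $\theta(Y)=Y$ for every other propositional variable $Y$; the disjointness of the family $\{X_p\}\cup\{X_p^{\bot}\}$ guarantees that $\theta$ is well defined and compatible with $(\cdot)^{\bot}$, so it is a genuine substitution. The key claim is then $\theta(\tau'(v))=\tau(v)$ for every position $v$ of $S$, which I would prove by induction along the forest: at an initial position it holds by the definition of $\theta$, and at the output of a $\parr$-- (resp. $\otimes$--) link it follows from the inductive hypothesis together with $\theta(A\parr B)=\theta A\parr\theta B$ (resp. the analogous identity for $\otimes$), which is exactly the inductive clause defining the action of a substitution on formulas. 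Applying this to $v=S(i)$ yields $\theta\Delta=\theta(\tau'(S(1))),\dots,\theta(\tau'(S(n)))=\tau(S(1)),\dots,\tau(S(n))=\Gamma$, i.e.\ $\Delta\leq\Gamma$, which concludes the proof.

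As for difficulties, there is no deep obstacle: the whole argument is bookkeeping. The only point requiring genuine care is the choice of the variables $X_p$ — they must be pairwise distinct \emph{and} have pairwise distinct duals — since otherwise the assignment $X_p\mapsto\tau(p)$ need not extend to a well-defined substitution respecting the involution; everything else (the forest structure of a cut--free net, the unique extension of a leaf labelling to a full formula labelling, and the commutation of $\theta$ with the multiplicative connectives) is routine.
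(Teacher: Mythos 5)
Your proof is correct and follows essentially the same route as the paper's: assign fresh propositional variables to the initial positions, take the unique atomic labelling they induce, and recover $\Gamma$ via the substitution sending each fresh variable to the formula $\tau$ assigned at that position. Your extra care in choosing the $X_p$ so that they and their duals are pairwise distinct (so that $\theta$ respects the involution) is a point the paper's proof glosses over, and is a worthwhile addition.
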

\begin{proof}
  If $S\witness \Gamma$ we can calready conclude.
  Otherwise $S\witnesseq \Gamma$
  and let us denote $p_1,\dots,p_n$ the initial positions of $S$,
  and $\tau$ the formula labelling witnessing $S\witnesseq \Gamma$.
  Because formula labellings are total functions 
  each $p_i$ as a formula $\tau(p_i) = F_i$ associated with it
  consider then $X_1,\dots,X_n$ a family of  distinct propositional variables 
  and the substitution $\theta[X_1\mapsto F_1 ; \dots ;X_n\mapsto F_n]$
  then consider the (unique) atomic formula labelling 
  such that $\tau'(p_i) = X_i$
  then $S\witness \tau'(S(1)),\dots,\tau'(S(n))$.
  Indeed then applying $\theta$
  to the sequent $\tau'(S(1)),\dots,\tau'(S(n))$
  will result in $\tau(S(1)),\dots,\tau(S(n))$
  which is $\Gamma$.
\end{proof}

\propcollapse*
\begin{proof}
  Using remark \ref{rem:testinbase}
  and the fact that the tests of 
  a formula $B$ in $\Delta$
  are proofs of $B^\bot$
  and by proposition \ref{prop:smallproof}
  are proofs of $A^\bot$.
\end{proof}

\section{Complements to \autoref{sect:Completeness}}\label{ann:sec6add}

  \subsection{Decomposition}

  \begin{proposition}[Decomposition]\label{prop:decomp}
    Let $\ibase$ be an interpretation basis,
    $\hseq$ be an hypersequent, $A,B$ two formulas 
    and $S$ be a multiplicative net.
    \begin{itemize}
      \item If 
      $S=S_0+l$ has a terminal $\parr$ link $l$ above its last conclusion 
      and $S\in\reali[\ibase]{\hseq , A\parr B}$;
      then $S_0$ belongs to $\reali[\ibase]{\hseq , A,B}$
      \item If 
      $S=S_0+l$ has a terminal $\parr$ link $l$ above its last conclusion 
      and $S\in\reali[\ibase]{\hseq \parallel A\parr B}$;
      then $S_0$ belongs to $\reali[\ibase]{\hseq \parallel (A,B)}$
      \item If 
      $S=S_0+l$ has a terminal $\otimes$ link $l$ above its last conclusion 
      and $S\in\reali[\ibase]{\hseq , A\otimes B}$;
      then $S_0$ belongs to $\reali[\ibase]{\hseq , (A\parallel B)}$
      \item If 
      $S=S_0+l$ has a terminal $\otimes$ link $l$ above its last conclusion 
      and $S\in\reali[\ibase]{\hseq \parallel A\otimes B}$;
      then $S_0$ belongs to $\reali[\ibase]{\hseq \parallel A\parallel B}$
    \end{itemize}
  \end{proposition}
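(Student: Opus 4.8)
The plan is to prove each of the four cases of Proposition~\ref{prop:decomp} by unfolding the definition of the interpretation (\autoref{def:interpretmll}) and the definition of the constructions on types, and then using the duality results together with the key fact that interaction and the addition of a terminal link commute (\autoref{prop:commutationInterSumlink}) and that interaction with a parallel composition is iterated interaction (\autoref{prop:interAsAction}).

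\textbf{The $\parr$ cases.} Suppose $S=S_0+l$ with $l=\parrlink{p_1,p_2}{p}$ terminal above the last conclusion, and $S\in\reali[\ibase]{\hseq, A\parr B}=\reali[\ibase]{\hseq}\compo(\reali[\ibase]{A}\parr\reali[\ibase]{B})$. Unfolding the definition of $\compo$, this means $\#S\geq\#\reali[\ibase]{\hseq}$ and for every $T\in\reali[\ibase]{\hseq}^\bot$ we have $S::T\in\reali[\ibase]{A}\parr\reali[\ibase]{B}$. First I would observe that, since $T$ has $\#\reali[\ibase]{\hseq}$ conclusions and $\#S$ exceeds that by exactly one (the output $p$ of the terminal $\parr$), by \autoref{prop:commutationInterSumlink} we have $S::T=(S_0::T)+l$ where $l$ is terminal above the last conclusion of $S_0::T$. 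So the hypothesis says $(S_0::T)+\parrlink{p_1,p_2}{p}\in\reali[\ibase]{A}\parr\reali[\ibase]{B}$. Now I need a ``reflection'' lemma: if $R+\parrlink{R(1),R(2)}{p}\in\mathbf A\parr\mathbf B$ and the terminal link is really a $\parr$, then $R\in\mathbf A\compo\mathbf B$. This is essentially the content proved inside the proof of \autoref{prop:dualC} (the case where $T$ has a terminal $\parr$-link): one eliminates the multiplicative cut that appears when testing against an element $S_a\parallel S_b+\tenslink{S_a(1),S_b(1)}{q}$ of $\mathbf A^\bot\otimes\mathbf B^\bot$, uses \autoref{prop:SN}\autoref{item:earlyNonPar} to push the reversible cut back, and concludes that $R$ is orthogonal to $\mathbf A^\bot\parallel\mathbf B^\bot$, hence in $\mathbf A\compo\mathbf B$ by \autoref{prop:dualPreC}. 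Applying this with $R=S_0::T$, $\mathbf A=\reali[\ibase]{A}$, $\mathbf B=\reali[\ibase]{B}$ gives $S_0::T\in\reali[\ibase]{A}\compo\reali[\ibase]{B}=\reali[\ibase]{A,B}$. Since $T\in\reali[\ibase]{\hseq}^\bot$ was arbitrary and $\#S_0=\#S-1+2=\#S+1\geq\#\reali[\ibase]{\hseq}$ (wait: $S_0$ has $\#S+1$ conclusions, which still exceeds $\#\reali[\ibase]{\hseq}$), we conclude $S_0\in\reali[\ibase]{\hseq}\compo\reali[\ibase]{A,B}=\reali[\ibase]{\hseq,(A,B)}=\reali[\ibase]{\hseq,A,B}$, as $\reali[\ibase]{A,B}=\reali[\ibase]{A}\compo\reali[\ibase]{B}$ by associativity (\autoref{prop:Associativity}). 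The second $\parr$ case is identical except that the outer construction is $\parallel$ instead of $\compo$: one uses that $S\in\reali[\ibase]{\hseq}\parallel(\reali[\ibase]{A}\parr\reali[\ibase]{B})$ means $S\perp\reali[\ibase]{\hseq}^\bot\compo(\reali[\ibase]{A}\parr\reali[\ibase]{B})^\bot$, and by \autoref{prop:dualPreC} one reduces to the same reflection lemma applied after interaction; the density remark (\autoref{rem:density}) handles the passage from $\parallel^-$ to $\parallel$.

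\textbf{The $\otimes$ cases.} Now suppose $S=S_0+l$ with $l=\tenslink{p_1,p_2}{p}$ terminal above the last conclusion and $S\in\reali[\ibase]{\hseq, A\otimes B}=\reali[\ibase]{\hseq}\compo(\reali[\ibase]{A}\otimes\reali[\ibase]{B})$. As before, for $T\in\reali[\ibase]{\hseq}^\bot$ we get $S::T=(S_0::T)+\tenslink{p_1,p_2}{p}\in\reali[\ibase]{A}\otimes\reali[\ibase]{B}$ by \autoref{prop:commutationInterSumlink}. I then need the dual reflection lemma: if $R+\tenslink{R(1),R(2)}{p}\in\mathbf A\otimes\mathbf B$ with the terminal link genuinely a $\otimes$, then $R\in\mathbf A\parallel\mathbf B$. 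This follows by the same reasoning as in the proof of \autoref{prop:dualC}, now on the $\subseteq$-side or else obtained by duality: testing $R+\tenslink{}{}$ against an element $U'+\parrlink{}{}\in\mathbf A^\bot\parr\mathbf B^\bot$ (terminal $\parr$), eliminating the multiplicative cut, using \autoref{prop:SN}\autoref{item:earlyNonPar}, and applying \autoref{prop:dualPreC}, shows $R\perp\mathbf A^\bot\parr\mathbf B^\bot$, i.e.\ $R\in(\mathbf A^\bot\parr\mathbf B^\bot)^\bot=\mathbf A\parallel\mathbf B$. One also has to handle the case where the opponent has a terminal daimon, exactly as \autoref{lem:parrDualityHardCase} is used in \autoref{prop:dualC}: the ``adapt to the daimon case'' argument via identity cut-nets (\autoref{prop:cutnetBehavior}) carries over. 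Applying the lemma gives $S_0::T\in\reali[\ibase]{A}\parallel\reali[\ibase]{B}=\reali[\ibase]{A\parallel B}$, and since this holds for all $T\in\reali[\ibase]{\hseq}^\bot$ with $\#S_0$ large enough, $S_0\in\reali[\ibase]{\hseq}\compo\reali[\ibase]{A\parallel B}=\reali[\ibase]{\hseq,(A\parallel B)}$. The fourth case, with outer $\parallel$, is handled the same way, replacing $\compo$ by $\parallel$ and invoking \autoref{prop:dualPreC}, \autoref{prop:Associativity} and \autoref{rem:density}.

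\textbf{Main obstacle.} The delicate point is the reflection lemma: given that $R$ plus a terminal multiplicative link lands in a tensor/par type, recover membership of $R$ in the corresponding $\parallel$/$\compo$ construction. What makes this non-trivial is that one cannot test $R$ directly against the \emph{whole} orthogonal of the construction; instead one picks opponents of a specific shape (carrying a dual terminal link, or a terminal daimon), performs one multiplicative cut-elimination step, and then must be careful that the remaining non-homogeneous (reversible) cut can be anticipated — this is exactly where \autoref{prop:SN}\autoref{item:earlyNonPar} is essential — and that the daimon-terminal opponents are handled by the identity cut-net machinery (\autoref{lem:parrDualityHardCase}, \autoref{prop:cutnetBehavior}). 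Fortunately, all of this is already carried out inside the proof of \autoref{prop:dualC}; the proof of Proposition~\ref{prop:decomp} is essentially an extraction and repackaging of those sub-arguments, combined with \autoref{prop:commutationInterSumlink} and \autoref{prop:interAsAction} to reduce the statement about $S$ in a basis to a statement about $S_0::T$ in a type. I would therefore first isolate the reflection lemma as a standalone statement (for arbitrary types $\mathbf A,\mathbf B$), prove it by citing the relevant steps of \autoref{prop:dualC}, and then deduce all four items of Proposition~\ref{prop:decomp} uniformly.
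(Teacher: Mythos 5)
Your proof is correct and follows essentially the same route as the paper's: both hinge on performing one multiplicative cut-elimination step against opponents carrying the dual terminal link, the commutation of multiplicative cuts (\autoref{prop:SN} \autoref{item:earlyNonPar}, \autoref{prop:multiCommutes}) and the dualities of \autoref{prop:dualPreC} and \autoref{prop:dualC}, and your explicit peeling-off of the context $\hseq$ via \autoref{prop:commutationInterSumlink} is a cleaner rendering of what the paper dispatches with ``by allowing types to contain cuts''. One small simplification: in your reflection lemma it suffices to test against the \emph{generators} of the dual type (which all carry the dual terminal $\otimes$- or $\parr$-link), so the daimon-terminal case and the identity cut-net machinery of \autoref{lem:parrDualityHardCase} are not actually needed here, unlike in \autoref{prop:dualC} where one must analyse every element of the orthogonal.
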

  \begin{proof}
    \begin{itemize}
      \item 
    Consider $S$ a net with a terminal $\parr$--link $l$
    decomposing $S=S_0+l$ such that $l$ outputs the only conclusion of $S$.
    Say $S$ belongs to $\reali[\ibase]{A\parr B}$
    equivalently 
    $S$ is orthogonal to $\reali[\ibase]{A}^\bot \otimes \reali[\ibase]{B}^\bot$,
    since a multiplicative cut can always be performed first
    this implies 
    that $S_0$ is orthogonal to $\reali[\ibase]{A}^\bot \parallel \reali[\ibase]{B}^\bot$
    thus $S_0$ belongs to $\reali[\ibase]{A} \compo \reali[\ibase]{B}$.
    By allowing types 
    to contain cuts, 
    this arguments easily adapts to $\reali[\ibase]{\hseq , A,B}$.

    \item
    The reasonment for a net in $\reali[\ibase]{A\otimes B}$
    is similar and also easily adapts to 
    the case $\reali[\ibase]{\hseq , A\otimes B}$.

    \item
    Consider on the other hand a 
    net $S$ in $\reali[\ibase]{\hseq \parallel A\otimes B}$,
    such that $S=S_0+l$ where $l$ is a tensor link and outputs the last conclusion of $S$.
    Equivalently $S$ is a net orthogonal to 
    $\reali[\ibase]{\hseq}^\bot \compo \reali[\ibase]{A\otimes B}^\bot$
    i.e.  $\reali[\ibase]{\hseq}^\bot \compo \reali[\ibase]{A}^\bot \parr \reali[\ibase]{B}^\bot$.
    Since the multiplicative cut commute to the left (\autoref{prop:multiCommutes})
    we derive 
    that 
    $S_0$ is orthogonal to 
    $\reali[\ibase]{\hseq}^\bot \compo \reali[\ibase]{A}^\bot \compo \reali[\ibase]{B}^\bot$.
    Equivalently (by duality)
    $S_0$ belongs to 
    $\reali[\ibase]{\hseq} \parallel \reali[\ibase]{A}  \parallel \reali[\ibase]{B} $.

    \item
    Similarly we treat the case $\reali[\ibase]{\hseq \parallel A\parr B}$.
    Consider on the other hand a 
    net $S$ in $\reali[\ibase]{\hseq \parallel A\parr B}$,
    such that $S=S_0+l$ where $l$ is a parr link and outputs the last conclusion of $S$.

    Equivalently $S$ is a net orthogonal to 
    $\reali[\ibase]{\hseq}^\bot \compo \reali[\ibase]{A\parr B}^\bot$
    i.e.  $\reali[\ibase]{\hseq}^\bot \compo \reali[\ibase]{A}^\bot \otimes \reali[\ibase]{B}^\bot$.
    Because multiplicative cuts can always be performed 
    first in a reduction sequence (\autoref{prop:multiCommutes})
    is follows that 
    $S_0$ is ortogonal to 
    $\reali[\ibase]{\hseq}^\bot \compo \reali[\ibase]{A}^\bot \parallel \reali[\ibase]{B}^\bot$.
    Equivalently 
    $S_0$ belongs to $\reali[\ibase]{\hseq , A \parallel B}$.
    \end{itemize}
  \end{proof}

\renewcommand*{\witnesseq}{\witness}

\subsection{The merge construction belongs to the composition}

From \autoref{prop:mergeCompute}
we can derive easily 
that an explicitly defined 
type 
(called merge and denoted $\mathbf A\merge \mathbf B$) belongs to 
the composition of two types 
$\mathbf A \compo \mathbf B$.
This will be used to show that 
the functional composition $(\compo)$ and 
the parallel composition $(\parallel)$
of full types remains full
(that is the \autoref{prop:openpreserve}
of \autoref{ann:sec6add})
therefore by duality this will imply
that the parallel and functional composition 
of daimon types is a daimon type.

\begin{definition}
  Given $A$ and $B$ two sets of nets 
  we denote by $A\merge[]B$ the following set:
  $$
  \{ a\merge[(d,d')] b \mid a \in A , b\in B ,\quad 
  d\in E_a ,\labl_a(d) = \maltese ,  
  \quad 
  d\in E_b ,\labl_b(d) = \maltese\}.$$
\end{definition}

\begin{proposition}[Merges belongs to the compositional construction]\label{prop:mergeInCompo}
  Given $\mathbf A$ 
  and $\mathbf B$
  two types;
  $$(\mathbf A \bowtie \mathbf B)^{\bot\bot}
  \subseteq \mathbf A \compo \mathbf B.$$
\end{proposition}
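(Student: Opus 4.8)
\textbf{Proof plan for Proposition~\ref{prop:mergeInCompo}.}
The goal is to show $(\mathbf A\bowtie\mathbf B)^{\bot\bot}\subseteq\mathbf A\compo\mathbf B$. Since $\mathbf A\compo\mathbf B$ is a type (it is an orthogonal, by \autoref{prop:dualPreC}), and bi--orthogonal closure is monotone, it suffices to prove the raw inclusion $\mathbf A\bowtie\mathbf B\subseteq\mathbf A\compo\mathbf B$. So the plan is: fix a net $N = a\bowtie_{(d,d')}b$ with $a\in\mathbf A$, $b\in\mathbf B$, and $d$ (resp.\ $d'$) a daimon link of $a$ (resp.\ $b$), and show $N\in\mathbf A\compo\mathbf B$. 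By the definition of $\compo$ (\autoref{def:constructionOnTypes}), this amounts to two things: first that $\#N\geq\#\mathbf A$, and second that for every $T\in\mathbf A^\bot$ the interaction $N::T$ lands in $\mathbf B$.

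For the size condition: $a$ is a net of $\mathbf A$, so $\#a = \#\mathbf A$, and merging a daimon of $a$ with a daimon of $b$ simply appends the outputs of $d'$ to those of $d$, hence $\#N = \#a + \#b = \#\mathbf A + \#\mathbf B \geq \#\mathbf A$. (One should be slightly careful with the arrangement, but since all outputs of $b$ come after all outputs of $a$ in the merged net, the first $\#\mathbf A$ conclusions of $N$ are exactly the conclusions of $a$; I would record this as the basic structural fact underlying the whole argument.)

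For the interaction condition, fix $T\in\mathbf A^\bot$. Since $\#T = \#\mathbf A = \#a$ and the first $\#a$ conclusions of $N$ are those of $a$, cutting $T$ against $N$ cuts $T$ entirely against the $a$-part of $N$; concretely $N::T$ is obtained from $a::T$ by the same merge that produced $N$, i.e.\ $N::T = (a::T)\bowtie_{(d,d')} b$ where now $d$ is viewed as a daimon of $a::T$ — note $d$ is untouched by the cut links since $T$ is cut against the \emph{conclusions} of $a$ and $d$'s outputs, being inputs of no link, either are conclusions cut away or survive inside. The key input is that $a\in\mathbf A = \mathbf A^{\bot\bot}$ and $T\in\mathbf A^\bot$, so $a::T\to^*\maltese_0$, i.e.\ $a$ and $T$ are orthogonal. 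Now I invoke \autoref{prop:mergeCompute} (with $S = a$, the net $T$ in the role of the opponent, $d$ the chosen daimon, $d_0 = b$ seen as "$\maltese_n$" — actually the cleaner route is: since $a::T\to^*\maltese_0$ and merging commutes with cut elimination by \autoref{rem:mergeAndCutelim}, the net $(a::T)\bowtie_{d} b \to^* \maltese_0\bowtie_{d'}b = b$). Hence $N::T \to^* b \in \mathbf B$, and since $\mathbf B$ is a type it is closed under anti-reduction in the relevant sense — more precisely $N::T$ reduces to an element of $\mathbf B$, and because orthogonality in \autoref{def:orthogonalityNets} only requires the \emph{existence} of a reduction to $\maltese_0$, one checks $N::T\in\mathbf B$ directly: for any $U\in\mathbf B^\bot$, $(N::T)::U \to^* b::U \to^*\maltese_0$. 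This gives $N::T\in\mathbf B$, completing the inclusion.

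\textbf{Main obstacle.} The delicate point is the bookkeeping in the claim "$N::T$ is $a::T$ with the same merge," i.e.\ verifying that the merged daimon $d$ is genuinely a daimon of the net $a::T$ and is not disturbed by the cut links $\cutlink{a(i),T(i)}$, and that the arrangement of $N::T$ still exposes $b$'s conclusions last so that the final reduction $N::T\to^* b$ (and then $b::U\to^*\maltese_0$) goes through. This is exactly the kind of situation \autoref{prop:mergeCompute} was designed for, so I expect the heavy lifting to be: match the hypotheses of \autoref{prop:mergeCompute} precisely (identify which daimon plays the role of $d$, confirm $d_0 = b$ has the right shape or iterate over its daimons), and then the rest is routine. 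The only real mathematical content is the commutation of merging with cut elimination, already established as \autoref{rem:mergeAndCutelim}/\autoref{prop:mergeCompute}.
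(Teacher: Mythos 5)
Your proposal is correct and follows essentially the same route as the paper: reduce to the raw inclusion $\mathbf A\bowtie\mathbf B\subseteq\mathbf A\compo\mathbf B$ via monotonicity of bi--orthogonal closure (using that $\mathbf A\compo\mathbf B$ is a type by \autoref{prop:dualPreC}), then show $N::T\to^* b\in\mathbf B$ for $T\in\mathbf A^\bot$ by combining $a\perp T$ with \autoref{prop:mergeCompute}, and conclude by closure under anti--reduction. The only cosmetic difference is that the paper first decomposes $b=d_b+b_0$ (via \autoref{prop:mergeDecompo}) so that \autoref{prop:mergeCompute} is applied to the single daimon $d_b$ with $b_0$ carried along as context --- exactly the resolution of the bookkeeping issue you flag as the main obstacle.
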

\begin{proof}
  Consider $S$ 
  a net in $A\bowtie B$.
  $S$ may therefore be written as 
  $a\merge[(d_a,d_b)] b$ for two elements $a\in\mathbf A$
  and $b\in\mathbf B$
  with $d_a$ a daimon of $a$ 
  and $d_b$ a daimon of $b$.
  We can thus decompose 
  $b$ as $d_b + b_0$
  and 
  $a\merge[(d_a,d_b)] b$ is equal to 
  $(a\merge [(d_a),d_b]d_b )+ b_0$ (\autoref{prop:mergeDecompo}).

  Now consider an element 
  $\overline a$ of $\mathbf A$,
  then 
  $a$ and $\overline a$ are orthogonal,
  therefore
  $(a\merge [(d_a),d_b]d_b ) ::\overline a$ reduces to $d_b$ (\autoref{prop:mergeCompute}).
  Furthermore 
  $(a\merge [(d_a),d_b]d_b )+ b_0 :: \overline a$ 
  is equal to 
  $((a\merge [(d_a),d_b]d_b ):: \overline a) + b_0 $
  because $\overline a$ only interacts with the conclusions of $a$
  thus that net reduces to 
  $d_b + b_0$ which is exactly $b$
  and so belongs to $\mathbf B$.
  Thus it follows that 
  the net
  $((a\merge [(d_a),d_b]d_b )+ b_0 ):: \overline a$
  that is $a\merge[(d_a,d_b)] b_0 :: \overline a$ belongs to $\mathbf B$
  for any $\overline a \in\mathbf A$
  thus that it belongs to $\mathbf A \compo \mathbf B$.

  We have shown 
  $\mathbf A \merge \mathbf B \subseteq \mathbf A \compo\mathbf B$
  and thus since inclusion is stable under bi orthogonal,
  $(\mathbf A \merge \mathbf B)^{\bibot} \subseteq \mathbf A \compo\mathbf B$  
\end{proof}

  \subsection{Full Types and daimon types}

  We introduce some terminology regarding types
  in order to identify 
  types which contain only nets 
  whose conclusions are outputs of daimon links.
  Such basis are for instance daimon basis,
  and, one can easily see that $\baseone$ is a daimon basis.

  \begin{definition}
    A type $\mathbf A$
    is \emph{full}
    when for each $1\leq i \leq \#\mathbf A$
    there exists a net $S$ (resp. $T$)
    such that 
    its conclusion $S(i)$ (resp. $T(i)$)
    is the output of a $\parr$--link (resp. $\otimes$--link).
  \end{definition}

  \begin{remark}
    Observe that in a daimon type $\mathbf A$
    each conclusion of a net $S\in\mathbf A$
    is the output of a daimon link (\autoref{rem:daimonTypeConclusion}): 
    as a consequence there exists an atomic (that is, containing only propositional variable)
    sequent $\Gamma$
    made of $\#\mathbf A$ formulas
    such that any cut free net $S$ of $\mathbf A$
    is testable by $\Gamma$
    i.e. 
    $S\witness \Gamma$
  \end{remark}
  
  \begin{proposition}\label{prop:openpreserve}
    Given two full types 
    $\mathbf A$ and $\mathbf B$;
    \begin{itemize}
      \item Their functional composition 
      $\mathbf A \compo \mathbf B$ is still full.
      \item 
      Their parallel composition 
      $\mathbf A \parallel \mathbf B$ is still full.
    \end{itemize}

    Given two daimon types 
    $\mathbf A$ and $\mathbf B$;
    \begin{itemize}
      \item Their functional composition 
      $\mathbf A \compo \mathbf B$ is still a daimon type.
      \item 
      Their parallel composition 
      $\mathbf A \parallel \mathbf B$ is still a daimon type.
    \end{itemize}
  \end{proposition}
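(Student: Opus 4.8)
The plan is to observe first that the statement reduces, by duality, to its \emph{fullness} half: if $\mathbf A$ and $\mathbf B$ are full, then $\mathbf A\parallel\mathbf B$ and $\mathbf A\compo\mathbf B$ are full. I would deal with the parallel composition right away, since it is direct. Using $\mathbf A\parallel\mathbf B=(\mathbf A\parallel^{-}\mathbf B)^{\bot\bot}$ and density (\autoref{rem:density}) we get $\mathbf A\parallel^{-}\mathbf B\subseteq\mathbf A\parallel\mathbf B$; moreover $\#(\mathbf A\parallel\mathbf B)=\#\mathbf A+\#\mathbf B$ and the conclusions of $a\parallel b$ are those of $a$ followed by those of $b$. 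So, given an index $i\leq\#\mathbf A$, fullness of $\mathbf A$ yields $a\in\mathbf A$ whose $i$-th conclusion is the output of a $\otimes$-link (resp.\ of a $\parr$-link), and then $a\parallel b\in\mathbf A\parallel\mathbf B$ has its $i$-th conclusion of that shape for any $b\in\mathbf B$; for $i>\#\mathbf A$ one argues symmetrically with $\mathbf B$ and the index $i-\#\mathbf A$.

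Next I would treat the functional composition, which is the core. The naive candidate $a\parallel b$ fails: interacting it with an opponent of $\mathbf A^{\bot}$ destroys $a$ but leaves a spurious disconnected $\maltese_0$ alongside $b$, which is not in $\mathbf B$ (cf.\ \autoref{rem:mllIncorrectStyle}). The fix is to keep the two halves connected by a merge, using \autoref{prop:mergeInCompo}: $\mathbf A\merge\mathbf B\subseteq(\mathbf A\merge\mathbf B)^{\bot\bot}\subseteq\mathbf A\compo\mathbf B$. This applies because $\mathbf A$ and $\mathbf B$ each contain a net carrying a daimon link: every non-empty net has at least one $\maltese$-link (a net built from $\otimes$-, $\parr$- and $\mathsf{cut}$-links alone cannot be both target-surjective and source-disjoint unless it is empty), and the nets of $\mathbf A,\mathbf B$ are non-empty since $\#\mathbf A,\#\mathbf B\geq1$. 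With $\#(\mathbf A\compo\mathbf B)=\#\mathbf A+\#\mathbf B$, I would then fix an index $i$ and a target shape: for $i\leq\#\mathbf A$ pick, by fullness of $\mathbf A$, a net $a\in\mathbf A$ whose $i$-th conclusion has that shape, and any $b\in\mathbf B$; for $i>\#\mathbf A$ pick, by fullness of $\mathbf B$, a net $b\in\mathbf B$ whose $(i-\#\mathbf A)$-th conclusion has that shape, and any $a\in\mathbf A$. In either case $a\merge[(d,d')]b\in\mathbf A\compo\mathbf B$ for daimons $d$ of $a$ and $d'$ of $b$; since a merge rewrites only the two daimon links $d,d'$ and leaves all $\otimes$- and $\parr$-links (and their outputs) untouched, the distinguished conclusion survives, and choosing the arrangement of the merge suitably keeps it at index $i$.

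Finally, the daimon-type half follows by duality. A type is a daimon type exactly when it is the orthogonal of a full type: if $\mathbf C$ is full then, by \autoref{rem:clashcut}, every conclusion of every net of $\mathbf C^{\bot}$ is the output of a $\maltese$-link (it can be neither a $\otimes$- nor a $\parr$-output without forcing a clash cut in some interaction), so $\mathbf C^{\bot}$ is a daimon type; the converse is the characterization recalled before the statement (\autoref{rem:daimonTypeConclusion} together with duality of types). Hence, for $\mathbf A,\mathbf B$ daimon types, $\mathbf A^{\bot},\mathbf B^{\bot}$ are full, so by the first part $\mathbf A^{\bot}\parallel\mathbf B^{\bot}$ and $\mathbf A^{\bot}\compo\mathbf B^{\bot}$ are full, whence their orthogonals are daimon types; and by \autoref{prop:dualPreC} these orthogonals are exactly $\mathbf A\compo\mathbf B=(\mathbf A^{\bot}\parallel\mathbf B^{\bot})^{\bot}$ and $\mathbf A\parallel\mathbf B=(\mathbf A^{\bot}\compo\mathbf B^{\bot})^{\bot}$. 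The hard part will be the functional-composition fullness: seeing that connectedness (the merge, \autoref{prop:mergeInCompo}) is what is needed rather than a plain parallel sum, and then tracking the arrangement through the merge so that the grafted $\otimes$- or $\parr$-conclusion lands at the prescribed index; the rest is routine bookkeeping.
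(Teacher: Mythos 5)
Your proof is correct and follows essentially the same route as the paper: witness fullness of $\mathbf A\parallel\mathbf B$ by the nets $a\parallel b$, witness fullness of $\mathbf A\compo\mathbf B$ by the merge via \autoref{prop:mergeInCompo}, and transfer to daimon types by the duality of \autoref{prop:dualPreC}. The extra care you take (checking that every net of a type contains a daimon link so the merge is available, and tracking the arrangement through the merge) only fills in details the paper leaves implicit.
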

  \begin{proof}
    Indeed the parallel $\mathbf A \parallel \mathbf B$
    remains full since it contains in particular
    the nets of the form $a\parallel b$
    with $a\in\mathbf A$
    and $b\in\mathbf B$.
    Similarly this remains true for $\mathbf A \compo\mathbf B$
    since it contains the merge $\mathbf A \merge \mathbf B$
    (\autoref{prop:mergeInCompo}).

    For the daimon types 
    we conclude using the duality results 
    (\autoref{prop:dualPreC}).
  \end{proof}

  \begin{proposition}
    Let $\mathbf A$ and $\mathbf A^\bot$
    be two orthogonal and non--empty types,
    the assertions are equivalent:
    \begin{itemize}
      \item $\mathbf A$ is a daimon type.
      \item Each conclusion of a net $S$ of $\mathbf A$ 
      is the output of a daimon link.
    \end{itemize}
  \end{proposition}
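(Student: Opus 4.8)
The plan is to read off the definition — recall that $\mathbf{A}$ is a \emph{daimon type} exactly when its orthogonal $\mathbf{A}^{\bot}$ is \emph{full} — so that the statement becomes: ``$\mathbf{A}^{\bot}$ is full'' if and only if ``every conclusion of every net of $\mathbf{A}$ is the output of a daimon link''. I would prove the two implications separately; the first is a short clash--cut argument and the second requires constructing opponents.

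For the first implication, suppose $\mathbf{A}^{\bot}$ is full and fix $S\in\mathbf{A}$ and an index $i\le\#\mathbf{A}$. Since $S$ is target--surjective and a cut link has no output, $S(i)$ is the output of a $\maltese$--, $\otimes$-- or $\parr$--link. Fullness provides $U,V\in\mathbf{A}^{\bot}$ with $U(i)$ the output of a $\parr$--link and $V(i)$ the output of a $\otimes$--link; as $S\perp U$ and $S\perp V$, \autoref{rem:clashcut} forbids $S(i)$ from being a $\parr$-- or a $\otimes$--output (otherwise $S::U$ or $S::V$ contains a clash cut at index $i$ and cannot reduce to $\maltese_{0}$). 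Hence $S(i)$ is a daimon output.

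For the converse, assume every net of $\mathbf{A}$ has only daimon outputs as conclusions, and fix an index $i$; I must produce nets of $\mathbf{A}^{\bot}$ whose $i$-th conclusion is a $\parr$-- resp. a $\otimes$--output. First I would secure an opponent $T\in\mathbf{A}^{\bot}$ whose $i$-th conclusion is the output of a daimon $\dailink{\overline u,T(i),\overline v}$, and then \emph{decorate} it: take $U^{\parr}_i$ to be $T$ with this daimon replaced by $\dailink{\overline u,p_1,p_2,\overline v}$ and a link $\parrlink{p_1,p_2}{p}$ adjoined, $p$ becoming the $i$-th conclusion; take $U^{\otimes}_i$ to be $T$ with the daimon replaced by $\dailink{\overline u,t_1,\overline v}$, a singleton daimon $\dailink{t_2}$ adjoined, and a link $\tenslink{t_1,t_2}{p}$ adjoined, $p$ again the new conclusion. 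The heart of the argument is that these decorated nets stay orthogonal to every $S\in\mathbf{A}$. Indeed $S(i)$ is a daimon output $\dailink{\overline a,S(i),\overline b}$, so in $S::T$ the cut $\cutlink{S(i),T(i)}$ is a non--cyclic glueing cut; glueing cuts can be eliminated first (\autoref{rem:anticipateGlue} and \autoref{prop:SN}), so the reduction $S::T\rightarrow^{*}\maltese_{0}$ guaranteed by $S\perp T$ may be taken to pass through the net $N$ obtained by fusing $\dailink{\overline a,S(i),\overline b}$ with $\dailink{\overline u,T(i),\overline v}$. In $S::U^{\parr}_i$ the $i$-th cut is instead a $(\maltese/\parr)$ cut; eliminating it (with a partition sending all of $\overline a,\overline b$ to one side) and then eliminating the two glueing cuts thereby created lands exactly on $N$, so $S::U^{\parr}_i\rightarrow^{*}N\rightarrow^{*}\maltese_{0}$. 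The same bookkeeping, using a $(\maltese/\otimes)$ step — which merely adds an output to $\dailink{\overline a,S(i),\overline b}$ — followed by two glueing cuts, shows $S::U^{\otimes}_i\rightarrow^{*}N\rightarrow^{*}\maltese_{0}$. Thus $U^{\parr}_i,U^{\otimes}_i\in\mathbf{A}^{\bot}$ and $\mathbf{A}^{\bot}$ is full.

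The main obstacle is the step I deferred in the converse: ensuring that for each $i$ there is an opponent in $\mathbf{A}^{\bot}$ whose $i$-th conclusion is \emph{already} a daimon output. I expect to obtain this from a cut--free representative of a net of $\mathbf{A}^{\bot}$ together with a ``reverse decoration'' — collapsing a $\parr$-- or $\otimes$--output at conclusion $i$ to a daimon output while tracking orthogonality through the same funnelling computation — or, if a clean argument is not forthcoming, to appeal to \autoref{rem:daimonTypeConclusion}. Once that is in hand, the rest is the clash--cut argument above and the routine verification that the decorated interaction funnels onto $N$.
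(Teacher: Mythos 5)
Your first implication is correct and is exactly the paper's argument: $\mathbf A$ being a daimon type means $\mathbf A^\bot$ is full, and the clash--cut observation (\autoref{rem:clashcut}) rules out connective outputs at every conclusion of every $S\in\mathbf A$. Nothing to add there.

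The converse is where the problem lies, and you have correctly located it yourself: the step you defer is not a routine loose end but the entire difficulty of the direction. The paper in fact does \emph{not} prove $2\Rightarrow 1$; it explicitly declines, stating that this implication ``requires a much more detailed analysis of the rewriting rules of cut elimination'' and is not used elsewhere in the work. Your decoration argument (turning a $T\in\mathbf A^\bot$ whose $i$-th conclusion is a daimon output into $U^\parr_i$ and $U^\otimes_i$, and funnelling $S::U^\parr_i$ and $S::U^\otimes_i$ onto the net $N$ reached by anticipating the glueing cut in $S::T$) is plausible, but it is conditional on first exhibiting such a $T$, and nothing in the hypotheses provides one. $\mathbf A^\bot$ is merely non-empty; its elements could a priori all have a $\parr$-output or a $\otimes$-output at position $i$, and the hypothesis that nets of $\mathbf A$ have daimon outputs places no clash--cut constraint on $\mathbf A^\bot$ whatsoever. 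Your proposed fallback, appealing to \autoref{rem:daimonTypeConclusion}, is circular: that remark is precisely the implication $1\Rightarrow 2$ for daimon types, i.e.\ it presupposes that $\mathbf A^\bot$ (or $\mathbf A$) is already known to be a daimon type, which is what you are trying to establish.

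The ``reverse decoration'' you gesture at (collapsing a connective output of some $T\in\mathbf A^\bot$ at conclusion $i$ to a daimon output while tracking orthogonality) is also not innocent: if $T(i)$ is a $\parr$-output, the cut at $i$ in $S::T$ is a $(\maltese/\parr)$ cut, whose elimination non-deterministically splits the daimon on the $S$ side; replacing that $\parr$ by a daimon changes the interaction in a way that cannot be matched by simply permuting a reduction sequence, and the premises of the $\parr$-link may themselves be connective outputs, so the collapse is not even locally well defined. A simple sanity check also shows you cannot shortcut via a canonical opponent: $\maltese_{\#\mathbf A}$ need not lie in $\mathbf A^\bot$ (e.g.\ $\maltese_2\not\perp\dailink{a,b}$ because the second glueing cut becomes cyclic). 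So the converse direction remains open in your write-up, exactly as it does in the paper.
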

  \begin{proof}
    $\mathbf{1\Rightarrow 2.}$
    This is already discussed in \autoref{rem:daimonTypeConclusion}.
    A net $S$ in $\mathbf A$
    is orthogonal to any nets in $\mathbf A^\bot$,
    and because $\mathbf A^\bot$ is full 
    the conclusions of $S$ cannot be the outputs of connectives
    otherwise a clashing cut will 
    occur in some interaction $S::T$
    for some $T$ in $\mathbf A^\bot$.

    $\mathbf{2\Rightarrow 1.}$
    This requires a much more detailed analysis 
    of the rewriting rules of cut elimination.
    We will not give the detail here 
    because this implication is not used in this work
    and the proof 
    requires the addition of a lot
    of content in order to be obtained.
  \end{proof}

  \subsection{Daimon Basis are Compact -- proof of \autoref{prop:baseOneTestable}}

  The object of this section 
  is to provide a proof of \autoref{thm:closedIsCompact}.
  Let us define \emph{compact basis}
  as basis which satisfy the first item of \autoref{prop:collapse};
  such a basis if also enjoys adequacy, will obviously 
  be complete for $\mlldai$ (using \autoref{prop:collapse}).

  \begin{definition}
    A basis $\ibase$
    is \emph{compact}
    if for any $\Gamma$
    and any net $S$;
    $S\Vdash_\ibase \Gamma \Rightarrow S\witnesseq \Gamma$.
  \end{definition}  

  To obtain $\mlldai$--completeness (for cut--free nets) 
\autoref{prop:collapse}
ensure that it is enough 
to find a basis 
that is approximable and compact:
such kind of basis are \emph{daimon} basis (\autoref{thm:closedIsCompact}),
for instance the basis mapping 
each propositional variable to $\smash{\{ \maltese_1 \}^{\bibot}}$.
Technically this is where 
it is useful to interpret hypersequents; 
a basis is compact is equivalently
a basis such that for any hypersequent $\hseq$ which is $\parr-$ and $\otimes$--free 
the interpretation $\reali[\ibase]{\hseq}$
contain only nets of whose conclusions are targets of daimon links.

\newcommand{\format}[2][S]{\mathtt{f}_{#1}(#2)}


\begin{definition}[Daimon type]
  A type $\mathbf A$ is a
  \emph{daimon type}
  whenever for each $1\leq i \leq \# \mathbf A$
  its dual $\mathbf A^\bot$
  contains 
  a net $S$ with a terminal 
  $\parr$--link which outputs $S(i)$
  and a net $T$ with a terminal $\otimes$--link
  which outputs $T(i)$.
  A basis $\ibase$ is a
  \emph{daimon basis}
  when for each propositional variable $X$
  its interpretation $\reali[\ibase]{X}$
  is a daimon type.
\end{definition}

\begin{remark}\label{rem:daimonTypeConclusion}
  The conclusions 
  of a net which belong to a daimon type $\mathbf A$
  must be the outputs of daimon links (\autoref{rem:clashcut}).
\end{remark}

A daimon basis is indeed 
a compact basis;
the elements of a (non--empty) daimon type 
can only have conclusions which are outputs of a daimon link
otherwise clashing cuts occur (\autoref{rem:clashcut}),
then an induction on the formulas show compactedness of the basis.
First we must show a more general result which holds on hypersequents.

There is a simple inductive process to associate 
a sequent $\downarrow\hseq$ to an hypersequent $\hseq$;
$\downarrow A \triangleq A$
while
$\downarrow \hseq_1 , \hseq_2 \triangleq \downarrow \hseq_1,\downarrow \hseq_2$
and 
$\downarrow \hseq_1 \parallel \hseq_2  \triangleq\downarrow \hseq_1 , \downarrow \hseq_2$.

  \begin{restatable}[Daimon Basis Property]{lemma}{lemtrunc}\label{lem:trunc}
    Let $\ibase$ be a daimon basis.
    For any cut--free net $S$;
    $S\in\reali[\ibase]{\mathcal H}
    \Rightarrow S\witnesseq \downarrow{\mathcal H}.$
  \end{restatable}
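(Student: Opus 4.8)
The plan is to prove the (hypersequent) statement by induction on the number $c(S)$ of $\otimes$- and $\parr$-links occurring in the cut--free net $S$; the statement for sequents (\autoref{prop:baseOneTestable}) then follows by instantiating $\ibase$ with $\baseone$, which is a daimon basis since $\{\maltese_1\}^{\bot}$ contains both the $\parr$-headed net $\dailink{a,b}+\parrlink{a,b}{c}$ and the $\otimes$-headed net $\dailink{a}+\dailink{b}+\tenslink{a,b}{c}$ (\autoref{rem:daimonOrthogonals}). Observe first that if $S\in\reali[\ibase]{\mathcal H}$ then $S$ and $\downarrow\mathcal H$ have the same number of conclusions (\autoref{rem:nbconclusionTypes}). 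In the base case $c(S)=0$ the net $S$ consists only of daimon links, so $S=S^\maltese$ and every conclusion $S(i)$ is the output of a daimon link; setting $\tau(S(i))$ to be the $i$-th formula of $\downarrow\mathcal H$ defines a formula labelling of $S$ (the \textbf{(Par)} and \textbf{(Tens)} clauses being vacuous), which witnesses $S\witnesseq\downarrow\mathcal H$.

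For the inductive step suppose $c(S)\ge 1$. By the decomposition of cut--free nets (\autoref{lem:decomposition}) $S$ has a terminal $\otimes$- or $\parr$-link; assume it is a $\parr$-link above the conclusion $S(j)$, the $\otimes$ case being symmetric. The crucial point is that the $j$-th formula $C_j$ of $\mathcal H$ must then have the form $A\parr B$. Suppose instead that $C_j$ is atomic or has the form $A\otimes B$. By a routine recursion on $\mathcal H$ — realising the parallel slots by the parallel sum of nets, the compositional slots by a merge (using $(\mathbf A\bowtie\mathbf B)^{\bot\bot}\subseteq\mathbf A\compo\mathbf B$, \autoref{prop:mergeInCompo}), composite formulas by adding the dual connective on top of recursively built components, and each atom $Y$ by an arbitrary element of the non-empty type $\reali[\ibase]{Y}^{\bot}$ — one builds a net $T\in\reali[\ibase]{\mathcal H}^{\bot}$ whose conclusion in position $j$ is the output of a $\parr$-link: if $C_j$ is atomic this uses that $\reali[\ibase]{C_j}^{\bot}$, being the orthogonal of a daimon type, contains such a net by definition, and if $C_j=A\otimes B$ it uses that $\reali[\ibase]{A\otimes B}^{\bot}=\reali[\ibase]{A}^{\bot}\parr\reali[\ibase]{B}^{\bot}$ (\autoref{prop:dualC}) is generated by nets with a terminal $\parr$-link. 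Then $S::T$ contains the clash cut $(\parr/\parr)$ at $\cutlink{S(j),T(j)}$, so it cannot reduce to $\maltese_0$ (\autoref{rem:clashcut}), contradicting $S\perp T$. Hence $C_j=A\parr B$.

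Knowing $C_j=A\parr B$, one concludes using \autoref{prop:decomp}. With the exchange rule on nets, the associativity of $\compo$ and $\parallel$ (\autoref{prop:Associativity}), and the insensitivity of the type constructions to the order of conclusions at the level of unordered nets, we may assume $A\parr B$ is the last formula of $\mathcal H$, so that $\mathcal H=\mathcal K, A\parr B$ or $\mathcal H=\mathcal K\parallel (A\parr B)$; in both cases \autoref{prop:decomp} yields $S_0\in\reali[\ibase]{\mathcal H'}$, where $S=S_0+\parrlink{p_1,p_2}{S(j)}$ and $\downarrow\mathcal H'$ is $\downarrow\mathcal H$ with the formula $A\parr B$ replaced by the two consecutive formulas $A,B$. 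Since $c(S_0)=c(S)-1$, the induction hypothesis provides a formula labelling $\tau_0$ witnessing $S_0\witnesseq\downarrow\mathcal H'$; keeping $\tau_0$ on the positions of $S_0$ and setting $\tau(S(j))\triangleq A\parr B$ gives a formula labelling of $S$ witnessing $S\witnesseq\downarrow\mathcal H$.

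The main obstacle is the construction of the opponent $T$: one must turn the inductive description of $\reali[\ibase]{\mathcal H}^{\bot}$ into a genuine net exhibiting the prescribed connective at conclusion $j$, which requires unfolding the duality equalities of \autoref{prop:dualPreC} and \autoref{prop:dualC}, checking non-emptiness of every subformula type met along the way, realising the compositional nodes by concrete merges, and — the fiddliest point — tracking the induced arrangement so that the $\parr$-output really lands in position $j$, where it gets cut against $S(j)$. The reduction to the case where $A\parr B$ is the last formula of $\mathcal H$ is comparatively routine but still relies on the insensitivity of the type constructions to the order of conclusions.
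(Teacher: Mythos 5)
Your proof is correct (at the level of rigour of the paper itself) but it is organised around a different induction than the one the paper uses. The paper inducts on the hypersequent, with the measure (number of connectives of $\mathcal H$, size of $\mathcal H$): its base case is the purely atomic hypersequent, which is handled by propagating \emph{fullness} through $\parallel$ and $\compo$ (\autoref{prop:openpreserve}) so that $\reali[\ibase]{\mathcal H}$ is a daimon type and every conclusion of $S$ is forced to be a daimon output; its inductive case picks a composite formula of $\mathcal H$ and \emph{assumes} that $S$ carries a matching terminal connective before invoking \autoref{prop:decomp}. You instead induct on the number of connective links of $S$, which trivialises the base case (a daimon-only net is testable by any sequent of the right length, with no use of the daimon-basis hypothesis), and you concentrate the daimon-basis hypothesis into an explicit connective-matching step: the construction of an opponent $T\in\reali[\ibase]{\mathcal H}^{\bot}$ with a prescribed terminal link at position $j$ (assembled from $\parr$- or $\otimes$-headed single-conclusion nets via $\parallel^{-}$ and the merge of \autoref{prop:mergeInCompo}), followed by the clash-cut argument of \autoref{rem:clashcut}. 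This matching step is exactly what the paper leaves implicit, so your write-up is in this respect more complete; the ingredients you invoke (non-emptiness of every $\reali[\ibase]{C}^{\bot}$, preservation of head links under merge, the duality of \autoref{prop:dualC}) all check out, since any net with a conclusion contains a daimon link and the merge only fuses daimons.

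One caveat, which you share with the paper's own proof rather than introduce: \autoref{prop:decomp} is stated only for a composite formula sitting at the \emph{top level} of the hypersequent and joined by its outermost connective ($\hseq,A\parr B$ or $\hseq\parallel A\parr B$). After one $\otimes$-decomposition a sequent $\Gamma,A\otimes B$ becomes $\Gamma,(A\parallel B)$, and the next formula to peel may then be nested under alternating $,$ and $\parallel$; extracting it to the last top-level slot is not achievable by associativity and commutativity alone (there is no valid exchange of the form $(\mathbf A\parallel\mathbf B)\compo\mathbf C=(\mathbf A\compo\mathbf C)\parallel\mathbf B$). What is really needed is a version of \autoref{prop:decomp} applicable in an arbitrary hypersequent context; its proof goes through unchanged, but neither you nor the paper states it. Since your argument is otherwise sound and strictly more explicit than the paper's on the matching step, this is a presentational debt inherited from \autoref{prop:decomp}, not a flaw specific to your route.
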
  

  \begin{proof} of lemma \ref{lem:trunc}.
    By induction on the hypersequent
    using the measure $(c,n)$
    where $c$ is the number of connectives in the hypersequent 
    and $n$ is the size of the hypersequent.
    \begin{itemize}
      \item If the hypersequent is made of one atomic formula $X$
      then $S\in\reali[\ibase]{X}$
      implies that $S$ belongs to a daimon type,
      thus the outputs of $S$ are outputs of a daimon (\autoref{rem:daimonTypeConclusion}).
      This by definition means $S\witness X$
      and therefore $S\witnesseq X$.
      \item 
      Say the hypersequent is made only of atomic formulas.
      The hypersequent may be of the form 
      $\hseq_1,\hseq_2$
      then $\reali[\ibase]{\hseq_1,\hseq_2}
      = (\reali[\ibase]{\hseq_1}^\bot\parallel\reali[\ibase]{\hseq_2}^\bot)^\bot$
      since,
      $\reali[\ibase]{\hseq_1}^\bot$
      and $\reali[\ibase]{\hseq_2}^\bot$
      are open types their parallel composition 
      remains an open type.
      $\reali[\ibase]{\hseq_1,\hseq_2}$
      is the orthogonal of an full type hence 
      it is a daimon type.
      Thus any sequent in that type 
      as outputs which comes from a daimon link 
      and thus is the approximation of any sequent of size $n$
      in particular $S\witnesseq \downarrow \hseq$.
      We do a similar reasonment when $\hseq =\hseq_1\parallel\hseq_2$.
  
      \item
      Case of non--atomic hypersequent 
      with a virgula as main connective.
      Assume that $S$ has a terminal connective link,
      say the hypersequent is of the form $\hseq,A\parr B$
      such that $S = S_0 + l$
      where $l$ is a $\parr$--link and 
      is the last conclusion of $S$.
      By proposition \ref{prop:decomp} 
      it follows that $S_0$
      belongs to $\reali[\ibase]{\hseq,A,B}$
      the measure of that hypersequent as decreased and 
      so we apply the induction hypothesis;
      $S_0\witnesseq \downarrow(\hseq, A, B)$
      indeed it follows that $S\witnesseq \downarrow(\hseq, A\parr B)$.
      A similar argument works for an hypersequent of the form $\hseq, A\otimes B$.
  
      \item
      Case of non--atomic hypersequent 
      with a parallel as main connective.
      On the other hand say 
      $S$ belongs to the interpretation $\reali[\ibase]{\hseq \parallel A\parr B}$
      then it is orthogonal to $\reali[\ibase]{\hseq}^\bot\compo \reali[\ibase]{A\parr B}^\bot$
      and thus $S_0$ is orthogonal to $\reali[\ibase]{\hseq}^\bot\compo \reali[\ibase]{A, B}^\bot$.
      Equivalently using proposition \ref{prop:decomp} 
      $S_0$ belongs to $\reali[\ibase]{\hseq \parallel A, B}$
      the size of the hypersequent as decreased and so we can apply the induction hypothesis;
      $S_0\witnesseq \downarrow(\hseq, A, B)$
      and thus $S\witnesseq \downarrow(\hseq, A\parr B)$.
      A similar argument works for an hypersequent of the form $\hseq\parallel A\otimes B$.
      Similarly we treat the case of $\hseq,A\otimes B$.
    \end{itemize}
  \end{proof}

  Indeed the previous lemma is a more general form of \autoref{thm:closedIsCompact}
  as stated in this subsection.
  One can now easily derive it:

\begin{restatable}{theorem}{daimonIsCompact}\label{thm:closedIsCompact}
  Let $\ibase$ be an approximable daimon basis, 
  $\Gamma$ a sequent,
  and $S$ a cut--free net;
  $S\Vdash_\ibase \Gamma \Rightarrow S\vdash_\mlldai \Gamma$.
\end{restatable}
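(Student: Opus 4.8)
The plan is to deduce the theorem directly from \autoref{lem:trunc} and \autoref{prop:collapse}, the only point needing attention being that the flattening operation $\downarrow(\cdot)$ acts as the identity on sequents. Recall that the hypothesis $S\Vdash_\ibase\Gamma$ is by definition $S\in\reali[\ibase]{\Gamma}$, and that $S\vdash_\mlldai\Gamma$ means $S$ represents a proof of $\Gamma$ in $\mlldai$.

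First I would record that $\downarrow\Gamma=\Gamma$ whenever $\Gamma$ is a sequent: a sequent is, by definition, an hypersequent built without the parallel constructor ``$\parallel$'', so a straightforward induction on $\Gamma$ using the clauses $\downarrow A\triangleq A$ and $\downarrow(\hseq_1,\hseq_2)\triangleq\downarrow\hseq_1,\downarrow\hseq_2$ shows that $\downarrow$ never alters $\Gamma$. Next, since an approximable daimon basis is in particular a daimon basis and $S$ is a cut--free net with $S\in\reali[\ibase]{\Gamma}$, \autoref{lem:trunc} yields $S\witnesseq\downarrow\Gamma$, hence $S\witnesseq\Gamma$ by the previous observation. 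Finally, since $\ibase$ is also approximable, \autoref{prop:collapse} applies to the cut--free net $S\in\reali[\ibase]{\Gamma}$ and gives the equivalence between $S\witnesseq\Gamma$ and $S\vdash_{\mlldai}\Gamma$; from $S\witnesseq\Gamma$ we conclude $S\vdash_{\mlldai}\Gamma$, which is the claim.

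At this level there is no genuine obstacle: all the work is already contained in \autoref{lem:trunc} (which relies on the decomposition \autoref{prop:decomp} and on \autoref{prop:openpreserve}, i.e.\ that parallel and functional compositions of full types are full, so that the orthogonals of such compositions are daimon types) and in \autoref{prop:collapse} (which in turn rests on \autoref{rem:testinbase} and \autoref{thm:danos}). The one thing to be careful about is the matching of hypotheses: \autoref{prop:collapse} is stated precisely for sequents and for approximable bases, which is exactly what we assume, while \autoref{lem:trunc} is stated for arbitrary hypersequents, so specialising it to the sequent $\Gamma$ is harmless.
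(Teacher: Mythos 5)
Your proof is correct and follows exactly the route the paper intends: specialise Lemma~\ref{lem:trunc} to the sequent $\Gamma$ (where $\downarrow\Gamma=\Gamma$) to get $S\witnesseq\Gamma$, then apply Proposition~\ref{prop:collapse} using approximability to conclude $S\vdash_{\mlldai}\Gamma$. The careful matching of hypotheses (daimon basis for the lemma, approximable basis for the collapse) is precisely the point the paper relies on.
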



  From this theorem 
  the fact that the base $\baseone$ is compact easily follows,
  because one can show that $\baseone$ is a daimon basis.

  \baseOne*
\begin{proof}  
  Observe that the type $\{ \maltese_1 \}^{\bibot}$
is indeed a daimon type 
(\autoref{rem:daimonOrthogonals}),
therefore the basis mapping each 
propositional variable $X$ to $\{\maltese_1\}^{\bibot}$
is a daimon basis.
We conclude using \autoref{thm:closedIsCompact}.
\end{proof}

  \subsection{Proof of \autoref{thm:mlldaiNewComplete}}

  \mlldaiCompletion*
  \begin{proof}
    As for (1)
    note that if $S$ realises $\Gamma$
    for any basis $\ibase$
    then in particular it realises $\Gamma$ for $\baseone$.
    Thus applying \autoref{prop:baseOneTestable}
    $S$ must be testable by $\Gamma$,
    furthermore $\baseone$ is approximable
    hence applying \autoref{prop:collapse}
    we conclude $S\vdash_{\mlldai}\Gamma$.

    As for (2) 
    to have the implication that $S\in\reali[\ibase]{\Gamma}$
    yields $S\vdash_{\mlldai}\Gamma$
    simply use the same argument as for point (1)
    merely noting that $\baseone$ is an approximable basis.
    To show that $S\vdash_{\mlldai}\Gamma$
    implies $S\in\reali[\ibase]{\Gamma}$
    is exactly the theorem of adequacy (\autoref{thm:adequacy}).
  \end{proof}

  \subsection{Proof of \autoref{thm:completeness}}

  \begin{figure}
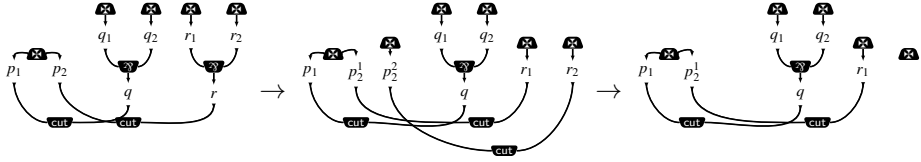

    \centering
    \scalebox{0.6}{\tikzfig{interactionb}}
    $\rightarrow$
    \scalebox{0.6}{\tikzfig{interactionb1}}
    $\rightarrow$
    \scalebox{0.6}{\tikzfig{interactionb2}}
    \caption{The interaction 
    of $\maltese_2$ with the net $\maltese_\parr\parallel\maltese_\parr$,
    this cannot reduce to $\maltese_0$
    since disconnection of the net 
    is preserved by cut--elimination 
    and $\maltese_0$ is connected.}\label{fig:cutfailure}
  \end{figure}

  \def\ground#1{\mathsf{g}(#1)}

  The \emph{ground form} of an $\mll$ formula 
  $A$ is the atomic hypersequent
  obtain when each $\otimes$
  is replaced by a $\parallel$
  and each $\parr$ is replaced by a $\compo$.

  \begin{proposition}\label{prop:parallelExactly}
    Given a cut--free net $a\parallel b$ with binary daimon links
    and two formulas $A$ and $B$ which have no variable in common
    if $a\parallel b$ belongs to $\reali[\ibase]{A}\parallel \reali[\ibase]{B}$
    for any base $\ibase$
    then $a\in\reali[\ibase]{ A}$
    and $b\in\reali[\ibase]{B}$
    for any basis $B$.
  \end{proposition}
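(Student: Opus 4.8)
The plan is to prove, for an arbitrary fixed basis $\ibase$, that $a\in\reali[\ibase]{A}$; the claim for $b$ is handled symmetrically. The point is that, since $A$ and $B$ share no propositional variable, we are free to alter $\ibase$ on the variables of $B$: let $\ibase'$ be the basis that agrees with $\ibase$ on every variable of $A$ and sends every variable of $B$ to the approximable type $\{\maltese_{1}\}^{\bibot}$. Then $\reali[\ibase']{A}=\reali[\ibase]{A}$ (hence $\reali[\ibase']{A}^{\bot}=\reali[\ibase]{A}^{\bot}$), while $\reali[\ibase']{B}$ and $\reali[\ibase']{B^{\bot}}$ are approximable (\autoref{prop:approxBase}), so $\maltese_{1}\in\reali[\ibase']{B^{\bot}}\subseteq\reali[\ibase']{B}^{\bot}$ by \autoref{rem:InterpretMLL}; note $\reali[\ibase']{B}^{\bot}$ is a type, hence closed under anti-reduction.

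Now fix $T\in\reali[\ibase]{A}^{\bot}$; the goal is $a\perp T$. Choose a daimon link $d$ of $T$ and let $T^{u}$ be $T$ with a fresh output $u$ grafted onto $d$, arranged so that the original conclusion $T(1)$ comes first and $u$ second; thus $\#T^{u}=2$. For any $x\in\reali[\ibase']{A}$, the interaction $T^{u}::x$ is exactly $T::x$ with the extra conclusion $u$ sitting on $d$; since $T::x\rightarrow^{*}\maltese_{0}$, \autoref{rem:mergeAndCutelim} (equivalently \autoref{prop:addConcluReduction}) gives $T^{u}::x\rightarrow^{*}\maltese_{1}\in\reali[\ibase']{B}^{\bot}$, so $T^{u}::x\in\reali[\ibase']{B}^{\bot}$. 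As $\#T^{u}=2\geq\#\reali[\ibase']{A}=1$, this shows $T^{u}\in\reali[\ibase']{A}^{\bot}\compo\reali[\ibase']{B}^{\bot}=(\reali[\ibase']{A}\parallel\reali[\ibase']{B})^{\bot}$ (\autoref{prop:dualPreC}).

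By the hypothesis applied to $\ibase'$, $a\parallel b\in\reali[\ibase']{A}\parallel\reali[\ibase']{B}$, hence $(a\parallel b)::T^{u}\rightarrow^{*}\maltese_{0}$. Unfolding the interaction, $(a\parallel b)::T^{u}$ is $a::T$ with $u$ grafted onto $d$ and $b$ cut onto $u$; since $b$ is cut--free with binary daimons, $b(1)$ is the output of a $\otimes$-- or $\parr$--link, so $\cutlink{b(1),u}$ is the only link joining $b$ to the rest and it, together with all of its descendants, is a non--homogeneous or glueing cut touching only $b$ and daimons descending from $d$. The crucial step is to deduce from $(a\parallel b)::T^{u}\rightarrow^{*}\maltese_{0}$ that $a::T\rightarrow^{*}\maltese_{0}$: one classifies the cuts of $(a\parallel b)::T^{u}$ as ``inside $a::T$'' or ``below $\cutlink{b(1),u}$'', checks this partition is stable under cut elimination (a $(\otimes/\maltese)$-- or $(\parr/\maltese)$--step below $\cutlink{b(1),u}$ creates only non--homogeneous or glueing children, never a cut inside $a::T$), and projects the given reduction onto its $a::T$--part; this projection is a legitimate reduction of $a::T$ to a cut--free net without conclusions, i.e.\ a sum of copies of $\maltese_{0}$ (\autoref{rem:cutfreeNoOutputNets}), and a count of the daimon links surviving at the end — the whole net ending at a single $\maltese_{0}$ — forces exactly one copy, i.e.\ $a::T\rightarrow^{*}\maltese_{0}$. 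Hence $a\perp T$ for every $T\in\reali[\ibase]{A}^{\bot}$, so $a\in\reali[\ibase]{A}$, and symmetrically $b\in\reali[\ibase]{B}$.

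The main obstacle is precisely this last factorization: because the ``$b$--absorption'' cuts and the ``$a::T$'' cuts share the daimon $d$, one cannot invoke strong confluence of unrelated cuts directly, and must run the projection/bookkeeping argument by hand — or, alternatively, repackage $b+\cutlink{b(1),u}$ through the merge operator and reuse the behaviour of identity cut--nets (\autoref{prop:cutnetBehavior}) together with \autoref{prop:mergeCompute} and the delaying/anticipating lemmas of \autoref{prop:SN}. A secondary technical point is ensuring that every opponent $T\in\reali[\ibase]{A}^{\bot}$ one needs to test against does contain a daimon link onto which $u$ can be grafted; this can be settled by a short structural observation, or by reducing to cut--free witnesses.
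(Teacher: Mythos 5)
Your first two paragraphs are fine, and the construction is genuinely different from the paper's: passing to the auxiliary basis $\ibase'$ that sends $B$'s variables to $\{\maltese_1\}^{\bibot}$, and checking that $T^{u}$ (i.e.\ $T$ with a fresh output $u$ grafted onto a daimon $d$) lands in $(\reali[\ibase']{A}\parallel\reali[\ibase']{B})^{\bot}$, is correct and is essentially an instance of \autoref{prop:mergeInCompo}. The gap is exactly at the step you yourself flag, and the bookkeeping you propose does not close it. The assertion that the projection of a reduction of $(a\parallel b)::T^{u}$ onto its ``$a::T$--part'' is a legitimate reduction of $a::T$ is false: whenever $\cutlink{b(1),u}$ or one of its descendants is a $(\parr/\maltese)$ cut whose daimon is $d$ or a descendant of $d$, its elimination \emph{splits} that daimon and redistributes between the two halves the outputs belonging to the $a::T$--part. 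This mutation has no counterpart in $a::T$, and it can strictly enlarge what the $a::T$--part may reduce to; in particular it can unblock a cyclic glueing cut sitting on $d$ (compare \autoref{fig:cycleParrBreak}). Concretely, take $P=\dailink{p,q,u}+\cutlink{p,q}$ (so the part without $u$ is the normal form $\dailink{p,q}+\cutlink{p,q}\neq\maltese_0$) and $b=\dailink{x_1,x_2}+\dailink{y_1,y_2}+\parrlink{x_1,x_2}{z}+\parrlink{y_1,y_2}{w}+\parrlink{z,w}{v}$, which is cut--free with binary daimons and one conclusion: then $P+b+\cutlink{v,u}$ reduces to $\maltese_0$ (choose the partition putting $p$ and $q$ on different halves of the split daimon, after which $\cutlink{p,q}$ is no longer cyclic). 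Since your projection argument uses no property of $b$ beyond ``cut--free, binary daimons, one conclusion'', this shows the implication you need cannot be obtained by projection alone; any correct argument must exploit that $b$ itself is a realiser (or at least switching--connected). For the same reason the fallback through \autoref{prop:cutnetBehavior} does not apply: that proposition's proof relies on the merged components having daimon--disjoint, hence unrelated, cuts, whereas here the $b$--side and the $a::T$--side share $d$.

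For comparison, the paper argues by contraposition and uses the disjointness of the variables of $A$ and $B$ \emph{adversarially} rather than cooperatively: if $a$ fails to realise $A$ in some basis, one picks an opponent $s$ witnessing the failure and then chooses the interpretation of $B$'s variables so that no companion $s'$ can rescue the merged opponent built from $s$ and $s'$, contradicting $a\parallel b\in\reali[\ibase]{A}\parallel\reali[\ibase]{B}$ for that basis. Your cooperative choice $\{\maltese_1\}^{\bibot}$ for $B$'s variables discards exactly the leverage that the quantification over all bases provides, and that is why you are left having to prove a separation statement about raw nets that is false in that generality.
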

  \begin{proof}
    We rely on the following fact,
    $a\parallel b$ belongs $\reali[\ibase]{A}\parallel \reali[\ibase]{B}$
    so in particular $a\parallel b$ 
    is orthogonal to $\reali[\ibase]{A}^\bot\merge \reali[\ibase]{B}^\bot$
    (\autoref{prop:mergeInCompo}).

    By induction on the formula $A$ and $B$
    if both are equal to $X$ and $Y$ some propositional variable,
    then 
    $a\parallel b$ is orthogonal to 
    $\reali[\ibase]{X}\merge \reali[\ibase]{Y}$
    for any choice of basis
    therefore the conclusion of $a\parallel b$
    may only be daimon links.
    Since furthermore 
    $\bigcap_{\ibase}\reali[\ibase]{X}$ is empty 
    it follows that 
    $\reali[\ibase]{X}\parallel \reali[\ibase]{Y}$
    is also empty thus the proposition hold.

    Note that because $a\parallel b \in\reali[\ibase]{A\parallel B}$
    for any basis and in particular for $\baseone$ which is daimon basis
    it follows that 
    $a\witnesseq A$
    and $b\witnesseq B$ (\autoref{lem:trunc}).
    Since $a\parallel b \perp\reali[\ibase]{A}^{\bot}\merge \reali[\ibase]{B}^\bot$
    from \autoref{prop:factorisation}
    one can show using \autoref{prop:decomp} 
    that this equivalently means that
    $a^\maltese \parallel b^\maltese$ is orthogonal to
    $\reali[\ibase]{\ground A}^{\bot}\merge \reali[\ibase]{\ground B}^\bot$.
    $a^\maltese \in \reali[\ibase]{\ground A}$
    iff $a\in\reali[\ibase]{A} $
    we assume that $a^\maltese\not\reali[\ibase]{\ground A}$ for some basis.
    Consider a basis $\ibase$
    such that $a^\maltese :: s$
    does not reduce to $\maltese_0$,
    although for this to be possible the outputs 
    of a same daimon link in $a$ 
    must be realising $X,Y$.
    one can easily check 
    that 
    $a^\maltese \parallel b^\maltese$ 
    is not orthogonal to $s\merge s'$
    for all $s'\in\reali[\ibase]{\ground B}$ 
    (Because $A$ and $B$ don't share any propositional variable 
      the interpretation of $\ground B$
      can be chosen so that orthogonality fails)
    and thus that it does not belong to
    $\reali[\ibase]{A}\parallel \reali[\ibase]{B}$
    for all basis. 
  \end{proof}

  \begin{lemma}[Splitting a Parallel Interaction]\label{lem:splitting}
    Given a sequent 
    $\Gamma$ with $n$ formulas and two formulas $A$ and $B$
    a net $S::=S_1\parallel S_2$ with $n+2$ conclusions
    such that $S(n+1)$ is a conclusion of $S_1$
    and $S(n+2)$ is a conclusion of $S_2$.
    There exists two unique sequents $\Gamma_1$
    and $\Gamma_2$ such that 
    for any basis:

    $ S\in \reali[\ibase]{\Gamma, A\parallel B} 
    \Rightarrow  
    S\in \reali[\ibase]{\Gamma_1,A}\parallel\reali[\ibase]{\Gamma_2,B}$
  \end{lemma}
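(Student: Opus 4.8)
The plan is to reduce the statement to the two memberships $S_1\in\reali[\ibase]{\Gamma_1,A}$ and $S_2\in\reali[\ibase]{\Gamma_2,B}$, valid for every basis $\ibase$ satisfying the hypothesis, where $\Gamma_1$ (resp.\ $\Gamma_2$) is the subsequence of $\Gamma$ formed by the formulas whose conclusion lies in the component $S_1$ (resp.\ $S_2$); this pair is plainly forced by the decomposition $S=S_1\parallel S_2$, which also yields the uniqueness. Once the two memberships are known, one is done, since then $S=S_1\parallel S_2$ lies in $\reali[\ibase]{\Gamma_1,A}\parallel^-\reali[\ibase]{\Gamma_2,B}\subseteq\reali[\ibase]{\Gamma_1,A}\parallel\reali[\ibase]{\Gamma_2,B}$. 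Throughout I would absorb the reindexing of conclusions into the exchange rule --- under which all realisability judgements are invariant --- and the commutativity of $\parallel$ on unordered nets, so that one may argue as though $\Gamma=\Gamma_1,\Gamma_2$ and $S$ were literally the parallel sum of a net $S_1$ with conclusions $\Gamma_1,A$ and a net $S_2$ with conclusions $\Gamma_2,B$ (these bookkeeping steps are routine but tedious). One may also record here, via \autoref{lem:trunc} applied at $\baseone$, that $S_1$ and $S_2$ are testable by $\Gamma_1,A$ and $\Gamma_2,B$ respectively, which is what one needs in order to later combine this lemma with \autoref{prop:parallelExactly}.

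For the core step, fix a basis $\ibase$ and nets $T_1\in\reali[\ibase]{\Gamma_1}^\bot$ and $U\in\reali[\ibase]{A}^\bot$; by \autoref{prop:interAsAction} it suffices to show $S_1::(T_1\parallel U)\rightarrow^*\maltese_0$, for this gives $S_1::T_1\in\reali[\ibase]{A}$ for every $T_1$, i.e.\ $S_1\in\reali[\ibase]{\Gamma_1}\compo\reali[\ibase]{A}=\reali[\ibase]{\Gamma_1,A}$. Pick any $T_2\in\reali[\ibase]{\Gamma_2}^\bot$ and $V\in\reali[\ibase]{B}^\bot$ and set $W:=U\merge[(d,d')]V$, merging a $\maltese$-link $d$ of $U$ with one $d'$ of $V$; then $W\in\reali[\ibase]{A}^\bot\merge\reali[\ibase]{B}^\bot$, hence $W\in\reali[\ibase]{A}^\bot\compo\reali[\ibase]{B}^\bot=(\reali[\ibase]{A}\parallel\reali[\ibase]{B})^\bot$ by \autoref{prop:mergeInCompo} and \autoref{prop:dualPreC}, and likewise $T_1\parallel T_2\in\reali[\ibase]{\Gamma_1}^\bot\parallel^-\reali[\ibase]{\Gamma_2}^\bot\subseteq\reali[\ibase]{\Gamma}^\bot$ (again \autoref{prop:dualPreC}). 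Since $S\in\reali[\ibase]{\Gamma}\compo(\reali[\ibase]{A}\parallel\reali[\ibase]{B})$ by definition of the interpretation (\autoref{def:interpretmll}), unfolding the two constructions and applying \autoref{prop:interAsAction} yields $S::(T_1\parallel T_2\parallel W)\rightarrow^*\maltese_0$. The decisive observation is that, because $S=S_1\parallel S_2$ and $W$ is connected only through the single $\maltese$-link $d\merge d'$, this interaction is (up to isomorphism) exactly $(S_1::(T_1\parallel U))\merge[(d,d')](S_2::(T_2\parallel V))$: the cut links of the interaction partition so that $T_1,U$ meet only conclusions of $S_1$ and $T_2,V$ only conclusions of $S_2$.

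It remains to pass from ``the merge normalises to $\maltese_0$'' to ``$S_1::(T_1\parallel U)$ normalises to $\maltese_0$''. Both factors are non-empty, conclusion-free nets, and I would establish --- by an analysis along the lines of the proof of \autoref{prop:cutnetBehavior}, exploiting that cuts sitting on opposite sides of a parallel decomposition are unrelated and that irreversible cuts may be postponed while reversible and glueing ones may be anticipated (\autoref{prop:SN}, items~\ref{item:delayParrs} and~\ref{item:earlyNonPar}) --- that every normal form of a merge $N_1\merge[(e,e')]N_2$ of two non-empty conclusion-free nets is isomorphic to a merge of a normal form of $N_1$ with one of $N_2$. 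A clash cut, a persistent cyclic glueing cut, or a surplus $\maltese_0$ confined to one factor is carried along unchanged by the merging (\autoref{prop:cutelimMerge}, \autoref{rem:mergeAndCutelim}, \autoref{rem:clashcut}), and $\maltese_0$ displays none of them; hence the merge can reach $\maltese_0$ only if each $N_i$ does. Applied here, this gives $S_1::(T_1\parallel U)\rightarrow^*\maltese_0$, hence (letting $T_1,U$ vary) $S_1\in\reali[\ibase]{\Gamma_1,A}$; reading off the other factor above --- or arguing symmetrically --- gives $S_2\in\reali[\ibase]{\Gamma_2,B}$, and the first paragraph concludes. The step I expect to be the genuine obstacle is this last ``un-merging'': \autoref{prop:cutnetBehavior} only treats identity cut--nets, so one has to push its argument through for merges of arbitrary conclusion-free nets, which --- though conceptually routine given the rewriting lemmas already available --- calls for a careful bookkeeping of the possible interactions between the cut joining $U$ (resp.\ $V$) to $S_1$ (resp.\ $S_2$) and the merged $\maltese$-link.
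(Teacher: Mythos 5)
Your reduction of the lemma to the two component memberships, and your identification of $\Gamma_1,\Gamma_2$, agree with the paper; but the core of your argument rests on an ``un-merging'' claim that is false. You need: if the fusion of two non-empty conclusion-free nets $N_1$ and $N_2$ along a pair of daimon links reduces to $\maltese_0$, then each $N_i$ reduces to $\maltese_0$ (equivalently, that normal forms of the fusion are fusions of normal forms). Take $N_1=\dailink{p,q}+\cutlink{p,q}$, which is a normal form (its cut is cyclic, \autoref{rem:acyclicCuts}) and is not $\maltese_0$, and $N_2=\dailink{a}+\dailink{b_1}+\dailink{b_2}+\parrlink{b_1,b_2}{b}+\cutlink{a,b}$. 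Fusing $\dailink{p,q}$ with $\dailink{a}$ yields $\dailink{p,q,a}+\cutlink{p,q}+\dailink{b_1}+\dailink{b_2}+\parrlink{b_1,b_2}{b}+\cutlink{a,b}$; eliminating the $(\parr/\maltese)$ cut with the partition that separates $p$ from $q$ produces $\dailink{p,a^1}+\dailink{q,a^2}+\cutlink{a^1,b_1}+\cutlink{a^2,b_2}+\dailink{b_1}+\dailink{b_2}+\cutlink{p,q}$, in which $\cutlink{p,q}$ is now an acyclic glueing cut; the whole net then normalises to $\maltese_0$. This is exactly the phenomenon of \autoref{fig:cycleParrBreak}: enlarging a daimon by merging allows $(\parr/\maltese)$ steps originating in the \emph{other} factor to split that daimon and thereby destroy cycles confined to the first factor. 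So an obstruction in $N_1$ is \emph{not} ``carried along unchanged by the merging'', and the step you yourself flagged as the genuine obstacle is not delicate bookkeeping --- it is unprovable as stated. Nothing in your setting excludes this: $S_1::(T_1\parallel U)$ may well develop cyclic cuts touching the daimon $d$ of $U$ along which you merged, so the gap is not repaired by the particular shape of the factors.

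The paper takes a different route precisely to avoid having to separate a fused interaction: it keeps the two components in parallel rather than merged. Using density it takes the opponent of $\Gamma$ in the form $U_1\parallel\dots\parallel U_n$, splits the $U_i$ according to which component of $S=S_1\parallel S_2$ they are cut with, rewrites $S::\overline{\gamma}$ as $(S_1::U^1)\parallel(S_2::U^2)$ via \autoref{prop:commutationInterSumlink}, and then invokes \autoref{prop:parallelExactly} --- which exploits the quantification over \emph{all} bases together with the disjointness of the variables of $A$ and $B$ --- to split membership of a parallel net in $\reali[\ibase]{A}\parallel\reali[\ibase]{B}$ into memberships of its two components. That proposition is where the real separation work is done; your merge construction cannot substitute for it.
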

  \begin{proof}
    The inclusions of 
    $    \reali[\ibase]{\Gamma_1,A}\parallel\reali[\ibase]{\Gamma_2,B}$
    in $\reali[\ibase]{\Gamma, A\parallel B}$
    is obtained by standard calculations.

    on the other hand 
    consider a net $S$ in $\reali[\ibase]{\Gamma, A\parallel B}$
    and is of the form $S_1\parallel S_2$.
    $S_1\parallel S_2 ::\overline\gamma$ 
    belongs to $A\parallel B$.
    And such that $S_1\parallel S_2 :: \overline\gamma$ belongs to $A\parallel B$.
    The opponent $\overline\gamma$ of $\reali[\ibase]{\Gamma}^\bot$
    is an element of 
    $\reali[\ibase]{A_1}^\bot \parallel\dots\parallel \reali[\ibase]{A_n}^\bot$,
    thus it can range over 
    $\reali[\ibase]{A_1}^\bot \parallel^{-}\dots\parallel^{-} \reali[\ibase]{A_n}^\bot$
    and we can set 
    $\overline{\gamma}= U_1\parallel \dots \parallel U_n$.
    In $S_1\parallel S_2 :: (U_1\parallel \dots \parallel U_n)$
    each $U_i$ is cut with a conclusion of $S_1\parallel S_2$
    i.e. a conclusion of $S_1$ or a conclusion of $S_2$.
    We can therefore split the $U_i$'s
    according to those that are cut with $S_1$
    and those that are cut with $S_2$
    let us denote the nets made of the parallel of these families 
    respectively $U^1$
    and $U^2$.
    This can be rewritten using \autoref{prop:commutationInterSumlink}
    to obtain:
    \begin{align*}
      & S_1\parallel S_2 :: (U_1\parallel \dots \parallel U_n) \\
      =\quad &
      S_1\parallel S_2 :: (U^1 \parallel U^2) \\
      =\quad &
      (S_1 :: U^1 \parallel S_2 ):: U^2 &(\mbox{\autoref{prop:commutationInterSumlink}}) \\
      =\quad &
      (S_1 :: U^1) \parallel (S_2:: U^2 ) &(\mbox{\autoref{prop:commutationInterSumlink}}) 
    \end{align*}
    Now $U^1$ range in 
    $\reali[\ibase]{\Gamma_1}$
    while $U^2$ range in 
    $\reali[\ibase]{\Gamma_2}$.
    Since 
    $S_1\parallel S_2:: (U_1\parallel \dots \parallel U_n)$
    belongs to $\reali[\ibase]{A}\parallel \reali[\ibase]{B}$
    it follows that 
    $S_1 :: U^1 \parallel S_2:: U^2 $
    belongs to $\reali[\ibase]{A}\parallel \reali[\ibase]{B}$.
    By \autoref{prop:parallelExactly}
    it follows that $S_1 :: U^1$
    belongs to $\reali[\ibase]{A}$
    and $S_2:: U^2 $ belongs to $\reali[\ibase]{B}$.
  \end{proof}

  \thmcomp*

  \begin{proof}
    Assume that $S$ realise the sequent $\Gamma$
    for any basis.
    In particular then $S$ realises $\Gamma$
    for the basis $\baseone$.
    By \autoref{prop:baseOneTestable}
    and \autoref{prop:collapse}
    this implies that $S\vdash_{\mlldai}\Gamma$.
    To conclude one must show that the proof 
    tree $\pi$ represented by $S$
    is indeed from $\mll$:
    this means that (1) 
    its daimon rules should introduce only sequents
    of size $2$ i.e. of the form $X,Y$
    and (2) these sequents must be such that $Y = X^\bot$.
    The point (1) is guaranteed by the assumption that the daimons of $S$
    are binary \footnote{in terms of classical realisability 
    one can call the nets with binary daimon \emph{proof like}.}.

    To prove (2), we reason by induction on the proof tree $\pi$ represented by $S$
    however we will need to show in the inductive cases 
    that the assumption of the theorem hold.
    
    \emph{Base case.}
    If $\pi$ is made of a single daimon rule,
    then since the daimons are binary 
    it must introduce a sequent $X,Y$,
    now we show that it cannot be that $X\neq Y$,
    we assume that $S\in \reali[\ibase]{X,Y}$
    for any base $\ibase$
    meaning $S\in \reali[\ibase]{X}\compo\reali[\ibase]{Y}$
    or $S\perp \reali[\ibase]{X}^\bot \parallel \reali[\ibase]{Y}^\bot$ (\autoref{prop:dualC}).
    Since we range over all basis we may map 
    $X$ and $Y$ to the same type 
    in particular we can map both atomic proposition 
    to the type 
    $\{ \maltese_\parr \}^\bot$
    where 
    $\maltese_\parr ::=  \dailink{p_1} + \dailink{p_2} + \parrlink{p_1,p_2}{p}$.
    In that case; 
    \begin{align*}
      & \maltese_2\perp \reali[\ibase]{X}^\bot \parallel \reali[\ibase]{Y}^\bot \\
      \Rightarrow \quad &
      \maltese_2\perp (\{ \maltese_\parr \}^\bot)^\bot \parallel (\{ \maltese_\parr \}^\bot)^\bot \\
      \Leftrightarrow \quad &
      \maltese_2\perp \{ \maltese_\parr \}^{\bibot} \parallel \{ \maltese_\parr \}^{\bibot}\\
      \Leftrightarrow \quad &
      \maltese_2\perp \{ \maltese_\parr \}^{\bibot} \parallel \{ \maltese_\parr \}^{\bibot}\\
      \Leftrightarrow \quad &
      \maltese_2\perp \{ \maltese_\parr \}^{\bibot} \parallel^- \{ \maltese_\parr \}^{\bibot}\\
      \Rightarrow \quad &
      \maltese_2\perp \{ \maltese_\parr \} \parallel^- \{ \maltese_\parr \}\\
      \Rightarrow \quad &
      \maltese_2\perp  \maltese_\parr  \parallel\maltese_\parr\\
    \end{align*}

    Indeed one can check that 
    the last assertion is false
    by computing the reduction 
    of $\maltese_2 :: \maltese_\parr \parallel \maltese_\parr$,
    this is illustrated in \autoref{fig:cutfailure}.
    This shows that $\maltese_2$
    can realise in all basis only sequents of the form $X,X^\bot$.

    \emph{Inductive cases.}
    Assume that $S$
    represents a proof terminating with a tensor
    then it is of the form 
    $S:: =S_1 + S_2 + \tenslink{S_1(1),S_2(1)}{p}$ where $p$ is a fresh position
    (without loss of generality assume the tensor is made on the first formula 
    of the subproofs).
    Let us show we can call again the hypothesis 
    on the subproof $S_1$ and $S_2$.
    Since $S$ represents 
    a proof terminating with a tensor 
    it proves a sequent $\Gamma,A\otimes B$.
    by \autoref{prop:decomp}
    it follows that $S_1 \parallel S_2$
    belongs to $\Gamma,A\parallel B$
    and thus using \autoref{lem:splitting}
    $S_1\parallel S_2$ belongs to 
    $\reali[\ibase]{\Gamma_1,A}\parallel \reali[\ibase]{\Gamma_2,B}$.
    Since $S$ is the tensor of the two proofs $\pi_1$ and $\pi_2$ 
    represented 
    by $S_1$ and $S_2$ respectively 
    we may assume that $\pi_1$ and $\pi_2$
    prove sequents which have no propositional variable in common.
    We apply now \autoref{prop:parallelExactly}
    and conclude that 
    $S_1$ belongs to $\reali[\ibase]{\Gamma_1,A}$
    while 
    $S_2$ belongs to $\reali[\ibase]{\Gamma_2,B}$
    for any basis $\ibase$.
    Calling the induction hypothesis 
    this yields $S_1\vdash_{\mll}\Gamma_1,A$
    and $S_2\vdash_{\mll} \Gamma_2,B$
    therefore, 
    $S\vdash_{\mll}\Gamma ,A\otimes B$.

    Assume that $S$
    represents a proof terminating with a par--rule
    then it is of the form 
    $S:: =S_0 + \parrlink{S_0(1),S_0(1)}{p}$ where $p$ is a fresh position.
    Let us show we can call again the hypothesis 
    on the subproof $S_0$.
    Since $S$ represents 
    a proof terminating with a $\parr$
    and belongs to $\reali[\ibase]{\Gamma,A\parr B}$.
    by \autoref{prop:decomp}
    it follows that $S_0$
    belongs to $\reali[\ibase]{\Gamma,A, B}$
    for any basis,
    applying the induction hypothesis 
    it follows that 
    $S_0$ is a proof of $\mll$
    of $\Gamma,A,B$
    and thus $S\vdash_{\mll} \Gamma,A\parr B$.

  \end{proof}


\end{document}